\documentclass[showpacs,aps,superscriptaddress,notitlepage,reprint,longbibliography]{revtex4-1}

\usepackage[dvips]{graphicx}
\usepackage{amsmath,amssymb,amsthm,mathrsfs,amsfonts,dsfont,amscd,keyval}
\usepackage{mathtools,mathrsfs}
\usepackage{yfonts} 
\usepackage{subfigure, epsfig}
\usepackage{titletoc} 
\usepackage[subfigure]{tocloft}
\usepackage{titlesec}
\usepackage[linesnumbered,ruled,vlined]{algorithm2e}

\usepackage{braket}
\usepackage{tensor}
\usepackage{bm}
\usepackage{enumerate,enumitem}
\usepackage{color}
\usepackage{hyperref}
\usepackage{tabularx}
\usepackage{times}
\hypersetup{breaklinks=true}
\hypersetup{colorlinks=true, linkcolor=magenta, citecolor=magenta, urlcolor=blue, linktocpage=true}
\usepackage[hyphenbreaks]{breakurl}
\usepackage{multirow}
\usepackage{float}

\usepackage{subcaption}
\usepackage{tcolorbox}

\usepackage{array,booktabs,tabularx}
\newcommand{\bvec}[1]{\bm{#1}}

\newenvironment{proofsketch}
  {\par\noindent{\it Proof Sketch:}\hspace{0.5em}\rm}
  {\hfill$\square$\par}

\renewcommand\vec{\mathbf}

\newtheorem{m-theorem}{Theorem}

\theoremstyle{definition}
\newtheorem{definition}{Definition}[section]
\newtheorem*{example}{Example}
\newtheorem{proposition}[definition]{Proposition}

\theoremstyle{plain}
\newtheorem{theorem}{Theorem}
\newtheorem{lemma}[definition]{Lemma}
\newtheorem{corollary}{Corollary}

\theoremstyle{remark}
\newtheorem*{remark}{Remark}

\newcommand{\comments}[1]{}

\renewcommand\vec{\mathbf}

\SetKwInput{KwIn}{Input}
\SetKwInput{KwOut}{Output}
\SetKwInput{KwResources}{Resources}
\SetKwInput{KwPre}{Preconditions}
\SetKwInput{KwPost}{Postconditions}
\SetKwInput{KwGoal}{Goal}

\SetAlgoNlRelativeSize{-1}      
\SetAlFnt{\small}               
\SetKwComment{tcc}{\% }{}       
\DontPrintSemicolon              

\SetKw{Prepare}{prepare}
\SetKw{Repeat}{repeat}
\SetKw{Measure}{measure}
\SetKw{Apply}{apply}
\SetKw{Reset}{reset}
\SetKw{Wait}{wait}
\SetKw{Barrier}{barrier}
\SetKw{And}{and}
\SetKw{Or}{or}

\SetKwBlock{Parallel}{\textbf{parallel}}{} 
\SetKwBlock{Sample}{\textbf{sample}}{}
\SetKwRepeat{Repeat}{repeat}{until}            

\newcommand{\Round}[1]{%
  \tcc*[r]{\textbf{Round }#1}%
  \vspace{0.25ex}\hrule\vspace{0.45ex}%
}

\newenvironment{protocol}[2]{%
  \begin{algorithm}[h!]\caption{#1}\label{#2}%
}{\end{algorithm}}

\makeatletter
\def\l@subsubsection#1#2{} 
\makeatother

\begin{document}

\title{Efficient learning of logical noise from syndrome data}

\author{Han Zheng}
\email{hanzuchicago@gmail.com}

\affiliation{Pritzker School of Molecular Engineering, The University of Chicago, Chicago, IL 60637, USA}
\affiliation{Department of Computer Science, The University of Chicago, Chicago, IL 60637, USA}

\author{Chia-Tung Chu}
\affiliation{Pritzker School of Molecular Engineering, The University of Chicago, Chicago, IL 60637, USA}

\author{Senrui Chen}
\affiliation{Pritzker School of Molecular Engineering, The University of Chicago, Chicago, IL 60637, USA}

\affiliation{Institute for Quantum Information and Matter, California Institute of Technology, Pasadena,
California 91125, USA}

\author{Argyris Giannisis Manes}
\affiliation{Pritzker School of Molecular Engineering, The University of Chicago, Chicago, IL 60637, USA}

\author{Su-un Lee}
\affiliation{Pritzker School of Molecular Engineering, The University of Chicago, Chicago, IL 60637, USA}

\author{Sisi Zhou}
\affiliation{Perimeter Institute for Theoretical Physics, Waterloo, Ontario N2L 2Y5, Canada}
\affiliation{Department of Physics and Astronomy, Department of Applied Mathematics and Institute for Quantum
Computing, University of Waterloo, Ontario N2L 2Y5, Canada}

\author{Liang Jiang}
\email{liangjiang@uchicago.edu}
\affiliation{Pritzker School of Molecular Engineering, The University of Chicago, Chicago, IL 60637, USA}

\begin{abstract}

Characterizing errors in quantum circuits is essential for device calibration, yet detecting rare error events requires a large number of samples. This challenge is particularly severe in calibrating fault-tolerant, error-corrected circuits, where logical error probabilities  are suppressed to higher order relative to physical noise and are therefore difficult to calibrate through direct logical measurements. Recently, Wagner \emph{et al}. [PRL 130, 200601 (2023)] showed that, for phenomenological Pauli noise models, the logical channel can instead be inferred from syndrome measurement data generated during error correction. Here, we extend this framework to realistic circuit-level noise models. From a unified code-theoretic perspective and  spacetime code formalism, we derive necessary and sufficient conditions for learning the logical channel from syndrome data alone and explicitly characterize the learnable degrees of freedom of circuit-level Pauli faults. Using Fourier analysis and compressed sensing, we develop efficient estimators with provable guarantees on sample complexity and computational cost. We further present an end-to-end protocol and demonstrate its performance on several syndrome-extraction circuits, achieving orders-of-magnitude sample-complexity savings over direct logical benchmarking. Our results establish syndrome-based learning as a practical approach to characterizing the logical channel in fault-tolerant quantum devices.

\end{abstract}

\maketitle


\section{Introduction}

Building useful quantum computing architectures requires precise manipulation of quantum processors and protection of the information they carry from unwanted error sources. Fault-tolerant quantum computing will be executed on logical qubits protected by quantum error correction (QEC)~\cite{nielsen2010quantum, terhal2015quantum, gottesman2010introduction, campbell2017roads}, which has already recently witnessed many exciting experimental breakthroughs~\cite{Acharya2023, Bluvstein2024, Sivak2023, Krinner2022, Postler2022, Acharya2025, Rodriguez2025}. Quantum error correction deals with noise by using multiple physical qubits to encode logical information, making it less sensitive to the impact of the noise. The defining feature of such fault-tolerant architecture is the performance of \emph{syndrome extraction}, which detects faults during the circuit execution without interfering with the logical information. Logical computations are typically comprised of Clifford operations--which can be achieved through transversal Clifford unitaries or logical Pauli-based computation (PBC)~\cite{Bravyi_2016}. Non-Clifford operations are either implemented in the form of transversal (constant-depth) non-Clifford gates~\cite{vasmer2019three, paetznick2013universal, bombin2015gauge} or in preparing resource states such as magic state distillation~\cite{Bravyi_2012} and cultivations~\cite{gidney2024magicstatecultivationgrowing}. These operations inevitably introduce faults, together with additional errors arising from idling, state preparation, and (syndrome) measurements. Collectively, these processes define the \emph{circuit-level} fault models associated with syndrome-extraction circuits. 

Central to fault-tolerant quantum information processing is the ability to characterize and benchmark the resulting logical error probability. Direct logical benchmarking, however, becomes increasingly costly as logical errors are suppressed to higher order. This motivates the question of whether syndrome data—already collected during error correction—can be used to efficiently predict logical error probabilities at the end of circuit execution. This viewpoint has been explored in a series of works by Wagner \emph{et al.}~\cite{Wagner_2021, Wagner_2022, Wagner_2023}, which showed that under phenomenological noise models and idealized syndrome extraction, the pre-decoding (or effective) logical error probability~\cite{Wagner_2023, Beale_2018, takou2025estimatingdecodinggraphshypergraphs} can be learned from syndrome data alone. These results motivate a broader question: under realistic circuit-level noise, to what extent—and with what efficiency—can syndrome data predict logical error probabilities?

In parallel, there has recently been a surge of interest in learning the detector error model from syndrome data~\cite{blumekohout2025estimatingdetectorerrormodels, arms2025estimatingdetectorerrormodels, iyer2025enhancingdecodingperformanceusing, takou2025estimatingdecodinggraphshypergraphs}. Learning detector error models can be seen as learning the \emph{prior} distribution, primarily discussed in the decoder community such as Ref.~\cite{Sivak_2024}. A timely and relevant question can be formulated: to what extent does the prior distribution associated with the detector error model give a prediction for logical error probability? 

One can view these questions as learnability questions: what are the learnable and unlearnable degrees of freedom associated with a (circuit-level) fault model, given a (fixed) restricted access of measurement instruments. More precisely, in our case with a circuit execution of logical computation, accessible resources are limited to operations that do not collapse the encoded information--namely, the syndrome measurement--which introduces additional new gauge degrees of freedom. Hence, we need to develop a unified theoretical framework for learning the logical channel of QEC circuits, taking the circuit-level fault models into consideration.  The existence of gauge degrees of freedom draws an intuitive similarity to that studied in the context of gate-set tomography~\cite{chen2024efficient, Chen_2023_learnability, Zhang_2025, combes2014insitucharacterizationquantumdevices,Nielsen_2021, Erhard_2019, Flammia_2020}, whereupon identifying the gauge action is crucial to learnability (of Pauli channels) from state preparation and measurement (SPAM) noise.

The first contribution in this work is to formulate the necessary and sufficient learnability conditions from syndrome data, based on the classification of logical equivalences for any given stabilizer code, or any (logical) Clifford circuit implementing a base stabilizer code~\cite{bacon2017sparse, delfosse2023spacetimecodescliffordcircuits, gottesman2022opportunitieschallengesfaulttolerantquantum}. Using representation-theoretical approaches~\cite{TaoVu2006AddComb, goodman_wallach_2009}, we can unify the discussion on the learnability of (logical) Pauli faults from a Clifford quantum circuit with syndrome extractions and that from a base stabilizer code on the same footing. This generalizes the results from Refs.~\cite{Wagner_2022, Wagner_2023} studied assuming phenomenological noise models. As a simple corollary, our necessary and sufficient learnability condition explains precisely when the prior distribution predicts the logical error probability. The extension of learnability to circuit-level fault models is characterized precisely by a (Clifford) circuit-to-(subsystem) spacetime code mapping, which takes inspiration from, and generalizes, the spacetime mapping schemes in Refs.~\cite{delfosse2023spacetimecodescliffordcircuits, bacon2017sparse}.

Second, we develop efficient protocols with provable guarantees on sample complexity and computational costs. In quantum information processing, a pivotal question can be framed as follows. Given an unknown quantum process and an oracle that records some measurement data, what is the minimum  number of measurement data required to recover or learn the quantum process? This question takes many diverse forms: in the case of random unitaries~\cite {Hayden_2007, XuSwingle_2024} and shadow tomography~\cite{Huang_2020}, randomized Pauli measurements~\cite{Gross_2010, liu2011universallowrankmatrixrecovery}, Hamiltonian learning~\cite{Huang_2023, ma2024learningkbodyhamiltonianscompressed}, and randomized and cycle benchmarking~\cite{Erhard_2019, Knill_2008, Zhang_2025, Helsen_2022}. In our case, we wish to learn an unknown fault process $\Lambda$ from syndrome measurement outcomes from a Clifford circuit implementing a base stabilizer code.  Via the Fourier analysis on the Abelian measurement group and utilizing the compressed sensing technique, we provide a provably efficient guarantee on learning (logical) Pauli channel from syndrome data. In real experiments, the obtained syndrome expectation values may have sampling imprecision. Using the property of compressed sensing or the restricted isometry property (RIP)~\cite{haviv2015restrictedisometrypropertysubsampled, FoucartRauhut2013, Vershynin2018HDP, RudelsonVershynin2008}, we control the propagation of errors from sampling imprecision of syndrome expectation values in learning the (physical) logical error. 

Third, we explicitly characterize the learnable degrees of freedom with circuit Pauli faults, even if it is impossible to learn every fault probability of the entire circuit up to logical equivalence. We give an optimal and variance-aware sampling complexity from collecting syndrome expectations to estimating logical error probability in an end-to-end protocol. We demonstrate the end-to-end estimation protocol in several syndrome extraction circuit examples, and we observe an orders-of-magnitude sample complexity saving in estimating the logical error probability.

The paper is constructed as follows. In Section~\ref{main-sec: main-results} we state the main results of efficiently learning logical faults from syndrome data. In Section~\ref{main-sec: circuit-learnability}, we present a form of circuit-to-code mapping in \ref{main-subsec: circuit-to-code-mapping} from which we could unify the discussion on learnability with circuits and codes on the same footing from representation theory of the Boolean group.  In Section~\ref{main-sec: sampling-complexity}, we detail the protocol of optimal sample complexity in estimating logical error probabilities, utilizing techniques from  compressed sensing (Section~\ref{main-subsec: error-propagation-rip}) and optimal, variance-aware sampling techniques, inspired from Ref.~\cite{lee2025efficientbenchmarkinglogicalmagic} (Section~\ref{main-subsec: background-free-sampling}). Finally, we detail an end-to-end estimation framework in estimating the logical error probability from the Algorithm~\ref{algo: end-2-end-framework-from-syndrome} and Section~\ref{main-subsec: lep-sampling-complexity}.

\begin{figure*}[pt]
\centering

\includegraphics[width=0.9\textwidth]{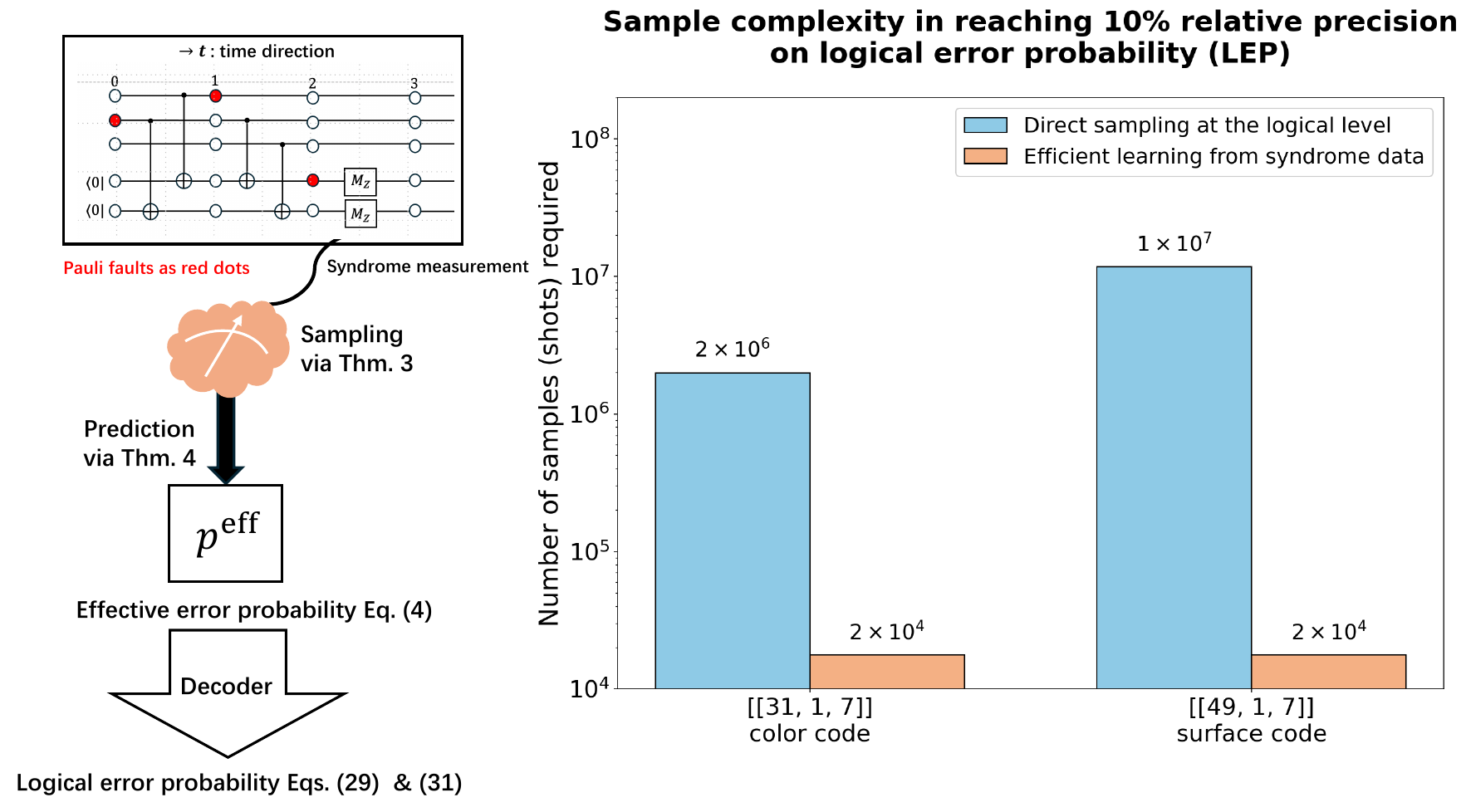}

\caption{Illustration of the learnability problem studied in this work. On the left, we give a schematic illustration of utilizing the syndrome data to infer the effective error probabilities. We provide the necessary and sufficient learnability condition in Theorem~\ref{thm: main-physical-learnability} and Theorem~\ref{thm: main-logical-learnability}. The efficient sampling and robust prediction procedure are given respectively from Theorem~\ref{thm: main-row-subsampled-rip-A} and Theorem~\ref{thm: background-free-estimation-prior}. Finally, the logical error probability conditioned on a given circuit-level decoder is computed through Eqs.~\eqref{eq: main-logical-error-probability} \&~\eqref{eq: main-prior-distribution-logical-error}. The entire learning procedure is summarized in Algorithm~\ref{algo: end-2-end-framework-from-syndrome}. On the right, we show that, under a circuit-level fault model with memory experiments, the required sample complexity in estimating the logical error probability within $10\%$ of relative precision. For the square-octagon $(4.8.8)$ color code~\cite{bombin2007exact} and rotated surface code with distance $7$, we demonstrate orders-of-magnitude savings in estimating the logical error probability to $10\%$ of relative precision. }
\label{Fig: simulation}
\end{figure*}

\section{Preliminary and main results}\label{main-sec: main-results}
In this section, we set up the preliminaries necessary to initiate discussion in the following and state the main results. First, we introduce the following notation. Let $x, y \in \mathbb{R}$. We denote $x \approx_{\tau, \varepsilon} y$ if $x \in [(1-\tau)y - \varepsilon, (1+\tau)y + \varepsilon]$. We also provide a short notation table necessary for initiating the following discussion. A complete notation table is given in the Appendix.

\newcolumntype{N}{>{\centering\arraybackslash$\displaystyle}p{0.05\textwidth}<{$}}
\newcolumntype{E}{>{\raggedright\arraybackslash}X}

\begin{table}[htbp]
  \centering
  \renewcommand{\arraystretch}{1.5}
  \begin{tabularx}{\textwidth}{N E}
    \toprule
    \multicolumn{1}{c}{\textbf{Notation}} & \textbf{Meaning} \\
    \midrule
    \mathcal{A}  & Boolean group as spacetime Pauli group~\cite{delfosse2023spacetimecodescliffordcircuits, bacon2017sparse}. \\

    \langle \cdot, \cdot \rangle & The (spacetime) symplectic inner product. \\

    \mathcal{S} & The stabilizer group. \\

   \mathcal{C}_T(\mathcal{S}) & Quantum circuit implementing the base stabilizer group. \\

   \Lambda & The spacetime Pauli fault distribution. \\

    \mathscr{G}, \mathscr{M} & The gauge and measurement (center) subgroup from $\mathcal{C}_T(\mathcal{S})$. \\

    \sigma  & The syndrome vector associated with $\mathscr{M}$.  \\
    
   \mathcal{K}  & Independent fault processes.  \\

   \mathcal{K}_{\mathscr{M}} & Independent fault processes of nonzero, distinct syndromes.\\

   1- \varepsilon & the probability where there is no fault from $\Lambda$. \\
   
    \delta_K & The $K+1$th order restricted isometric constant.  \\

    p_a & Pauli error rate (probability that the fault $a$ occurs). \\

    q_a & The parameterization coefficient in Eq.~\eqref{eq: main-character-fourier-basis}. \\

q_{[a]} &  The parameterization coefficient of the prior distribution in Eq.~\eqref{eq: main-prior-distribution}. \\

       \bottomrule
  \end{tabularx}
  \caption{Elementary notation used in this manuscript. Other notation used in more technical analysis is consistent with the above notational conventions.}
  \label{tab: notation}
\end{table}

Throughout the manuscript, we assume that $\mathscr{M}$ is generated by $M$ independent generators so that $\mathscr{M} \cong \mathbb{F}^M_2$. For a slight abuse of notation, we also denote $\sigma: \mathscr{M} \rightarrow \mathbb{F}^M_2$ given by the decomposition of an element in $\mathscr{M}$ to its generators. We denote the number of independent fault processes of distinct nonzero syndromes by $K = |\mathcal{K}_{\mathscr{M}}|$. Note that $I \notin \mathcal{K}$.

\subsection{Preliminary}\label{main-subsec: preliminary}

A $n$-qubit Pauli group is generated by $\mathrm{P}^n = \langle X,  Z\rangle^{\otimes n}$ modulo the phase. Pauli faults represent the most common fault type in the existing quantum circuits, which can be represented as a Pauli channel 
\begin{align}
    \Lambda = \sum_{a \in \mathrm{P}^n} p_a a \cdot a,
\end{align}
where we denote $p_a$ (Pauli error rate) the probability at which the Pauli fault $a$ occurs. One can always perform the Walsh-Hadamard (Fourier) transform to obtain the Pauli eigenvalues $\Lambda = (1/2^n)\sum_{a \in \mathrm{P}^n} \lambda_a \operatorname{Tr}(a \cdot)$~\cite{Chen_2022}, related by
\begin{align}
    \lambda_b = \sum_{a \in \mathrm{P}^n} p_a \chi_b(a); \quad \chi_b(a) = (-1)^{\langle a, b \rangle}.
\end{align}

The inner product $\langle a, b \rangle$ denotes the $\mathbb{F}_2$-valued inner product over the symplectic representation of Pauli elements such that $\langle a, b \rangle =1$ if they anti-commute and otherwise $0$. The Pauli eigenvalue $\lambda_b$ takes values from $[-1, 1]$. In this work, we assume that $p_I > 1/2$ so that $1\geq \lambda_b > 0$ for all $b \in \mathrm{P}^n$. This assumption reflects the standard assumption that the faults do not dominate the system and, mathematically, removes a potential sign ambiguity, similar to Refs.~\cite{Wagner_2022, Wagner_2023}. More precisely, we assume that there exists some small parameter $\varepsilon$ such that $\lambda_a \geq 1-\varepsilon$, aligning with assumptions taken in the literature of benchmarking~\cite{Flammia_2020, chen2024efficient, Chen_2023_learnability, chen2025disambiguatingpaulinoisequantum}. For obvious reasons and slight notational abuse, we refer to $\varepsilon$ as \emph{physical error probability}. It is equivalent to view the Pauli operator as an element in the Boolean group $\mathcal{A} \cong \mathbb{F}^{2n}_2$ equipped with the $\mathbb{F}_2$-valued symplectic inner product. In what follows, readers may think $\mathcal{A}$ as Pauli group or spacetime Pauli group defined in~\cite{delfosse2023spacetimecodescliffordcircuits} and see in Section~\ref{main-subsec: circuit-to-code-mapping}. One can formally represent $\Lambda \in \mathbb{R}[\mathcal{A}]$, the space of real-valued functions on $\mathcal{A}$ subject to  normalization~\footnote{The use of real-valued functions $\mathbb{R}[\mathcal{A}]$ may sound strange since typically we extend to the complex-valued field. However, in the case of the Boolean group $\mathcal{A}$, its characters are all real-valued. Hence, we can safely restrict to the real domains.}. As such, we refer to $\Lambda$ as a \emph{fault distribution} on $\mathcal{A}$ and the Pauli error rates and the Pauli eigenvalues are related by Fourier transform over $\mathcal{A}$ (see details in Section~\ref{sec: learnability-general}). Hence, in what follows, we denote $p_a$ to be the fault probability of $a \in \mathcal{A}$ and $\lambda_a$ its Pauli eigenvalue. Viewing a Pauli fault model as a probability distribution, its degrees of freedom can be roughly understood as independent probability events. For example, the joint two-qubit bit-flip must be a result of compositions of simultaneous single bit-flips, if the fault model is restricted to single bit-flips. As a result, this fault distribution can only have the number of degrees of freedom that scales linearly in the system size.

The benefit for the abstraction, as we will see in the following sections, is two-fold: $(i)$ it unifies our discussion on Pauli faults on the memory level and circuit level. $(ii)$ It naturally connects to the theory of compressed sensing via the Fourier transform over Boolean group. Indeed, for a quantum Clifford circuit, we could consider the Pauli faults taking place at various spacetime coordinates on the circuit, which form the spacetime Pauli group. Additionally, one could represent measurement faults as a classical bit-flip occurring at the ancilla initialized at $|0\rangle$ state, such as the framework of the quantum data-syndrome code~\cite{ashikhmin2020quantum}. Fault distribution in these settings can all be considered as a distribution from some Boolean group algebra $\mathbb{R}[\mathcal{A}]$. By adapting this viewpoint, we can parametrize a distribution $\Lambda$ on $\mathcal{A}$ as follows
\begin{align}\label{eq: main-character-fourier-basis}
    \Lambda = *_{a \in \mathcal{K}}\left( (1-q_a) + q_a \chi_a \right),
\end{align}
where $\mathcal{K} \subseteq \mathcal{A}$ denotes the set of independent parameterization with coefficients $q_a \in [-\infty, 1/2)$, excluding the identity. $\chi_a$ denotes the character function of the group $\mathcal{A}$, $\chi_a: \mathcal{A} \rightarrow \{ -1, 1 \}$ such that $\chi_a(b) = (-1)^{\langle a, b \rangle}$. This parameterization is also referred to as the \emph{Pauli-Lindblad} model~\cite{van_den_Berg_2024, van_den_Berg_2023, gupta2023probabilisticerrorcancellationdynamic, chen2025disambiguatingpaulinoisequantum, jaloveckas2023efficientlearningsparsepauli} and in our case we additionally require that $q_a$ can take negative values in order to represent generic Pauli channels. For instance, even for the single-qubit Pauli channel, the coefficients for the character basis could be negative, which do not satisfy the Pauli-Lindblad model in a strict sense (see an example in Section~\ref{subsec: parameterization-fault-distribution}). The inclusion of negative coefficients enables us to parameterize any Pauli channel into the form Eq.~\eqref{eq: main-character-fourier-basis} (See texts around Lemma~\ref{lemma: fourier-character-basis-sign} for complete and rigorous statements), so that we assume this generality in what follows.

The formalism of Boolean group algebra is also convenient when considering with (subsystem) stabilizer codes. A quantum (subsystem) stabilizer code $(\mathscr{G}, \mathscr{M})$ is uniquely specified by its gauge group $\mathscr{G} \subseteq \mathcal{A}$ and its center $\mathscr{M} = \mathscr{G}^\perp \cap \mathscr{G}$, the stabilizer subgroup (called later as the measurement subgroup) of the gauge group which commutes with every element in $\mathscr{G}$. A (plain) stabilizer code is given by the gauge group $\mathscr{M} = \mathscr{G}$. The smallest nontrivial and well-known subsystem code is the Bacon-Shor $[[4, 1, 2]]$ code~\cite{alam2025baconshorboardgames} defined on the $2 \times 2$ lattice, where the gauge group is generated by $\mathscr{G} = \langle X_1X_2, X_3X_4, Z_1Z_3, Z_2Z_4 \rangle$ and $\mathscr{M} = \langle Z_1Z_2Z_3Z_4, X_1X_2X_3X_4 \rangle$. At the core of our mathematical framework lies in characterizing the symmetry of an underlying error-correcting code. Famously, C.N.Yang said, "symmetry dictates interactions". The gauge group manifests the symmetry action by the notion of \emph{logical equivalence}. Namely, logical operators dressed by gauge elements are classified the same. In the case of the $[[4, 1, 2]]$ Bacon-Shor code above, the bare $Z$ logical operator is $Z_1Z_2$ and the bare $X$ logical operator $X =X_1X_3$, which all commute with elements in $\mathscr{G}$. The Bacon-Shor code also serves as an important example when discussing the circuit-to-code isomorphism, see in Figure~\ref{fig:BaconShorL3} in the Appendix. Mathematically, the above discussion motivates the following equivalence relation, for $a, b \in \mathcal{A}$, $a \sim_{\mathscr{G}} b$ if there exists $g \in \mathscr{G}$ such that $a = bg$~\footnote{Note that we denote the gauge group $\mathscr{G}$ modulo the phase, which is a restricted version of the conventional definition, so that $ab = ba$. This restriction, in our case, is harmless as only the commutation relation is encoded via inner product.}.

\subsection{Main results}\label{main-subsec: main-results}
\begin{figure*}[!pt]

\includegraphics[width=1.0\textwidth]{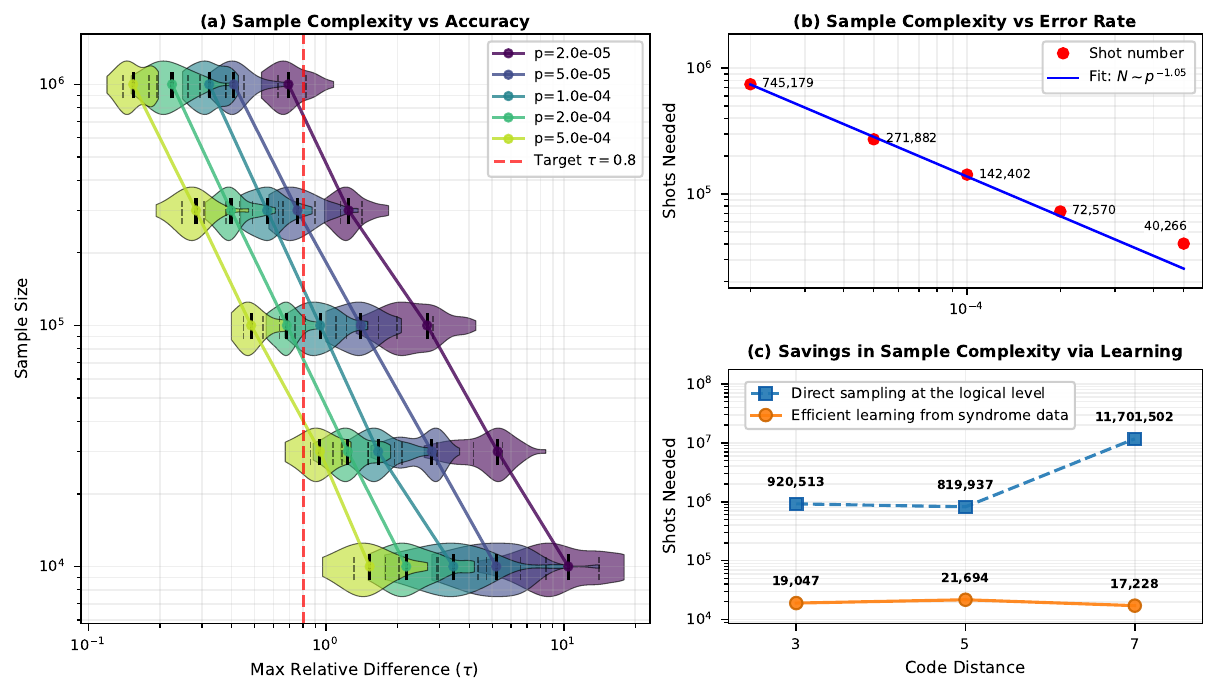}

\caption{\textbf{(a)} Sample complexity (number of shots) for learning priors (here, for the $d=5$ surface code) across several physical error probabilities, where measurement errors and data errors are modeled by the single-qubit depolarizing channel of fixed error rate $p$.  For fixed $p$, the sample size ($N$) required to reduce the maximum relative deviation between the true and predicted priors below a tolerance $\tau$ exhibits the expected scaling $N \propto \tau^{-2}$. \textbf{(b)} Fixing a target accuracy (here $\tau = 0.8$), the sample size scales approximately inversely with the error rate, $N \propto p^{-1}$ (fit shown in blue). \textbf{(c)} Sample complexity comparison at $p = 5\times 10^{-4}$ for achieving $10\%$ relative error: the predicted logical error probability requires $2$–$3$ orders of magnitude smaller sample size than the (naively) sampled logical error probability estimation for the rotated surface code at distances $d=3,5,7$.}
\label{Fig: main-simulation}
\end{figure*}

Given a fault distribution $\Lambda \in \mathbb{R}[\mathcal{A}]$, we define the \emph{effective} distribution $\Lambda^{\operatorname{eff}}$, explicitly removing the degrees of freedom associated with the gauge group 
\begin{align}\label{eq: main-eff-distribution}
    \begin{aligned}
        p^{\operatorname{eff}}_a &= \frac{1}{|\mathscr{G}|}\sum_{g \in \mathscr{G}} p_{ag}. 
    \end{aligned}
\end{align} 

Note that $p^{\operatorname{eff}}_a$ can be expressed according to Eq.~\eqref{eq: main-character-fourier-basis}, in a more complicated form. We pay special interest in the following three types of distribution: $(i)$ the physical Pauli fault distribution $\Lambda \in \mathbb{R}[\mathcal{A}]$, $(ii)$ the effective distribution $\Lambda^{\operatorname{eff}}$ modulo the logical equivalence relation $\sim_{\mathscr{G}}$ and--its closely related concept--the \emph{prior} distribution (See Eq.~\eqref{eq: main-prior-distribution}). $(iii)$ The logical error probability conditioned on an application of a fixed decoder. By fixing the decoder correction, the logical fault distribution $\Lambda^{L}$ is determined by the effective distribution (See Appendix~\ref{subsec: spacetime-logical-noise-channel}). We refer to the syndrome data by the Pauli eigenvalues associated with the measurement subgroup $\mathscr{M}$, which can be experimentally obtained via sampling the measurement outcomes of the stabilizer checks. Note that $\mathscr{M}$ can also denote the spacetime stabilizer checks, via the circuit-to-spacetime code mapping presented in Section~\ref{main-subsec: circuit-to-code-mapping}.

 Hence, it is equivalent to restricting our attention on $\mathscr{G}^\perp$ and it is often more convenient to formally state our questions taking the Fourier transform. 
\begin{enumerate}
    \item (\emph{learnability}). Given access of the syndrome data ${\mathscr{M}}$, what is the necessary and sufficient condition for which $\lambda_a$ is learnable for any $a \in \mathcal{A}$. 
    \item (\emph{learnability up to logical equivalence}). Given the access of the syndrome data  $\mathscr{M}$, what is the necessary and sufficient condition for which $\lambda^{\operatorname{eff}}_b$ is learnable for any $b \in \mathcal{A}$?
\end{enumerate}

Note that $\lambda^{\operatorname{eff}}_b = \lambda_b$ if $b \in \mathscr{G}^\perp$ and zero otherwise~\cite{Wagner_2023}. Exploiting the logical equivalence is fundamental in learning the underlying Pauli fault distribution from the syndrome measurement outcomes. To see this intuitively, for $a \in \mathcal{A}$, let $\sigma(a) \in \mathbb{F}^M_2$ denote the \emph{syndrome vector} such that $\sigma(a)_i =1$ if $a$ anticommutes with the $i$th stabilizer generator of $\mathscr{M}$ and $0$ otherwise. If $a \sim_{\mathscr{G}} b$, then it must follow that $\sigma(a) = \sigma(b) $ so that the syndrome measurement is unable to distinguish these two elements. It turns out that this intuition can be made precise in the following. 

\begin{theorem}[Learnability]\label{thm: main-physical-learnability}
    Let $\Lambda \in \mathbb{R}[\mathcal{A}]$ be a distribution such that $\lambda_b > 0$ for all $b \in \mathcal{A}$. Then any $\lambda_b$, $b \in \mathcal{A}$, can be learned from $\lambda_m, m \in \mathscr{M}$ if and only if every $a \in \mathcal{K}$ corresponds to a unique syndrome. 
    \begin{proofsketch}
        Parameterizing $\Lambda$ in Eq.~\eqref{eq: main-character-fourier-basis} and taking logarithm for the measured syndrome eigenvalues, 
        \begin{align}
            -\log \lambda_m = \sum_{a \in \mathcal{K}_{\mathscr{M}}} A_{\mathscr{M}}[m, a] (-\log (1-2q_a)).
        \end{align}
        By assumption all the Pauli eigenvalues $\lambda_b > 0$, for all $b \in \mathcal{A}$. The parameterization according to Eq.~\eqref{eq: main-character-fourier-basis} implies that the parametrization coefficient $q_a < 0.5$ so that the logarithm is well-defined and every Pauli operator $a \in \mathcal{K} $ corresponds to distinct nonzero syndrome. Hence, $\mathcal{K} = \mathcal{K}_{\mathscr{M}}$.  Note that the matrix $A_{\mathscr{M}} \in \mathbb{R}^{|\mathscr{M}| \times K}$ where $A_{\mathscr{M}}[m, a] = 1$ if they anticommute and $0$ if they commute. It is easy to see that $A_{\mathscr{M}} = (1/2) (1_{q \times K } - H_{\mathscr{M}})$ where $H_{\mathscr{M}}$ denotes the Hadamard matrix with columns restricted on $\mathcal{K}_{\mathscr{M}}$, from which one can show that $A_{\mathscr{M}}$ achieves full-column rank. Hence, the above linear system is injective and admits a unique solution. 
        
    \end{proofsketch}
\end{theorem}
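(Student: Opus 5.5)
The plan is to work entirely in the Fourier picture, using the parameterization of Eq.~\eqref{eq: main-character-fourier-basis}. Convolution becomes pointwise multiplication, so every Pauli eigenvalue factorizes as
\begin{align}
    \lambda_b = \prod_{a \in \mathcal{K}} \bigl(1 - 2 q_a [\langle a, b\rangle = 1]\bigr) = \prod_{a \in \mathcal{K},\, \langle a,b\rangle = 1} (1-2q_a).
\end{align}
The hypothesis $\lambda_b>0$ for all $b$, together with $q_a<1/2$, makes every factor positive. Taking $-\log$ then gives a linear system in the unknowns $\theta_a := -\log(1-2q_a)$:
\begin{align}
    -\log \lambda_m = \sum_{a\in\mathcal{K}} A[m,a]\,\theta_a, \qquad A[m,a] = \langle a, m\rangle \in \{0,1\}, \quad m\in\mathscr{M}.
\end{align}
The same formula with $m$ replaced by an arbitrary $b\in\mathcal{A}$ shows that every $\lambda_b$ is a known function of $\{\theta_a\}_{a\in\mathcal{K}}$. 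Moreover, distinct parameter vectors give distinct $\Lambda$, because the rows indexed by all of $\mathcal{A}$ have full column rank by the same Hadamard argument below. Hence learnability of all $\lambda_b$ is equivalent to injectivity of the map $\theta \mapsto (A[m,\cdot]\theta)_{m\in\mathscr{M}}$, i.e.\ to $A_{\mathscr{M}}$ having trivial kernel on the relevant parameter space.

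\emph{Sufficiency.} Assume the map $a\mapsto\sigma(a)$ is injective on $\mathcal{K}$. Then each $a \in \mathcal{K}$ induces a distinct character $m\mapsto(-1)^{\langle a,m\rangle}$ of $\mathscr{M}\cong\mathbb{F}_2^M$, and each such character is nontrivial. The nontriviality needs a short argument: if $\sigma(a)=0$ and $a$ is to be uniquely identified, note that the zero syndrome is already used by the identity, which is excluded from $\mathcal{K}$. I would read ``unique syndrome'' as distinct and different from that of $I$, which gives $\mathcal{K}=\mathcal{K}_{\mathscr{M}}$.

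Next I write $A_{\mathscr{M}} = \tfrac12(J - H_{\mathscr{M}})$, where $J$ is the all-ones matrix and $H_{\mathscr{M}}$ is the $2^M\times 2^M$ Sylvester--Hadamard matrix restricted to the columns $\sigma(a)$, $a \in\mathcal{K}$. The columns of $H$ are mutually orthogonal and all orthogonal to the all-ones column, which is the column of the trivial character. Suppose $A_{\mathscr{M}}\theta=0$. Then $H_{\mathscr{M}}\theta = c\mathbf{1}$ with $c=\sum_a\theta_a$. Taking inner product with each used column $h_a$ gives $2^M\theta_a = c\langle h_a,\mathbf 1\rangle = 0$, so $\theta=0$. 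Thus $A_{\mathscr{M}}$ has full column rank, $\theta$ is uniquely recovered (e.g.\ by least squares), and all $\lambda_b$ follow.

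\emph{Necessity.} Suppose two distinct $a,a'\in\mathcal{K}$ share a syndrome, or some $a\in\mathcal{K}$ has $\sigma(a)=0$. In the first case, perturb $\theta_a\to\theta_a+t$ and $\theta_{a'}\to\theta_{a'}-t$. In the second case, perturb $\theta_a$ alone. Every $\lambda_m$ with $m\in\mathscr{M}$ is unchanged. Pick $b$ with $\langle a,b\rangle\neq\langle a',b\rangle$, which exists by nondegeneracy of the symplectic form since $a\neq a'$ (in the second case, pick any $b$ anticommuting with $a\neq I$). Then $\lambda_b$ changes by the factor $e^{\mp t}$. For small $t$ the perturbed $q$'s remain $<1/2$ and all eigenvalues stay positive, so this gives two valid distributions with identical syndrome data but different $\lambda_b$.

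\emph{Main obstacle.} The delicate point is that the parameterization via $\mathcal{K}$ is not canonical: for a generic Pauli channel one may take $\mathcal{K}=\mathcal{A}\setminus\{I\}$, and then the condition fails trivially. The statement should therefore be read as fixing the support $\mathcal{K}$ as part of the model, with the unknowns being $\{q_a\}_{a\in\mathcal{K}}$. I would state this convention explicitly and invoke Lemma~\ref{lemma: fourier-character-basis-sign} for existence of the parameterization with $q_a<1/2$. The rest is the routine rank computation above.
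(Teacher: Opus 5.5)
Your proposal is correct and follows essentially the paper's own route. You log-linearize the syndrome eigenvalues and write $A_{\mathscr{M}}=\tfrac12(J-H_{\mathscr{M}})$. You get full column rank from orthogonality of distinct nontrivial characters of $\mathscr{M}$; the appendix phrases the same fact via a nondegenerate restricted bilinear form. For necessity you use duplicate or zero-syndrome columns, and you are more careful than the paper in checking that the kernel direction actually changes some $\lambda_b$.

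The one point to tighten is your claim that the perturbed object is a valid distribution. The conditions $q_a<1/2$ and $\lambda_b>0$ do not by themselves make it nonnegative: a negative $q_a$ can give $\lambda_b>1$. So choose the base point of your counterexample with all $q_a\in(0,1/2)$, where the perturbed convolution is manifestly a probability distribution.
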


\begin{theorem}[Learnability up to logical equivalence]\label{thm: main-logical-learnability}
     Let $\Lambda \in \mathbb{R}[\mathcal{A}]$ be a distribution such that $\lambda_b > 0$ for all $b \in \mathscr{G}^\perp$. Then any $\lambda_b$, $b \in \mathscr{G}^\perp$ (or equivalently, $\lambda^{\operatorname{eff}}_b$, $b \in \mathcal{A}$), can be learned from $\lambda_m, m \in \mathscr{M}$ if and only if for every pair $a, b \in \mathcal{K}$ we have that $\sigma(a) = \sigma(b) \iff a \sim_{\mathscr{G}} b$.  
     \begin{proofsketch}
     We now supplement a proof sketch, and the full proof is presented in Appendix~\ref{subsec: learnability-general-fault}. Taking the Fourier transform using the parameterization of a distribution on the character basis Eq.~\eqref{eq: main-character-fourier-basis}, one arrives at the following system of linear equations by taking the logarithm
         \begin{align}
             - \log \lambda_m = \sum_{a \in \mathcal{K}} D_{\mathscr{M}}[m, a] (-\log (1-2q_a)).
         \end{align}
         The matrix $D_{\mathscr{M}} \in \mathbb{R}^{|\mathscr{M}| \times |\mathcal{K}|}$ with $D_{\mathscr{M}}[m, a] = 1$ if they anticommute and $0$ if they commute. Let us further partition the columns of $D_{\mathscr{M}}$ as follows, 
         \begin{align}
             D_{\mathscr{M}} = \left(A_{\mathscr{M}} | B_{\mathscr{M}} \right).
         \end{align}
         The columns of $A_{\mathscr{M}} \in \mathbb{R}^{2^m \times K}$ denote independent parameters in $\mathcal{K}$ correspond to the unique syndrome associated with $\mathscr{M}$. In other words, the columns of $A$ give the subset $\mathcal{K}_{\mathscr{M}}$ with $|\mathcal{K}_{\mathscr{M}}| = K$. The columns of $B_{\mathscr{M}}$ consist of other parameters in $\mathcal{K}$ with duplicate syndromes. Similarly, we denote the row-augmented matrix $D_{\mathscr{G}^\perp}$ consisting of linear equation with $-\log \lambda_l$ for $l \in \mathscr{G}^\perp$, where under the rank-preserving column operation
        
         \begin{align*}
               D_{ \mathscr{G}^{\perp}} = \left(\begin{array}{c|c} 
  A_{\mathscr{M}} & B_{\mathscr{M}} \\ 
  \hline 
  A_{\mathscr{G}^\perp / \mathscr{M}} & B_{\mathscr{G}^\perp / \mathscr{M}} 
\end{array}\right) \mapsto \left(\begin{array}{c|c} 
  A_{\mathscr{M}} & \bvec{0} \\ 
  \hline 
  A_{\mathscr{G}^\perp / \mathscr{M}} & B'_{\mathscr{G}^\perp/ \mathscr{M}} 
\end{array}\right).
         \end{align*}
         Note that $B'_{\mathscr{G}^\perp / \mathscr{M}}$ is zero if and only if the parameters with duplicate syndromes in $B$ are related to those in $A$ via elements in the gauge group $\mathscr{G}$. From Theorem~\ref{thm: main-physical-learnability}, $A_{\mathscr{M}}$ is full-column rank. As a direct result, any $-\log \lambda_l$ for $l \in  \mathscr{G}^\perp$ lies in the row span of the $A_{\mathscr{M}}$, which concludes the theorem. 
     \end{proofsketch}
\end{theorem}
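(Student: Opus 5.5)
The plan is to reduce the statement to linear algebra over the log-eigenvalues, as in the proof sketch of Theorem~\ref{thm: main-physical-learnability}. First I would write $\Lambda$ in the form Eq.~\eqref{eq: main-character-fourier-basis}. The characters multiply under convolution, so for every $b \in \mathcal{A}$
\begin{align*}
-\log \lambda_b = \sum_{a \in \mathcal{K}} (-1)^{0}\,[\langle a,b\rangle = 1]\,\theta_a, \qquad \theta_a := -\log(1-2q_a).
\end{align*}
Here I use the convention that $\lambda_b = \prod_a (1-2q_a)^{\langle a,b\rangle}$. Since $q_a<1/2$, each $\theta_a$ is well-defined, and positivity of $\lambda_b$ on $\mathscr{G}^\perp$ makes the logarithms on the left legitimate. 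With this, the data $\{\lambda_m\}_{m\in\mathscr{M}}$ is the vector $D_{\mathscr{M}}\theta$, and the targets $\{\lambda_l\}_{l\in\mathscr{G}^\perp}$ form $D_{\mathscr{G}^\perp}\theta$. Learnability means that every target is a fixed function of the data, uniformly over all admissible $\theta$. Because the admissible set of $\theta$ contains an open set, this linear question becomes: every row of $D_{\mathscr{G}^\perp}$ lies in the row span of $D_{\mathscr{M}}$. Equivalently, $\ker D_{\mathscr{M}} \subseteq \ker D_{\mathscr{G}^\perp}$.

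\textbf{Sufficiency.} Group the columns $a\in\mathcal{K}$ by syndrome $\sigma(a)$. Column $a$ of $D_{\mathscr{M}}$ depends only on $\sigma(a)$, and it is zero exactly when $\sigma(a)=0$. The hypothesis says that two columns share a syndrome only if $a = a'g$ with $g\in\mathscr{G}$. For $l\in\mathscr{G}^\perp$ we have $\langle a',l\rangle = \langle a,l\rangle$, so those columns also coincide in $D_{\mathscr{G}^\perp}$. The hypothesis applied to the zero-syndrome class forces those $a$ to be gauge-equivalent to the identity, so $a\in\mathscr{G}$ and their columns vanish in $D_{\mathscr{G}^\perp}$ too. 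This is the column operation that produces $B'=0$. Merging equal columns reduces both systems to the distinct nonzero-syndrome columns $A$. By Theorem~\ref{thm: main-physical-learnability}, $A_{\mathscr{M}}$ has full column rank: distinct nonzero characters restricted to $\mathscr{M}$ are linearly independent together with the constant function, by Fourier orthogonality on $\mathscr{M}\cong\mathbb{F}_2^M$. Hence the merged $\theta$'s are determined by the data, and so is every $-\log\lambda_l$.

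\textbf{Necessity.} I would assume some $a,b\in\mathcal{K}$ have $\sigma(a)=\sigma(b)$ but $ab\notin\mathscr{G}$. Then I would construct the kernel vector $\theta = e_a - e_b$ (or $e_a$ alone if $b=I$, which covers a nonzero-$\mathcal{K}$ element with trivial syndrome outside $\mathscr{G}$). This vector satisfies $D_{\mathscr{M}}\theta=0$. Since $ab\notin\mathscr{G} = (\mathscr{G}^\perp)^\perp$, some $l\in\mathscr{G}^\perp$ has $\langle ab,l\rangle=1$, so $\langle a,l\rangle\neq\langle b,l\rangle$ and $(D_{\mathscr{G}^\perp}\theta)_l = \pm1 \neq 0$. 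Perturbing an admissible $\theta$ by $t(e_a-e_b)$ with small $t$ keeps $q_a,q_b$ inside $(-\infty,1/2)$ and keeps all eigenvalues positive. This gives two fault distributions with identical syndrome data but different $\lambda_l$, so learnability fails.

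The step I expect to be the main obstacle is making necessity rigorous at the level of distributions. One must check that the perturbed parameters still define a valid probability distribution, i.e.\ nonnegative $p$, since negative $q_a$ are allowed and the parameterization is only guaranteed by Lemma~\ref{lemma: fourier-character-basis-sign}. I would handle this by perturbing around an interior point, such as small positive $q$'s where all $p_c>0$, so that small changes preserve positivity. A second point to handle is that the set $\mathcal{K}$ is fixed as part of the model class, so the statement should be read as learnability for all distributions supported on that parameterization.
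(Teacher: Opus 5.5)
Your proposal is correct and follows essentially the paper's route. You log-linearize in the character-basis parameters, reduce learnability to row-span (equivalently kernel) containment of $D_{\mathscr{G}^\perp}$ in $D_{\mathscr{M}}$, merge gauge-equivalent columns and use full column rank of $A_{\mathscr{M}}$ for sufficiency, and invoke $(\mathscr{G}^\perp)^\perp=\mathscr{G}$ to obtain a separating $l\in\mathscr{G}^\perp$ for necessity.

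Two things you add go beyond the paper's rank comparison: the explicit kernel vector $e_a-e_b$ with an interior-point perturbation, and your handling of zero-syndrome elements of $\mathcal{K}$. For the latter, you should state explicitly that the condition must be read as including comparison with the identity, since $I\notin\mathcal{K}$; otherwise a lone undetectable logical fault in $\mathcal{K}$ would satisfy the pairwise hypothesis yet be unlearnable, a case the paper's argument does not address.
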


Our learnability framework strengthens that from~\cite{Wagner_2022, Wagner_2023} by providing necessary and sufficient conditions for learnability (up to a logical equivalence). Though at first glance, the assumption enabling the learnability in the two frameworks takes a noticeable difference, the present framework strictly encompasses the latter. For instance, for a Pauli channel which can be written as a convolution of local ones, the sufficient condition for learnability up to logical equivalence in~\cite{Wagner_2023} is that the union of supports of two component local Pauli channels is smaller than the code distance. Hence, for any such two local Pauli operators $a$ and $b$, if they correspond to the same syndrome, then they must be related by elements in the gauge group. In Section~\ref{subsec: parameterization-fault-distribution} of the Appendix, we give a rigorous mathematical connection between these two formalisms through the lens of M\"{o}bius transformation and Zeta functions on the Boolean group. A closely related concept, especially in the decoder community, is referred to as the prior distribution $\Lambda^{\operatorname{prior}}$. The prior distribution removes the degrees of freedom of the physical fault processes by aggregating all independent fault processes with the same syndromes. In this regard, prior distribution is always learnable from syndrome data as a simple application of Theorem~\ref{thm: main-physical-learnability}, which is recently captured in Ref.~\cite{blumekohout2025estimatingdetectorerrormodels, arms2025estimatingdetectorerrormodels}. The prior distribution $\Lambda^{\operatorname{prior}}$ is given by 
\begin{align}\label{eq: main-prior-distribution}
    \Lambda^{\operatorname{prior}} = *_{c \in \mathcal{K}_{\mathscr{M}}}((1-q_{[c]}) + q_{[c]} \chi_c ),
\end{align}
which can always be constructed from $\Lambda$, with coefficients
\begin{align}\label{eq: main-prior-distribution-coeff}
    q_{[c]} = \frac{1-\prod_{a \in \mathcal{K}: \sigma(a) = \sigma(c)} (1-2q_a)}{2}, 
\end{align}
given by aggregating all the coefficients $q_a$ of the same syndrome from the parameterization Eq.~\eqref{eq: main-character-fourier-basis}.  If the coefficients of the parametrization Eq.~\eqref{eq: main-character-fourier-basis} are probabilities such as the case of Pauli-Lindblad form, We could simply view $q_{[c]}$ in Eq.~\eqref{eq: main-prior-distribution-coeff} as aggregating the probability of all the independent fault processes sharing the same syndrome as $c$. For example, $(1- (1-2q_c)(1-2q_{c'}))/2 = q_c + q_{c'} - 2 q_c q_{c'}$, which precisely indicates the probability of fault of syndrome $\sigma(c) = \sigma(c') $ that is triggered by $c$ or $c'$. This explains our notation $q_{[c]}$ which represents the equivalence class of fault that shares the same syndrome as $c$. Since the prior distribution compresses degrees of freedom of the physical fault processes within $\mathcal{K}_{\mathscr{M}}$, one can simply regard it as a distribution over syndrome errors. Intuitively speaking, the prior distribution treats faults that share the same syndrome as the same fault. If fault $b$ differs from fault $a$ by some logical action, these are logically inequivalent. Hence, prior distribution in general cannot predict true logical error probability or the effective error probability. A necessary and sufficient condition when it can is precisely given by $\sigma(a) = \sigma(b) \iff a \sim_{\mathscr{G}} b$ for $a, b \in \mathcal{K}$. That is, independent fault processes that share the same syndrome must be logically equivalent. In section~\ref{main-subsec: lep-sampling-complexity} we rigorously prove this intuitive statement.

Besides learnability, another fundamental question is the sample complexity. There are exponentially many possible stabilizer measurement outcomes associated with the dimension of the stabilizer group. On the other hand, the fault process patterns are typically sparse: if one assumes local errors or a sparse Pauli-Lindblad model \cite{chen2025disambiguatingpaulinoisequantum, van_den_Berg_2023, van_den_Berg_2024}. Hence, a natural question can be asked whether we could learn any sparse fault process computationally efficiently--with only a similar scaling of degrees of freedom to that of the physical noise process. Since the stabilizer group forms a (Abelian) group, one could approach this question again in the same representation-theoretical setting. For a typical Pauli fault channel with limited correlations, the parameterization set $\mathcal{K}$ typically has size polynomial scaling with the system size. A central question to this end is whether or not there exists an efficient protocol that avoids utilizing all the syndrome expectations, which would quickly become intractable. A simple question can be framed that if subsampling also polynomially many checks would ensure the matrix $A_{\mathscr{M}}$ to be of full-column rank. More precisely, denote a set $Q \subset \mathscr{M}$ with $|Q| = q$ and denote the row-subsampled restricted matrix $A_{\mathscr{M}, \operatorname{res}}$, we wish to show that $A^T_{\mathscr{M}, \operatorname{res}} A_{\mathscr{M}, \operatorname{res}}$ has full rank. Mathematically, it relates to a question of low-rank reconstruction, which typically requires the underlying matrix to be structured. Since our distribution is parameterized in the character basis, $A_{\mathscr{M}}$ bears a close relation to the Fourier (Hadamard) matrix. Indeed, note that $K = |\mathcal{K}_{\mathscr{M}}|$

\begin{theorem}[Informal]\label{thm: main-row-subsampled-rip-A}
Let $Q \subset \mathscr{M}$ be a uniform subsampled set with replacement such that it contains $q = O\left(\delta^{-2}_K K \max( \log^2(K/\delta_K) \log^2(1/\delta_K) \log(K)), \log(1/\delta))\right)$ many samples. Then with probability at least $1- \delta$, $A_{\mathscr{M}, \operatorname{res}}$ achieves full column rank or equivalently $A^T_{\mathscr{M}, \operatorname{res}}A_{\mathscr{M}, \operatorname{res}}$ has full rank. 
\begin{proofsketch}
    It is easy to see that $A_{\mathscr{M}, \operatorname{res}} = (1/2) (1_{q \times K } - H_{\mathscr{M}, \operatorname{res}})$, where $H_{\mathscr{M}, \operatorname{res}}$ denotes the row- and column-restricted Walsh-Hadamard matrix $H$. In this case, it suffices to show that $H_{\mathscr{M}, \operatorname{res}}$ is full-column rank or even more strongly, the so-called restricted isometry property, for any $x$
\begin{align}
    q(1 - \delta_K) \|x\|_2 \leq  \|Hx\|_2 \leq q(1 + \delta_K) \|x\|_2. 
\end{align}
The constant $\delta_K$ is called the ($K+1$th-order) restricted isometric constant, and $A_{\mathscr{M}, \operatorname{res}}$ achieves full-column rank whenever $\delta_K$ is less than $1$. Using the techniques from the compressed sensing~\cite{haviv2015restrictedisometrypropertysubsampled} with minimal adaptation, one can show that with high probability, $\delta_K$ can be chosen (much) smaller than one. The full proof is presented in Section~\ref{subsec: compressed-sensing-theory}. 
\end{proofsketch}
\end{theorem}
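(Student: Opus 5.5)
The plan is to reduce full column rank of $A_{\mathscr{M},\operatorname{res}}$ to a near-isometry statement for a fixed $(K+1)$-column submatrix of a row-subsampled Walsh--Hadamard matrix on $\mathscr{M}\cong\mathbb{F}_2^M$, and then obtain that statement from a concentration inequality over i.i.d.\ uniform rows.

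\emph{Step 1 (reduction).} Index the rows of $A_{\mathscr{M},\operatorname{res}}$ by the samples $m\in Q$ and its columns by $c\in\mathcal{K}_{\mathscr{M}}$. The entries are $A[m,c]=(1-\chi_c(m))/2$, and $\chi_c(m)=(-1)^{\sigma(c)\cdot\sigma(m)}$ depends on $c$ only through its nonzero syndrome $\sigma(c)$. Let $\tilde H\in\{\pm1\}^{q\times(K+1)}$ be the matrix whose columns are the trivial character $\mathbf{1}$ together with the $K$ characters $h_c=(\chi_c(m))_{m\in Q}$.

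Suppose $\sum_c\alpha_c(\mathbf{1}-h_c)=0$. Then $(\sum_c\alpha_c)\mathbf{1}-\sum_c\alpha_c h_c=0$. If $\tilde H$ has full column rank, every coefficient vanishes, so $\alpha=0$. Hence full column rank of $\tilde H$ implies full column rank of $A_{\mathscr{M},\operatorname{res}}$, and with it invertibility of $A^T_{\mathscr{M},\operatorname{res}}A_{\mathscr{M},\operatorname{res}}$. This is why the relevant restricted isometry constant is of order $K+1$.

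It therefore suffices to show that, with probability at least $1-\delta$,
\begin{align*}
(1-\delta_K)\|x\|_2^2\le \tfrac1q\|\tilde Hx\|_2^2\le(1+\delta_K)\|x\|_2^2
\end{align*}
for all $x\in\mathbb{R}^{K+1}$, with $\delta_K<1$.

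\emph{Step 2 (isotropy and boundedness).} Write $\tfrac1q\tilde H^T\tilde H=\tfrac1q\sum_{m\in Q}v_mv_m^T$, where $v_m=(\chi_{s}(m))_{s}\in\{\pm1\}^{K+1}$ and $s$ ranges over $0$ and the $K$ distinct nonzero syndromes.

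For $m$ uniform on $\mathscr{M}$, character orthogonality gives $\mathbb{E}[\chi_s(m)\chi_{s'}(m)]=\mathbb{E}[\chi_{s+s'}(m)]=\delta_{ss'}$. This is exactly where distinctness of the syndromes (the definition of $\mathcal{K}_{\mathscr{M}}$) enters: it makes $\mathbb{E}[v_mv_m^T]=I_{K+1}$. In addition, $\|v_m\|_2^2=K+1$ deterministically, and sampling with replacement makes the $v_m$ i.i.d.

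\emph{Step 3 (concentration).} Apply the matrix Chernoff (or matrix Bernstein) inequality to the i.i.d.\ PSD summands $v_mv_m^T/q$, each of norm at most $(K+1)/q$ and with mean $I/q$. This gives
\begin{align*}
\Pr\Big[\big\|\tfrac1q\tilde H^T\tilde H-I\big\|>\delta_K\Big]\le 2(K+1)\exp\!\big(-c\,q\,\delta_K^2/(K+1)\big).
\end{align*}
Hence $q=O(\delta_K^{-2}K\log(K/\delta))$ samples suffice. This is dominated by the stated bound $O(\delta_K^{-2}K\max(\log^2(K/\delta_K)\log^2(1/\delta_K)\log K,\log(1/\delta)))$, so the theorem follows; any fixed $\delta_K<1$, for instance $1/2$, already yields full rank.

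If one instead wants the uniform RIP over all $(K+1)$-sparse supports, needed for robustness when the support is unknown, the argument would follow Haviv--Regev's bound for subsampled bounded orthonormal systems. That bound is the source of the polylogarithmic factors in the statement.

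\emph{Main obstacle.} The delicate point is Step 2, namely checking that the column set is exactly orthonormal under the uniform measure on $\mathscr{M}$. This requires confirming that the characters $\chi_c$ restricted to $\mathscr{M}$ are pairwise distinct and nontrivial, which holds precisely because distinct syndromes $\sigma(c)$ give distinct characters of $\mathbb{F}_2^M$. It also requires handling the affine shift by $\mathbf{1}$, which Step 1 does by including the trivial character. Once isotropy is in place, the concentration in Step 3 is routine.
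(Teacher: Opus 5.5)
Your proof is correct, and it takes a genuinely different and simpler route than the paper's.

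\textbf{The paper's route.} The paper obtains the restricted isometry property by adapting Haviv--Regev's argument for subsampled bounded orthonormal systems. Maurey's empirical method builds finite nets of approximants $g^{(i)}$ from columns in $\mathcal{K}_{\mathscr{M}}$. A Hoeffding-plus-union bound over these nets then controls $\mathbb{E}_{j\in Q}|(\Phi x)_j|^2$ for every $x$ supported on $\mathcal{K}_{\mathscr{M}}\cup\{I\}$. The paper gives a naive one-scale version and a multi-scale chaining version, the latter only sketched. That chaining is the source of the $\log^2(K/\delta_K)\log^2(1/\delta_K)\log K$ factors.

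\textbf{Your route.} You observe that the support $\mathcal{K}_{\mathscr{M}}\cup\{I\}$ is fixed and known, so the needed property reduces to $\bigl\|\tfrac1q\tilde H^T\tilde H-I\bigr\|\le\delta_K$ for a single $q\times(K+1)$ Gram matrix. Character orthogonality on $\mathscr{M}\cong\mathbb{F}_2^M$ makes this a sum of i.i.d.\ isotropic rank-one terms with $\|v_m\|_2^2=K+1$. Matrix Chernoff then closes the argument completely, and with a sharper requirement $q=O(\delta_K^{-2}K\log(K/\delta))$.

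\textbf{How the two fit together.} Adjoining the trivial character is the same device as the paper's $H'_{\mathscr{M},\operatorname{res}}$. Your normalization $\tfrac1q\|\tilde Hx\|_2^2$ gives exactly the bounds $q(1-\delta_K)\le\sigma_{\min}$ and $\sigma_{\max}\le q(1+\delta_K)$ for $H'^{T}H'$. Those are what Lemmas~\ref{lemma: noisy-recovery-all-distribution} and~\ref{lemma: logical-error-propagation} use downstream. The paper's chaining machinery would additionally give uniformity over all sparse supports. That matters only if $\mathcal{K}_{\mathscr{M}}$ were unknown, but in the protocol it is an input, and the paper itself restricts its nets to $\mathcal{K}_{\mathscr{M}}$.

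\textbf{One detail to state.} Your claim that your bound is dominated by the stated one needs $\delta_K$ below a constant, so that $\log^2(1/\delta_K)\ge 1$ and the first term of the max is at least $\log K$. This matches the ``sufficiently small $\delta_K$'' hypothesis.
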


Thanks to our formalism with the representation theory of finite Abelian groups, Theorem~\ref{thm: main-physical-learnability}, Theorem~\ref{thm: main-logical-learnability}, and Theorem~\ref{thm: main-row-subsampled-rip-A} apply indiscriminately to both the phenomenological Pauli faults and circuit-level Pauli faults. In practice, the syndrome expectation is only estimated up to some specified precision. Hence, it is crucial to ensure that the sampling error (imprecision) would be reasonably controlled through propagation along the estimation protocol, while we still ensure an efficient sample complexity overhead. A naive sampling protocol using the Hoeffding bound to a relative precision $\tau$ (which could be a constant term) utilizes $O(\varepsilon^{-2})$ number of samples. This sample complexity, however, is far from  optimal. Since each measurement outcome records the commutation parity of a given check, it can be regarded as a Bernoulli random variable drawn from $\operatorname{Bern}(\varepsilon)$. This simple observation provides an optimal, variance-aware, sample-complexity scaling, motivated from Ref.~\cite{lee2025efficientbenchmarkinglogicalmagic}. Utilizing the protocol and proving the well-conditionedness of the learning protocol in Section~\ref{subsec: error-propagation-compressed-sensing}, one could efficiently learn the \emph{prior} distribution~\cite{Sivak2023RealTimeQEC, Sivak_2024} (See Eq.~\eqref{eq: main-prior-distribution}) to compute the logical error probability, with optimal sample complexity (See Figure~\ref{Fig: main-simulation} \textbf{(a)} \& \textbf{(b)}). 

\begin{theorem}[Informal]\label{thm: background-free-estimation-prior}
  Suppose $\Lambda$ is learnable up to the logical equivalence. Then any coefficient of the prior distribution $\Lambda^{\operatorname{prior}}$ can be estimated to an additive precision $O(\varepsilon)$, consuming $\Theta(1/\varepsilon)$ many samples of syndrome measurements. 
\end{theorem}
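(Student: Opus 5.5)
The plan is to reduce estimation of the prior coefficients to estimation of syndrome expectation values, and then invert the linear system of Theorem~\ref{thm: main-physical-learnability}. Learnability up to logical equivalence (Theorem~\ref{thm: main-logical-learnability}) enters only to guarantee that the columns of $A_{\mathscr{M}}$ are indexed by $\mathcal{K}_{\mathscr{M}}$. Each such column carries the aggregated parameter $\theta_{[c]} := -\log(1-2q_{[c]}) = \sum_{a:\sigma(a)=\sigma(c)} -\log(1-2q_a)$. For every $m \in \mathscr{M}$ we therefore have
\begin{align}
-\log \lambda_m = \sum_{c \in \mathcal{K}_{\mathscr{M}}} A_{\mathscr{M}}[m,c]\, \theta_{[c]} ,
\end{align}
so the prior coefficients are uniquely determined by the $\lambda_m$. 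For a fixed check $m$ (a product of generators), the recorded parity is a Bernoulli variable with parameter $r_m = (1-\lambda_m)/2$. Under the assumption $\lambda_b \ge 1-\varepsilon$ we have $r_m \le \varepsilon/2$.

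The key step is a variance-aware concentration bound in the spirit of Ref.~\cite{lee2025efficientbenchmarkinglogicalmagic}. Let $\hat r_m$ be the empirical mean over $N$ shots. Bernstein's inequality, rather than Hoeffding's, gives $|\hat r_m - r_m| \le \sqrt{2 r_m \log(2/\delta)/N} + \log(2/\delta)/(3N)$. Because $r_m = O(\varepsilon)$, taking $N = \Theta(\log(1/\delta)/\varepsilon)$ yields an additive error $O(\varepsilon)$ on $r_m$, and hence on $\lambda_m$. For the matching lower bound, distinguishing $\operatorname{Bern}(\varepsilon)$ from $\operatorname{Bern}(2\varepsilon)$ (or from $\operatorname{Bern}(0)$) requires $\Omega(1/\varepsilon)$ samples by a KL-divergence or Le Cam argument, which gives $\Theta(1/\varepsilon)$. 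All checks are evaluated on the same shots, since every shot records the full syndrome vector $\sigma$. A union bound over the $q$ rows selected by Theorem~\ref{thm: main-row-subsampled-rip-A} therefore costs only a $\log q$ factor, which is absorbed into the constant for a fixed failure probability.

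Next comes error propagation. Since $\lambda_m \ge 1-\varepsilon$, the map $x \mapsto -\log x$ is $(1+O(\varepsilon))$-Lipschitz there, so the right-hand side vector $y$ is known entrywise to within $O(\varepsilon)$. We solve the restricted system $A_{\mathscr{M},\operatorname{res}}\theta = y$ by least squares. By Theorem~\ref{thm: main-row-subsampled-rip-A}, $A_{\mathscr{M},\operatorname{res}}$ has full column rank, with smallest singular value bounded below through the RIP constant $\delta_K < 1$. Writing $A = \tfrac12(1 - H)$, I would bound $\sigma_{\min}(A_{\mathscr{M},\operatorname{res}}) \gtrsim \sqrt{q(1-\delta_K)}$ after handling the all-ones rank-one shift; the all-ones column can be treated as the identity character appended to the restricted Hadamard columns. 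This gives $\|\hat\theta-\theta\|_\infty \le \|\hat\theta-\theta\|_2 \le \|y-\hat y\|_2/\sigma_{\min} = O(\varepsilon)\sqrt{q}/\sqrt{q(1-\delta_K)} = O(\varepsilon)$. Finally, $q_{[c]} = (1-e^{-\theta_{[c]}})/2$ is $\tfrac12$-Lipschitz, so each prior coefficient is obtained to additive precision $O(\varepsilon)$. The claim of Theorem~\ref{thm: main-logical-learnability} that this equals the effective distribution is not needed here.

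I expect the main obstacle to be the conditioning step. The RIP guarantee applies to $H$, not to the affine-shifted matrix $A$, and the $\ell_2 \to \ell_\infty$ passage must not pick up a $\sqrt{K}$ factor. To avoid this I would prefer the cleaner route of a direct Fourier (Möbius-type) inversion. Either use the full-group identity $\theta_{[c]}$ proportional to $\frac{1}{|\mathscr{M}|}\sum_m \chi_m(c)\log\lambda_m$, or use a local inversion involving only $O(1)$ checks per coefficient when the fault support is local. Either way, each coefficient is an average or a short signed sum of $O(\varepsilon)$-accurate quantities, so the additive $O(\varepsilon)$ bound holds coefficientwise.
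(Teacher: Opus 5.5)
Your argument is correct and is essentially the paper's proof (Theorem~\ref{thm: sample-complexity-prior-distribution}): a variance-aware concentration step with $\Theta(1/\varepsilon)$ shots, the log transform, least squares on $A_{\mathscr{M},\operatorname{res}}=\tfrac12(1-H)$ made amenable to RIP by appending the identity character as an extra column (the $H'_{\mathscr{M},\operatorname{res}}$ trick of Lemma~\ref{lemma: noisy-recovery-all-distribution}), and the $\tfrac12$-Lipschitz map to $q_{[c]}$; your Bernstein bound, which needs only $r_m\le\varepsilon/2$, and your explicit union bound over the $q$ rows are mild tightenings of the paper's multiplicative-Chernoff step. The obstacle you flag at the end is not real, because $\|\cdot\|_\infty\le\|\cdot\|_2$ costs nothing and $\|\omega\|_2=O(\sqrt{q}\,\varepsilon)$ cancels against $\sigma_{\min}(A_{\mathscr{M},\operatorname{res}})\ge\tfrac12\sqrt{q(1-\delta_K)}$, exactly as in Lemma~\ref{lemma: noisy-recovery-prior}; the full-group Fourier inversion, which uses all $2^M$ checks and hence exponential post-processing, is therefore unnecessary.
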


The proof is given formally in Section~\ref{subsec: error-propagation-compressed-sensing} and cultivates in Theorem~\ref{thm: sample-complexity-prior-distribution}. All in all, combining the above techniques, we give an end-to-end protocol on estimating logical error probability (LEP), which is summarized in Algorithm~\ref{algo: end-2-end-framework-from-syndrome}.

\setlength{\algomargin}{0.8em}
\SetNlSkip{0.8em}
\begin{protocol}{Estimating logical errors from syndrome}{alg:estimation-from-syndrome}\label{algo: end-2-end-framework-from-syndrome}
  \KwIn{The syndrome extraction circuit $C_T(\mathcal{S})$ that implements a base stabilizer group $\mathcal{S}$. The set of independent parameters in $\mathcal{K}_\mathscr{M}$ from the unknown Pauli fault distribution $\Lambda$. The subsampled $Q \subset \mathscr{M}$ with $A_{\mathscr{M}, \operatorname{res}}$. A fixed decoder records the corrections to apply conditions on the syndromes as a dictionary $\{z \in \mathbb{F}^M_2 = \sigma(\mathscr{M}): a_{z}\}$. }
  \KwResources{A syndrome extraction circuit in Definition~\ref{def: main-syndrome-extraction-circuit}. }
  \KwPre{Check $\mathcal{K}$ and $\mathscr{M}$ to ensure the logical learnability condition is met in Theorem~\ref{thm: main-logical-learnability}. A pre-determining scheduling in the syndrome extraction circuit and measurement sequence $\mathcal{M}_{T-0.5}, \cdots, \mathcal{M}_{0.5}$.}
  \KwGoal{Logical error probability $p^{L}_{\overleftarrow{l}}$ associated with the logical operator $\overleftarrow{l}$.}

  \Round{1}
  \Sample{
  \tcp*[r]{Suppose that the physical error rate $\varepsilon$ and we repeat the circuits $S = O( \varepsilon^{-1})$ shots.}
   For each shot $\ell=1, \cdots, S$ \; 
   \vspace{1 mm}
   \quad For $j=1, \cdots, M$ and the tuple $(t_j, i_j)$ such that $t_j =0, 1, \cdots, T-1$ and $i_j =1, \cdots, |\mathcal{M}_{t_j +0.5}|$\;
   \vspace{1 mm} 
   \quad \quad \Measure parity $\{0, 1 \}$ outcome $s^{(\ell)}_j$ \; 
    \quad For each element in $m \in \mathcal{Q} \subseteq \mathscr{M}$, \; 
    \quad \quad postprocess with the parity $\sum^M_{r=1} (s^{(\ell)}_j)^{\mu_r}$ where $\mu \in \mathbb{F}^M_2$ indicates the decomposition $m$ in terms of the generators (checks). \; 
    \vspace{1 mm}
    Compute the estimator
    \begin{align*}
        \bar{\lambda}_{m} = (1/S) \sum^S_{\ell =1} (-1)^{\sum^M_{r=1}(s^{(\ell)})^{\mu_r}}
    \end{align*}
    and store the values in a vector $\bar{y}_{m} = - \log \bar{\lambda}_{m} \in (0, 1]^{q}$.
 }
  \Round{2}
   \Sample{\tcp*[r]{Proceed with offline calculation with the learnability framework. Estimate the coefficients of the prior distribution. }
   
   Solve the recovery problem from Lemma~\ref{lemma: main-noisy-recovery-all-distribution} (Lemma~\ref{lemma: noisy-recovery-all-distribution})
   $$
    \bar{y} = A_{\mathscr{M}, \operatorname{res}} \bar{x},  
   $$
   from techniques introduced in Lemma~\ref{lemma: main-noisy-recovery-prior} (Lemma~\ref{lemma: noisy-recovery-prior}). \; 
   \vspace{1 mm}
   Update the estimated coefficients from $\Lambda^{\operatorname{prior}}$ by $q_{[a]} = \frac{1}{2}(1 - e^{-\bar{x}_a})$. \;
   }

   \Round{3}
   \tcp*[r]{Compute the logical probability $p^{L}_{\bar{l}}$ with the fixed decoder correction terms}
   
   Pick any logical operator $\overleftarrow{l}$\; 
   \KwOut{$ p^{L}_{\bar{l}}$ computed from Eqs.~\eqref{eq: main-prior-distribution-logical-error} \&~\eqref{eq: main-logical-error-probability}.}
   
\end{protocol}

\section{Learnability of syndrome circuit faults}\label{main-sec: circuit-learnability}

The above section pinpoints the mathematical formalism from Fourier analysis on some Abelian group $\mathcal{A}$. This abstraction is provided in Section~\ref{sec: learnability-general}, where it unifies the discussion with various quantum/classical codes and quantum circuits alike. We now make it concrete in the setting of the Clifford circuits. A \emph{spacetime} Pauli group for a Clifford circuit is given by $\mathcal{A} : = \mathcal{A} \cong  \otimes^T_{i=0} \mathrm{P}^{n}$, where $a \in \mathcal{A}$ we can represent it by $a = \eta_T(a) \eta_{T-1}(a) \cdots \eta_0(a)$, modulo the phase. 

\subsection{Pauli propagation and invariance}\label{main-subsec: pauli-porpagation}

For a quantum circuit consisting of logical Pauli measurements and Clifford unitaries, the action of these instruments would deform and propagate the stabilizers and gauge groups. In other words, for any Pauli operator at the circuit initialization, one can trace the trajectory of such Pauli operator as the worldlines, as an element of the spacetime Pauli group $\mathcal{A}$. This encompasses the main model of interest: the syndrome extraction circuits.

\begin{definition}[Syndrome extraction circuit]\label{def: main-syndrome-extraction-circuit}
    The syndrome extraction circuit $\mathcal{C}_T(\mathcal{S})$ for an even number $T$ encodes a base stabilizer code with the stabilizer group $\mathcal{S}$ at initialization $t=0$, with the following rules. 
    \begin{itemize}[label=--]
        \item For each integer time slice, we consider an $n$-qubit system (or an $(n+r)$-qubit system that includes $r$ ancilla qubits, one for each stabilizer generator).  
        \item For each half-integer time slice $t+0.5=0.5, \cdots, T-0.5$, we apply logical Clifford unitaries, logical Pauli measurements, stabilizer measurements, which act on qubits at time $t$. Furthermore, we assume that for each time $t$, these operations act on disjoint sets of qubits~\cite{delfosse2023spacetimecodescliffordcircuits}.
    \end{itemize}
\end{definition}
The formalism of the syndrome extraction circuit in Definition~\ref{def: main-syndrome-extraction-circuit} is directly motivated by the circuit-to-code mapping discussed from~\cite{delfosse2023spacetimecodescliffordcircuits, bacon2017sparse}. For the simplicity of discussion, we follow the treatment in~\cite{delfosse2023spacetimecodescliffordcircuits} where we discard the ancillary worldlines. We provide a systematic analysis in Appendix~\ref{subsec: code-circuit-iso} and recall the following notation analogous to those of~\cite{bacon2017sparse, delfosse2023spacetimecodescliffordcircuits}. In the case where the dominant types of faults consist of data errors and ancilla measurement errors, this simplification is done without loss of generality. In the syndrome extraction circuit, we have assumed that the stabilizer group is preserved at all instantaneous time steps and the output code at time $t=T$ must also return to the base stabilizer code with stabilizer group $\mathcal{S}$. A key notion of the Clifford spacetime circuit is the propagation of Pauli frames: since gates are assumed to be Clifford, one can propagate Pauli channels freely. We also give another version of the spacetime code mapping, appending the $r$ ancilla qubits with a similar circuit-to-code mapping (See Theorem~\ref{thm: circuit-subsystem-code-ancilla}). For $a \in \mathcal{A}$ we can write it more explicitly, 
\begin{align*}
    a = \eta_T(a) \cdots \eta_1(a) \eta_0(a),
\end{align*}
where $\eta_t(a)$ extracts the components in the tensor product corresponding to the time slice $t=0, \cdots, T$. 

\begin{definition}[Forward and backward propagation]
    For any element $a \in \mathcal{A}$,  Clifford circuit $\mathcal{C}_T(\mathcal{S})$, and time $i$ with $0 \leq i \leq T$, $u_{(i + 1, i)}$ denotes the Clifford unitary acting from time $i$ to $i+1$.  Denote the Clifford unitaries acting from time $i$ to time $j$ by $u_{(j, i)}=u_{(j, j-1)} \circ \cdots \circ u_{(i+1, i)}$ for $j \geq i+1$ (otherwise, it is an identity). Then the forward propagation and the backward propagation are respectively defined by

$$
\eta_t(\overrightarrow{a})=\prod_{i=0}^t u_{(t, i)}\left(\eta_i(a)\right); \quad \eta_t(\overleftarrow{a})=\prod_{i=t}^T u_{(T, i)}^{-1}\left(\eta_i(a)\right),
$$
for $T\geq t \geq 0$.
\end{definition}
We denote $\mathcal{M}_{t+0.5}$ by the subgroup generated by the stabilizer checks (generators) measured at time $t$. Furthermore, we can define a notion of inner product over the spacetime Pauli group $\mathcal{A}$ induced by that from each time slice, namely $\langle a, b \rangle = \sum^T_{i=0}\langle \eta_i(a), \eta_i(b) \rangle \mod 2$. With slight notational abuse where $\langle \cdot, \cdot \rangle$ also denotes (symplectic) inner product over the spatial Pauli group. The backward propagation and forward-propagation are intrinsically related by the inner products, 

\begin{proposition}[Corollary~1 and Proposition~3 in Ref.~\cite{delfosse2023spacetimecodescliffordcircuits}]
    The propagation operators have the following properties, 
    \begin{itemize}[label=--]
        \item The cumulant operations are automorphism within $\mathcal{A}$ which satisfies
        \begin{align}\label{eq: cumulant-automorphism-spacetime}
            \overleftarrow{ab} = \overleftarrow{a}\overleftarrow{b}; \quad  \overrightarrow{ab} = \overrightarrow{a}\overrightarrow{b},
        \end{align}
        for  $a, b \in \mathcal{A}$. 

        \item The forward and backward propagation operators are related by, 
        \begin{align}
            \langle \overleftarrow{a}, b \rangle  = \langle a, \overrightarrow{b}\rangle. 
        \end{align}
    \end{itemize}
\end{proposition}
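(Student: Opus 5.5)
We prove both items by reducing the spacetime statements to properties of the individual Clifford maps $u_{(j,i)}$ on each time slice. Concretely, we treat each $u_{(j,i)}$ as its induced action on $\mathrm{P}^n$ modulo phase, namely $p \mapsto u_{(j,i)}\, p\, u_{(j,i)}^{\dagger}$.

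Three facts about a single Clifford unitary drive everything. It is a group automorphism of $\mathrm{P}^n/\langle i\rangle \cong \mathbb{F}_2^{2n}$: conjugation respects products, and phases are discarded. It preserves the symplectic form, $\langle u(p), u(q)\rangle = \langle p, q\rangle$, because conjugation preserves commutation and anticommutation. Finally, $u_{(j,i)} = u_{(j,k)}\circ u_{(k,i)}$ for $i\le k\le j$, and $u_{(j,i)}^{-1}$ is also symplectic.

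\textbf{First item.} Fix a slice $t$. Since $\eta_i(ab)=\eta_i(a)\eta_i(b)$, each factor becomes $u_{(t,i)}(\eta_i(a)\eta_i(b)) = u_{(t,i)}(\eta_i(a))\,u_{(t,i)}(\eta_i(b))$. Modulo phase the spatial Pauli group is abelian, so we can regroup the product over $i$ as $\eta_t(\overrightarrow{a})\,\eta_t(\overrightarrow{b})$. The backward map is handled the same way using $u_{(T,i)}^{-1}$.

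Both maps are automorphisms, not merely endomorphisms, because each is triangular with identity diagonal: the $i=t$ term is $\eta_t(a)$ itself. Such a map on the finite group $\mathcal{A}$ is injective, so we recover $a$ slice by slice, from $t=0$ upward for $\overrightarrow{\cdot}$ and from $t=T$ downward for $\overleftarrow{\cdot}$. Injectivity on a finite group then gives bijectivity.

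\textbf{Second item.} Expand both sides by bilinearity of the spatial form over $\mathbb{F}_2$:
\begin{align*}
\langle a,\overrightarrow{b}\rangle &= \sum_{t=0}^T\sum_{i=0}^{t}\langle \eta_t(a), u_{(t,i)}(\eta_i(b))\rangle,\\
\langle \overleftarrow{a}, b\rangle &= \sum_{i=0}^T\sum_{t=i}^{T}\langle u_{(T,t)}^{-1}(\eta_t(a)), \eta_i(b)\rangle .
\end{align*}
Both double sums run over the same index set $\{(i,t): 0\le i\le t\le T\}$, so it suffices to match them term by term. In the second sum, apply the symplectic map $u_{(T,i)}$ to both arguments. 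This gives $\langle u_{(T,i)}u_{(T,t)}^{-1}(\eta_t(a)), u_{(T,i)}(\eta_i(b))\rangle$.

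We have $u_{(T,i)} = u_{(T,t)}u_{(t,i)}$, so $u_{(T,i)}u_{(T,t)}^{-1}$ is not directly the identity. Instead, write $u_{(T,i)}(\eta_i(b)) = u_{(T,t)}u_{(t,i)}(\eta_i(b))$. Then use invariance under $u_{(T,t)}^{-1}$ to obtain $\langle u_{(T,t)}^{-1}(\eta_t(a)),\eta_i(b)\rangle = \langle \eta_t(a), u_{(T,t)}u_{(T,t)}^{-1}\cdots\rangle$.

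More cleanly: with $v=u_{(T,t)}$, $\langle v^{-1}x, y\rangle = \langle x, v y\rangle$. Applying this directly gives $\langle \eta_t(a), u_{(T,t)}(\eta_i(b))\rangle$, which involves $u_{(T,t)}$ rather than $u_{(t,i)}$. So the definitions must be read consistently, and this is the main obstacle: checking which convention (propagating to slice $t$ versus to $T$) makes the terms agree.

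I expect that the intended backward map is $\eta_t(\overleftarrow{a}) = \prod_{i\ge t} u_{(i,t)}^{-1}(\eta_i(a))$, matching the convention in Ref.~\cite{delfosse2023spacetimecodescliffordcircuits}. Under that convention the term-by-term identity is
\[
\langle u_{(t,i)}^{-1}\eta_t(a), \eta_i(b)\rangle = \langle \eta_t(a), u_{(t,i)}\eta_i(b)\rangle ,
\]
which is immediate from symplectic invariance, and summing over the index set finishes the proof. I would therefore first fix the backward propagation to this form, noting that it agrees with the cited reference, and then run the computation above.
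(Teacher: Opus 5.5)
Your proof is correct and follows the paper's argument. The homomorphism property comes from conjugation respecting products modulo phase, invertibility from the triangular structure with identity diagonal, and the duality from expanding both sides into a double sum over $0\le i\le t\le T$ and applying symplectic invariance of $u_{(t,i)}$ term by term. Your convention $\eta_t(\overleftarrow{a})=\prod_{i\ge t}u_{(i,t)}^{-1}(\eta_i(a))$ is the one the paper's own proof actually uses when it reassembles $\prod_{t=i}^{T}u^{-1}_{(t,i)}(b(t))$ into a cumulant. The correction is needed: with the literal $u_{(T,i)}^{-1}$ the identity already fails for $T=1$ (take $a=\eta_0(X)$, $b=\eta_0(Z)$, and $u_{(1,0)}$ a Hadamard).
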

The proof is given in Ref.~\cite{delfosse2023spacetimecodescliffordcircuits} and restated in Proposition~\ref{prop: spacetime-cumulant-automorphism}. We may understand the inner product relation by analogy with $4$-momentum conservation in particle physics. At some time slice $t$, we perform some measurement for $\eta_t(m)$ with $m \in \mathcal{M}_{t+0.5}$. It would detect any fault $a$ that has been propagated to at time $t$, if $\langle \eta_t(m), \eta_t(\overrightarrow{a}) \rangle = 1$. Such a fault $a$ may be induced earlier in time, say $t=0$. In this case we have that $\langle \eta_t(m), \eta_t(\overrightarrow{a}) \rangle = \langle \eta_t(m), u_{(t, 0)}(a)\rangle = \langle u^{-1}_{(t, 0)}(\eta_t(m)), u^{-1}_{(t, 0)} \circ u_{(t, 0)}(a) \rangle = \langle \eta_0(\overleftarrow{m}), \eta_0(a) \rangle$. Finally this implies $\langle \overleftarrow{m}, \eta_0(a)\rangle=1$, taking the backward propagation on the measurement operator. We may think  $\langle \eta_t(m), \eta_t(\overrightarrow{a}) \rangle$ is evaluated at the laboratory frame, and $\langle \overleftarrow{m}, \eta_0(a)\rangle=1$ is evaluated at the particle rest frame, where the fault is instantaneously created. The invariance $\langle \overleftarrow{m}, \eta_0(a)\rangle=1 = \langle \eta_t(m), \eta_t(\overrightarrow{a}) \rangle$ takes the analogy of momentum conservation regardless of which frames one observes. We carefully examine the change-of-frames in relation to the logical error channel in Section~\ref{subsec: spacetime-logical-noise-channel}, and we choose to always work in the rest frame throughout.

\subsection{Circuit-to-code mapping}\label{main-subsec: circuit-to-code-mapping}

We now apply the main techniques from Section~\ref{main-sec: main-results} to a quantum circuit $\mathcal{C}_T(\mathcal{S})$ implementing a base stabilizer code with stabilizer group $\mathcal{S}$.  In the setting of the syndrome extraction circuit, one can formally view it as a subsystem code. 

\begin{theorem}[Circuit-to-code isomorphism]\label{thm: main-circuit-to-code-iso}
Let $\mathcal{S}$ be a base stabilizer group that encodes $k$ logical qubits. Let $\mathcal{C}_T(\mathcal{S})$ be a syndrome extraction circuit that implements $\mathcal{S}$, terminating at time $T$. Then there exists a constructive code isomorphism $\mathcal{C}_T(\mathcal{S}) \mapsto \mathscr{Q} = (\mathscr{G}, \mathscr{M})$, where $\mathscr{Q} = (\mathscr{G}, \mathscr{M})$ denotes the subsystem code defined on the spacetime Pauli group $\mathcal{A}$, with code parameters encoding precisely $k$ logical qubits and $T$ is chosen to be even integer. 
\end{theorem}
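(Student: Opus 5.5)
We construct the spacetime subsystem code explicitly, following Delfosse et al.'s spacetime code construction and Bacon et al.'s circuit-to-code mapping, and then count logical qubits through the standard subsystem-code identity. Work in the rest frame throughout. For each half-integer slice $t+0.5$, let $\mathcal{M}_{t+0.5}$ be the group of checks measured there. Each measured check is placed as a spacetime element via backward propagation. Concretely, we define
\begin{align*}
\mathscr{M} := \big\langle \overleftarrow{m} \;:\; m \in \mathcal{M}_{t+0.5},\ t=0,\dots,T-1 \big\rangle \cdot \big\langle \text{redundancy relations between consecutive rounds} \big\rangle .
\end{align*}
These relations are products of the same stabilizer measured at times $t$ and $t'$, propagated to a common frame; they are the detectors. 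We then take $\mathscr{G}$ to be generated by $\mathscr{M}$ together with the \emph{spacetime gauge} elements. These are products $\eta_t(p)\,\eta_{t+1}(u_{(t+1,t)}(p))$ for single-qubit Paulis $p$, i.e.\ a fault inserted at $t$ and cancelled by its propagated image at $t+1$. Also included in $\mathscr{G}$ are the measured operators themselves at their time slices, since measurement collapses those degrees of freedom.

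\textbf{Step 1 (structure).} By the Proposition on cumulant automorphisms, propagation is a group automorphism of $\mathcal{A}$ that preserves the pairing via $\langle \overleftarrow{a},b\rangle=\langle a,\overrightarrow{b}\rangle$. Using this, we verify that $\mathscr{M}=\mathscr{G}\cap\mathscr{G}^\perp$. Every $\overleftarrow{m}$ commutes with every cancelling gauge pair, because the two halves of the pair contribute equal parities. Conversely, any element of $\mathscr{G}$ commuting with all of $\mathscr{G}$ must be generated by the measured checks. This gives a well-defined subsystem code $(\mathscr{G},\mathscr{M})$ on $\mathcal{A}\cong\mathbb{F}_2^{2n(T+1)}$.

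\textbf{Step 2 (bijection of data).} Next we show the map is a code isomorphism. The syndrome $\sigma(a)$ of a spacetime fault $a$ with respect to $\mathscr{M}$ equals the vector of flipped measurement outcomes in $\mathcal{C}_T(\mathcal{S})$. This is exactly the invariance computation preceding the theorem, $\langle \eta_t(m),\eta_t(\overrightarrow{a})\rangle=\langle\overleftarrow{m},a\rangle$. In addition, two faults have the same effect on the output state and outcomes iff they differ by an element of $\mathscr{G}$. The construction is explicit from the circuit, so it is constructive.

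\textbf{Step 3 (counting).} For a subsystem code on $N$ qubits, $\dim\mathscr{G}^\perp/\mathscr{G} = 2k_{\rm logical}$. We will count dimensions with $N=n(T+1)$. The cancelling gauge pairs identify slice $t$ with slice $t+1$ modulo $\mathscr{G}$, which collapses $\mathscr{G}^\perp/\mathscr{G}$ to a single slice's worth of Paulis. More precisely, bare logicals are the forward-propagated worldlines $\overrightarrow{\ell}$ of a single-slice Pauli $\ell$. A worldline commutes with all gauge pairs, and it commutes with all measured checks exactly when $\ell\in N(\mathcal{S})$. Quotienting further by $\mathscr{G}$, which contains the worldlines of $\mathcal{S}$ because measured stabilizers become gauge, leaves $N(\mathcal{S})/\mathcal{S}\cong \mathbb{F}_2^{2k}$. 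Logical Pauli measurements only multiply the logical by a dressed operator without changing the count. Requiring $T$ even and the output code equal to $\mathcal{S}$ ensures the worldlines close consistently at $t=T$.

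The main obstacle is Step 3: proving rigorously that the quotient $\mathscr{G}^\perp/\mathscr{G}$ is \emph{exactly} $N(\mathcal{S})/\mathcal{S}$. We must show that every element of $\mathscr{G}^\perp$ is gauge-equivalent to a single worldline, and that no worldline of a nontrivial logical lies in $\mathscr{G}$. I would first prove a normal-form lemma: using the gauge pairs, any $a\in\mathcal{A}$ can be pushed to the final slice, $a\sim_{\mathscr{G}}\eta_T(\overrightarrow{a})$. Commutation with the checks then forces $\eta_T(\overrightarrow{a})\in N(\mathcal{S})$, and a rank argument excludes the worldline collapse.
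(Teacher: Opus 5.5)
Your construction fails at the gauge group. You put the transport pairs $g(p,t)=\eta_{t+1}(u_{(t+1,t)}(p))\,\eta_t(p)$ into $\mathscr{G}$ for \emph{every} Pauli $p$. But the backward-propagated check $\overleftarrow{\eta_t(m)}$, with $m\in\mathcal{M}_{t+0.5}$, is supported only on slices $0,\dots,t$. Your claim that ``the two halves contribute equal parities'' holds only for pairs lying entirely in slices $\le t$. For the pair $g(p,t)$ that straddles the measurement, only the slice-$t$ half overlaps, so $\langle\overleftarrow{\eta_t(m)},g(p,t)\rangle=\langle m,p\rangle$, which equals $1$ whenever $p$ anticommutes with $m$.

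In fact the unrestricted transports span the whole kernel of $x\mapsto\eta_T(\overrightarrow{x})$ (both have dimension $2nT$). So your $\mathscr{G}$ is the full preimage of $\mathcal{S}$ under this map, assuming the propagated checks generate $\mathcal{S}$. Then $\mathscr{G}^\perp$ consists only of full worldlines $\overleftarrow{\eta_T(l)}$ with $l\in C(\mathcal{S})$, and the center is just $\{\overleftarrow{\eta_T(s)}:s\in\mathcal{S}\}$. The logical count happens to come out as $k$, but the consequences are fatal:
\begin{itemize}
\item no mid-circuit check survives in the center;
\item a measurement error (a pair $g(p,t)$ with $p$ anticommuting with a check measured at $t+0.5$) becomes a gauge operator;
\item Steps~1 and~2 are therefore false.
\end{itemize}
Your normal form $a\sim_{\mathscr{G}}\eta_T(\overrightarrow{a})$ for all $a\in\mathcal{A}$ is the symptom: it does hold for your $\mathscr{G}$, and it says every syndrome is a function of the final accumulated error alone.

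The missing idea is that a measurement blocks the transport of Paulis that anticommute with it. The paper takes $\mathcal{G}_{(t+1,t)}=\{g(a,t):a\in C(\mathcal{M}_{t+0.5})\}$ together with $\eta_t(\mathcal{M}_{t+0.5})$ for $t=0,\dots,T$ (with terminal $\mathcal{M}_{T+0.5}=\mathcal{S}$). It then \emph{defines} $\mathscr{M}=\mathscr{G}\cap\mathscr{G}^\perp$ rather than prescribing it. Your prescribed $\mathscr{M}$ also omits the terminal worldlines $\overleftarrow{\eta_T(s)}$, which do lie in the center.

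With this $\mathscr{G}$, the normal form is proved only for $x=\prod_t\eta_t(a_t)\in\mathscr{G}^\perp$. Commuting with the restricted transports gives $s_t:=a_t\,u^{-1}_{(t+1,t)}(a_{t+1})\in C(C(\mathcal{M}_{t+0.5}))=\mathcal{M}_{t+0.5}$ (double commutant, phases ignored). Telescoping then yields $x=\prod_{t<T}\overleftarrow{\eta_t(s_t)}\cdot\overleftarrow{\eta_T(a_T)}$ with $a_T\in C(\mathcal{S})$. Each $\overleftarrow{\eta_t(m)}$ and each $\overleftarrow{\eta_T(s)}$ lies in $\mathscr{G}$: tile the slices by allowed transports, absorbing a leftover slice with the measured operator itself. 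This identifies $\mathscr{M}$ and makes $[l]\mapsto[\overleftarrow{\eta_T(l)}]$ a surjection $C(\mathcal{S})/\mathcal{S}\to\mathscr{G}^\perp/\mathscr{M}$.

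Injectivity is where the evenness of $T$ does real work, not just ``worldlines closing.'' The map $x\mapsto\eta_T(\overrightarrow{x})$ kills every transport and sends $\mathscr{G}$ into $\mathcal{S}$, while it sends $\overleftarrow{\eta_T(l)}$ to $l^{T+1}=l$. Hence $\overleftarrow{\eta_T(l)}\in\mathscr{G}$ forces $l\in\mathcal{S}$. For odd $T$, a logical worldline tiles completely into allowed transports and would be gauge.

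Two smaller points. The count should be $\dim\mathscr{G}^\perp/\mathscr{M}=2k$; for a subsystem code $\mathscr{G}\not\subseteq\mathscr{G}^\perp$, so $\mathscr{G}^\perp/\mathscr{G}$ is not a meaningful quotient. And logical Pauli measurements leave the count unchanged only because they are not added to $\mathscr{G}$. As gauge generators they would turn the measured logical into a stabilizer and remove its conjugate.
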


The spacetime stabilizer group or the measurement group $\mathscr{M}$ is generated by 

\begin{align}\label{eq: main-sapcetime-measuresment-group}
    \mathscr{M} = \langle \overleftarrow{\eta_t(m)} \in \mathcal{M}_{t+0.5}; \: t=0, \cdots T-1 \rangle \cdot \langle \overleftarrow{\eta_T(s)}, \: s \in \mathcal{S} \rangle,  
\end{align}
where $\{ \overleftarrow{\eta_T(s)}, \: s \in \mathcal{S} \}$ assumes a layer of perfect syndrome measurements at the terminal time. The proof detail is presented in Appendix~\ref{subsec: code-circuit-iso}. The intuition for the above construction of isomorphism relies on explicitly constructing the symmetry of $\mathcal{C}_T(\mathcal{S})$ so that it matches with that of the base stabilizer code with stabilizer group $\mathcal{S}$. In particular, the gauge group is constructed by finding all elements (of spacetime Pauli group) whose forward propagation lies in the stabilizer group $\mathcal{S}$ at the terminal time and also commute (undetected) from every intermediate stabilizer checks. Evidently, this means that for any $g \in \mathscr{G}$, $\eta_T(\overrightarrow{g}) \in \mathcal{S}$.

A key distinction to the (more versatile) Clifford circuit-to-code mapping presented in \cite{delfosse2023spacetimecodescliffordcircuits} is the omission of logical Pauli measurements or, more generally, non-commuting Pauli measurements as spacetime checks. This can be incorporated in our setting by modifying the gauge group construction to Theorem~\ref{thm: main-circuit-to-code-iso} (see in Section~\ref{subsec: code-circuit-iso}). The main rationale behind treating logical measurement as a spacetime checks is that by performing a logical Pauli measurement at some time $t+0.5$, the instantaneous stabilizer group at $t$ would contain the logical operator for which it measured. However, since the logical measurements preserve the stabilizer group by construction, if one only intends to record the stabilizer measurement outcomes, there is no harm in including the logical Pauli measurement (or any logical measurements) in Definition~\ref{def: main-syndrome-extraction-circuit} without treating them as spacetime checks.

\begin{remark}
In many practical settings, one may wish to construct detector error models~\cite{derks2024designing, delfosse2023spacetimecodescliffordcircuits, blumekohout2025estimatingdetectorerrormodels}. In our case, the detectors can be defined by defining a set of $\mathbb{F}_2$-linear equations from the measurement subgroup $\mathscr{M}$, in addition to spacetime checks forming from--in this case--the logical Pauli measurements. This observation is similarly captured by the recent work~\cite{pesah2025faulttoleranttransformationsspacetimecodes}. In Refs.~\cite{delfosse2023spacetimecodescliffordcircuits, derks2024designing} one could think to construct spacetime code directly from these detectors. These constructions differ here due to different purposes: since we only care about the syndrome information, we simply do not treat any logical Pauli measurements as spacetime checks. One may wish to generalize the circuit-to-code mapping by treating the mid-circuit Pauli measurements or any mid-circuit measurements as spacetime checks, if we wish to leverage more information than the syndromes. We describe a way to incorporate this into the circuit-to-code mapping in Appendix~\ref{subsec: code-circuit-iso} and leave a systematic study for future work. 
\end{remark}

\begin{figure*}[pt]
\centering

\includegraphics[width=0.9\textwidth]{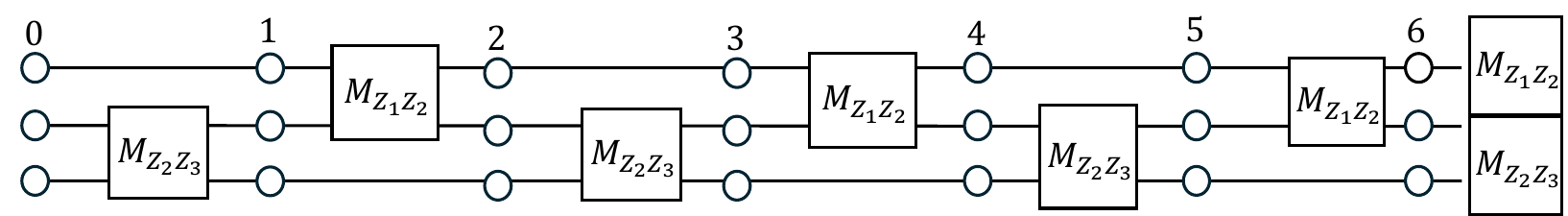}

\caption{A schematic illustration of the repetition code with three rounds of syndrome extraction. As in Eq.~\eqref{eq: rep-circuit-stabilizer-no-ancilla}, there are 8 independent stabilizer generators given in backward propagation of $6$ mid-circuit stabilizers and $2$ output stabilizers. A schematic plot featuring ancillas for the stabilizer measurements is given in Figure~\ref{fig: rep-circuit-ancilla}.}

\label{Fig: main-rep-circuit-spacetime-example}
\end{figure*}

\begin{example}[3 rounds of syndrome extraction with repetition codes]
     Let us consider we have $3$ rounds of syndrome measurements and our measurement group
    \begin{align}
        \mathcal{M}^{\text{even}}_{i+0.5}  = \langle Z_1Z_2 \rangle; \quad  \mathcal{M}^{\text{odd}}_{i + 0.5}  = \langle Z_2Z_3 \rangle.
    \end{align}
    For $i=0, 1, \cdots, 5$. We now construct our gauge group. First, we consider the following subgroup  
    \begin{align}\label{eq: main-rep-code-examples-pauli-transports}
        \begin{aligned}
            &\mathcal{G}^{\text{even}}_{(i+1, i)}  = \langle g(X_3, i), g(X_1X_2X_3, i) , g(Z_1, i), g(Z_2, i), g(Z_3, i) \rangle, \\
        &\mathcal{G}^{\text{odd}}_{(i+1, i)}  = \langle g(X_1, i), g(X_1X_2X_3, i) , g(Z_1, i), g(Z_2, i), g(Z_3, i) \rangle. 
        \end{aligned}
    \end{align}
    Then the gauge group $\mathscr{G} \subseteq \mathrm{P}^{21}$ can be constructed 
        \begin{align}\label{eq: main-rep-code-spacetime-gauge-group}
           \mathscr{G} = \langle \mathcal{G}_{(i+1, i)}, \eta_i(\mathcal{M}_{i+0.5}) : i=0, \cdots 5 \rangle,
        \end{align}
        where we denote $\eta_i(\mathcal{M}_{i+0.5})$ a shorthand notation as $\{\eta_i(m), \forall m \in \mathcal{M}_{i+0.5}\}$. The center of the gauge group $\mathscr{M} = \mathscr{G}^\perp \cap \mathscr{G}$ can be given by 
            \begin{align}\label{eq: rep-circuit-stabilizer-no-ancilla}
            \left\{ 
            \begin{aligned}
            &\overleftarrow{\eta_6(Z_1Z_2)}, \overleftarrow{\eta_6(Z_2Z_3)},  \overleftarrow{\eta_5(Z_2Z_3)}, \overleftarrow{\eta_4(Z_1Z_2)}, \\
            &\overleftarrow{\eta_3(Z_2Z_3)}, \overleftarrow{\eta_2(Z_1Z_2)},
            \overleftarrow{\eta_1(Z_2Z_3)}, \overleftarrow{\eta_0(Z_1Z_2)}
            \end{aligned}  \right\},
        \end{align}
which gives $8$ desired generators. Note that by the dimension counting over symplectic space, the gauge group has $ 5 \times 6 + 2 = 32$ linearly independent operators so that $\dim \mathscr{G}^\perp = 42 - 32 = 10$. Then it follows that $\dim \mathscr{G}^\perp / \mathscr{M} = 10 - 8 = 2$, which encodes precisely one logical qubit. It is easy to see that the choice of such a logical representative is given by 
\begin{align}
    \overleftarrow{\eta_6(X_1X_2X_3)}, \quad \overleftarrow{\eta_6(Z_1)}. 
\end{align}
\end{example}

 To apply the learnability statements, Theorem~\ref{thm: main-physical-learnability} and Theorem~\ref{thm: main-logical-learnability}, let $\mathcal{C}_{T}(\mathcal{S})$ be represented by a subsystem code $(\mathscr{G}, \mathscr{M})$ over the spacetime group $\mathcal{A}$ and $\Lambda$ a fault distribution on $\mathcal{A}$. The gauge transformation would reduce to the stabilizer deformation on the final output stabilizer code, given by the final round of (perfect) stabilizer measurements. 

\begin{corollary}
    Let $(\mathscr{G}, \mathscr{M})$ be the circuit-to-code isomorphism in Theorem~\ref{thm: main-circuit-to-code-iso}. For two circuit Pauli faults $a, b \in \mathcal{A}$, their final accumulated errors at $T$ are given respectively by $\eta_T(\overrightarrow{a})$ and $\eta_T(\overrightarrow{b})$. Then, 
     \begin{enumerate}
         \item If $\sigma(a) = \sigma(b)$, then $\eta_T(\overrightarrow{a})$ and $\eta_T(\overrightarrow{b})$ must have the same syndrome associated with the output stabilizer group $\mathcal{M}_{T+0.5} = \mathcal{S}$.

         \item For any $a \sim_{\mathscr{G}} b$, that is, there exists some element in the gauge group $g \in \mathscr{G}$ such that $a = b  g$. Then there always exists some stabilizers $s \in \mathcal{S} = \mathcal{M}_{T+0.5}$ such that $\eta_T(\overrightarrow{a}) = \eta_T(\overrightarrow{b} )\eta_T(s)$. 
     \end{enumerate}
    
\end{corollary}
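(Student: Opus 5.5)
Both parts follow from the duality $\langle \overleftarrow{a}, b\rangle = \langle a, \overrightarrow{b}\rangle$ and the explicit generators of $\mathscr{M}$ in Eq.~\eqref{eq: main-sapcetime-measuresment-group}.

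For part 1, the generators include $\overleftarrow{\eta_T(s)}$ for every $s \in \mathcal{S}$, so $\sigma(a)$ contains the bits $\langle \overleftarrow{\eta_T(s)}, a\rangle$. By the inner product relation of the propagation Proposition,
\[
\langle \overleftarrow{\eta_T(s)}, a\rangle = \langle \eta_T(s), \overrightarrow{a}\rangle = \langle s, \eta_T(\overrightarrow{a})\rangle .
\]
The last equality holds because $\eta_T(s)$ is supported only on time slice $T$, so only the slice-$T$ component of $\overrightarrow{a}$ contributes to the spacetime inner product. If $\sigma(a)=\sigma(b)$, these bits agree for all $s$ in a generating set of $\mathcal{S}$. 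Hence $\eta_T(\overrightarrow{a})$ and $\eta_T(\overrightarrow{b})$ have the same syndrome with respect to $\mathcal{M}_{T+0.5}=\mathcal{S}$. Choosing the generator decomposition of $\sigma$ to include a generating set of $\mathcal{S}$ is harmless: equality of $\sigma$ is basis-independent, since it is equivalent to $\langle m, ab\rangle = 0$ for all $m\in\mathscr{M}$.

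For part 2, write $a = bg$. By the homomorphism property Eq.~\eqref{eq: cumulant-automorphism-spacetime}, $\overrightarrow{a} = \overrightarrow{b}\,\overrightarrow{g}$. Since $\eta_T$ is a group homomorphism (projection onto one tensor factor, modulo phase), $\eta_T(\overrightarrow{a}) = \eta_T(\overrightarrow{b})\,\eta_T(\overrightarrow{g})$. It then suffices to show $\eta_T(\overrightarrow{g}) \in \mathcal{S}$ for every $g \in \mathscr{G}$, and set $s := \eta_T(\overrightarrow{g})$, reading $\eta_T(s)$ as $s$ placed on slice $T$.

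That gauge elements propagate to terminal stabilizers is the property stated right after Theorem~\ref{thm: main-circuit-to-code-iso}. I would nonetheless verify it on generators, following the construction in Appendix~\ref{subsec: code-circuit-iso} as illustrated in the repetition code example, Eqs.~\eqref{eq: main-rep-code-examples-pauli-transports}--\eqref{eq: main-rep-code-spacetime-gauge-group}. There are two kinds of generators.
\begin{itemize}
\item The Pauli transports $g(P,i)$ insert $P$ at slice $i$ and its propagated inverse at slice $i+1$, so their forward propagation vanishes on all later slices and $\eta_T(\overrightarrow{g(P,i)})=I$.
\item The measured checks $\eta_i(m)$, $m\in\mathcal{M}_{i+0.5}$, forward-propagate to $u_{(T,i)}(m)$. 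This lies in the instantaneous stabilizer group at time $T$, which equals $\mathcal{S}$ by the assumption in Definition~\ref{def: main-syndrome-extraction-circuit} that the stabilizer group is preserved.
\end{itemize}
Closure under products again uses the homomorphism property.

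The main obstacle is this last step. It relies on the precise generators of $\mathscr{G}$ from the appendix construction, in particular that the transport elements cancel completely under forward propagation rather than leaving a residual logical at time $T$. If the appendix instead defines $\mathscr{G}$ directly as the set of elements whose forward propagation at $T$ lies in $\mathcal{S}$ (together with the commutation conditions), then part 2 is immediate and the generator check is unnecessary.
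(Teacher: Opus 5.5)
Your proposal is correct and follows the same route as the paper. Part 1 comes from $\overleftarrow{\eta_T(s)}\in\mathscr{M}$ for all $s\in\mathcal{S}$ together with the duality $\langle\overleftarrow{x},y\rangle=\langle x,\overrightarrow{y}\rangle$. Part 2 comes from $\overrightarrow{bg}=\overrightarrow{b}\,\overrightarrow{g}$ together with $\eta_T(\overrightarrow{g})\in\mathcal{S}$ for $g\in\mathscr{G}$. Your generator-level check of the latter fact is cleaner than the paper's one-line justification, which is loose about which generators give the nontrivial terminal contribution: every transport $g(P,i)$, including those in $\mathcal{G}_{(T,T-1)}$, cancels at slice $T$, while each measured check $\eta_i(m)$ lands on $u_{(T,i)}(m)\in\mathcal{S}$.
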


Since the learnability dictates that for any $a, b \in \mathcal{K}$, $\sigma(a) = \sigma(b) \iff a \sim_{\mathscr{G}} b$. Then it is easy to see that from the above $\sigma(a) = \sigma(b) \iff \eta_T(\overrightarrow{a}) \sim_{\mathcal{S}} \eta_T(\overrightarrow{b})$ at the output time $T$. Hence, if the propagation of all $\Lambda$ is learnable up to logical equivalence at terminal time $T$ with an output (base) stabilizer code, then the circuit fault is strictly learnable up to a logical equivalence. Note that the converse in general does not hold. 

Fault tolerance is typically associated with the notion of distance, describing how many and how large (as in weight) faults it can in principle correct. For a base stabilizer code $\mathcal{S}$ encoding $k$ logical operators with distance $d$. It is natural to ask what is the notion of distance of the circuit-to-code mapping $C_T(\mathcal{S}) \mapsto (\mathscr{G}, \mathscr{M})$. For any subgroup (subspace) $\mathcal{H} \leq \mathcal{A}$, we denote $\mathcal{H}^\perp$ to its orthogonal complement space with respect to the $\langle \cdot, \cdot \rangle$, i.e., elements that commute with $\mathcal{A}$. Hence, the \emph{bare logical subspace} of $(\mathscr{G}, \mathscr{M})$ is given $ \mathscr{L}_{\operatorname{bare}} := \mathscr{G}^\perp / \mathscr{M} $ and the \emph{dressed logical subspace} $\mathscr{L}_{\operatorname{dressed}} := \mathscr{M}^\perp / \mathscr{G}$. It is noted that we have canonical isomorphism $\mathscr{M}^\perp / \mathscr{G} \cong \mathscr{G}^\perp / \mathscr{M}$. The bare logical distance refers to the minimal weight representation of the bare logical subspace. From Theorem~\ref{thm: main-circuit-to-code-iso} (see the full proof in Theorem~\ref{thm: circuit-subsystem-code-1} in Section~\ref{subsec: code-circuit-iso}), $ \mathscr{L}_{\operatorname{bare}} = \{ \overleftarrow{\eta_T(l)}\}$ for logical operators defined from the base stabilizer group $\mathcal{S}$. Hence, it is clear that the bare distance is always lower-bounded by the distance of the base stabilizer group. It then remains to bound the dressed distance, as the minimal weight element from $\mathscr{L}_{\operatorname{dressed}}$. We leave the systematic investigation of bounding the dressed distance to future work, where it involves detailed analysis on the fault tolerance of individual circuit components. In what follows, we consider a simple example, assuming the circuit consists of strictly transversal implementation of Clifford unitaries and Pauli measurements. In this case, we have that $|\eta_{t+1}(\overrightarrow{a}) | \leq |\eta_{t }(a)| $ for $t=0, \cdots, T-1$, for any $a \in \mathcal{A}$. Let us denote $\Phi: \mathscr{M}^\perp \rightarrow \mathcal{S}^{\perp} / \mathcal{S}$ by $\Phi(a) = \eta_T(\overrightarrow{a}) \eta_T(s)$ for any $s \in \mathcal{S}$. We claim that $\Phi$ is constant over $\mathscr{G}$, which follows from our intuitive statement that the terminal contribution of elements from the spacetime gauge group lies in the stabilizer group (Formally see Eq.~\eqref{eq: spacetime-gauge-group} and see in Eq.~\eqref{eq: main-rep-code-spacetime-gauge-group} for an example). Hence modulo over $\mathcal{S}$, it must evaluate to identity. Recall that the dressed logical space is $\mathscr{M}^\perp/\mathscr{G}$ and note that $\mathscr{M}^\perp = \mathscr{G} \cdot \mathscr{G}^\perp = \mathscr{G} \cdot \mathscr{M} \cdot \mathscr{L}_{\operatorname{bare}} = \mathscr{G} \cdot \mathscr{L}_{\operatorname{bare}}$. Since $\Phi$ is a well-defined homomorphism, we conclude that for any $a \in \mathscr{M}^\perp / \mathscr{G}$, $|\Phi(a)| \geq d$. Finally by the above assumption, $ d\leq  |\Phi(a)| \leq |\eta_T(\overrightarrow{a})| \leq  |a|$, whenever $a \notin \mathscr{M}$. Hence, the dressed distance is at least the code distance of the base stabilizer group.

\begin{example}
[Fault-tolerant circuit]
A sufficient condition analogous to that from Ref.~\cite{Wagner_2023} can be given: the circuit Pauli faults are learnable up to the logical equivalence if the Pauli fault distributions are local in space and time coordinates such that the support of any $a\in \mathcal{K}$ is less than the $\lfloor d/2\rfloor$ where $d$ denotes the dressed distance of the spacetime subsystem code $(\mathscr{G}, \mathscr{M})$. A typical example assumes a time-stationary fault distribution that is also locally correlated. Then one could learn $\Lambda$ up to logical equivalence and estimate the logical error probability (LEP) from syndrome data alone. This is numerically demonstrated in Figure~\ref{Fig: main-simulation}. One might worry that the terminal (perfect) syndrome measurement is unrealistic in such cases. This can be addressed by assuming final traversal measurements typical in memory experiments. Note that since the terminal logical measurements and the terminal syndrome measurements are similarly extracted from faulty transversal measurements, such terminal measurement faults do not pose an obstacle. Hence, if our goal is to utilize the syndrome data to infer terminal logical measurements obtained from terminal transversal measurements, the same learnability condition as in Theorem~\ref{thm: main-circuit-to-code-iso} could still apply. 
\end{example}

\subsection{Estimating the unlearnable degrees of freedom}\label{main-subsec: main-unlearnable-dof}

 Suppose that we perform the final transversal measurements from which we infer the logical measurements and terminal syndrome measurements. Then we only arrive at the case of unlearnability with independent fault processes $a \in \mathcal{K}$ such that $a = \Omega(d)$ where $d$ denotes the dressed distance associated with $(\mathscr{G}, \mathscr{M})$. These faults (circuit catastrophic error) are typically assumed to be extremely rare to occur and it is therefore desirable to give a estimate on how such a fault source could deviate prediction of true logical error probability using syndrome data.

 We can explicitly characterize the learnable degrees of freedom similar to writing the matrix $D_{\mathscr{G}^\perp}$ in Theorem~\ref{thm: main-logical-learnability}. If we are unable to learn up to logical equivalence, then there would exist $a, b \in \mathcal{K}$ with the same syndrome $\sigma(a) = \sigma(b)$. Taking the logarithm as standard, we can partition the matrix $D_{\mathscr{G}^\perp}$ as follows, 
\begin{align}\label{eq: main-detector-matrix-full}
    D_{\mathscr{G}^\perp} = \left(\begin{array}{c|c|c} 
  A_{\mathscr{M}} & B_{\mathscr{M}} & C_{\mathscr{M}} \\ 
  \hline 
  A_{\mathscr{G}^\perp / \mathscr{M}} & B_{\mathscr{G}^\perp / \mathscr{M}}  & C_{\mathscr{G}^\perp / \mathscr{M}}
\end{array}\right).
\end{align}
Let $A$ denote these with unique syndromes and $B$ denote these with duplicate syndromes but related to gauge group $\mathscr{G}$. Let $C$ denote the duplicate syndromes that are not learnable (not related by the gauge group). In this case, one can always find a basis such that, 
\begin{align}\label{eq: main-detector-matrix-basis-change}
    D_{\mathscr{G}^\perp} = \left(\begin{array}{c|c|c} 
  A_{\mathscr{M}} & 0 & 0\\ 
  \hline 
  A_{\mathscr{G}^\perp / \mathscr{M}} & 0  & C'_{\mathscr{G}^\perp / \mathscr{M}}
\end{array}\right), 
\end{align}
where each (nonzero) column within the partition $C$ undergoes a column reduction with the column of $A$ sharing the same syndrome. By definition of $C$, $C'_{\mathscr{G}^\perp / \mathscr{M}}$ cannot be all zero. 

\begin{remark}
    An austere reader might notice that the partition of columns $A, B, C$ are not unique. Hence, suppose that $a, b, c \in \mathcal{K}$ such that $a \sim_{\mathscr{G}} b $ and $a \sim_{\mathscr{L}_{\operatorname{dressed}}} c$. If we partition $a, b,c $ to the columns of $A, B, C$ respectively, then we have precisely $1$ unlearnable degree of freedom. However, if we identify the column of $A$ with $c$, then we have two unlearnable degrees of freedom, which could be counter-intuitive. To explain this, one may best understand that these degrees of freedom relative to the choice of $\mathcal{K}_{\mathscr{M}}$ as some correction terms to the logical error probability. That is, different way of partitioning would result in different estimations of logical error probabilities. We examine this aspect in more detail in Section~\ref{main-subsec: lep-sampling-complexity}.
\end{remark}

In what follows, we consider how to proceed with bounding the error resulting from cases where there exists an error that is not learnable up to logical equivalence. Suppose that the column $C$ is given by the rare events with probability $\varepsilon_c$ to occur. For $b \in \mathscr{G}^\perp$ and $\lambda_b = \lambda^{\operatorname{eff}}_b$
\begin{align}
   \begin{aligned}
        \log \lambda_b &= \sum_{j} e_j \log \lambda_{s_j} + \sum_{c \in C} C'_{\mathscr{G}^\perp / \mathscr{M}}[b, c]\log(1-2q_c), \\
    &= \sum_{j} e_j \log \lambda_{s_j} + O(|C'_{\mathscr{G}^\perp / \mathscr{M}}[b, :]| \varepsilon_c),
   \end{aligned}
\end{align}
where $e_j$ is the coefficient obtained according to the procedure outlined in the proof of Theorem~\ref{thm: main-logical-learnability} and see details in Section~\ref{subsec: error-propagation-compressed-sensing}. Roughly speaking, the first term indicates the part of information that can be learned from the syndrome data and the second term indicates the additional unlearnable (unknown) corrections from syndrome data.  In this case $\lambda_b$ ($\log \lambda_b$) is approximated within $O(\varepsilon_c)$ relative (and additive) precision, as indicated from the second line (note that $\lambda_b$ is sufficiently close to $1$). Relating to the effective probability for $a \notin \mathscr{G}$, we can perturbatively expand (say that if $\Lambda$ is given as the Pauli-Lindblad form) for the first-order approximation~\cite{Chen_2023_learnability}

\begin{align}
   \begin{aligned}
        &p^{\operatorname{eff}}_a = \frac{1}{|\mathcal{A}|}\sum_{b \in \mathscr{G}^\perp} \chi_b(a) \lambda_b \approx \frac{1}{|\mathcal{A}|} \sum_{b \in \mathscr{G}^\perp} \chi_b(a) \left( 1 + \log \lambda_b \right), \\
        &\approx    \frac{1}{|\mathcal{A}|} \sum_{b \in \mathscr{G}^\perp} \chi_b(a) ( \sum_j e^{(b)}_j \log \lambda_{s_j} + O(|C'_{\mathscr{G}^\perp / \mathscr{M}}[b, :]| \varepsilon_c) ). 
   \end{aligned}
\end{align}
The first equation is given by performing Fourier (Walsh-Hadamard) transform over the effective probability which averages over $\mathscr{G}$ from Eq.~\eqref{eq: main-eff-distribution}. This leads to an additive error $O(\varepsilon_c)$.

\section{Sampling complexity bounds in estimating logical error probability}\label{main-sec: sampling-complexity}

We discuss how to rigorously bound the effect of sampling imprecision and the sample complexity questions in our estimation frameworks. This would take two parts: $(i)$ we derive an optimal sample complexity bound for estimating the coefficients of the prior distribution. $(ii)$ The control of error propagation of the learning algorithm through compressed sensing. Finally, we comment on how to estimate the logical error probability by combining all ingredients.

\subsection{Sample complexity in syndrome extraction}\label{main-subsec: background-free-sampling}

Our goal is to provide a simple observation on the sample complexity required for estimating the Pauli eigenvalues associated with stabilizers from $\mathscr{M}$. Let $\Lambda$ be a fault distribution $\Lambda \in \mathbb{R}[\mathcal{A}]$ and some $m \in \mathscr{M}$. If $\varepsilon$ (physical error probability) is sufficiently small, then we typically require to estimate the $\bar{\lambda}_{m}$ to $O(\varepsilon)$ additive precision. Let us define a binary random variable $x$ associated with $m$ if a fault occurs that anti-commutes with $m$. Then $x$ is a Bernoulli random variable drawn from $\operatorname{Bern}(\sum_{a\in \mathcal{A}: \langle a, m\rangle = 1} p_a) \approx \operatorname{Bern}(\varepsilon)$. Note that we assume that $\sum_{a\in \mathcal{A}: \langle a, m\rangle = 1} p_a \leq \varepsilon$ and in this case we simply denote $x \sim \operatorname{Bern}(\varepsilon)$. 
\begin{align}
    \mathbb{E}_{a \sim \Lambda}[(-1)^{\langle m, a \rangle}] = \mathbb{E}_{x \sim \operatorname{Bern}}[(-1)^x] = 1 - 2 \mathbb{E}_{x \sim \operatorname{Bern}(\varepsilon)}[x].
\end{align}
 Utilizing the Chernoff bound and Le Cam two-point method~\cite{gong2024samplecomplexitypurityinner, lecam1972limits, lecam1986amsdt}, one can show a similar result to~\cite{lee2025efficientbenchmarkinglogicalmagic}, it is known that with sample complexity $S = \Theta(\varepsilon^{-1} \tau^{-2})$, we can construct an estimator $\bar{x} = (1/S)\sum^S_{i=1} x^{(i)}$ such that $\bar{x} \approx_{\tau, 0} x$. This optimal scaling is special to the fact that the variance of the $\operatorname{Bern}(\varepsilon)$ is given by $\varepsilon(1-\varepsilon) = O(\varepsilon)$, which is an instance of a variance-aware method commonly used in computer science and machine learning literature. Converting to the estimated syndrome expectations,

\begin{theorem}\label{thm: main-1/epsilon-scaling}
     Denote $\bar{\varepsilon} = \frac{1}{S} \sum^S_{i=1} x^{(i)}$, $\bar{\mu} = 1 - 2 \bar{\varepsilon}$, and $\mu = 1- 2 \varepsilon$, for a sufficiently small $\varepsilon > 0$ and any $1 > \tau > 0$, with $S = O(\varepsilon^{-1} \tau^{-2} \log(1/\delta))$. Then the following holds, with probability at least $1 - \delta$, for any $\delta > 0$
    \begin{enumerate}
        \item (Additive precision) we have $\bar{\mu} \approx_{ 0, \tau \varepsilon} \mu$. 
        \item (Multiplicative precision) We have $\bar{\mu} \approx_{\tau \varepsilon / (1- 2\varepsilon), 0} \mu$.
    \end{enumerate}
    
\end{theorem}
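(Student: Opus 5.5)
The proof rests on a multiplicative Chernoff bound for the empirical mean of i.i.d.\ Bernoulli variables; both claims then follow by an affine rescaling. Let $x^{(1)},\dots,x^{(S)}$ be i.i.d.\ copies of $x\sim\operatorname{Bern}(\varepsilon)$, where $x^{(i)}$ is the anticommutation parity of the sampled fault with $m$ in shot $i$. Set $X=\sum_i x^{(i)}$, so that $\mathbb{E}X=S\varepsilon$. For $0<\tau<1$ the two-sided multiplicative Chernoff bound gives
\begin{align*}
\Pr\bigl[|X-S\varepsilon|\geq \tau S\varepsilon\bigr]\leq 2\exp\!\left(-\tau^2 S\varepsilon/3\right).
\end{align*}
Choosing $S=\lceil 3\varepsilon^{-1}\tau^{-2}\log(2/\delta)\rceil=O(\varepsilon^{-1}\tau^{-2}\log(1/\delta))$ makes the right-hand side at most $\delta$. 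Hence, with probability at least $1-\delta$,
\begin{align*}
|\bar\varepsilon-\varepsilon|\leq\tau\varepsilon .
\end{align*}
The factor $\varepsilon$ in the exponent encodes the variance-aware behaviour, since the variance of $\operatorname{Bern}(\varepsilon)$ is $O(\varepsilon)$. An additive Hoeffding bound would only give $O(\varepsilon^{-2})$ samples, so the multiplicative form is essential.

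\textbf{Additive precision.} The map $\varepsilon\mapsto 1-2\varepsilon$ is affine, so $|\bar\mu-\mu|=2|\bar\varepsilon-\varepsilon|\leq 2\tau\varepsilon$. To obtain exactly $\tau\varepsilon$ as in the statement, run the argument with $\tau/2$ in place of $\tau$. This changes $S$ only by a constant factor. It gives $\bar\mu\approx_{0,\tau\varepsilon}\mu$.

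\textbf{Multiplicative precision.} Divide the same inequality by $\mu=1-2\varepsilon$, which is positive for sufficiently small $\varepsilon$ (for example $\varepsilon<1/2$, consistent with the standing assumption $p_I>1/2$). This gives
\begin{align*}
|\bar\mu-\mu|\leq \frac{\tau\varepsilon}{1-2\varepsilon}\,\mu,
\end{align*}
which is exactly $\bar\mu\approx_{\tau\varepsilon/(1-2\varepsilon),0}\mu$. Both statements hold on the same event, so a single $\delta$ covers them.

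\textbf{Main obstacle: the modelling step.} The paper only assumes $\sum_{a:\langle a,m\rangle=1}p_a\leq\varepsilon$, so the true Bernoulli parameter is some $\varepsilon_m\leq\varepsilon$ rather than $\varepsilon$ itself. Reading $\varepsilon$ as exactly the flip probability of check $m$ makes the proof above go through directly. If $\varepsilon$ is only an upper bound, I would instead apply the Bernstein/Chernoff form $\Pr[|X-S\varepsilon_m|\geq t]\leq 2\exp\bigl(-t^2/(2(S\varepsilon_m+t/3))\bigr)$ with $t=\tau S\varepsilon$. Since $\varepsilon_m\leq\varepsilon$, the exponent is at least $c\,\tau^2 S\varepsilon$, so the same $S$ yields $|\bar\varepsilon-\varepsilon_m|\leq\tau\varepsilon$. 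Both claims then hold with $\varepsilon_m$ in place of $\varepsilon$ in $\mu$, and the error scale is still $\tau\varepsilon$, as stated.

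Finally, I would remark that the scaling is optimal. By Le Cam's two-point method, distinguishing $\operatorname{Bern}(\varepsilon)$ from $\operatorname{Bern}((1+\tau)\varepsilon)$ requires $\Omega(\varepsilon^{-1}\tau^{-2})$ samples. This is not needed for the upper bound claimed in the theorem.
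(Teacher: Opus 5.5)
Your proof is correct and follows the paper's route: a multiplicative Chernoff bound applied at precision $\tau/2$ to get $\bar\varepsilon\approx_{\tau/2,0}\varepsilon$, then the affine map $\mu=1-2\varepsilon$ to obtain both the additive bound $\tau\varepsilon$ and the multiplicative bound $\tau\varepsilon/(1-2\varepsilon)$, exactly as in the paper's translation lemma. Your choice $S=\lceil 3\varepsilon^{-1}\tau^{-2}\log(2/\delta)\rceil$ correctly accounts for the factor $2$ from the two-sided tail, which the paper's constant $12\varepsilon^{-1}\tau^{-2}\log(1/\delta)$ glosses over, and your Bernstein remark for a true flip rate $\varepsilon_m\le\varepsilon$ is a valid extension that the paper does not need, since it takes $x\sim\operatorname{Bern}(\varepsilon)$ exactly.
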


\subsection{Error propagation from estimation via compressed sensing}\label{main-subsec: error-propagation-rip}

Since in practice we could only estimate the syndrome data up to some prescribed precision, we need to bound how the error could propagate in the learning process. 
From the preceding section, we assume $\bar{\lambda}_a, a \in \mathscr{M}$ is estimated up to relative precision $\tau \varepsilon/ (1-2\varepsilon)$ for some small parameter $\varepsilon > 0$ and $1 > \tau > 0$.  Let $Q \subset  \mathscr{M}$ from which we sample according to Theorem~\ref{thm: rip-bos-rowsampled-refined} (Informally stated in Theorem~\ref{thm: main-row-subsampled-rip-A}), and we denote the vector $y_a := - \log \lambda_a$, $y \in \mathbb{R}^{q}$, so that the estimated $|\bar{y}_a - y_a | = |\log(\bar{\lambda}_a/ \lambda_a)| = |\log (1 \pm  \tau \varepsilon /(1 - 2\varepsilon))| =  \tau \varepsilon + O(\varepsilon^2)$, where we use the multiplicative precision statement from theorem~\ref{thm: main-1/epsilon-scaling} (See also in full details around Theorem~\ref{thm: 1/epsilon-scaling-upper-bound} in Section~\ref{subsec: error-propagation-compressed-sensing}). To the first order, we simply denote $\bar{y}_a \approx_{0, \tau \varepsilon} y_a$ for $a \in Q \subset \mathscr{M}$. Denote the noise vector $\omega$ such that $\bar{y} = y + \omega$, we have $\omega \in \mathbb{R}^q$ with $\|\omega \|_1 \leq q \tau  \varepsilon$ and $\|\omega\|_2 \leq \sqrt{q} \tau \varepsilon$.  Since $\tau$ is assumed to be a small constant, we drop it for the following analysis. Our goal is to examine how the sampling imprecision could propagate into learning the prior distributions Eq.~\eqref{eq: main-prior-distribution}, while the full detailed proof is presented in Section~\ref{subsec: error-propagation-compressed-sensing}.

Recall that $A_{\mathscr{M}, \operatorname{res}}[a, b] = \langle a, b \rangle $ for $a \in Q \subset \mathscr{M}$ and $b \in \mathcal{K}_{\mathscr{M}} \setminus \{ I \}$. We denote $K+1$th-order restricted isometric constant to be $\delta_K$, which is assumed to be small and a constant. 

\begin{lemma}\label{lemma: main-noisy-recovery-all-distribution}
    Consider the following recovery problem, 
    \begin{align}\label{eq: noisy-estimation-problem}
        y = A_{\mathscr{M}, \operatorname{res}} x + \omega, 
    \end{align}
    subject to the noise $\omega$, where we have that $x_a = - \log (1-2q_{[a]}), \: a \in \mathcal{K}_{\mathscr{M}} $. Then there exists a unique choice of noisy recovery $\bar{x}$ such that 
    \begin{align}
        \|\bar{x} - x \|_2 \leq  2  \frac{\sqrt{1 + \delta_K}}{1 - \delta_K}  \varepsilon.
    \end{align}
\end{lemma}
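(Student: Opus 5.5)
The plan is to take $\bar{x}$ to be the least-squares solution $\bar{x} = (A^T_{\mathscr{M}, \operatorname{res}} A_{\mathscr{M}, \operatorname{res}})^{-1} A^T_{\mathscr{M}, \operatorname{res}} \bar{y}$. By Theorem~\ref{thm: main-row-subsampled-rip-A}, with high probability $A_{\mathscr{M}, \operatorname{res}}$ has full column rank, so this minimizer of $\|A_{\mathscr{M}, \operatorname{res}} z - \bar{y}\|_2$ is unique. That gives the uniqueness claim. It remains to bound $\|\bar{x}-x\|_2$ using the restricted isometry property. Two preliminary points need care. First, $y = A_{\mathscr{M}, \operatorname{res}} x$ holds exactly with the prior coefficients $x_a = -\log(1-2q_{[a]})$. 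This follows from the logarithmic linear system of Theorem~\ref{thm: main-physical-learnability}, where columns with equal syndrome merge via Eq.~\eqref{eq: main-prior-distribution-coeff}. Second, the RIP must be stated for the normalized matrix. Using $A_{\mathscr{M}, \operatorname{res}} = \tfrac12(1_{q\times K} - H_{\mathscr{M}, \operatorname{res}})$, I would absorb the all-ones column (the identity character) and work with $\Phi = q^{-1/2} H_{\mathscr{M}, \operatorname{res}}$ on the $(K+1)$-dimensional coordinate set $\{I\}\cup\mathcal{K}_{\mathscr{M}}$. Then $(1-\delta_K)\|z\|_2^2 \le \|\Phi z\|_2^2 \le (1+\delta_K)\|z\|_2^2$.

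The main estimate is a two-sided comparison. Let $h = \bar{x}-x$. On one side, $\|A h\|_2 \le \|A\bar{x}-\bar{y}\|_2 + \|\bar{y} - Ax\|_2 \le 2\|\omega\|_2$, because $\bar{x}$ is the minimizer and $x$ is a competitor with residual exactly $\omega$. On the other side, the lower RIP bound gives $\|Ah\|_2 \ge c\sqrt{q(1-\delta_K)}\,\|h\|_2$, where $c$ is the normalization constant from the affine relation between $A$ and $H$. Combining these with $\|\omega\|_2\le \sqrt{q}\,\varepsilon$ (from the preceding discussion, dropping $\tau$) yields $\|h\|_2 \lesssim 2\varepsilon/\sqrt{1-\delta_K}$.

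To match the stated constant $2\sqrt{1+\delta_K}/(1-\delta_K)$, I would instead run the normal-equation version. Write $h = (A^TA)^{-1}A^T\omega$ on the projected component. Bound $\|A^T\omega\|_2 \le \sqrt{q(1+\delta_K)}\,\|\omega\|_2$ using the upper RIP constant, since $\|A^T\| = \|A\|$, and bound $\|(A^TA)^{-1}\| \le (q(1-\delta_K))^{-1}$ using the lower constant. This gives $\|h\|_2 \le \sqrt{1+\delta_K}\,\|\omega\|_2/(\sqrt{q}(1-\delta_K))$. The factor $2$ comes from the $\tfrac12$ normalization and the first-order $O(\varepsilon^2)$ slack in $\omega$. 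Either route gives the claim up to these constants.

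The main obstacle is the transfer of RIP from the $\pm1$ Hadamard submatrix to the $0/1$ matrix $A_{\mathscr{M}, \operatorname{res}}$. The all-ones column is not orthogonal to the others in the $0/1$ form, so I would first show that the change of variables $z \mapsto (\tfrac12\sum_a z_a, -\tfrac12 z)$ maps $A z$ to $H'$ times the augmented vector. Here $H'$ is the Hadamard submatrix including the identity row character, which is why the constant is of order $K+1$. I would then check that this change of variables distorts norms by at most a bounded factor, or else apply it only to the difference vector $h$, where the identity coordinate is handled consistently.
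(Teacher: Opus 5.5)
Your first route is sound, but the normal-equation step you use to ``match the stated constant'' fails. There you apply the upper RIP bound to the $0/1$ matrix $A=A_{\mathscr{M},\operatorname{res}}$, claiming $\|A^T\omega\|_2\le\sqrt{q(1+\delta_K)}\,\|\omega\|_2$. RIP holds for the $\pm1$ matrix $H'$, not for $A$. The entries of $A$ have mean about $\tfrac12$, so $A$ contains a rank-one part $\tfrac12\mathbf{1}\mathbf{1}^T$ of norm about $\tfrac12\sqrt{qK}$. Concretely, $\|A\mathbf{1}\|_2\ge\tfrac12\bigl(K\sqrt{q}-\sqrt{q(1+\delta_K)K}\bigr)$, so $\|A\|_2=\Theta(\sqrt{qK})$. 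Equivalently, under your change of variables $z\mapsto w=(-\tfrac12 z,\tfrac12\sum_a z_a)$ you get $\|w\|_2^2=\tfrac14\bigl(\|z\|_2^2+(\sum_a z_a)^2\bigr)$. This can be as large as $\tfrac{K+1}{4}\|z\|_2^2$, so the ``bounded distortion'' check you plan would fail in the upper direction. Also, $\|(A^TA)^{-1}\|_2\le 4/(q(1-\delta_K))$, not $1/(q(1-\delta_K))$. Splitting $A^{+}=(A^TA)^{-1}A^T$ into a product of norms therefore pays the condition number of $A$ and gives only $O(\sqrt{K}\,\varepsilon)$.

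The fix is to make your first route precise; it only needs the lower direction, and there the change of variables works. You have $\|Az\|_2=\|H'w\|_2\ge\sqrt{q(1-\delta_K)}\,\|w\|_2\ge\tfrac12\sqrt{q(1-\delta_K)}\,\|z\|_2$, so $c=\tfrac12$, and this also gives full column rank and uniqueness. For the least-squares estimator, $Ah=P_A\omega$ with $P_A$ the orthogonal projector onto the range of $A$, so $\|Ah\|_2\le\|\omega\|_2$. Your triangle-inequality bound $2\|\omega\|_2$ would give $4\varepsilon/\sqrt{1-\delta_K}$, which misses the stated constant whenever $\delta_K<3/5$. With the projection bound you get $\|h\|_2\le 2\varepsilon/\sqrt{1-\delta_K}\le 2\sqrt{1+\delta_K}\,\varepsilon/(1-\delta_K)$, which proves the claim and is slightly sharper.

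For comparison, the paper never inverts $A$. It treats $s=\sum_a x_a$ as an extra free unknown, writes $2y=H'(-x,s)^T+2\omega$, solves least squares for this $\pm1$ system, and drops the $s$-coordinate via $\|\bar{x}-x\|_2\le\|(\bar{x},\bar{s})-(x,s)\|_2$. Because $H'$ is well conditioned under RIP, the product-of-norms bound is legitimate there. That is exactly the source of $2\sqrt{1+\delta_K}/(1-\delta_K)$; the $2$ comes from the $\tfrac12$ normalization, not from $O(\varepsilon^2)$ slack in $\omega$. Note that the two estimators differ: yours enforces $s=\sum_a x_a$, while the paper's relaxes this constraint.
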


At this stage, we could also discuss how to directly infer the (probability) coefficients of the prior distribution in Definition~\ref{def: prior-distribution} and see Eq.~\eqref{eq: main-prior-distribution}. 

\begin{lemma}\label{lemma: main-noisy-recovery-prior}
     Under the same setting as above, there exists a unique noisy recovery such that for every $a \in \mathcal{K}_{\mathscr{M}}$, the coefficient of the prior distribution $\Lambda^{\operatorname{prior}}$ 
    \begin{align}
        \bar{q}_{[a]} \approx_{0,O(\varepsilon) } q_{[a]},
    \end{align} 
    where the term in the big-$O$ notation is given by $\frac{\sqrt{1 + \delta_K}}{1-\delta_K} \varepsilon$. 
    \begin{proof}
        From Lemma~\ref{lemma: main-noisy-recovery-all-distribution}, note that for $a \in \mathcal{K}$, we have that 
        \begin{align}
            q_{[a]} = \frac{1}{2}(e^{-x_a} - 1),
        \end{align}
        so that we have that 
        \begin{align}
           \begin{aligned}
                |\bar{q}_{[a]} - q_{[a]}| &= \frac{1}{2} | e^{-x_a} - e^{-\bar{x}_a} | \leq \frac{e^{-\min(x_a, \bar{x}_a)}}{2} |x_a - \bar{x}_a|, \\
                &\leq \frac{1}{2} \|\bar{x} - x \|_2 \leq \frac{\sqrt{1 + \delta_K}}{1-\delta_K} \varepsilon.
           \end{aligned}
        \end{align}
        The first inequality is given by the mean-value theorem and the rest follows as standard. 
    \end{proof}
\end{lemma}

 This gives rise to an additive error for estimating the coefficients from the prior distribution. If one wishes to arrive at relative precision for every coefficient, an additional assumption is needed: coefficients from prior distribution $\Lambda^{\operatorname{prior}}$ are sufficiently uniform. Combining with the sample complexity in section~\ref{main-subsec: background-free-sampling}, we conclude with an optimal sample bound for estimating coefficients of the prior distribution to $O(\varepsilon)$ precision, which is summarized in Theorem~\ref{thm: background-free-estimation-prior} and Theorem~\ref{thm: sample-complexity-prior-distribution}. Using similar ideas with the compressed-sensing property, we show in Section~\ref{subsec: error-propagation-compressed-sensing} the error propagation from the effective distribution can be bounded, see Theorem~\ref{thm: error-propagation-logical}.

\subsection{End-to-end estimation framework for logical error probability}\label{main-subsec: lep-sampling-complexity}

We now present the end-to-end estimation protocol for estimating the logical error probability by explaining and analyzing the Algorithm~\ref{algo: end-2-end-framework-from-syndrome}. So far, our emphasis has been limited to discussing the prior distribution $\Lambda^{\operatorname{prior}}$ and the effective distribution $\Lambda^{\operatorname{eff}}$. We now discuss how estimating this would lead to estimating the logical error probability. The detailed discussion is presented in Section~\ref{subsec: spacetime-logical-noise-channel} and Section~\ref{subsec: sampling-LEP}. Under the circuit-to-code mapping presented in Section~\ref{main-sec: circuit-learnability}, we can associate the (bare) logical operators, including the stabilizers, from $\mathscr{G}^\perp$. Under this setting, the logical error probability can always be derived from the effective distribution, specified by the choice of decoder correction. For the simplicity of discussion, we assume that we utilize all $\mathscr{M}$ for syndrome measurements. Note that in practice, one might use a subgroup $\mathscr{M}' \subset \mathscr{M}$, such as the detector error model. This would not affect our discussion in what follows (The learnability condition would change accordingly). A fixed decoder records the corrections to apply conditions on the syndromes as a dictionary $\{z \in \mathbb{F}^M_2 = \sigma(\mathscr{M}): a_{z}\}$, with $2^M = |\mathscr{M}|$ and $a_z \in \mathcal{A}$ such that $\sigma(a_z) = z$. Let $\overleftarrow{l} \in \mathscr{G}^\perp / \mathscr{M}$ be a (bare) logical operator (here we note that the bare logical operators are generated by the backward propagations from the terminal time for the circuit-to-code isomorphism in Theorem~\ref{thm: subsystem-circuit-to-code-isomorphism}),

\begin{align}\label{eq: main-logical-error-probability}
    \begin{aligned}
       p^{L}_{\bar{l}}  &= |\mathscr{G}| \sum_{z \in \mathbb{F}^M_2} p^{\operatorname{eff}}_{a_z \overleftarrow{l}}, \\
       & = \frac{1}{|\mathscr{G}^\perp|} \sum_{z \in \mathbb{F}^M_2}  \sum_{b \in \mathscr{G}^\perp} \chi_{b} \lambda_{b}(a_z \overleftarrow{l}), \\
       &= \frac{1}{|\mathscr{G}^\perp|} \sum_{z \in \mathbb{F}^M_2} \sum_{\overleftarrow{l'} \in \mathscr{G}^{\perp} / \mathscr{M}}\sum_{m \in \mathscr{M}} \lambda_{\overleftarrow{l'} m} \chi_{\overleftarrow{l}' m}(a_z \overleftarrow{l}), \\
       &= \frac{1}{|\mathscr{G}^\perp|} \sum_{z \in \mathbb{F}^M_2} \sum_{\overleftarrow{l'} \in \mathscr{G}^{\perp} / \mathscr{M}}\chi_{\overleftarrow{l}'}(\overleftarrow{l})  \sum_{m \in \mathscr{M}} \lambda_{\overleftarrow{l'} m} \chi_{\overleftarrow{l}' m}(a_z).
    \end{aligned}
\end{align}
This sum is combinatorial and intractable for computation. In many cases, a simplified way exists for evaluating the logical probability, utilizing the prior distribution. 

\begin{corollary}\label{cor: main-prior-distribution-learn-logical-error}
    The prior distribution in Eq.~\eqref{eq: main-prior-distribution} is capable of representing the logical error probabilities if and only if $C'_{\mathscr{G}^{\perp}/\mathscr{M}}=0$ in Eq.~\eqref{eq: main-detector-matrix-basis-change}. In other words, for any $a, a' \in \mathcal{K}$, $\sigma(a) = \sigma(a') \iff a \sim_{\mathscr{G}} a'$. 
    \begin{proof}
        The forward direction is straightforward. If $C'_{\mathscr{G}^{\perp}/\mathscr{M}}=0$, then we can represent $\lambda_b$ for any $b \in \mathscr{G}^\perp$ as a product of Pauli eigenvalues. The basis transformation leaving the columns in $B, C$ with zero entry in Eq.~\eqref{eq: main-detector-matrix-basis-change} precisely means that we have associated $1 -2 q_{[c]} = \prod_{a \in \mathcal{K}: \sigma(a) = \sigma(c)} (1-2q_a)$ for the coefficient $q_a$ under the parameterization from Eq.~\eqref{eq: main-character-fourier-basis}, which is learnable from Theorem~\ref{thm: main-physical-learnability}. To show that converse, note that if $C'_{\mathscr{G}^{\perp}/\mathscr{M}}\neq 0$ (recall notation from Eq.~\eqref{eq: main-detector-matrix-basis-change}), then there always exists some $b \in \mathscr{G}^\perp \setminus  \mathscr{M}$ such that 
        \begin{align}
    \lambda_b = ( \prod_{a \in \mathcal{K}_{\mathscr{M}}: \langle a, b \rangle = 1} (1-2q_{[a]})  ) (\prod_{c \in C} (1-2q_c)^{C'_{\mathscr{G}^\perp/ \mathscr{M}}[b,c]}).
\end{align}
The expression within the first parenthesis gives the product of coefficients of the prior distribution, which by definition (Eq.~\eqref{eq: main-prior-distribution}) equals $\prod_{a \in \mathcal{K}_{\mathscr{M}}: \langle a, b \rangle = 1} (1-2q_{[a]}) = \prod_{a \in \mathcal{K}_{\mathscr{M}}: \langle a, b \rangle = 1} \prod_{a' \in \mathcal{K}: \sigma(a') = \sigma(a)} (1-2q_{a'})$. The terms within the second parenthesis are independent from coefficients of the prior distribution; in other words, varying $q_c$ from $c \in C'_{\mathscr{G}^\perp / \mathscr{M}}[b, :]$ would provide different values of $\lambda_b$ for any given prior distribution, which concludes the result. 
    \end{proof}
\end{corollary}
When the fault distribution $\Lambda$ can be learned up to logical equivalence, we can express the logical error probability in a cleaner way, 
\begin{align}\label{eq: main-prior-distribution-logical-error}
     p^{L}_{\bar{l}} = \sum_{J \subset \mathcal{K}_{\mathscr{M}}} \prod_{c \in J}  q_{[c]} \prod_{c' \notin J}(1-q_{[c']}) \mathbf{1}\{\prod_{c \in J} ca_{\sigma(\prod_{c \in J} c)} \overleftarrow{l} \in \mathscr{G}\},
\end{align}
where $J \subseteq \mathcal{K}$ denotes a subset of independent fault processes $\mathcal{K}$. The full derivation is seen in Eq.~\eqref{eq: logical-error-probability}. The sum is still combinatorial and remains inefficient to compute exactly. In Section~\ref{subsec: sampling-LEP}, we give some analysis on heuristic approach in approximating the logical error probability to a relative precision, which we utilize in the numerical simulations presented in Figure~\ref{Fig: main-simulation}. 

\begin{remark}
    We finally are in the position to explain the remarks left in Section~\ref{main-subsec: main-unlearnable-dof}. Let $\mathcal{K}$ denote all the independent fault processes and we define the prior distribution by giving a partition with $\mathcal{K}_{\mathscr{M}}$. Note that choices of different $\mathcal{K}_{\mathscr{M}}$ would leave the coefficients of $\Lambda^{\operatorname{prior}}$ invariant. However, from Eq.~\eqref{eq: main-prior-distribution-logical-error}, the conditional evaluation $ \mathbf{1}\{\prod_{c \in J} ca_{\sigma(\prod_{c \in J} c)} \overleftarrow{l} \in \mathscr{G}\}$ might be different. Suppose we choose some $c' = c \overleftarrow{l'}$ for some $\overleftarrow{l'} \in \mathscr{L}_{\operatorname{bare}}$, then we might arrive at a different interpretation on when a logical error occurs. Interpretations when a logical error occurs only become consistent if $\sigma(c) = \sigma(c') \iff c \sim_{\mathscr{G}} c'$ for all $c, c' \in \mathcal{K}$. In this case, Eq.~\eqref{eq: main-prior-distribution-logical-error} is defined without ambiguity, which gives another viewpoint from Corollary~\ref{cor: main-prior-distribution-learn-logical-error}. 
\end{remark}

\section*{Discussion}
In this work, we systematically address the prospect of learning (logical) noise from syndrome data in various practical settings and give an end-to-end estimation protocol and optimal, rigorous sample complexity bounds. In this regard, our work can be seen as a systematic characterization of the learnability of Pauli faults from a restricted set of measurement data, built primarily on closely related lines of work~\cite {Wagner_2021, Wagner_2022, Wagner_2023}, while we generalize to the circuit-level fault models. In many current experimental settings with fault-tolerant circuits, we expect the techniques presented in the current work would be directly implementable. 

There are several future directions worth considering. First, it is of great interest to generalize the formalism to universal logical circuits with non-Clifford unitaries. Even in the case of Pauli faults, the propagation of non-Clifford circuits is not an automorphism on the spacetime Pauli group. The same can be extended to the case of coherent error~\cite{Bravyi_2018} or erasure error. Hence, it is of interest to study how to capture this type of fault from syndrome information. A sensible yet high-level intuition can be given as follows. Instead of asking how to establish an isometry between syndrome data and concerned unknown error parameterizations within $\mathcal{A}$ (or the group algebra of $\mathcal{A}$), we ask if such an isometry can be established beyond the characterization of the (spacetime) Boolean group $\mathcal{A}$. Second, one may wish to systematically address the fault-tolerance (dressed distance) of the circuit-to-code isomorphism. This naturally falls into the scope of dynamical codes~\cite{Hastings_2021, xu2025faulttolerantprotocolsspacetimeconcatenation, blackwell2025codedistancefloquetcodes, Fu_2025, FuGottesman:SubsystemSpacetimeCode}. This generalization would be useful in a benchmarking setting for logical components, utilizing measurements that are dynamically generated. Finally, one could consider the effect of full state preparation and measurement (SPAM) noise. This is particularly useful if one wishes to utilize only the mid-circuit measurements and benchmark individual circuit components~\cite{girling2025characterizationsyndromedependentlogicalnoise}, as the learnability condition derived could easily fail. This direction could potentially be more in line with a series of gate set benchmarking protocols~\cite{Nielsen_2021, chen2024efficient, Chen_2023_learnability, Zhang_2025}, yet potentially used for benchmarking at the logical level.

\emph{--Note added.} During the preparation of this manuscript, we became aware of~\cite{xiao2026insitubenchmarkingfaulttolerantquantum} with similar content. We thank insightful exchanges and discussions with the authors of~\cite{xiao2026insitubenchmarkingfaulttolerantquantum} during the Quantum Error Correction (QEC) conference 2025.

\begin{acknowledgments}
    We thank Pei Zeng for helpful discussions and the valuable feedback on the initial manuscript. 
    S.C. acknowledges funding provided by the Institute for Quantum Information and Matter, an NSF Physics Frontiers Center (NSF Grant PHY-2317110).
    S.Z. acknowledges support from National Research Council of Canada (Grant No.~AQC-217-1) and Perimeter Institute for Theoretical Physics, a research institute supported in part by the Government of Canada through the Department of Innovation, Science and Economic Development Canada and by the Province of Ontario through the Ministry of Colleges and Universities. H.Z., C.-T.C., S.C., A.G.M., S.L., and L.J. acknowledge support from the ARO (W911NF-23-1-0077), ARO MURI (W911NF-21-1-0325), AFOSR MURI (FA9550-21-1-0209, FA9550-23-1-0338), DARPA (HR0011-24-9-0359, HR0011-24-9-0361), NSF (ERC-1941583, OMA-2137642, OSI-2326767, CCF-2312755, OSI-2426975), and the Packard Foundation (2020-71479).
    
\end{acknowledgments}

\bibliography{references}

\clearpage
\widetext
\appendix
\begingroup
\section*{Appendix}

\titleformat{\section}[block]{\large\bfseries\filcenter}{}{0pt}{}
\setcounter{tocdepth}{2} 

\startcontents[sections]
\titlecontents{section}[0pt]{\vspace{5mm}}{\thecontentslabel\hspace{1em}}{}{\titlerule*[1pc]{.}\contentspage}
\titlecontents{subsection}[1.5em]{}{\thecontentslabel\hspace{1em}}{}{\titlerule*[1pc]{.}\contentspage}
\printcontents[sections]{}{1}{\section*{}\vspace{-10mm}}

\endgroup

\section{Preliminary}

\newcolumntype{N}{>{\centering\arraybackslash$\displaystyle}p{0.35\textwidth}<{$}}
\newcolumntype{E}{>{\raggedright\arraybackslash}X}

\begin{table}[htbp]
  \centering
  \renewcommand{\arraystretch}{1.5}
  \begin{tabularx}{\textwidth}{N E}
    \toprule
    \multicolumn{1}{c}{\textbf{Notation}} & \textbf{Meaning} \\
    \midrule
    \mathcal{A} & The (Boolean) spacetime group. \\
    \mathbb{C}[\mathcal{A}], \mathbb{R}[\mathcal{A}] & Group algebra of $\mathcal{A}$, either complex-valued or real-valued. \\
    \beta: \mathcal{A} \times \mathcal{A} \rightarrow \mathbb{F}_2 & Bilinear form associated with $\mathcal{A}$, which is assumed to be symmetric and nondegenerate. \\
    \chi_{a} := (-1)^{\beta(a, \cdot)}, \,a \in \mathcal{A} & Characters of the Boolean group defined via the bilinear form $\beta.$ \\
    f \in \mathbb{R}[\mathcal{A}] & A distribution over $\mathbb{R}[\mathcal{A}]$.  \\
    \mathscr{F}: \mathbb{R}[\mathcal{A}] \rightarrow \mathbb{R}[\mathcal{A}] & The Fourier transform over the Boolean group, which is real-valued and a ring isomorphism.  \\
    p_a := f(a), p_a \in [0, 1] & Probability for which $a \in \mathcal{A}$ occurs.   \\
    \lambda_a := \mathscr{F}[f](a) = \widehat{f}(a) & Fourier coefficient (Pauli eigenvalues) associated with $a \in \mathcal{A}$. \\
    \mathscr{G} \subseteq \mathcal{A} &   Gauge group as a subgroup of $\mathcal{A}$. \\
    \mathscr{G}^\perp & The orthogonal complement of $\mathscr{G}$ with respect to $\beta$. \\
    \mathscr{M} := \mathscr{G} \cap \mathscr{G}^\perp & The center of the gauge group, which is given as the measurement subgroup. \\
    M & The number of independent generators of the measurement subgroup $\mathscr{M}$. \\
    \sigma: \mathcal{A} \rightarrow \mathbb{F}^M_2 & The syndrome vector records the commutation rules with $\mathscr{M}$ with respect to $\beta$. \\
    \mathscr{M}^\perp / \mathscr{G}, \, \mathscr{G}^\perp / \mathscr{M} & The dressed and bare logical subspaces modulo the phases. \\
    \mathcal{K} \subseteq \mathcal{A}, & A parametrization subset of $\mathcal{A}$.\\
    \mathcal{K}_{\mathscr{M}} \subseteq \mathcal{K}, \, |\mathcal{K}_{\mathscr{M}}| =K & A subset of $\mathcal{K}$ gives (non-uniquely) the elements of distinct syndromes. \\
    p^{\operatorname{eff}}_a & The effective error probability modulo the gauge equivalence. \\

    \delta_K &  The $K+1$th order restricted isometry constant. \\
     1- \varepsilon & The probability where there is no fault from $\Lambda$. \\
   
    p_a & Pauli error rate (probability that the fault $a$ occurs). \\

    q_a & The parameterization coefficient in Eq.~\eqref{eq: main-character-fourier-basis}. \\

q_{[a]} &  The parameterization coefficient of the prior distribution in Eq.~\eqref{eq: main-prior-distribution}. \\

       \bottomrule
  \end{tabularx}
  \caption{Elementary notations that we use in this manuscript. Other notations used in more technical analysis are derived in consistency with the above notational conventions}
  \label{tab: notation}
\end{table}

We now present a preliminary review of several technical tools that we use to derive the main results. 

\subsection{Characters of an Abelian group and Fourier transforms}

We now generalize our questions in  a more general setting. Let $G$ be a group and denote its \emph{group algebra} by $\mathbb{C}[G] = \{ f: G \rightarrow \mathbb{C}: \|f\|_2 < \infty \}$. The condition $L_2$-boundedness ensures that if $G$ is a continuous (Lie) group, the group algebra is well-defined. For simplicity, we consider the case that $G$ is a finite group and let $\mathbb{C}[G]$ endow a typical multiplication structure, called the convolution
\begin{align}
    (f * g)(a) =  \sum_{b \in G} f(ab^{-1}) g(b).
\end{align}
A classical characterization of the group algebra $\mathbb{C}[G]$ can be given by the following. First we define a formal vector space $V = \{ h \in V: h = \sum_{a} h(a) |a \rangle, a \in G\}$, where this vector space has dimension $|G|$, equipped with the inner product $\langle g | h \rangle = \frac{1}{|G|} \sum_{a \in G} g(a) \overline{h(a)}$. 

\begin{proposition}
    There exists a ring isomorphism between the group algebra $\mathbb{C}[G]$ and the (left)-regular representation on $V$. 
    \begin{proof}
        Recall that for a group $G$, the (left)-regular representation $G$ is given by the map, 
        \begin{align}
            G \rightarrow \operatorname{End}(V); \quad g \mapsto \pi(g),
        \end{align}
        where $\bra{a}\pi(g)\ket{b} = \delta_{ag, b}$. Then we claim that this map induces a ring isomorphism, 
        \begin{align}
            \mathbb{C}[G] \rightarrow \operatorname{End}_{\mathbb{C}[G]}(V); \quad f \mapsto \sum_{a \in G} f(a) \pi(a),
        \end{align}
        where $\operatorname{End}_{\mathbb{C}[G]}(V) = \{ M \in \operatorname{End}_{\mathbb{C}} (V): L_a M = ML_a, \forall a \in G\}$ and $L_a |b \rangle = |a b \rangle$ the left action. Note that the matrix form of $\pi(a)$ is given by the right action $\pi(a) = R_a$. It is easy to see that for any $h \in G$, $L_a R_b | h \rangle = |ahb^{-1}\rangle = R_b L_a |h \rangle$. First we note that this map is well-defined.  
        \begin{align}
            \begin{aligned}
                \pi(f * g) &= \sum_{a \in G} (f * g)(a) \pi(a) =  \sum_{a \in G} \sum_{b \in G} f(a b^{-1}) g(b) \pi(a), \\
                &=  \sum_{a \in G} \sum_{b \in G} f(a) g(b) \pi(ab) = \pi(f) \pi(g).
            \end{aligned}
        \end{align}
        To show the injectivity, we first note that for all coefficients $c_a$, 
        \begin{align}
            \sum_{a \in G} c_a \pi(a)  = 0 \iff \sum_{a \in G} c_a \pi(a) |1 \rangle = \sum_{a} c_a |a \rangle = 0 \iff   c_a =0; \forall a \in G.
        \end{align}
        This follows from the fact that we identify each $|a \rangle$ as orthogonal basis elements in the regular representation. The surjectivity follows from definition of $\operatorname{End}_{\mathbb{C}[G]}(V)$, where we show that it is spanned by the right regular actions. Let $M \in \operatorname{End}_{\mathbb{C}[G]}(V)$, we have that $M \ket{a} = M L_a |1 \rangle = L_a M |1\rangle$. Let's denote $M|1 \rangle = \sum_{h \in G} t_h |h\rangle $ for some coefficients $t_h$, so that $M|a\rangle = \sum_{h \in G} t_h |ah \rangle = \sum_{h \in G} t_h R_{h^{-1}} |a \rangle$ as desired. 
    \end{proof}
\end{proposition}

In this regard, one typically writes $f \in \mathbb{C}[G]$ using the matrix form given by the identification of regular representation. Namely, $f = \sum_{a}  f(a) \pi(g)$. It is well-known that for any compact group, we can decompose its regular representation to the irreducible representations. This duality transformation is defined as the Fourier transform. In what follows, we consider $G =\mathcal{A}$ as an Abelian group. In this case, there exists a close relation between $\mathcal{A}$ and its dual $\widehat{\mathcal{A}}$.

\begin{definition}[Dual of group $A$]
     The dual group of any Abelian group $A$, denoted as $\widehat{\mathcal{A}}$, consists of all homomorphisms 
    $\chi: \mathcal{A} \rightarrow \mathbb{T}^1$, where the group operation is given by multiplication. These homomorphisms are called the \emph{linear characters} of the group $A$. 
\end{definition}

Let's further restrict our attention to the case of finite Abelian groups, where we state the following result, while the proofs are standard (See, e.g., \cite{TaoVu2006AddComb, goodman_wallach_2009}).

\begin{proposition}[Orthogonal properties of linear characters]
     For any finite Abelian group $\mathcal{A}$ that has group order $n$. Denote the $\mu_n$ as the group of $n$th unity. 
    \begin{enumerate}
        \item $\chi(a) \in \mu_n$ for all $a \in \mathcal{A}$; hence, $|\chi(a)|=1$.
        \item For any $\theta, \chi \in \widehat{\mathcal{A}}$, $\langle \chi, \theta \rangle = \delta_{\chi, \theta}$.
    \end{enumerate}
\end{proposition}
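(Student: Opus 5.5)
We prove the two items separately, using only the definition of characters of a finite Abelian group $\mathcal{A}$ with $|\mathcal{A}| = n$ and the normalized inner product $\langle g | h \rangle = \frac{1}{|\mathcal{A}|}\sum_{a} g(a)\overline{h(a)}$ introduced above.

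\emph{Item 1.} Since $\mathcal{A}$ is finite of order $n$, Lagrange's theorem gives $a^n = 1$ for every $a \in \mathcal{A}$. The map $\chi$ is a homomorphism into $\mathbb{T}^1$, so
\[
\chi(a)^n = \chi(a^n) = \chi(1) = 1 .
\]
Hence $\chi(a) \in \mu_n$, and in particular $|\chi(a)| = 1$. In the Boolean case this specializes to $\chi(a) \in \{\pm 1\}$, since $a^2 = 1$.

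\emph{Item 2.} First, item 1 gives $\overline{\theta(a)} = \theta(a)^{-1} = \theta(a^{-1})$. Therefore the product $\psi := \chi\overline{\theta}$ is again a character, and
\[
\langle \chi, \theta\rangle = \frac{1}{n}\sum_{a} \psi(a).
\]
There are two cases.

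\begin{itemize}
\item If $\chi = \theta$, then $\psi \equiv 1$, and the average equals $1$.
\item If $\chi \neq \theta$, then $\psi$ is nontrivial, so there is some $b \in \mathcal{A}$ with $\psi(b) \neq 1$. Translation $a \mapsto ab$ is a bijection of $\mathcal{A}$, so
\[
\sum_a \psi(a) = \sum_a \psi(ab) = \psi(b)\sum_a \psi(a).
\]
This gives $(1-\psi(b))\sum_a\psi(a) = 0$, and since $\psi(b) \neq 1$, the sum vanishes.
\end{itemize}

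Both cases together give $\langle \chi,\theta\rangle = \delta_{\chi,\theta}$.

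The only step that requires care is the translation-invariance argument for a nontrivial character: one must confirm that $\chi\overline{\theta}$ is a homomorphism, which again relies on item 1. Everything else is immediate. For the Boolean group $\mathcal{A}$ used in the paper, $\chi_a\chi_b = \chi_{ab}$, so the second case reduces to $\sum_{c}(-1)^{\beta(ab,c)} = 0$ for $ab \neq 1$. Here nondegeneracy of $\beta$ provides a $c$ with $\beta(ab,c)=1$, which is a concrete instance of the same argument.
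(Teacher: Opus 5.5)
Your proof is correct and is the standard argument: Lagrange's theorem for item 1, and translation invariance of the sum of a nontrivial character for item 2. The paper gives no proof of its own here and simply defers to the standard references (Tao--Vu; Goodman--Wallach), where this same argument appears; as a minor remark, $\chi\overline{\theta}$ is multiplicative just because complex conjugation is, so item 1 is really only needed to get $\chi\overline{\chi}\equiv 1$ and to conclude $\chi=\theta$ from $\chi\overline{\theta}\equiv 1$.
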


Hence, the characters form an orthonormal basis and the next result indicates that they give an orthonormal basis to the group algebra $\mathbb{C}[\mathcal{A}]$. 

\begin{theorem}
    Let $\mathcal{A}$ be a finite Abelian group, then 
    \begin{enumerate}
        \item $\mathcal{A} \cong \widehat{A}$
        \item (Pontryagin duality) Define for each $a \in \mathcal{A}$, the evaluation character, $\widehat{a}: \widehat{\mathcal{A}} \rightarrow \mathbb{T}$, $\widehat{a}(\chi) = \chi(a)$. Then the induced group homomorphism, 
    \begin{align}
        \mathcal{A} \rightarrow \widehat{\widehat{\mathcal{A}}}; \quad a \mapsto \widehat{a}
    \end{align}
    is an isomorphism.
    \end{enumerate}
\end{theorem}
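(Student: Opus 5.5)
The plan is to prove the two parts separately, since (1) is purely structural and (2) is where the finiteness enters. For (1), define for each $a\in\mathcal{A}$ the map $\widehat{a}(\chi)=\chi(a)$. Since $\chi(ab)=\chi(a)\chi(b)$ and the group operation in $\widehat{\mathcal{A}}$ is pointwise multiplication, $a\mapsto\widehat{a}$ is a group homomorphism $\mathcal{A}\to\widehat{\widehat{\mathcal{A}}}$.

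\emph{Part (1).} I would invoke the structure theorem, $\mathcal{A}\cong\prod_i \mathbb{Z}/n_i$. For a cyclic group $\mathbb{Z}/n$, a character is determined by the image of the generator, which must be an $n$th root of unity by the first item of the orthogonality proposition. Hence $\widehat{\mathbb{Z}/n}\cong\mu_n\cong\mathbb{Z}/n$. Next, $\widehat{A\times B}\cong\widehat{A}\times\widehat{B}$ via $\chi\mapsto(\chi|_A,\chi|_B)$, with inverse $(\chi_1,\chi_2)\mapsto\chi_1\chi_2$. Combining these gives $\mathcal{A}\cong\widehat{\mathcal{A}}$. In the paper's Boolean setting I would also record the more canonical route: the map $a\mapsto\chi_a=(-1)^{\beta(a,\cdot)}$ is a homomorphism, and it is injective because $\beta$ is nondegenerate. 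Since every character of $\mathbb{F}_2^{N}$ is $\pm1$-valued and $\mathbb{F}_2$-linear in the exponent, there are at most $2^N$ of them, so this map is also surjective. This is the identification the rest of the paper actually uses.

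\emph{Part (2).} By part (1), $|\widehat{\widehat{\mathcal{A}}}|=|\widehat{\mathcal{A}}|=|\mathcal{A}|$, so it suffices to show injectivity of $a\mapsto\widehat{a}$. That is, for every $a\neq 1$ I must exhibit a character $\chi$ with $\chi(a)\neq1$. I expect this separation step to be the only real content, and I would handle it in one of two ways.

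\begin{itemize}
\item \emph{Orthogonality route.} Use the orthonormal-basis statement: the $|\mathcal{A}|$ characters are orthonormal in the $|\mathcal{A}|$-dimensional space $V$, so they form a basis. If $\chi(a)=1$ for all $\chi$, then the indicator functions $\delta_a$ and $\delta_1$ would have identical expansions in that basis, since their coefficients are $\overline{\chi(a)}/|\mathcal{A}|$ and $\overline{\chi(1)}/|\mathcal{A}|$. This forces $a=1$.
\item \emph{Direct route.} Write $a$ in the cyclic decomposition with some coordinate $a_i\neq0$ in $\mathbb{Z}/n_i$. Take $\chi(x)=e^{2\pi i x_i/n_i}$, for which $\chi(a)\neq1$. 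In the Boolean case this is simply nondegeneracy of $\beta$: some $b$ has $\beta(a,b)=1$, so $\chi_b(a)=-1$.
\end{itemize}

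Either route gives injectivity, and counting then upgrades it to an isomorphism.
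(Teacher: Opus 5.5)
Your proof is correct and follows essentially the route the paper relies on: the paper gives no proof of its own and defers to the standard references, but the examples right after the statement supply exactly the ingredients of your part (1), namely characters of $C_n$ given by roots of unity, characters of a product of cyclic groups as products of coordinate characters, and the identification $a\mapsto\chi_a$ through a nondegenerate bilinear form. Part (2), obtained from $|\widehat{\widehat{\mathcal{A}}}|=|\mathcal{A}|$ together with a separating character for each $a\neq 1$, is the standard textbook argument; the only step you should state explicitly is that every $n$th root of unity $\zeta$ defines a character $\gamma^j\mapsto\zeta^j$ of $\mathbb{Z}/n$, since that surjectivity is what gives $|\widehat{\mathbb{Z}/n}|=n$ and hence the equality of orders you use.
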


It is noted that the isomorphism $A \cong \widehat{\mathcal{A}}$ need not to be canonical. However, writing out this isomorphism is critical for deriving the Fourier transforms and later results. To motivate the study, we provide two examples in understanding the isomorphism.

\begin{example}[Cyclic group $C_n$]
 Let $\zeta$ be the $n$th root of unity in $\mathbb{C}$ and $\gamma$ be the generator. For the $\chi: C_n \rightarrow \mathbb{T}^1$, the homomorphism property constraints that $\chi(\gamma^n)= \chi(1) =  \chi(\gamma)^n $ so that there are at most $n$ linearly independent characters. We can write out these $n$ characters as $\chi_k(\gamma^j) = \zeta^{jk}$. To see that they are indeed linearly independent (orthonormal basis), we define the following \emph{bicharacters} 
\begin{align}
    e: C_n \times C_n \rightarrow \mathbb{Q}/\mathbb{Z}.
\end{align}
Given by $\beta(\gamma^j, \gamma^k) = \frac{jk}{n}$ and the identification $\chi_k = e^{i2 \pi e(\cdot, \gamma^k)}$. Then we can compute that 
\begin{align}
    \langle \chi_{l} | \chi_m \rangle = \frac{1}{n} \sum^{n-1}_{j = 0} \chi_l(\gamma^j) \overline{\chi}_m (\gamma^j) = \frac{1}{n} \sum^{n-1}_{j=0} e^{i2\pi e(\gamma^j, \gamma^{l} \gamma^{-m})}.
\end{align}
Then the orthonormality follows from the definition of $\beta$. 
\end{example}

\begin{example}[Direct product of cyclic group]
     Let $A$ be a direct product of some cyclic group. 
    \begin{align}
        A = C_{n_1} \times \cdots \times C_{n_r}; \quad \widehat{A} = \widehat{C}_{n_1} \times \cdots \times \widehat{C}_{n_r},
    \end{align}
    where loosely speaking the decomposition of the dual group $\widehat{A}$ is given by the isomorphism, which we now explain. Denote the group element $(\gamma^{j_1}, \cdots, \gamma^{j_r})$, then the group character is simply given by $\chi((\gamma^{j_1}, \cdots, \gamma^{j_r})) = \chi(\gamma^{j_1}) \cdots \chi(\gamma^{j_r})$, where $\chi(\gamma^{j_1}) = \chi((\gamma^{j_1}, 1, \cdots, 1))$.  
\end{example}

The above examples motivate the definition of bilinear form. 

\begin{definition}[Bilinear form]
     A \emph{bilinear form} or a \emph{bicharacter} on a finite Abelian group $\mathcal{A}$ is a map, 
    \begin{align}
        \beta: \mathcal{A} \times \mathcal{A} \rightarrow \mathbb{Q} / \mathbb{Z}; \quad (\xi, a) \mapsto \beta(\xi, a), 
    \end{align}
    which is a group homomorphism in each of the variable $\xi, a$. We say the bicharacter is \emph{non-degenerate} if for every non-zero $\xi$, the map $a \mapsto \beta(\xi, a)$ is not identically zero and similarly for every non-zero $a$ the map $\xi\mapsto \beta(\xi, a)$ is not identically zero. We say the form is symmetric if $\beta(\xi, a) = \beta(a, \xi)$. 
\end{definition}

\begin{lemma}[Existence of bilinear forms \cite{TaoVu2006AddComb}]\label{lemma: exist-nondegenerate-symmetric-bilinear-forms}
     Every finite Abelian group $A$ has at least one non-degenerate symmetric bilinear form which induces the isomorphism 
    \begin{align}
        \phi: \mathcal{A} \rightarrow \widehat{\mathcal{A}}; \quad \phi(a) \mapsto \exp(i2\pi \beta(\cdot, a)) \equiv \chi_a.
    \end{align}
\end{lemma}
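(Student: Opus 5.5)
The plan is to prove the statement via the structure theorem for finite Abelian groups, building the form factor by factor. Write $\mathcal{A} \cong C_{n_1} \times \cdots \times C_{n_r}$, with generators $\gamma_1, \dots, \gamma_r$ of the cyclic factors. On each factor I will use the bicharacter from the cyclic-group example, $\beta_i(\gamma_i^{j}, \gamma_i^{k}) = jk/n_i \bmod 1$. For an element $a = (\gamma_1^{j_1}, \dots, \gamma_r^{j_r})$ and $\xi = (\gamma_1^{k_1}, \dots, \gamma_r^{k_r})$, I will define $\beta(\xi, a) = \sum_i j_i k_i / n_i \bmod 1$.

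The verifications then go in three steps.

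\begin{enumerate}
\item \emph{Well-definedness and bihomomorphism.} Since $j_i k_i/n_i$ changes by an integer when $j_i$ is shifted by $n_i$, the value is independent of the choice of exponent representatives. Additivity in each slot follows from the distributivity of multiplication over addition.
\item \emph{Symmetry.} This is immediate, because $j_i k_i = k_i j_i$.
\item \emph{Non-degeneracy.} If $\xi \neq 0$, pick an index $i$ with $k_i \not\equiv 0 \pmod{n_i}$, and take $a = \gamma_i$ (all other coordinates trivial). Then $\beta(\xi, a) = k_i/n_i \notin \mathbb{Z}$. The argument in the other slot follows by symmetry.
\end{enumerate}

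Next I will show that $\phi(a) = \exp(i2\pi\beta(\cdot, a))$ is an isomorphism onto $\widehat{\mathcal{A}}$.

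\begin{itemize}
\item $\phi(a)$ is a character, because $\beta$ is additive in its first slot.
\item $\phi$ is a homomorphism, because $\beta$ is additive in its second slot.
\item $\phi$ is injective: if $\phi(a)$ is trivial, then $\beta(\cdot, a) \equiv 0$, and non-degeneracy forces $a = 0$.
\item $\phi$ is surjective by counting. Any character is determined by its values on the generators, and $\chi(\gamma_i)$ must be an $n_i$-th root of unity. Hence $|\widehat{\mathcal{A}}| \le \prod_i n_i = |\mathcal{A}|$, and an injective map into a set of at most this size is a bijection.
\end{itemize}

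For the Boolean case used in the paper, $\mathcal{A} \cong \mathbb{F}_2^{2n}$, the construction simply reproduces the standard dot product. The paper actually uses the symplectic form, which also qualifies: it is symmetric mod $2$ and nondegenerate. I will remark that any nondegenerate symmetric $\mathbb{F}_2$-form works by the same injectivity-plus-counting argument.

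The only step needing care is the existence of the cyclic decomposition. I will simply invoke the structure theorem for finite Abelian groups rather than reprove it. After that, the non-degeneracy check is the essential point, since it is what makes $\phi$ injective; the rest is routine.
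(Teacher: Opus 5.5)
Your proposal is correct and follows essentially the same route the paper relies on. The paper defers to Tao--Vu, whose argument is foreshadowed by the cyclic-group and direct-product examples just before the lemma: decompose via the structure theorem, take $\beta(\xi,a)=\sum_i j_i k_i/n_i \bmod 1$, use non-degeneracy for injectivity of $a\mapsto\chi_a$, and use a cardinality count for surjectivity, exactly as you do.
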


\begin{definition}[Elementary-$p$ Abelian group]
     An \emph{elementary-$p$ Abelian group $\mathcal{A}$} is a finite Abelian group such that all its non-identity elements have the common order $p$. In the case of $p=2$, we also refer this to a \emph{Boolean group}. 
\end{definition}

It is known that any Boolean group $\mathcal{A}$ is isomorphic to some $\mathbb{F}^n_2$ for some integer $N$. In what follows we restrict our attention to the Boolean group.

\begin{example}
    A good example for such a non-degenerate bicharacter is given by the finite Abelian group $\mathcal{A}= \mathbb{F}^n_2$. Then the bicharacter $\beta(\xi, a) = \langle \xi, a \rangle $ over $\mathbb{F}_2$. 
\end{example}

\begin{example}[$\mathbb{F}^{2n}_2$ and the Pauli group $\mathrm{P}^n$]
     Let's define $\beta = \langle \cdot, \cdot \rangle$ to be the standard inner product over $\mathbb{F}^{2n}_2$. Write $a = (a_x, a_z)$ and consider the automorphism $f \in \operatorname{Aut}(\mathcal{A})$ such that $f((a_x, a_z)) = (a_z, a_x)$, then we can define the identification $\Phi_{\beta'} = \Phi_{\beta} \circ f$ so that $(a_x, a_z) \mapsto (-1)^{\langle \cdot, (a_z, a_x)\rangle}$. The identification $\Phi_{\beta'}$ gives the Walsh-Hadamard transform over the Pauli group. 
\end{example}



\begin{definition}[Fourier transform]
     Let $f \in\mathbb{C}[\mathcal{A}]$ for some finite Abelian group $\mathcal{A}$. Then its \emph{Fourier transform} $\widehat{f}$ is given by the formula, 
    \begin{align}
        \widehat{f}(\xi) := \sum_{a \in \mathcal{A}} f(a) \overline{\chi}_\xi(a) = \sum_{a \in \mathcal{A}} f(a) e^{-i2\pi \beta(a, \xi)}.
    \end{align}
    We refer to $\widehat{f}(\xi)$ as the \emph{Fourier coefficient} of $f$ at $\xi \in \mathcal{A}$. 
\end{definition}

There are certain properties of Fourier transforms which we now state without proof (See \cite{TaoVu2006AddComb} and in various standard mathematical textbooks). 

\begin{lemma}
    Let $f, g \in \mathbb{C}[\mathcal{A}]$, then   
    \begin{enumerate}
        \item (Parseval's identity) $\langle f, g \rangle = \sum_{a} \widehat{f}(a) \widehat{g}(a)$.
        \item $\hat{f}(0) = \frac{1}{2^n}\sum_{x}f(x)$.
        \item (convolution theorem) $\widehat{f * g} = \widehat{f} \: \widehat{g}$.
        \item Let $\mathcal{C} \subseteq \mathcal{A}$, a subgroup. Then $\widehat{1_\mathcal{C}} = \frac{|\mathcal{A}|}{|\mathcal{C}|} 1_{\mathcal{C}^\perp}$, where $\mathcal{C}^\perp$ is the subgroup of orthogonal complement of $\mathcal{C}$ with respect to $\beta$. 
    \end{enumerate}
\end{lemma}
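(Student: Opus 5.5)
We prove the four items of the last lemma in turn, using only the orthogonality of the characters $\chi_\xi(a)=(-1)^{\beta(\xi,a)}$ (real-valued, since $\mathcal{A}$ is Boolean) and the non-degeneracy of $\beta$. Throughout, the fixed normalization is the Fourier transform as defined, $\widehat f(\xi)=\sum_a f(a)\chi_\xi(a)$, together with the inner product $\langle f,g\rangle=\frac{1}{|\mathcal{A}|}\sum_a f(a)\overline{g(a)}$. Several of the stated identities hold only up to a factor $|\mathcal{A}|^{\pm1}$ depending on convention. So the first step is to record exactly which normalization makes each item true, and I expect this bookkeeping to be the only real obstacle.

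\textbf{Orthogonality relation.} The basic fact needed is
\begin{align}
\sum_{a\in\mathcal{A}}\chi_\xi(a)=|\mathcal{A}|\,\delta_{\xi,0}.
\end{align}
For $\xi\neq0$, non-degeneracy gives some $a_0$ with $\chi_\xi(a_0)=-1$. Translating the summation variable by $a_0$ then shows that the sum equals its own negative, hence vanishes.

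\textbf{Item 1 (Parseval).} Expand $\sum_\xi \widehat f(\xi)\overline{\widehat g(\xi)}$ as a double sum over $a,b$ of $f(a)\overline{g(b)}\sum_\xi\chi_\xi(a)\chi_\xi(b)$. By symmetry of $\beta$, $\chi_\xi(a)=\chi_a(\xi)$, so the inner character sum is $|\mathcal{A}|\delta_{a,b}$ by the orthogonality relation. Hence the double sum is $|\mathcal{A}|^2\langle f,g\rangle$. The stated identity therefore holds after normalizing $\widehat f$ by $1/|\mathcal{A}|$, and I will state it in that form.

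\textbf{Item 2.} This is immediate: $\chi_0\equiv1$, so the zero Fourier coefficient is the plain sum of $f$. The factor $1/2^n$ appears exactly under the same normalized convention.

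\textbf{Item 3 (convolution theorem).} Write
\begin{align}
\widehat{f*g}(\xi)=\sum_{a,b} f(ab^{-1})\,g(b)\,\chi_\xi(a).
\end{align}
Substitute $c=ab^{-1}$ and use multiplicativity, $\chi_\xi(cb)=\chi_\xi(c)\chi_\xi(b)$, which follows from bilinearity of $\beta$. The double sum then factors as $\widehat f(\xi)\,\widehat g(\xi)$, with the unnormalized transform.

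\textbf{Item 4.} Compute $\widehat{1_\mathcal{C}}(\xi)=\sum_{c\in\mathcal{C}}\chi_\xi(c)$. The restriction of $\chi_\xi$ to $\mathcal{C}$ is a character of $\mathcal{C}$, and it is trivial exactly when $\xi\in\mathcal{C}^\perp$. Applying the orthogonality argument within $\mathcal{C}$, the sum is $|\mathcal{C}|$ when $\xi\in\mathcal{C}^\perp$ and $0$ otherwise.

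It remains to reconcile this with the stated constant $|\mathcal{A}|/|\mathcal{C}|$. This needs the count $|\mathcal{C}|\,|\mathcal{C}^\perp|=|\mathcal{A}|$, which follows from non-degeneracy: the map $\mathcal{A}\to\widehat{\mathcal{C}}$, $\xi\mapsto\chi_\xi|_\mathcal{C}$, is surjective with kernel $\mathcal{C}^\perp$. Under the normalized transform, $\frac{|\mathcal{C}|}{|\mathcal{A}|}1_{\mathcal{C}^\perp}$ is what the computation gives. Equivalently, the stated constant appears when one transforms the normalized indicator, or when $\mathcal{C}$ and $\mathcal{C}^\perp$ are swapped via $|\mathcal{C}^\perp|=|\mathcal{A}|/|\mathcal{C}|$. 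I will present the statement with whichever normalization makes it exact and note the convention explicitly.
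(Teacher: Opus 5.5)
Your character-orthogonality computations are correct and are the standard argument; the paper gives no proof of this lemma and only cites Tao--Vu. Your handling of item 4, however, contains a real error. No normalization of the transform makes item 4 true as written, so your closing plan to ``present the statement with whichever normalization makes it exact'' cannot be carried out.

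With $\widehat f(\xi)=\kappa\sum_a f(a)\chi_\xi(a)$, your own computation gives $\widehat{1_{\mathcal C}}=\kappa|\mathcal C|\,1_{\mathcal C^\perp}$. Matching the stated constant $\frac{|\mathcal A|}{|\mathcal C|}$ would force the $\mathcal C$-dependent value $\kappa=|\mathcal A|/|\mathcal C|^2$. Concretely, take $\mathcal C=\{0\}$ and $|\mathcal A|>1$: the transform of $1_{\{0\}}$ is the constant $\kappa\in\{1,1/|\mathcal A|\}$, while the formula claims the constant $|\mathcal A|$.

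Your remark about the normalized indicator is also wrong. Both $\frac{1}{|\mathcal C|}1_{\mathcal C}$ under the unnormalized transform and $\frac{|\mathcal A|}{|\mathcal C|}1_{\mathcal C}$ under the normalized one transform to exactly $1_{\mathcal C^\perp}$, with no prefactor.

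What you should write is that the statement is misprinted, and then prove the corrected form, which your computation already does:
\begin{itemize}
\item In the normalized convention of items 1--2, the correct identity is $\widehat{1_{\mathcal C}}=\frac{|\mathcal C|}{|\mathcal A|}1_{\mathcal C^\perp}=\frac{1}{|\mathcal C^\perp|}1_{\mathcal C^\perp}$; the fraction in the statement is inverted.
\item In the paper's unnormalized convention, the correct identity is $\widehat{1_{\mathcal C}}=|\mathcal C|1_{\mathcal C^\perp}=\frac{|\mathcal A|}{|\mathcal C^\perp|}1_{\mathcal C^\perp}$; here $\mathcal C$ stands where $\mathcal C^\perp$ belongs.
\item The stated constant is what one gets for the swapped identity $\widehat{1_{\mathcal C^\perp}}=\frac{|\mathcal A|}{|\mathcal C|}1_{\mathcal C}$. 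This additionally uses $(\mathcal C^\perp)^\perp=\mathcal C$, which again comes from non-degeneracy.
\end{itemize}

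Three smaller bookkeeping points.

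First, you verify items 1--2 with the normalized transform and item 3 with the unnormalized one. That proves four separate facts rather than one lemma. A single convention makes items 1--3 hold simultaneously: the normalized transform $\widehat f(\xi)=\frac{1}{|\mathcal A|}\sum_a f(a)\chi_\xi(a)$ together with the normalized convolution $(f*g)(a)=\frac{1}{|\mathcal A|}\sum_b f(ab^{-1})g(b)$. With the paper's sum-convolution and the normalized transform, item 3 picks up a factor $|\mathcal A|$, so state which convolution you use.

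Second, in item 1 you silently replaced $\widehat g$ by $\overline{\widehat g}$. Flag this explicitly: for complex-valued $g$, the sum as stated, $\sum_\xi\widehat f(\xi)\widehat g(\xi)$, equals $\langle f,\bar g\rangle$, not $\langle f,g\rangle$.

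Third, in the count $|\mathcal C|\,|\mathcal C^\perp|=|\mathcal A|$, surjectivity of $\xi\mapsto\chi_\xi|_{\mathcal C}$ needs more than non-degeneracy. It also uses that every character of $\mathcal C$ extends to $\mathcal A$, which is immediate here because these are $\mathbb{F}_2$-linear functionals on a subspace.
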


\subsection{Review on basic probability theory}

First, we give definitions regarding the multiplicative and additive precisions. 

\begin{definition}
    For $x, y  \in \mathbb{C}$, we say that $x \approx_{\varepsilon, \eta} y $ if $x \in [ (1-\varepsilon)y - \eta, (1+\varepsilon)y + \eta]$ for two real parameters $\varepsilon, \eta > 0$. 
\end{definition}

\begin{lemma}[Compositions of Approximations]\label{lemma: approximation-compositions-multiplication}
     Let $x, y, z \in \mathbb{C}^N$ and $x \approx_{\epsilon_1, \eta_1} y $ and $y \approx_{\epsilon_2, \eta_2} z$ for parameters $\epsilon_1, \epsilon_2, \eta_1, \eta_2 > 0$, then we have that 
    \begin{align}
        x \approx_{\varepsilon', \eta'} z,
    \end{align}
    where we have $\varepsilon' = \epsilon_1 + \epsilon_2 + \epsilon_1 \epsilon_2$ and $\eta' = \eta_1 + (1 + \epsilon_1)\eta_2$. Furthermore, $x \approx_{\epsilon_1, \eta_1} y$ then $cx \approx_{\varepsilon, c\eta } y$ for some constant $c$. 
    \begin{proof}
        We can prove this via triangle inequality. Note that we have 
        \begin{align}
            |x - z| = |x -y | + |y -z| \leq \epsilon_1|y| + \eta_1 + \epsilon_2|z| + \eta_2.
        \end{align}
        Apply the triangle inequality again where we have $|y| \leq |z| + |y-z| \leq (1+ \epsilon_2)|z| + \eta_2$. Combining this we finally arrive at 
        \begin{align}
            |x-z| \leq \epsilon_1(1+\epsilon_2)|z| + \epsilon_1 \eta_2 + \eta_1 + \epsilon_2|z| + \eta_2 \implies x \approx_{\varepsilon', \eta'} z.
        \end{align}
        as desired. The second point is clear from observation. 
    \end{proof}
\end{lemma}

We start from the most basic probability inequality, known as the \emph{Markov inequality}. Let $X$ be a nonnegative random variable with $\mathbb{E}[X] < \infty$. For every $a>0$, 
\begin{align}
    \operatorname{Pr}[X \geq a] \leq \frac{\mathbb{E}[X]}{a}.
\end{align}

Let $X$ be a random variable. Its \emph{moment generating function} is given by  $M_X(t) = \mathbb{E}[e^{tX}]$ with a parameter $t > 0$. The moment generating function completely characterizes an underlying distribution. In some simple cases, we can arrive at an analytical form of the moment generating function. An important example is the Bernoulli distribution $\operatorname{Bern}(p)$, where we have 
\begin{align}\label{eq: moment-generating-bernoulli}
    \mathbb{E}_{X \sim \operatorname{Bern}(p)}[e^{tX}] = p e^t + (1-p) = 1 + p(e^{t} -1) \leq e^{p(e^t-1)}. 
\end{align}
Using Eq.~\eqref{eq: moment-generating-bernoulli} and Markov inequality, we give the standard proof of the multiplicative Chernoff bound. 

\begin{proposition}[Chernoff bounds]\label{prop: chernoff-bound}
      Let $X_1, \ldots, X_S$ be $S$ identically distributed independent random variables in $\{ 0,1 \}$ satisfying $\mathbb{E}\left[X_i\right]=\mu$ for all $i$, and denote $\bar{\mu}=\frac{1}{S} \cdot \sum_{i=1}^S X_i$. Then for any $\tau > 0$, we have that 
     \begin{align}
         \operatorname{Pr}[\bar{\mu} \geq (1 + \tau) \mu] \leq \left( \frac{e^\tau}{(1+\tau)^{1+\tau}}\right)^{\mu S}, \\
         \operatorname{Pr}[\bar{\mu} \leq (1 - \tau) \mu] \leq \left( \frac{e^{-\tau}}{(1-\tau)^{1-\tau}}\right)^{\mu S}.
     \end{align}
     Moreover, for $\tau \in (0, 1)$, we have that $\bar{\mu} \approx_{\tau, 0} \mu$ with probability at least $1-e^{-\tau^2\mu S/3}$. 
     \begin{proof}
         Markov inequality applied to the moment generating function of the Bernoulli distribution. 
     \end{proof}
\end{proposition}

It can be generalized further to any bounded random variables from an arbitrary distribution. To state the result, we first state the following lemma on the moment generating function for a bounded distribution. 

\begin{lemma}[Hoeffding's lemma]
     Let $X$ is a real-valued and bounded random variable in $[a, b]$. Then we have the moment generating for $t > 0$ and $\mathbb{E}[X] = \mu$, 
    \begin{align}
       M_X(t) = \mathbb{E}[e^{tX}] \leq \exp(\frac{t^2(b-a)^2}{8} + t\mu).
    \end{align}
    \begin{proof}
        The main observation is the fact that $e^{tX}$ is convex (actually it is convex for any $t \in \mathbb{R}$). Hence, we have that 
        \begin{align}
            e^{tX} \leq \frac{b -X}{b-a} e^{ta} + \frac{X -a}{b-a} e^{tb}.
        \end{align}
        Taking the expectations, we arrive at, 
        \begin{align}
           \begin{aligned}
                M_X(t) &= \mathbb{E}[e^{tX}] \leq \frac{b-\mu}{b-a}e^{ta} + \frac{\mu-a}{b-a}e^{tb} .
           \end{aligned}
        \end{align}
        Let's denote that $\delta = \frac{\mu-a}{b-a}$ and $1 - \delta = \frac{b-\mu}{b-a}$. Then we arrive at, 
        \begin{align}
            e^{-t\mu} \mathbb{E}[e^{tX}] \leq (1-\delta) e^{-\delta t (b-a)} + \delta e^{(1-\delta) t(b-a)}.
        \end{align}
        Note that we have that $\delta \in (0, 1)$ and set $u := t (b-a)$, we now give a bound on $g(u) = (1-\delta) e^{-\delta u} + \delta e^{(1-\delta)u}$ and $g(u) > 0$. Denote $h(u) = \log g(u)$ and we wish to bound $h(u)$, which in turn bounds $g(u)$ by the monotonicity of logarithm. Note that $h(0) = 0$ and $h'(0) = g'(0)/g(0) = g'(0) = \delta(1-\delta) (e^{(1-\delta)u} - e^{-\delta u})|_{u=0} = 0$. We now compute its second derivative, $h''(u) = (-g'^2(u) + g(u) g^{''}(u)) / g^2(u)$. We compute, 
        \begin{align}
            \begin{aligned}
                g(u)g^{''}(u) - g^{'2}(u)  = \delta(1-\delta)e^{(1-2\delta)\mu}.
            \end{aligned}
        \end{align}
        This implies that we have 
        \begin{align}
            \begin{aligned}
                h''(u) = \frac{\delta(1-\delta)e^{(1-\delta)\mu} e^{-\delta \mu}}{\left( \delta e^{(1-\delta)\mu} + (1-\delta)e^{-\delta \mu} \right)^2} \leq \frac{1}{4},
            \end{aligned}
        \end{align}
        where the inequality is given by the AM-GM inequality. By the Taylor's theorem and combine with the zeroth and first-order derivatives, this implies that for all $u \in \mathbb{R}$, 
        \begin{align}
            h(u) \leq \frac{u^2}{8} \implies g(u) \leq \exp(\frac{u^2}{8}) = \exp(\frac{t^2(b-a)^2}{8}).
        \end{align}
        Plugging back the $e^{t\mu}$, we conclude as desired. 
    \end{proof}
\end{lemma}

We state a variant of the Chernoff-Hoeffding bound that we will use later. 

\begin{proposition}[Chernoff-Hoeffding bound]
     Let $X_1, \ldots, X_S$ be $S$ identically distributed independent real-valued random variables in $[a, b]$ satisfying $\mathbb{E}\left[X_i\right]=\mu$ for all $i$, and denote $\bar{\mu}=\frac{1}{S} \cdot \sum_{i=1}^S X_i$. Then for any $\tau > 0$, we have that
\begin{align}
     \operatorname{Pr}[\bar{\mu} \geq (1 + \tau) \mu] \leq \exp(-\frac{2S\tau^2 \mu^2}{(b-a)^2}), \\
         \operatorname{Pr}[\bar{\mu} \leq (1 - \tau) \mu] \leq \exp(-\frac{2S\tau^2 \mu^2}{(b-a)^2}).
\end{align}
As a result, $\bar{\mu} \approx_{\tau} \mu $ with probability at most $2 \exp(-\frac{2S\tau^2 \mu^2}{(b-a)^2})$.
\begin{proof}
    The Markov inequality and the above Hoeffding's lemma. 
\end{proof}
\end{proposition}

\begin{corollary}\label{cor: hoeffding-bound-relative}
    Let $X_1, \ldots, X_S$ be $S$ identically distributed independent real-valued random variables in $[-a, a]$ satisfying $\mathbb{E}\left[X_i\right]=\mu$ for all $i$, and denote $\bar{\mu}=\frac{1}{S} \cdot \sum_{i=1}^S X_i$. Then for any $\tau > 0$, we have that the probability $\bar{\mu} \approx_{\tau, 0} \mu $ is at least $1 - 2 e^{-S\tau^2 \mu^2 / (2a^2)}$.
\end{corollary}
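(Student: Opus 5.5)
The plan is to derive Corollary~\ref{cor: hoeffding-bound-relative} directly from the Chernoff--Hoeffding bound stated just above. The bound has range parameter $b-a$; here the interval is $[-a,a]$, so we substitute $b-a \mapsto 2a$. The two one-sided tail bounds then read
\begin{align*}
\operatorname{Pr}[\bar{\mu} \geq (1+\tau)\mu] \leq \exp\!\left(-\frac{2S\tau^2\mu^2}{4a^2}\right) = e^{-S\tau^2\mu^2/(2a^2)},
\end{align*}
and the lower tail $\operatorname{Pr}[\bar{\mu} \leq (1-\tau)\mu]$ obeys the same bound.

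To make the origin of these bounds explicit, I would rederive each one-sided tail from Hoeffding's lemma rather than cite it blindly. Apply Markov's inequality to $e^{t(\bar\mu-\mu)S}$ with $t>0$. By independence and Hoeffding's lemma,
\begin{align*}
\operatorname{Pr}[\bar\mu - \mu \geq \tau|\mu|] \leq \exp\!\left(-tS\tau|\mu| + \frac{St^2(2a)^2}{8}\right).
\end{align*}
Optimizing gives $t = \tau|\mu|/a^2$, and the exponent becomes $-S\tau^2\mu^2/(2a^2)$. The lower tail follows symmetrically by applying the same argument to $-X_i$, which lie in the same interval $[-a,a]$.

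The event $\bar\mu \approx_{\tau,0}\mu$ means $\bar\mu \in [(1-\tau)\mu, (1+\tau)\mu]$, i.e.\ $|\bar\mu-\mu| \leq \tau\mu$. I would assume $\mu > 0$; otherwise I would write $|\mu|$, which is the natural reading given $\lambda_b > 0$ in the application. The complement of this event is the union of the two tail events. A union bound then gives a failure probability of at most $2e^{-S\tau^2\mu^2/(2a^2)}$, so the success probability is at least $1-2e^{-S\tau^2\mu^2/(2a^2)}$, as claimed.

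The only delicate point is bookkeeping. The preceding proposition phrases its conclusion loosely: it says ``with probability at most'' where it means that the failure probability is at most that quantity. I would state the complement correctly, and also handle the sign of $\mu$ by working with $|\mu|$ throughout. There is no genuine obstacle beyond these points.
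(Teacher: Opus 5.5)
Your proposal is correct and is essentially the paper's argument. The corollary follows by specializing the Chernoff--Hoeffding proposition to $b-a=2a$, which makes the exponent $-S\tau^2\mu^2/(2a^2)$, and then union-bounding the two one-sided tails; your rederivation via Markov's inequality and Hoeffding's lemma is exactly how the paper justifies that proposition, and your reading of $\bar\mu\approx_{\tau,0}\mu$ as $|\bar\mu-\mu|\le\tau|\mu|$, together with the corrected ``at least $1-\cdots$'' phrasing, matches how the paper uses these notions elsewhere (e.g.\ in the proof of Lemma~\ref{lemma: epsilon-multi-translation}).
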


It can be converted to the additive precision.

\begin{corollary}\label{cor: hoeffding-bound-additive}
    Let $X_1, \ldots, X_S$ be $S$ identically distributed independent real-valued random variables in $[-a, a]$ satisfying $\mathbb{E}\left[X_i\right]=\mu$ for all $i$, and denote $\bar{\mu}=\frac{1}{S} \cdot \sum_{i=1}^S X_i$. Then for any $\varepsilon > 0$, we have that the probability $\bar{\mu} \approx_{0, \varepsilon} \mu $ is at least $1 - 2 e^{-S\varepsilon^2 / (2a^2)}$.
\end{corollary}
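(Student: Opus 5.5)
The plan is to reduce Corollary~\ref{cor: hoeffding-bound-additive} to the two-sided Hoeffding-type tail bound already established, namely Corollary~\ref{cor: hoeffding-bound-relative} and the Chernoff--Hoeffding proposition behind it. The event $\bar{\mu} \approx_{0,\varepsilon} \mu$ is by definition exactly the event $|\bar{\mu} - \mu| \leq \varepsilon$. So it suffices to bound $\operatorname{Pr}[\bar{\mu} - \mu > \varepsilon]$ and $\operatorname{Pr}[\bar{\mu} - \mu < -\varepsilon]$ separately by $e^{-S\varepsilon^2/(2a^2)}$ and then apply a union bound.

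For the upper tail, I will apply Markov's inequality to $e^{t S(\bar{\mu}-\mu)}$ with a parameter $t>0$. By independence, the moment generating function factorizes as $\prod_i \mathbb{E}[e^{t(X_i-\mu)}]$. Hoeffding's lemma, applied with $[a,b]$ replaced by $[-a,a]$ so that $b-a = 2a$, bounds each factor by $e^{t^2 (2a)^2/8} = e^{t^2 a^2/2}$. This gives
\begin{align*}
\operatorname{Pr}[\bar{\mu}-\mu \geq \varepsilon] \leq \exp\!\left(-tS\varepsilon + \tfrac{S t^2 a^2}{2}\right).
\end{align*}
Optimizing at $t = \varepsilon/a^2$ yields $e^{-S\varepsilon^2/(2a^2)}$. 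The lower tail follows by the same argument applied to the variables $-X_i$, which also lie in $[-a,a]$.

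There is a shortcut through Corollary~\ref{cor: hoeffding-bound-relative}: substitute $\tau = \varepsilon/|\mu|$, so that $\tau\mu^2 \to \varepsilon^2$ in the exponent. This only works when $\mu \neq 0$, however, so I prefer the direct Markov/Hoeffding route, which needs no such case split.

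The only real obstacle is bookkeeping of constants. The range width is $2a$, not $a$, and this is what produces the factor $2a^2$ in the exponent rather than the $(b-a)^2/2$ form in the proposition. Once that width is tracked correctly, the union bound over the two tails gives failure probability at most $2e^{-S\varepsilon^2/(2a^2)}$, which is the stated claim.
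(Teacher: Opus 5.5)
Your argument is correct. Markov's inequality on $e^{tS(\bar{\mu}-\mu)}$, Hoeffding's lemma with range width $2a$, the choice $t=\varepsilon/a^2$, and a union bound over the two tails give exactly $2e^{-S\varepsilon^2/(2a^2)}$. This is the same mechanism behind the paper's statement, which it obtains in one line by converting Corollary~\ref{cor: hoeffding-bound-relative} (itself resting on Markov's inequality plus Hoeffding's lemma) through the substitution $\varepsilon=\tau|\mu|$. Your direct route has a small advantage, which you correctly identified: it also covers $\mu=0$, where that substitution is unavailable and the relative bound is vacuous.
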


We can generalize it to the case of complex-valued random variables. 

\begin{proposition}[Chernoff-Hoeffding Bounds on complex-valued RVs]\label{prop: hoeffding-complex-|a|}
      Let $X_1, \ldots, X_N$ be $N$ identically distributed independent complex-valued random variables satisfying $\left|X_i\right| \leq a$ and $\mathbb{E}\left[X_i\right]=\mu$ for all $i$, and denote $\bar{\mu}=\frac{1}{S} \cdot \sum_{i=1}^S X_i$. Then for any $\varepsilon > 0$, we have that the probability $\bar{\mu} \approx_{0, \varepsilon} \mu $ is at least $1 - 4 e^{-S\varepsilon^2  / a^2}$
      \begin{proof}
          Apply the above Corollary~\ref{cor: hoeffding-bound-relative} to $\operatorname{Im}(X)$ and $\operatorname{Re}(X)$ respectively.  Then we have that $|X| \leq a$ when $|\operatorname{Re}(X)| \leq a/\sqrt{2}$ and $|\operatorname{Im}(X)| \leq a/\sqrt{2}$, from triangle inequality. Then it follows from the union bound, the probability $\bar{\mu} \approx_{0, \varepsilon} \mu$ at least with probability $1 - 4e^{-S\varepsilon^2 / a^2}$. 
      \end{proof}
\end{proposition}

=

\section{Identifiability from Fourier analysis}\label{sec: learnability-general}

In this section, we provide detailed and formal statements of the learnability results, and discuss efficiency via compressed sensing. The results apply to the general setting with stabilizer code and circuit alike. Hence, we adopt a more abstract view in the hope of maintaining generality as well as rigor, while minimizing changes in notation.

\subsection{Reparameterizing the fault distributions}\label{subsec: parameterization-fault-distribution}

In what follows, we consider exclusively the Boolean group $\mathcal{A}$. A \emph{distribution} $f: \mathcal{A} \rightarrow \mathbb{R}_{+ \cup 0}$ is a function evaluated on $\mathcal{A}$ that is normalized $\sum_{a \in \mathcal{A}} f(a) = 1$. In the case of Boolean group $\mathcal{A} \cong \mathbb{F}^n_2$ , the characters are $\{ +1, -1\}$-valued. Hence, it suffices to work in the \emph{real-valued} group algebra $\mathbb{R}[\mathcal{A}]$. In this case, we can also associate the symmetric, non-degenerate bilinear form $\beta: \mathcal{A} \times \mathcal{A} \rightarrow \mathbb{F}_2$ and the isomorphism $\mathcal{A} \cong \mathcal{A}^\perp$ via the association of character $\chi_a = (-1)^{\beta(\cdot, a)}$. It is desired to write its Fourier-transformed version in the \emph{character basis}
\begin{align}\label{eq: pauli-character-fourier}
    \widehat{f} = \prod_{a \in \mathcal{K}} ((1-q_a) + q_a\chi_{a}); \quad \widehat{f} = *_{a \in \mathcal{K}} ((1-q_a) + q_a\chi_{a}),
\end{align}
where the notation $*$ denotes the convolution operator. Let $\mathcal{K} \subseteq \mathcal{A}$ denotes the independent fault processes. A key result we will also use is 
\begin{lemma}
    Let Boolean group $\mathcal{A}$ be endowed with the bilinear symmetric and non-degenerate form $\beta$. Denote $\beta$ with the matrix form whose matrix entry $\beta(a, b)$ for $a, b \in \mathcal{A}$ and $a, b \neq 0$. Then the matrix form $\beta$ has full rank. 
    \begin{proof}
    Note that the matrix entry $\beta(a, b) = -\frac{1}{2}(\chi_a(b) -1)$.  Let's define $h(s) = \sum_{b \neq 0} u_b \chi_b(s)$ and for some $u$ such that  $-2\beta u  =0$. This implies that $0 = \sum_{b \neq 0} u_b(\chi_b(s) -1) = h(s) - h(0)$. Since characters are linearly independent from each other, this implies that $u=0$.  Similarly, we can prove that the columns, where we define $h(a) = \sum_{s \neq 0} u_s \chi_a(s) =\sum_{s \neq 0} u_s \chi_s(a) $. This is possible since there always exists such a symmetric bilinear form. Then, by the symmetry of the argument, columns are linearly independent.
    \end{proof}
\end{lemma}

\begin{lemma}\label{lemma: fourier-character-basis-sign}
For any distribution $f$ defined on a Boolean group $A$, then we can always write $\widehat{f}$ as in Eq.~\eqref{eq: pauli-character-fourier}, provided $|\widehat{f}(a)| > 0$ for all $a \in \mathcal{A}$.
\begin{proof}
Taking $-\log |\widehat{f}| =:y$ expressed in the character basis~Eq.~\eqref{eq: pauli-character-fourier}, we arrive at the following set of linear equations,
\begin{align}
y = \beta x,
\end{align}
where $x_a = - \log(1 - 2q_a)$. By the above lemma, $\beta$ is full-rank for $a, b \neq 0$. Hence, there must exist a unique set of choices $x$ satisfying the above linear equations. In the case that some of the Fourier coefficients are negative, we need to adjust some parameters to above $0.5$. The issue is that if this adjustment can uniquely correspond to the sign pattern of $\widehat{f}$, provided none is vanishing. Let’s denote the vector $z$ such that $z_a = 1$ if $\widehat{f}(a) < 0$; otherwise $z_a =0$. Write $\varepsilon(a) = \operatorname{sgn} \widehat{f}(a)$ and $\sigma_a = \operatorname{sgn} (1-2q_a) = (-1)^{m_a}$ for $a \in \mathcal{K}$, where $m_a$ takes value of ${0, 1}$.  Then we arrive at the following equations,
\begin{align}
\varepsilon_b  = \prod_{a\in \mathcal{K}: \beta(b, a) =1} \sigma_a \iff z_b = \sum_{a\in \mathcal{K}: \beta(b, a) =1} m_a = \sum_{a \in \mathcal{K}} \beta_{ba} m_a.
\end{align}
Again by the full (column) rank of $\beta$, the sign is determined uniquely as a result.
\end{proof}

\end{lemma}

In the case with $\widehat{f}(b) = 0$ for some $b \in \mathcal{A}$, which relates to tuning some parameters $q_a = 1/2$ for $a \in \mathcal{A}$. We ask if the parameterization in Eq.~\eqref{eq: pauli-character-fourier} still holds. The understanding is given as follows,

\begin{lemma}
Let $f$ be a distribution of the Boolean group $\mathcal{A}$ and let $W := { a\in \mathcal{A}: \widehat{f}(a) \neq 0 }$. Then the following are equivalent,
\begin{enumerate}
\item $f$ admits a parametrization of the form in Eq.~\eqref{eq: pauli-character-fourier}.
\item $W$ is a subgroup (linear subspace) of $\mathcal{A}$, defined by all elements that commute with all $b \in \mathcal{K}: q_b = 1/2$.
\end{enumerate}
\begin{proof}
Let $f$ be written as in Eq.~\eqref{eq: pauli-character-fourier} and suppose that for some $a \in \mathcal{A}$, $\widehat{f}(a) = 0$. Then there must exist some $b \in \mathcal{K}$ such that $\beta(a, b) = 1$ and $q_b = 1/2$. In this case, we have
\begin{align}
\left( (1-q_b)I + q_b \chi_b \right) = \frac{I + \chi_b}{2} \equiv \Pi_b.
\end{align}
The final expression indicates that this is given by the projection operator $\Pi_b$, which enforces that any $a’ \in \mathcal{A}$, if $\beta(a’, b) =1$, then $\widehat{f}(a’) = 0$. In other words,  $W := { a\in \mathcal{A}: \widehat{f}(a) \neq 0 }$ must lie within the subgroup (subspace) identified with the projection,
\begin{align}
\prod_{b \in \mathcal{K}: q_b = 1/2} \frac{I + \chi_b}{2}.
\end{align}
By construction, $W$ must also contain this subgroup, which proves in one direction. For the other direction, we denote the orthogonal complement of $W$ by $W^{\perp}$ with respect to $\beta$. Note that $W$ is again a Boolean group, so that we can use the above lemma to conclude the results.
\end{proof}
\end{lemma}

\begin{remark}
    The parametrization set $\mathcal{K}$ in the above case needs not be unique, which in a line with our characterization the effective distribution in Section~\ref{subsec: learnability-general-fault}. 
\end{remark}

Hence, the parameterization in Eq.~\eqref{eq: pauli-character-fourier} cannot capture every distribution (or Pauli channel) over $\mathbb{R}[\mathcal{A}]$. On the practical regime, it is not very harmful as we can always assume $\widehat{f}(a) = O(\epsilon)$ for some arbitrary small perturbative factor $\epsilon$. We now give an example for illustration. 

\begin{example}[Distribution with zero Fourier coefficients]
    We consider the single-qubit Pauli channel viewed as a symplectic form $f$, 
    \begin{align}
        f(I) =0.4, \quad f(Z) = 0.2, \quad f(X) =0.1, \quad f(Y) =0.3. 
    \end{align}
    Then we have that $\widehat{f}(X) = f(I) - f(Z) + f(X) - f(Y) =0$. Then we could either choose $f(Y) =1/2$ or $f(Z) = 1/2$. However, we have that $\widehat{f}(Z) \neq 0$ and $\hat{f}(Y) \neq 0$, so that the parameterization Eq.~\eqref{eq: pauli-character-fourier} does not hold. 
\end{example}

\begin{example}[Pauli-Lindblad parameterization]
The parameterization bears close resemblance to the \emph{Pauli-Lindblad} form of Pauli channel form~Refs.~\cite{van_den_Berg_2023, van_den_Berg_2024}. More precisely, the Pauli-Lindblad form stipulates that the coefficient $1/2 > q_a > 0$ for $a \in \mathcal{K}$. This assumption in general does not hold, and we allow the case with a negative coefficient. A simple example is given with the single-qubit Pauli channel
\begin{align}
    \begin{aligned}
        \Lambda &= p_{I} I \cdot I  + p_{X} X \cdot X + p_Y Y \cdot Y + p_Z Z \cdot Z, \\
    &= ((1-q_X) I \cdot I + q_X X \cdot X ) *  ((1-q_Y) I \cdot I + q_Y Y \cdot Y ) * ((1-q_Z) I \cdot I + q_Z Z \cdot Z ).
    \end{aligned}
\end{align}
The Pauli eigenvalues are given respectively by, 
\begin{align}
    \begin{aligned}
        \lambda_X &= 1 - 2p_Y - 2p_Z = (1 - 2q_Y) (1-2q_Z), \\
        \lambda_Y &= 1 - 2p_X - 2p_Z  = (1 - 2q_X) (1-2q_Z), \\
        \lambda_Z &= 1- 2p_X -2 p_Y= (1 - 2q_X) (1-2q_Y). 
    \end{aligned}
\end{align}
This provides a simple relation between these two parameterizations. Before characterizing the general condition, note that if we take $p_X = 0.19, p_Y = 0.05, p_Z = 0.23$ (while $p_I = 0.53$), then we would have $q_Y \approx -0.1$. 
\end{example}

Recall that any Boolean group $\mathcal{A} \cong \mathbb{F}^n_2$, there exists another way parameterizing the Fourier distribution $\widehat{f}$ via the so-called the canonical moments via the inlcusion-exclusion principle. 

\begin{definition}[String partial ordering] For any $a, b \in \mathbb{F}^n_2$. We say $a \geq b$ if either $b_i = 0$ or $b_i = a_i$ for all $i \in [n]$. 
\end{definition}

\begin{definition}[M\"{o}bius function and M\"{o}bius inversion] The Möbius function $\mu$ of $\mathbb{F}^n_2$ is the function $\mu: \mathbb{F}^n_2 \times \mathbb{F}^n_2 \mapsto \mathbb{R}$ such that for any two functions $f, g: \mathbb{F}^n_2 \mapsto \mathbb{R}$,

$$
f(t)=\prod_{s \leq t} g(s),
$$

if and only if

$$
g(t)=\prod_{s \leq t} f(a)^{\mu(s, t)}.
$$

\end{definition}

\begin{lemma}[Canonical moments]\label{lemma: canonical-moment}
     Given $\widehat{f}$ the moments of the error channel. Its canonical moment is defined by for $a \in \mathbb{F}^n_2$, 
    \begin{align}
        F(a) = \prod_{b \leq a}\widehat{f}(b)^{\mu(b, a)},
    \end{align}
    where $\mu(b, a)$ is given by the M\"{o}bius function such that
    \begin{align}
        \mu(b, a)  \;=\;
\begin{cases}
(-1)^{|a| - |b|}, & \text{if }\, b \leq a\\[6pt]
0,   & \text{otherwise}. 
\end{cases}
    \end{align}
    \begin{proof}
        We first prove that $\mu(b, a)$ is indeed a M\"{o}bius function. First we observe a useful identity, 
        \begin{align}
            \begin{aligned}
                &\sum_{b: c \leq b \leq a}\mu(b, c) = \sum_{b: c \leq b \leq a}  (-1)^{|b|-|c|} \\
                &= \sum^{|a|}_{i=|c|} \binom{|a| -|c|}{i-|c|} (-1)^{i -|c|} = (1-1)^{|a| -|c|} = \delta_{ca}.
            \end{aligned}
        \end{align}
        Then it follows, 
        \begin{align}
            \begin{aligned}
                \prod_{b\leq a} F(b) &= \prod_{b \leq a} \prod_{c \leq b} \widehat{f}(c)^{\mu(c, b)} = \prod_{b, c \:: c \leq b \leq a}\widehat{f}(c)^{\mu(c, b)} = \prod_{c \:: c \leq b \leq a} \widehat{f}(c)^{\sum_{b: c\leq b \leq a }\mu(c, b)} \\
                &= \prod_{c \:: c \leq b \leq a} \widehat{f}(c)^{\delta_{c,a}} = \widehat{f}(a).
            \end{aligned}
        \end{align}
    \end{proof}
\end{lemma}

We now formally establish the connection between the canonical moments and the parameterization in Eq.~\eqref{eq: pauli-character-fourier} and establish their formal equivalence. 

\begin{lemma}
    Suppose that $\widehat{f}$ with $\widehat{f}(s) > 0$ for all $s \in A$ is written in Eq.~\eqref{eq: pauli-character-fourier}, then there exists a one-to-one correspondence between the canonical moments and parameters from Eq.~\eqref{eq: pauli-character-fourier}. 
    \begin{proof}
        As standard, we take the logarithm over the canonical moments. From Lemma~\ref{lemma: canonical-moment}, we consider that 
        \begin{align}
            \begin{aligned}
                F(a) = \prod_{b \leq a} \widehat{f}(b)^{\mu(b, a)} &= \prod_{b \leq a} \left(  \prod_{c\in A, c \neq 0}  ((1-q_c) + q_c \chi_c(b)) \right)^{\mu(b, a)} \\
                &=\prod_{b \leq a} \left(  \prod_{c\in A, c \neq 0}  (1-2q_c)^{\langle b, c \rangle} \right)^{\mu(b, a)}.
            \end{aligned}
        \end{align}
        Note $F(a) > 0$.  
        Taking the logarithm, we obtain a system of linear equations. 
        \begin{align}
            \log F(a) = \sum_{b \leq a} \sum_{c \neq 0, c \in A} \mu(a, b) \beta(b, c) \log\left(   (1-2q_c) \right). 
        \end{align}
        Hence, it is equivalent to show that the matrix $T_{ac} = \sum_{b \leq a} \mu(a, b)\beta(b, c)$ is full-rank. Note that we have established that $\beta$ has full-rank, so it remains to prove $\mu$ attains full-rank as well. To see that, noted from the definition of the M\"{o}bius transformation over the Boolean lattice, it is upper-triangular matrix with non-vanishing diagonals. Hence, we arrive at the desired conclusion.
    \end{proof}
\end{lemma}

\begin{remark}
    In the case $\beta$ is given by the standard $\mathbb{F}_2$-inner product or the symplectic inner product (which can be viewed as standard $\mathbb{F}_2$-inner product with element transpose), we can directly prove this result. To show this, it suffices to show that $T$ can be written in a triangular form with nonzero diagonal entry. To see this, compute 
        \begin{align}
            \begin{aligned}
                T_{ac} &= \sum_{b \leq a} \mu(b, a)\beta(b, c) = \sum_{b \leq a} (-1)^{|a|- |b|} \beta(b, c ) \\
                &= \sum_{\substack{b \leq a \\ |c \cap b| \text{ is odd}}} (-1)^{|a| - |b|} \\
                &= \sum_{\substack{d \leq s \\
                d \text{ is odd}}} \sum_{e \leq r} (-1)^{|a| - |d| - |e|} = (-1)^{|a|} \sum_{\substack{d \leq s \\
                d \text{ is odd}}} (-1)^{|d|} \sum_{e \leq r}  (-1)^{|e|}. 
            \end{aligned}
        \end{align}
        Where $s = a \cap c$ and $r = a \setminus s$. Notice that we have that $\sum_{e \leq r} (-1)^{|e|} = (1-1)^{|r|} = 0$ otherwise $r =0$. Hence, $s = a$ and $a \leq c$. Then it is easy to observe that this partial ordering ensures that this matrix $T$ is given in the triangular form. To notice that the diagonal term never vanishes, we compute that 
        \begin{align}
            T_{aa} = (-1)^{|a|} \sum_{\substack{b \leq a \\
            b \text{ is odd}}} (-1)^{|b|} = (-1)^{|a|} \sum_{j \text{ is odd}} (-1)^j \binom{|a|}{j}.
        \end{align}
        Note that for $a =0$, $T_{aa} =1$. Using the relation for $|a| > 0$, 
        \begin{align}
            \sum_{j \text{ is odd}} (-1)^j \binom{|a|}{j} =- \frac{1}{2} ((1+1)^{|a|} - (1-1)^{|a|}) = -2^{|a|-1}, 
        \end{align}
        which implies $T$ is full-rank. 
\end{remark}

This result should not come as unexpected, as both give a way to parameterize the same Pauli channels, which by construction, should be equivalent to each other. One can also prove the above result from another direction by expression $\widehat{f}(b)$ in terms of the canonical momentum.


\subsection{Learnability of fault distribution}\label{subsec: learnability-general-fault}

We state our results on the elementary-$2$ Abelian group, which encompasses the learnability questions of the Pauli channels and classical-Pauli noise discussed in~\cite{Wagner_2022, Wagner_2023} such as the data-syndrome code~\cite{ashikhmin2020quantum}. Furthermore, it encompasses consideration of the circuit noise in the spacetime formalism in Section~\ref{subsec: code-circuit-iso}, which we will discuss in later sections. We start with a generic elementary-$2$ group $\mathcal{A} \cong \mathbb{F}^n_2$ and consider a distribution $f \in \mathbb{R}[\mathcal{A}]$. The learnability question can be given in two parts, both are intertwined with the concept of error-correcting codes.

\begin{definition}
    A binary linear error-correcting code $\mathcal{C}$ is a $k$-dimensional subspace of $\mathbb{F}^n_2$, denoted by $[n, k]$. Its dual (parity-check) code $\mathcal{C}^{\perp}$ $[n, n-k]$ is the orthogonal complement of $\mathcal{C}$. 
\end{definition}

 It is worth noting that classical linear error-correcting codes bear a close relation to the quantum stabilizer codes. We refer to $\mathcal{C}$ as a 
weakly self-dual code if $\mathcal{C}\subset \mathcal{C}^\perp$. To promote this into a quantum code, we place $X$ and $Z$ stabilizers both with the weakly self-dual code and the encoded logical information is given by $\mathcal{C}^\perp \setminus \mathcal{C}$. Mathematically, this can be represented as a $2$-term chain complex and is an instance of the Calderbank–Shor–Steane (CSS) code. It is also proven that any quantum stabilizer codes can be mapped to this instance of codes \cite{Bravyi_2009}. In this sense, the self-duality of the underlying code (subgroup of $\mathcal{A}$) plays a pivotal role, and a physical explanation is that this gives a set of simultaneous measurement operators. Hence, to formally motivate our study, let's consider $\mathscr{G} \subseteq \mathcal{A}$ to be a subgroup. We denote its center by $\mathscr{M} = \mathscr{G}^\perp \cap \mathscr{G}$. This subgroup $\mathscr{G}$ is referred to as the gauge group. It is more convenient to ask the learnability question taking the Fourier transform. 

\begin{enumerate}
    \item Knowing the Fourier coefficient $\widehat{f}1_{\mathscr{M}}$, we aim to learn the entire Fourier coefficient $\widehat{f}$. \label{question: physical-learn-fourier}
    
    \item Knowing the Fourier coefficient $\widehat{f}1_{\mathscr{M}}$, we aim to learn the Fourier coefficients $\widehat{f}1_{\mathscr{G}^\perp}$. \label{question: logical-learn-fourier}
\end{enumerate}

For obvious reasons, we refer to Question~\ref{question: physical-learn-fourier} as the learnability question and Question~\ref{question: logical-learn-fourier} as learnability up to logical equivalence. Performing the inverse Fourier transform, we obtain
\begin{align}\label{eq: eff-distribution}
    \mathscr{F}^{-1}[\hat{f}1_{\mathscr{G}^\perp}] = \frac{1}{|\mathscr{G}|}f * 1_{\mathscr{G}} \equiv f^{\operatorname{eff}},
\end{align}
where we obtain a distribution that is invariant under action from $\mathscr{G}$, which corresponds to the effective distribution. A closely related concept is referred to as the \emph{prior} distribution (See Definition~\ref{def: prior-distribution}), a convenient reparameterization of $f$ by removing the degrees of freedom associated with logical equivalence $\sim_{\mathscr{G}}$. Here we build mathematical machinery to tackle these problems. Note that for Boolean groups, the characters are real-valued $\{-1, +1\}$ and hence, it is safe to restrict any distribution $f \in \mathbb{R}[\mathcal{A}]$ with only the real-coefficients for $\widehat{f}$. As we have seen, in many technical adaptations, we take the logarithm of the Fourier distribution. In what follows, we assume $\widehat{f}(a) > 0$ for all $a \in \mathcal{A}$ so that taking the logarithm provides a natural bijection. Note that it is a mild assumption generally assumed across the literature~\cite{chen2024efficient, chen2025disambiguatingpaulinoisequantum, Chen_2023_learnability, van_den_Berg_2023, van_den_Berg_2024, Flammia_2020, Zhang_2025, Wagner_2023, Wagner_2022} as the probability for which a fault occurs is typically small. Denote the Boolean group $\mathcal{A}$ with group order $N= 2^n$ so that $\mathcal{A} \cong \mathbb{F}^n_2$ equipped with the symmetric and non-degenerate bilinear form $\beta$. Let $\mathscr{G} \subseteq \mathcal{A}$ be a subgroup. Its \emph{syndrome map} $\sigma(a) $ is given as a vector over $\mathbb{F}_2$ whose entry is given by $\sigma(a)_s = \beta(a, s)$ for $s \in \mathscr{M}$. Then we say that elements $a, b$ correspond to the same syndrome if $\sigma(a) = \sigma(b)$ and if they are logically equivalent if $a \sim_{\mathscr{G}} b$ (that is, there exists $g \in \mathscr{G}$ such that $a = gb$). Denote $K = |\mathcal{K}_\mathscr{M}|$, records the number of independent fault processes from $\mathcal{K}$ that share the same syndrome (excluding the zero syndrome). We state the following intuitive result, 

\begin{lemma}\label{lemma: learn-all-distribution-from-syndrome}
    For a distribution $\widehat{f} \in \mathbb{R}[\mathcal{A}]$ written in the parametrization of form Eq.~\eqref{eq: pauli-character-fourier}. Then for any nonzero element $a \in \mathcal{K}$, we write a matrix $A_{\mathscr{M}} \in \mathbb{R}^{(|\mathscr{M}|-1) \times |\mathcal{K}| }$ whose entry is given by $D[s, a] = \beta(s, a)$ for $s \neq 0 $ and $s \in \mathscr{M}$, $a \in \mathcal{K}$. Then $A_{\mathscr{M}}$ achieves full (column)-rank if and only if each element in $\mathcal{K}$ corresponds to a unique non-zero syndrome. 
    \begin{proof}
        We first prove the forward direction. Note that to achieve full-column rank $|\mathcal{K}|\leq |\mathscr{M}|-1$ and suppose that there are two elements $a, b \in \mathcal{K}$ that correspond to the same syndrome. Then $A_{\mathscr{M}}$ would necessarily have two identical columns, which implies that $A_{\mathscr{M}}$ cannot achieve full-column rank. For the other direction, we prove this by a systemic restriction of $\beta$ to a subgroup where it preserve the non-degeneracy. Recall that $\mathscr{M}$ is a subgroup of $\mathcal{A}$ so it can be generated by $r$ generators $\{g_1, \cdots, g_r\}$. Hence, the homomorphism, 
        \begin{align}
            \phi: \mathscr{M} \rightarrow \mathbb{F}^r_2; \quad \phi(g^{\alpha_1}_1 \cdots g^{\alpha_r}_r) = (\alpha_1, \cdots, \alpha_r),
        \end{align}
        is an isomorphism with respect to the group rule, where $\alpha_1, \cdots, \alpha_r$ are $\mathbb{F}_2$-valued variables. Furthermore, we can define another map, 
        \begin{align}
            \psi: \mathcal{A} / \mathscr{M}^\perp \rightarrow \mathbb{F}^r_2; \quad \psi([a]) = \sigma(a).
        \end{align}
        It is easy to see that $\psi$ is another isomorphism. Hence, there must exist a symmetric and non-degenerate bilinear form $\gamma$ defined on $\mathbb{F}^r_2$ such that $\gamma(u, v) = \beta(\phi^{-1}(u), \psi^{-1}(v))$. Hence, the result follows from the full-(column) rankness of $\gamma$ (excluding the identity). 
    \end{proof}
\end{lemma}

\begin{example}
    We now give a concrete example to the above lemma. Using the stabilizer group $\mathscr{M}$ and Pauli group $\mathrm{P}^n$. Define the following map
    \begin{align}
        \sigma: \mathrm{P}^n \rightarrow \mathbb{F}^r_2.
    \end{align}
For any Pauli element $a \in \mathrm{P}^n$, we can express it by $\sigma(a) \in \mathbb{F}^r_2$ where $\sigma(a)_i = 1$ if $a$ anti-commutes with the stabilizer generator $s \in \mathscr{M}$. Note that this identification is not injective but satisfies the "homomorphism property": $\sigma(ab) = \sigma(a) + \sigma(b)$ where the addition is over $\mathbb{F}_2$. One can construct an isomorphism map which is given by the quotient $\bar{\sigma}: \mathrm{P}^n / \mathscr{M}^{\perp} \rightarrow \mathbb{F}^r_2 $ where the commutant $\mathscr{M}^\perp = \mathscr{L}$. Furthermore, for elements from the stabilizer group $\mathscr{M}$, we can always write $s = \prod^r_{j=1}s^{\psi(s)_i}_i$ where $\psi(s)_i \in \mathbb{F}_2$ where it equals to $1$ if $s$ consists of product of $s_i$. This induces a natural bijection from $s \mapsto \psi(s) \in \mathbb{F}^r_2$. Then it is easy to see that for any coset $[a] \in \mathrm{P}^n / \mathscr{M}^\perp$ and any $a \in [a]$, $s \in \mathscr{M}$, 

\begin{align}
     \beta(a, s) = \sum^r_{j=1}\beta(a, s^{\psi(s)_i}_i)  = \sum^r_{j=1}\langle \sigma(a), {\psi(s)_i} \rangle_{\mathbb{F}^r_2} = \langle \sigma(a), \psi(s) \rangle_{\mathbb{F}^r_2},
\end{align}
where $\beta(\cdot, \cdot)$ denotes the standard symplectic inner product for the Pauli group and $\beta(a, s^{\psi(s)_i}_i) =1$ if $\psi(s)_i =1$ and $a, s_i$ anticommutes, which explains the second equality. The third equality follows by re-expressing $s$ by $\psi(s)$.   
\end{example}

We now are able to state the result of learnability. 

\begin{theorem}\label{thm: learn-all-distribution-from-syndrome}
    For a Boolean group $\mathcal{A}$ associated with a symmetric and non-degenerate bilinear form $\beta$. Denote distribution $f \in \mathbb{R}[A]$ and its Fourier distribution $\widehat{f} > 0$. Then $\widehat{f}$ can be learned from $\widehat{f}1_{\mathscr{M}}$ if and only if $\mathcal{K}_\mathscr{M} = \mathcal{K}$. 
    \begin{proof}
        Parameterizing $\widehat{f}$ in the character basis. Take the logarithm $\log \widehat{f} 1_{\mathscr{M}} = D x$ for $x_a = -\log (1-2q_a)$ for $a \in \mathcal{K}$. By the above lemma $D$ achieves full-column rank; hence, $x$ can be solved uniquely knowing $\log \widehat{f} 1_\mathscr{M}$, which implies the result. 
    \end{proof}
\end{theorem}

\begin{remark}
    In this case, it is possible to further relax the condition that $|\widehat{f}| > 0$, analogous to the treatment in Lemma~\ref{lemma: fourier-character-basis-sign}, where $A_{\mathscr{M}}$ is of full-column rank. 
\end{remark}

Here we investigate the question~\ref{question: logical-learn-fourier}, specializing to a Boolean group $\mathcal{A} \cong \mathbb{F}^n_2$ with a symmetric and non-degenerate bilinear form $\beta$. Let $\mathscr{G} \subseteq  \mathcal{A}$ and its dual $\mathscr{G}^\perp$ by elements $b \in \mathscr{G}^\perp$ such that $\beta(a, b) =0$ for all $a \in \mathscr{G}$. Note that we denote the group center is given by $\mathscr{M}$. It is desirable to represent the bilinear form $\beta$ by the following matrices,

\begin{align}
        \begin{aligned}
            D_{\mathscr{G}^\perp} = \left(\begin{array}{c|c} 
  A_{\mathscr{M}} & B_{\mathscr{M}} \\ 
  \hline 
  A_{\mathscr{G}^\perp / \mathscr{M}} & B_{\mathscr{G}^\perp / \mathscr{M}} 
\end{array}\right)  \quad D_{\mathscr{M}} = \begin{pmatrix}
    &A_{\mathscr{M}} & | & B_{\mathscr{M}}&
\end{pmatrix}.
        \end{aligned}
    \end{align}
The subscripts in matrices $A, B$ categorize the row indices. The columns of $A$ are the elements $a \in \mathcal{K}$ that correspond to distinct syndromes of $\mathscr{G}$ associated with $\mathscr{M}$ and the columns of $B$ record the elements in $\mathcal{K}$ whose syndromes are repeated from columns in $A$.

\begin{lemma}
    We have that $\operatorname{rk} D_{ \mathscr{G}^\perp} = \operatorname{rk} D_\mathscr{M}$ if and only if, for all $a, b \in \mathcal{K}$, $\sigma(a) = \sigma(b) \iff a \sim_{\mathscr{G}} b$. 
    \begin{proof}
        We first show the forward direction. We perform a rank-preserving column operation 
        \begin{align}
            D_{\mathscr{M}} = \begin{pmatrix}
    &A_{\mathscr{M}} & | & B_{\mathscr{M}}&
\end{pmatrix} \mapsto  D'_{\mathscr{M}} = \begin{pmatrix}
    &A_{\mathscr{M}} & | & \bvec{0} &
\end{pmatrix},
        \end{align}
where columns in $B_{\mathscr{M}}$ record only the redundant syndromes from columns of $A_{\mathscr{M}}$. Similarly, we could perform the column operation,
    \begin{align}
         D_{\mathscr{G}^{\perp}} = \left(\begin{array}{c|c} 
  A_{\mathscr{M}} & B_{\mathscr{M}} \\ 
  \hline 
  A_{\mathscr{G}^\perp / \mathscr{M}} & B_{\mathscr{G}^\perp / \mathscr{M}} 
\end{array}\right) \mapsto D'_{\mathscr{G}^\perp} = \left(\begin{array}{c|c} 
  A_{\mathscr{M}} & \bvec{0} \\ 
  \hline 
  A_{\mathscr{G}^\perp / \mathscr{M}} & B'_{\mathscr{G}^\perp/\mathscr{M}} 
\end{array}\right).
    \end{align}
Suppose that there exists a pair $a, b \in \mathcal{K}$ with $\sigma(a) = \sigma(b)$ and $a = b c$ for some $c \in \mathscr{M}^{\perp} \setminus \mathscr{G}$. Then there must exist some element $c' \in \mathscr{G}^{\perp} \setminus \mathscr{M}$ such that $\beta(c', c) = 1$ so that $\beta(c', a) \neq \beta(c', b)$. That is true since otherwise $c \in \mathscr{M}^\perp \setminus \mathscr{G} \cap (\mathscr{G}^{\perp} \setminus \mathscr{M})^{\perp}$. If $c \in(\mathscr{G}^{\perp} \setminus \mathscr{M})^{\perp} $, then $\beta(c, l) =0$, $\forall l \in \mathscr{G}^\perp \setminus \mathscr{M}$. However, since $c \in \mathscr{M}^{\perp}$, which implies that $\beta (c, m) =0 $ for all $m \in \mathscr{M}$ so that $c \in (\mathscr{G}^{\perp})^{\perp} = \mathscr{G}$ (by the fact that $\beta$ is non-degenerate), which contradicts the assumption that $c \notin \mathscr{G}$.  This implies that $B'_{\mathscr{G}^\perp} \neq 0$, which shows that $\operatorname{rk} D_{\mathscr{G}^\perp} = \operatorname{rk} D'_{\mathcal{\mathscr{G}^\perp}} \neq \operatorname{rk} D'_{\mathscr{M}} = \operatorname{rk} D_{\mathscr{M}}$. The converse direction follows similarly where we observe that $A_{\mathscr{M}}$ is always full-column rank. 
    \end{proof}
\end{lemma}

\begin{remark}
    Note that the assumption $\beta$ is non-degenerate is very important, otherwise, we would only have $\mathscr{G} \subseteq (\mathscr{G}^{\perp})^{\perp}$ and the above necessary condition might not work. Luckily, the existence of non-degenerate (symmetric) bilinear form for Boolean group is guaranteed by Lemma~\ref{lemma: exist-nondegenerate-symmetric-bilinear-forms}. 
\end{remark}

Taking the exponential from logarithm whereupon a bijection is achieved provided $\widehat{f} > 0$. 

\begin{theorem}\label{thm: learn-eff-distribution-from-syndrome}
    For any distribution $f \in \mathbb{R}[\mathcal{A}]$ such that its Fourier distribution $\widehat{f}(a) > 0$ for all $a \in \mathcal{A}$. Then $\widehat{f}1_{\mathscr{G}^\perp}$ is learnable from $\widehat{f}1_{\mathscr{M}}$ if and only if it can be parameterized in character basis such that, $\forall a, b \in \mathcal{K}$, $\sigma(a) = \sigma(b) \iff a \sim_{\mathscr{G}} b$. 
    \begin{proof}
        Taking for any $b \in \mathscr{G}^\perp$ and denote that 
        \begin{align}
            \log \widehat{f}(b) = \sum_{m \in \mathscr{M} } e_m \log \widehat{f}(m), 
        \end{align}
        where the coefficient $e_m$ exists if and only if $\log \widehat{f}(b) \in \operatorname{rs} D_{\mathscr{M}}$.
        Taking the exponential, we obtain the desired result. 
    \end{proof}
\end{theorem}

\begin{remark}
    It seems that we have reached a more relaxed condition in parameterization $\mathcal{K}$, if we wish only to learn up to logical equivalence. The extent to which the condition is relaxed depends on the intersection $\mathscr{G}^\perp \cap \mathscr{G}$. In particular, in the case $\mathscr{G}^\perp \cap \mathscr{G} = \{ 0 \} $, we arrive at the exactly the same condition as in the physical learnability. This is, fortunately, never the case in the quantum setting, as the group center describes precisely the measurement we are able to perform. In this view, the set of commuting measurement operators plays a fundamental role in our study of learnability. 
\end{remark}

\begin{remark}
    We now comment in detail the relation to the learnability results utilizing the parameterization in M\"{o}bius inversion, presented in \cite{Wagner_2022, Wagner_2023}. In particular, for a Boolean group $\mathcal{A} \cong \mathbb{F}^n_2$, we can define the \emph{zeta} function $\zeta(a, b) = 1_{b \leq a}$. Viewing as matrices, we have that $\zeta \cdot T = \zeta \cdot \mu \cdot \beta = \beta$, where we have that $\zeta = \mu^{-1}$. Hence, the learnability matrix defined in terms of the Zeta function is formally related by a simple basis transformation. 
\end{remark}

In a later section, we will emphasize its application in $(i)$ learning static Pauli noise subject to perfect measurement results in Section~\ref{sec: ds-sub-system code}, $(ii)$ spacetime circuits in Section~\ref{subsec: code-circuit-iso}.

\subsection{Low-rank construction via compressed sensing} \label{subsec: compressed-sensing-theory}

Another aspect of developed learning formalism using the character theory is the natural question: what is the sample complexity required if we wish to learn all $K$-degrees of freedom in our parameterized distribution Eq.~\eqref{eq: pauli-character-fourier}. It turns out that this question can be elegantly solved using the technique of compressed sensing, which we now explain. Let $\phi, \phi' \in \mathbb{C}^N$, we denote $\langle \phi, \phi' \rangle = \frac{1}{N}\sum^N_{i=1} \phi^*_i \phi'_i$. We denote $\|\phi\|_\infty := \max_i |\phi_i |$. 

\begin{definition}
    [Bounded orthogonal system] We call $\Phi = \{ \phi^{(1)}, \cdots, \phi^{(K)} \}$ a bounded orthonormal system (BOS) with a constant $C$ if $(1)$ $\langle \phi^{(j)}, \phi^{(k)} \rangle = \delta_{jk}$ and $(2)$
    $\|\phi^{(j)}\|_\infty \leq C$ for all $1 \leq j \leq K$. 
\end{definition}
Note that here we assume $\langle \phi^{(j)}, \phi^{(k)}\rangle = (1/N)\sum_{i} \phi^{*(j)} \phi^{(k)}$, where $\phi^{*(j)}$ denotes the complex conjugate. A close example is given below. 

\begin{example}
    [Hadamard matrix] A canonical example for BOS that is of center of our study is the Hadamard matrix. Let $\mathcal{A}$ be a finite Boolean group equipped with a non-degenerate, symmetric bilinear form $\beta$. For each $a \in \mathcal{A}$, we can associate $\chi_a = (-1)^{\beta(\cdot, a)}$. Let $\Phi = \{ \chi_a, a\in \mathcal{A} \}$. Note that we have
    \begin{align}
        \begin{aligned}
            \langle \chi_a, \chi_b \rangle &= \frac{1}{|\mathcal{A}|} \sum_{\xi \in \mathcal{A}} (-1)^{\beta(\xi, a)} (-1)^{\beta(\xi, b)} = \frac{1}{|\mathcal{A}|} \sum_{\xi \in \mathcal{A}} (-1)^{\beta(\xi, a+b)}, \\
            &= \frac{1}{|\mathcal{A}|} \sum_{\xi \in \mathcal{A}} (-1)^{\beta(\xi + \xi', a+b)} = \frac{1}{|\mathcal{A}|} (-1)^{\beta(\xi', a+b)}\sum_{\xi \in \mathcal{A}} (-1)^{\beta(\xi , a+b)}, \\
            &=\delta_{a, b}.
        \end{aligned}
    \end{align}
    The third equality is given by the group property. For $a + b \neq 0$, then we can always choose some $\xi'$ such that $\langle \xi', a+b \rangle = -1$. Hence, the last equality follows. As a result, $\Phi = \{ \chi_a: a \in \mathcal{A} \}$ forms a BOS with the BOS constant $1/\sqrt{N}$ and we write the matrix form $\Phi$ whose column vectors are given by $\chi_a, a \in \mathcal{A}$. 
\end{example}

\begin{definition}[Restricted isometry property]
     Let $H \in \mathbb{C}^{q \times N}$ for some integer $q, N$. We say that $H$ satisfies the restricted isometry property of order $K$ with constant $\varepsilon$, if for every $K$-sparse $x \in \mathbb{C}^N$   
    \begin{align}
        (1-\epsilon)\|x\|^2_{2} \leq \|Hx\|^2_2 \leq (1+\epsilon)\|x\|^2_2.
    \end{align}
     Furthermore, we say that $H$ satisfies the restricted isometry property of order $K$ associated with some subset $\mathcal{K} \in [N]$ with constant $\epsilon$, if the above relation holds for every $K$-sparse $x \in \mathbb{C}^N$ supported on $\mathcal{K}$. 
\end{definition}

\begin{remark}
     $\epsilon$ is called the $K+1$th-order \emph{restricted isometric constant} if it is the infimum of all possible such $\epsilon$ for a given $H \in \mathbb{C}^{q \times N}$. In our analysis, we do not fixate on taking the infimum. 
\end{remark}
This definition can be stated in following equivalent forms. 

\begin{lemma}
    The following are equivalent. 
    \begin{enumerate}
        \item $H \in \mathbb{C}^{q \times N}$ satisfies the restricted isometry property of order $K$ with constant $\varepsilon$.  
        \item For any $q \times K$ submatrix $H_K$ of $H$ we have 
        \begin{align}
           \| H^\dagger_K H_K - I_{K} \|_{2 \mapsto 2} \leq \varepsilon.
        \end{align}
        \item All the eigenvalues of $H^\dagger_K H_K$ are in the interval $[1-\varepsilon, 1+\varepsilon]$
    \end{enumerate}
    \begin{proof}
        We first prove from $(1)$ to $(2)$. Let $z \in \mathbb{C}^{K}$ with $\|z\|_2 =1$. Then we have that 
        \begin{align}
            z^\dagger (H^\dagger_KH_K  - I_K)z  = \|H_K z \|_2 - \|z\|_2 = \|H_K z\|_2 - 1.
        \end{align}
        By the RIP property, it follows that $|z^\dagger (H^\dagger_KH_K  - I_K)z| \leq \epsilon$. The converse can be proven in a similar fashion. Furthermore, by the property of the induced $2$-norm, $(2) \iff (3)$, so we have proven as desired. 
    \end{proof}
\end{lemma}

In what follows, we denote $\delta_K$ as the $K+1$th-order restricted isometric constant for the reason that will be clear below. We pay special attention to the Hadamard matrix $\Phi$ defined over characters of some Boolean group $\mathcal{A}$. For a distribution $\widehat{f} \in \mathbb{R}[\mathcal{A}]$ parameterized in the character basis Eq.~\eqref{eq: pauli-character-fourier}, we denote $\mathcal{K}_{\mathscr{M}}$ to be the subset of $\mathcal{K}$ consisting of elements with nonzero distinct syndromes. 

\begin{theorem}[RIP for row-subsampled Hadamard matrix $\Phi$]\label{thm: rip-partial-hadamard-naive} 
      Let $\Phi \in \mathbb{C}^{N \times N}$ be the Hadamard matrix associated with $\mathcal{A}$. For a sufficiently small $\delta_K > 0$ and some $q = O(\delta^{-2}_K   \max(K^2 \delta^{-2}_K\log(K) \log(1/\eta), \log(1/\delta))) $, the following holds. Let $H \in \mathbb{C}^{q \times N}$ be a matrix whose $q$ rows are chosen uniformly and independently from the rows, rescaled by $\sqrt{N/q}$. Then with probability at least $1-\delta$, we have that $H$ satisfies the restricted isometry property with the $K+1$th-order restricted isometric constant $\delta_K$ on the supports of $\mathcal{K}_{\mathscr{M}} \cup \{ I\} $ with $K = |\mathcal{K}_{\mathscr{M}}|$. 
\end{theorem}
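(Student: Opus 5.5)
The plan is to use the fact that the support is \emph{fixed}: $S := \mathcal{K}_{\mathscr{M}} \cup \{I\}$, with $|S| = K+1$. So we do not need RIP uniformly over all sparse vectors. By the equivalent characterization of RIP proved just above, it suffices to show that the single Gram matrix $G := H_S^\dagger H_S$ satisfies $\|G - I_{K+1}\|_{2\mapsto 2} \le \delta_K$ with probability at least $1-\delta$. No union bound over supports and no chaining argument are needed; this is a concentration statement for one random $(K+1)\times(K+1)$ matrix.

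\textbf{Step 1: write $G$ as an average with mean $I$.} Let $\xi_1,\dots,\xi_q \in \mathcal{A}$ be the i.i.d.\ uniformly sampled row indices. Column $a$ of $\Phi$ has entries $\chi_a(\xi)/\sqrt{N}$. After rescaling by $\sqrt{N/q}$, the entries of $H_S$ are $\chi_a(\xi_i)/\sqrt{q}$. Hence
\begin{align*}
G_{ab} = \frac{1}{q}\sum_{i=1}^q \chi_a(\xi_i)\chi_b(\xi_i) = \frac{1}{q}\sum_{i=1}^q \chi_{a+b}(\xi_i).
\end{align*}
The diagonal entries equal $1$ exactly. For $a \neq b$ in $S$, the summands are i.i.d., $\pm 1$-valued, and have mean $\langle \chi_a,\chi_b\rangle = 0$ by the BOS computation for the Hadamard matrix in the preceding example. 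Therefore $\mathbb{E}[G] = I_{K+1}$.

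\textbf{Step 2: control the entries and pass to the operator norm.} Apply Corollary~\ref{cor: hoeffding-bound-additive} with $a=1$ to each off-diagonal entry. This gives $|G_{ab}| \le \delta_K/K$ except with probability $2e^{-q\delta_K^2/(2K^2)}$. A union bound over the at most $(K+1)^2$ pairs makes all the bounds hold simultaneously with probability at least $1-2(K+1)^2 e^{-q\delta_K^2/(2K^2)}$. On that event, Gershgorin's theorem (equivalently, the row-sum bound for symmetric matrices) gives
\begin{align*}
\|G - I\|_{2\mapsto 2} \le \max_a \sum_{b\neq a}|G_{ab}| \le \delta_K .
\end{align*}
Requiring the failure probability to be at most $\delta$ (the role played by $\eta$ in the statement) yields
\begin{align*}
q = O\!\left(\delta_K^{-2}K^2\big(\log K + \log(1/\delta)\big)\right),
\end{align*}
which is within the stated budget. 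Once $\delta_K<1$, $H_S$ has full column rank. Since $A_{\mathscr{M},\operatorname{res}} = \tfrac12(1 - H_{\mathscr{M},\operatorname{res}})$ and the all-ones column is the $I$ column included in $S$, the restricted $A$ inherits full column rank by a column operation.

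\textbf{Main obstacle.} The delicate step is the passage from entrywise control to the operator norm. Gershgorin costs a factor $K$, which produces the $K^2$ scaling. To reach the near-linear $K\log^2 K$ count of Theorem~\ref{thm: main-row-subsampled-rip-A}, I would first try replacing Step 2 with the matrix Chernoff/Bernstein inequality applied to the sum of the independent rank-one matrices $v_iv_i^\dagger/q$. Here $v_i = (\chi_a(\xi_i))_{a\in S}$ satisfies $\|v_i\|_2^2 = K+1$ and $\mathbb{E}[v_iv_i^\dagger] = I$. This gives $q = O(\delta_K^{-2}K\log(K/\delta))$. The $\log^2$ factors in the refined statement come from the uniform-over-supports analysis of Haviv--Regev, which is only needed if one wants RIP on all $K$-sparse vectors rather than on the single fixed support $S$.
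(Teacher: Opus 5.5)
Your proof is correct, and it takes a genuinely different and far more elementary route than the paper's.

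The paper adapts the Haviv--Regev net argument. For $\ell_1$-normalized $x$ supported on $\mathcal{K}_{\mathscr{M}}$, it approximates $\Phi x$ by Maurey empirical averages $g$ of $O(\tau^{-2}\log(1/\gamma))$ signed columns. It then union-bounds a Hoeffding estimate over the net $\{|g|^2\}$, of size at most $K^{O(\tau^{-2}\log(1/\gamma))}$, and handles the few bad indices via Markov. Only at the end does it pay $\|x\|_1^2\le K\|x\|_2^2$, which forces the accuracy $\eta=\delta_K/K$ and gives $q=O(\delta_K^{-4}K^3\log K\log(K/\delta_K))$.

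You instead use that the support is fixed. This reduces RIP to one $(K+1)\times(K+1)$ Gram matrix with exact unit diagonal and mean-zero off-diagonal entries. Entrywise Hoeffding plus Gershgorin then gives $O(\delta_K^{-2}K^2\log(K/\delta))$. Your matrix-Chernoff variant, $O(\delta_K^{-2}K\log(K/\delta))$, even beats the paper's refined Haviv--Regev bound in this fixed-support setting.

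What the net machinery buys is uniformity over all $K$-sparse supports. That uniformity is never used here, since the paper builds its nets from columns in $\mathcal{K}_{\mathscr{M}}$ anyway. Your route also transfers verbatim to rows drawn from $\mathscr{M}$, as in the actual application. There $\mathbb{E}_{m\in\mathscr{M}}\chi_{a+b}(m)=0$ exactly when $\sigma(a)\neq\sigma(b)$, which makes transparent where the distinct-syndrome hypothesis enters.

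One caveat concerns your reading of $\eta$. In the paper $\eta=\delta_K/K$ is an accuracy parameter, not a confidence level. So your Gershgorin bound, which carries $K^2\log(1/\delta)$, is not literally inside the printed budget once $\log(1/\delta)\gg\delta_K^{-2}\log K\log(K/\delta_K)$. This is a flaw of the printed statement rather than of your argument. The paper's own corollary carries a $K\log(1/\delta)$ term, and a factor of order $K$ there is unavoidable. For instance, if $\mathcal{K}_{\mathscr{M}}\cup\{I\}$ is a subgroup, the all-ones vector on it is mapped to zero with probability $(1-\tfrac{1}{K+1})^q\ge e^{-q/K}$. Your matrix-Chernoff version attains exactly this $K\log(1/\delta)$ dependence.
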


Note that this proof gives a very loose bound on $q$. In fact, following closely the result by \cite{haviv2015restrictedisometrypropertysubsampled}, we could arrive at the near-optimal upper-bound.

\begin{theorem}\label{thm: rip-partial-hadamard-refined} 
     Let $\Phi \in \mathbb{C}^{N \times N}$ be the Hadamard matrix associated with $\mathcal{A}$. For a sufficiently small $\delta_K > 0$ and some $q$
    \begin{align}
        q = O(\delta^{-2}_K \max( \log^2(K/\delta_K) \log^2(1/\delta_K) \log(K)), \log(1/\delta)).
    \end{align}
    The following holds. Let $H \in \mathbb{C}^{q \times N}$ be a matrix whose $q$ rows are chosen uniformly and independently from the rows, rescaled by $\sqrt{N/q}$. Then with probability at least $1 - \delta$, we have that $H$ satisfies the restricted isometry property with the $K+1$th-order restricted constant $\delta_K$ on the supports of $\mathcal{K}_{\mathscr{M}} \cup \{ I \}$ with $K = |\mathcal{K}_{\mathscr{M}}|$.
\end{theorem}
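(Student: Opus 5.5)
The plan follows Haviv and Regev~\cite{haviv2015restrictedisometrypropertysubsampled}, specialised to the single fixed support $S=\mathcal{K}_{\mathscr{M}}\cup\{I\}$. \textbf{Caveat on the stated bound.} If the displayed $q$ is read literally, with no factor of $K$, the statement cannot hold. Any $q\times(K+1)$ matrix with $q<K+1$ has a nontrivial kernel, so $\delta_K<1$ forces $q\geq K+1$. I will therefore prove the stated bound with the sparsity factor $K$ restored, exactly as in the informal Theorem~\ref{thm: main-row-subsampled-rip-A}:
\begin{align*}
q=O\!\left(\delta_K^{-2}\max\!\left(K\log^2(K/\delta_K)\log^2(1/\delta_K)\log K,\ \log(1/\delta)\right)\right).
\end{align*}
The point of this theorem relative to Theorem~\ref{thm: rip-partial-hadamard-naive} is that the naive $K^2$ dependence drops to $K$ times polylogarithmic factors. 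It is not that $K$ disappears.

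\textbf{Step 1: reduce to a supremum of an empirical process.} Let $h_1,\dots,h_q$ be the sampled rows, each a uniformly random character evaluation restricted to $S$. After the paper's $\sqrt{N/q}$ rescaling and its $1/N$ inner-product normalisation, each row satisfies $\mathbb{E}[h_ih_i^{T}]=I_S$; this uses the BOS property of the Hadamard example. By the equivalent forms of RIP, the goal becomes
\begin{align*}
\Delta:=\sup_{x\in B_2^S}\Big|\tfrac1q\textstyle\sum_i\langle h_i,x\rangle^2-\|x\|_2^2\Big|\le\delta_K .
\end{align*}
First I bound $\mathbb{E}\Delta$ by symmetrisation with Rademacher signs. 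Conditionally on the rows, this leaves the expected supremum of a Rademacher process indexed by $x\in B_2^S$ in the data-dependent seminorm $\|x\|_X=\max_i|\langle h_i,x\rangle|$.

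\textbf{Step 2: bound the Rademacher supremum by chaining (the main obstacle).} I apply Dudley's entropy integral in the metric induced by $\|\cdot\|_X$, weighted by $\sqrt{\tfrac1q\sum_i\langle h_i,x\rangle^2}\le\sqrt{1+\Delta}$. The naive covering gives one extra $\log$ factor, and this is where Haviv--Regev's refinement enters. I will build the covers hierarchically with Maurey's empirical method. The entries are $\pm1$, so every $x\in B_1^S$ lies in the convex hull of $\pm e_j$, and $\sqrt K B_2^S\supset B_1^S$. Maurey then yields covering numbers $\log\mathcal N(B_2^S,\|\cdot\|_X,t)\lesssim K t^{-2}\log q\log K$. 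Because the support is fixed, the factor $\log N$ that appears in the sparse setting is replaced by $\log|S|=\log K$.

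Haviv--Regev's gain comes from splitting the chain. Coarse scales use the Maurey net. Fine scales below $t\sim\delta_K$ are handled by a direct union bound over the finite net. Their careful treatment of the boundary scale is what produces $\log^2(K/\delta_K)\log^2(1/\delta_K)$ instead of a cube. The result is
\begin{align*}
\mathbb{E}\Delta\lesssim\sqrt{\tfrac{K\log^2(K/\delta_K)\log^2(1/\delta_K)\log K}{q}}\,\sqrt{1+\mathbb{E}\Delta},
\end{align*}
which gives $\mathbb{E}\Delta\le\delta_K/2$ once $q$ meets the first term of the maximum. This step needs the most care, because the net must be rebuilt conditionally on the rows.

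\textbf{Step 3: concentrate around the mean.} I apply Talagrand/Bousquet concentration for suprema of empirical processes, or a bounded-differences argument that uses the per-row bound $\langle h_i,x\rangle^2\le K+1$ and the variance bound coming from $\mathbb{E}[h_ih_i^{T}]=I_S$. This gives $\Pr[\Delta>\mathbb{E}\Delta+\delta_K/2]\le\delta$ provided $q\gtrsim\delta_K^{-2}\log(1/\delta)$, up to the same $K$ factor carried through the scaling. That is the second term of the maximum, and together with Step 2 it gives RIP with constant $\delta_K$ on $\mathcal{K}_{\mathscr{M}}\cup\{I\}$ with probability at least $1-\delta$.
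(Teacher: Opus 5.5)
Your caveat about the missing factor $K$ is correct: the paper's own endpoint, Theorem~\ref{thm: rip-bos-rowsampled-refined}, and the informal Theorem~\ref{thm: main-row-subsampled-rip-A} both carry it. Step~1 is also fine.

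\textbf{The gap is Step~2}, which is where the entire content of the ``refined'' bound lives.
\begin{itemize}
\item What you actually set up there is the Rudelson--Vershynin argument. It consists of symmetrisation, Dudley's integral in $\|x\|_X=\max_{i\le q}|\langle h_i,x\rangle|$, and Maurey covers via $B_2^S\subset\sqrt{K+1}\,B_1^S$. That is the inclusion you need, not $B_1^S\subset\sqrt{K}B_2^S$.
\item Cutting the entropy integral at scale $\delta_K$, this route gives roughly $q\gtrsim\delta_K^{-2}K\log K\log^2(K/\delta_K)\log q$. The surplus $\log q$ is the price of Maurey's lemma in the max-norm over all $q$ sampled rows.
\item Your displayed bound on $\mathbb{E}\Delta$ is essentially this with $\log q$ replaced by $\log^2(1/\delta_K)$. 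Nothing you describe justifies that replacement.
\item ``Splitting the chain'' with a union bound at fine scales is not Haviv--Regev's mechanism, and it does not touch the $\log q$.
\end{itemize}

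Their argument, which is what the paper sketches, uses no symmetrisation and no expectation bound.
\begin{itemize}
\item With $\|x\|_1=1$, Maurey's method with $O(2^i\log(1/\gamma))$ samples gives vectors $g^{(i)}$. These approximate $(\Phi x)_j$ to within $2^{-i/2}$ only at all but a $\gamma$-fraction of ``good'' coordinates. This is what removes the $\log q$. Since it is not a covering statement in $\|\cdot\|_X$, it does not slot into a Dudley integral.
\item Coordinates are then split into level sets $C_i$ on which $|(\Phi x)_j|\approx 2^{-i/2}$.
\item $|(\Phi x)_j|^2$ is telescoped into truncated increments $\Delta^{(i,m)}$ of size $O(2^{-(i+m)/2})$.
\item A concentration bound with error $\tilde\epsilon\,2^{-i}|C_i|/N+\tilde\eta$ is union-bounded over nets of size $K^{O(2^m\log(1/\gamma))}$. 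It is summed using $\sum_i 2^{-i}|C_i|/N\le\mathbb{E}_{j\in[N]}|(\Phi x)_j|^2+\eta$.
\item Finally, $\eta=\Theta(\delta_K/K)$ together with $\|x\|_1^2\le(K+1)\|x\|_2^2$ produces the factor $K$.
\end{itemize}

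\textbf{A simpler repair.} You can bypass all of this using your own Step~1. The support $S=\mathcal{K}_{\mathscr{M}}\cup\{I\}$ is fixed, so $\Delta=\|\tfrac1q\sum_i h_ih_i^{\mathsf T}-I_S\|$ is the deviation of an average of i.i.d.\ positive semidefinite rank-one matrices with mean $I_S$ and norm $K+1$. The matrix Chernoff bound gives $\Pr[\Delta>\delta_K]\le 2(K+1)\exp(-\delta_K^2 q/(3(K+1)))$. Hence $q\ge 3\delta_K^{-2}(K+1)\log(2(K+1)/\delta)$ suffices. This is dominated by the corrected bound and makes Step~3 unnecessary. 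Chaining, or Haviv--Regev, is only needed for RIP uniformly over all $K$-sparse supports, which this theorem does not ask for.

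If you keep Step~3 anyway, note that plain bounded differences only gives $q\gtrsim\delta_K^{-2}K^2\log(1/\delta)$, because one row can move $\Delta$ by up to $(K+1)/q$. You need the variance-sensitive Talagrand--Bousquet inequality to keep a single factor of $K$.
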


To apply to our learning framework, we state the following corollary. 
\begin{corollary}\label{cor: compressed-sensing-sample-theory}
Let $\mathcal{A}$ be a Boolean group of order $N$ and $\beta$ the symmetric and non-degenerate bilinear form. Let $Q \subseteq \mathcal{A}$ be a uniformly and independently sampled subset consisting of $q$ elements. Suppose that the row-subsampled and rescaled matrix $H \in \mathbb{C}^{q \times N}$ satisfies the restricted isometry property with the $K+1$th-order restricted isometric constant $\delta_K <1$ on the supports of $\mathcal{K}_{\mathscr{M}}$. Denote $A_{\operatorname{res}} \in \mathbb{C}^{q \times K}$ to be the column restricted submatrix of $A$ over the supports $\mathcal{K}_{\mathscr{M}}$, where $A_{\mathscr{M}, \operatorname{res}}[a, b] = \beta(a, b)$ for $a, b \neq 0$ and $a \in \mathscr{M}$ and $b \in \mathcal{K}_{\mathscr{M}}$. Then $A_{\mathscr{M}, \operatorname{res}}$ has full-column rank. 
\end{corollary}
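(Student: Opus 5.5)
The idea is to relate $A_{\mathscr{M},\operatorname{res}}$ to the character matrix through the identity $\beta(a,b) = \tfrac12(1-\chi_b(a))$. Let $H_{\operatorname{res}} \in \mathbb{R}^{q\times K}$ have entries $\chi_b(a)$ for $a \in Q$ and $b \in \mathcal{K}_{\mathscr{M}}$. Then
\begin{align*}
A_{\mathscr{M},\operatorname{res}} = \tfrac12\left(\mathbf{1}_q \mathbf{1}_K^{T} - H_{\operatorname{res}}\right),
\end{align*}
and $\mathbf{1}_q$ is exactly the column of $H$ indexed by the identity, since $\chi_I \equiv 1$. So $A_{\mathscr{M},\operatorname{res}}$ is obtained by column operations from the $q\times(K+1)$ block $[\chi_I \,|\, \chi_b : b\in\mathcal{K}_{\mathscr{M}}]$ of the subsampled Hadamard matrix, and the RIP hypothesis on the support $\mathcal{K}_{\mathscr{M}}\cup\{I\}$ is what we will use.

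\emph{Step 1: the enlarged block is injective.} Let $H_{K+1}$ be the rescaled $q\times(K+1)$ submatrix on $\mathcal{K}_{\mathscr{M}}\cup\{I\}$. By the equivalent characterization of RIP stated above, all eigenvalues of $H_{K+1}^{T}H_{K+1}$ lie in $[1-\delta_K, 1+\delta_K]$. Since $\delta_K<1$, they are strictly positive, so $H_{K+1}$ has full column rank $K+1$. The rescaling by $\sqrt{N/q}$ does not affect rank.

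\emph{Step 2: transfer to $A$.} Suppose $A_{\mathscr{M},\operatorname{res}}x = 0$ for some $x \in \mathbb{R}^K$. This gives
\begin{align*}
(\mathbf{1}_K^{T}x)\,\chi_I|_Q - \sum_{b} x_b\, \chi_b|_Q = 0.
\end{align*}
That is a linear combination of the $K+1$ columns of $H_{K+1}$, with coefficients $(\mathbf{1}^{T}x, -x)$. By Step 1 all these coefficients vanish, so $x=0$ and $A_{\mathscr{M},\operatorname{res}}$ has full column rank.

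The main point to check is the index bookkeeping. The rows are sampled from $\mathscr{M}$ rather than from all of $\mathcal{A}$, while the columns $b\in\mathcal{K}_{\mathscr{M}}$ live in $\mathcal{A}$. To make the "Hadamard matrix" in the hypothesis meaningful, I would pass to the nondegenerate form $\gamma$ on $\mathbb{F}_2^r$ from the proof of Lemma~\ref{lemma: learn-all-distribution-from-syndrome}. Each $b$ is identified with its syndrome $\sigma(b)$, and distinct nonzero syndromes give distinct nonzero columns. The identity is identified with the zero syndrome, which is distinct from all of them, so the $K+1$ columns are genuinely distinct characters of $\mathscr{M}\cong\mathbb{F}_2^r$.

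With that identification, the corollary reduces to the linear-algebra argument above. The only other subtlety is that $Q$ is sampled with replacement, so repeated rows may occur; this does not affect the rank argument.
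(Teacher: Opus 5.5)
Your proof is correct and follows the paper's route. You write $A_{\mathscr{M},\operatorname{res}}=\tfrac12(\mathbf{1}_q\mathbf{1}_K^{T}-H_{\mathscr{M},\operatorname{res}})$ and use the RIP with $\delta_K<1$ on the block augmented by the identity column, exactly as the paper does with $H'_{\mathscr{M},\operatorname{res}}$ in Lemma~\ref{lemma: noisy-recovery-all-distribution}. Your kernel argument with coefficients $(\mathbf{1}^{T}x,-x)$ makes explicit why the identity column must be included, a point the paper's main-text sketch leaves implicit.
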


\subsubsection{Proof of Theorem~\ref{thm: rip-partial-hadamard-naive} \& Theorem~\ref{thm: rip-partial-hadamard-refined}}

We now present the proof of the above theorems. Note that for the case of Theorem~\ref{thm: rip-partial-hadamard-refined}, we only give a proof sketch and comment on (a minor) modification tailored to our case with a preselected region of supports $\mathcal{K}_M \cup \{ I \}$ and some notational conversion.  In what follows, we assume the BOS matrix with normalization unless stated otherwise.

\begin{lemma}\label{lemma: rip-restatement}
    Let $H \in \mathbb{C}^{q \times N}$ be row subsampled matrix from $\Phi$, indexed from the set $Q$, 
    \begin{align}
        H = \sqrt{\frac{N}{q}} R_Q \Phi,  
    \end{align}
    where $R_Q$ is the row selector. Then The following are equivalent, for all $x \in \mathbb{C}^N$. 
    \begin{enumerate}
        \item $  (N/\|x\|^2_2) \cdot \mathbb{E}_{j \in Q}[|(\Phi x)_j|^2] \approx_{\varepsilon/2, \varepsilon/2} (N/\|x\|^2_2) \cdot \mathbb{E}_{j \in [N]}[|(\Phi x)_j|^2]$. 
        \item $\|Hx\|^2_2 \approx_{\varepsilon, 0} \|x\|^2_2$.
    \end{enumerate}
    \begin{proof}
        Note that $\Phi$ is a BOS with normalization. Note that for $x$ with $\|x\|_2=1$, we apply the condition $(1)$ where 
        \begin{align}
            \|Hx\|^2_2 = N \cdot \mathbb{E}_{j \in Q}[|(\Phi x)_j|^2] \approx_{\varepsilon/2, \varepsilon/2} N \cdot \mathbb{E}_{j \in [N]}[|(\Phi x)_j|^2] = \|\Phi x\|^2_2 =1.
        \end{align}
        Hence, we have $\|Hx\|^2_2 \in [(1-\varepsilon/2) - \varepsilon/2, (1+ \varepsilon/2) + \varepsilon/2] = [1 - \varepsilon, 1+\varepsilon]$ and $\|Hx\|^2_2 \approx_{\varepsilon, 0} \|x\|^2_2$ for every $x \in \mathbb{C}^N$, where we added back the normalization. 
    \end{proof}
\end{lemma}

In the light of Lemma~\ref{lemma: rip-restatement}, we will prove the equivalent reformulation of the restricted isometry property as a probabilistic statement.

\subsubsection*{A naive analysis}

We now give an elementary proof which leads to a worse bound on $q$ but suffices in spirit. Let's denote a BOS matrix $\Phi$ with BOS constant $\| \Phi \|_\infty = 1$. We consider how to build an estimator for $\Phi x$, where we can write 
\begin{align}
    \Phi x = \sum^N_{\ell =1} x_\ell \Phi^{(\ell)} = \sqrt{2} \sum^N_{\ell =1} \sum^3_{s=0}p_{\ell, s} (-1)^{s/2} \Phi^{(\ell)},
\end{align}
where for any complex number $x_\ell = r e^{i\theta}$, we can always find $p_{\ell, 0}, p_{\ell, 1}, p_{\ell, 2}, p_{\ell, 3} \geq 0$ such that $\sum^3_{s=0} p_{\ell, s} = r = |x_\ell|$. Then it follows that $\sqrt{2} \sum^3_{s=0} p_{\ell, s} (-1)^{s/2} = x_l$. A concrete parametrization choice is given by $p_{\ell, 0} = \max(r \cos \theta /\sqrt{2}, 0) + L/4$, $p_{\ell, 0} = \max(-r \cos \theta / \sqrt{2}, 0) + L/4$, $p_{\ell, 1} = \max(r \sin \theta /\sqrt{2}, 0) + L/4$, $p_{\ell, 3} = \max(-r\sin \theta /\sqrt{2}, 0) + L/4$., where $L = \frac{r}{\sqrt{2}}(\sqrt{2} - |\sin \theta| - |\cos \theta|) \geq 0 $.  Let's further assume that $\|x\|_1 =1$, so that it follows that $p_{\ell, s}$ gives a probability distribution $\mathcal{D} := \{ (\ell, s, p_{\ell, s}), \: (\ell, s) \in [N] \times (0, 1, 2, 3)\}$. In other words, 
\begin{align}
    \Phi x = \sqrt{2} \sum^N_{\ell =1} \sum^3_{s=0}p_{\ell, s} (-1)^{s/2} \Phi^{(\ell)} = \mathbb{E}_{(\ell, s) \in \mathcal{D}}[\sqrt{2} \cdot (-1)^{s/2} \cdot \Phi^{(\ell)}].
\end{align}
The idea is that we wish to define some empirical distribution $\widehat{p}_{(\ell, s)}$ to estimate the above expectation. We now make this intuition precise. Let's consider the empirical mean 
\begin{align}
    g = \frac{\sqrt{2}}{S} \sum_{(\ell, s) \in \mathcal{F}} (-1)^{s/2} \Phi^{(\ell)}, 
\end{align}
where $\mathcal{F}$ contains $S \geq  \tau^{-2}\log(16/\gamma) $ samples drawn i.i.d from $\mathcal{K} \times (0, 1, 2, 3)$ for $\mathcal{K} \subseteq [N]$.

\begin{lemma}\label{lemma: Maurey-empirical-method}
    Suppose that $S  \geq \tau^{-2} \log(16/\gamma)$ for $\tau, \gamma > 0$. Denote any set $Q \subseteq [N]$. For $j \in Q$ we denote the set $Q'$ consists of all indices such that $|g_j| \approx_{0, \tau}  |\mathbb{E}_{(\ell, s) \in \mathcal{D}}[\sqrt{2} \cdot (-1)^{s/2} \cdot \Phi^{(\ell)}_j]|$. Then we have that with probability at least $3/4$
    \begin{align}
        \frac{|Q'|}{|Q|} = \Omega(1 -\gamma).
    \end{align}
    \begin{proof}
        From the empirical distribution, we have that for every $j \in Q$. Note that $|g_j| \leq 1, |(\Phi x)_j| \leq 1$ due to the assumption that $\|x\|_1 = 1$ and $\|\Phi \|_\infty =1$. Applying Proposition~\ref{prop: hoeffding-complex-|a|}, 
        \begin{align}\label{eq: naive-tau-good-indices}
           |g_j| \approx_{0, \tau}  |(\Phi x)_j|, 
        \end{align}
        with probability at least $1 - 4 \cdot e^{-\log(16/\gamma)} = 1- \gamma/4 $. Hence, in expectation, the number of $j \in Q$ that do not satisfy is at most $\gamma |Q|/4$. Hence, treating the number of these indices do not satisfy Eq.~\eqref{eq: naive-tau-good-indices} as a random variable $X$. By the Markov inequality, the probability for $X > \gamma |Q|$ is bounded by $1/4$, as desired. 
    \end{proof}
\end{lemma}
 
In the light of the above lemma, we say that an index $j \in [N]$ is $\tau$-\emph{good} if $|g_j| \approx_{0, \tau} |\mathbb{E}_{(\ell, s) \in \mathcal{D}}[\sqrt{2} \cdot (-1)^{s/2} \cdot \Phi^{(\ell)}_j]|$. Otherwise, it is called \emph{bad}. The bad indices can be made very small so it would not affect our approximation in the following sense. 

\begin{definition}[One-scale net]
     Let $\mathcal{G}$ denotes all possible such numerical expectation $g$ drawn from different choices of $\mathcal{F}$ with $S$ samples. Further, we denote $\mathcal{H} = \{ h= g^2: g \in \mathcal{G} \}$. Note that we have $|\mathcal{G}| = |\mathcal{H}| \leq K^{S}$. 
\end{definition}

Applying with the Hoeffding bound and note that $0 \leq h \leq 2$, we have by Proposition~\ref{prop: hoeffding-complex-|a|}, 

\begin{lemma}\label{lemma: approx-Phi-x-h-naive}
    For every $h \in \mathcal{H}$ and let $Q \subseteq [N]$ with cardinality $q \geq 2\varepsilon^{-1} \eta^{-1} \tau^{-2} \log(K) \log(1/\gamma) $. Then we have that for every $\eta > 0$, $0 < \gamma < 1$ and $0 < \varepsilon \leq 1/2$
    \begin{align}
        \mathbb{E}_{j \in Q}[h_j] \approx_{\varepsilon, \eta} \mathbb{E}_{j \in [N]}[h_j],
    \end{align}
    with probability at least $1 - 2^{-\tau^{-2} 
            \log(K)\log(1/\gamma) } $. 
    \begin{proof}
        First, we apply Proposition~\ref{prop: hoeffding-complex-|a|} noting that $0 \leq h \leq 2$. Then we apply the union bound where we note that the probability that some $h\in \mathcal{H}$ fails to satisfy the above is at most 
        \begin{align}
            |\mathcal{H}| 2^{-\log(K)/ \gamma} = 2^{S \log K - 2\log(K)\log(1/\gamma) \tau^{-2} } = 2^{-\tau^{-2} 
            \log(K)\log(1/\gamma)},
        \end{align}
        which concludes the result. 
    \end{proof}
\end{lemma}

\begin{remark}
    One could convert the above lemma into a more familiar $1- \delta$ argument. In particular, this task can be achieved by setting 
    \begin{align}
        q \geq 2 \epsilon^{-1} \eta^{-1} \max \{ \tau^{-2} \log(K) \log(1/\gamma), \log(1/\delta)\},
    \end{align}
    for some $\delta > 0$. 
\end{remark}

\begin{lemma}\label{lemma: rip-naive-proof-one-scale-net}
    For every multiset $Q \subseteq [N]$ and every vector $x \in \mathbb{C}^N$ with $\|x\|_1 =1$, 
    \begin{align}
        \mathbb{E}_{j \in Q}[|(\Phi x)_j|^2] \approx_{0, 6\eta} \mathbb{E}_{j \in Q}[h_j],
    \end{align}
    where $3\eta = 2 \tau + \gamma$. 
    \begin{proof}
        Suppose that an index $j \in Q$ is good in the sense above, write
        \begin{align}
           | (\Phi x)_j|^2 - |h_j| = (| (\Phi x)_j|^2 + |g_j|) |(| (\Phi x)_j|^2 - |g_j|)| \leq \tau (| (\Phi x)_j|^2 + |g_j|) \leq (1 + \sqrt{2}) \tau.
        \end{align}
        Note that we have $\Phi$ is a unitary matrix so that $|(\Phi x)_j|^2 \leq \|\Phi \|_\infty \|x\|^2_1 = \|\Phi \|_\infty $ and $|g_j| \leq \sqrt{2}$. Suppose that $j \in Q$ is a bad index, then we only conclude that $| (\Phi x)_j|^2 \approx_{0, 2} |h_j| $ with the appropriate normalization. However, from Lemma~\ref{lemma: Maurey-empirical-method}, it is stated that there is at most $\gamma$ fraction of bad indices. As a result, we have that 
        \begin{align}
        \mathbb{E}_{j \in Q}[|(\Phi x)_j|^2] \approx_{0, (1+\sqrt{2})\tau(1-\gamma) + 2 \gamma } \mathbb{E}_{j \in Q}[h_j].
    \end{align}
    Denote that $(1+\sqrt{2})\tau(1-\gamma) \leq (1+\sqrt{2})\tau \leq 4 \tau $ and we write $3\eta = 2\tau + \gamma$, we conclude with the lemma. 
    \end{proof}
\end{lemma}

Combining with the above lemmas, we arrive at the restatement of the RIP condition for uniformly subsampled $A$ from $\Phi$, where we now restore the BOS constant $\| \Phi \|_\infty$ and $1$-norm normalization $\|x \|_1$.  

\begin{proposition}\label{prop: rip-bos-naive-proof}
    Let $\Phi \in \mathbb{C}^{N \times N}$ a unitary matrix for sufficiently large $N$ and sufficiently small $\epsilon, \eta > 0$. For some $q \geq 2 \epsilon^{-1} \eta^{-3} \log(K) \log(1/\eta)  $ and let $Q$ be a multiset of $q$ uniform and independent random elements of $[N]$. Then it holds for every $x \in \mathbb{C}^N$, 
    \begin{align}
        \mathbb{E}_{j \in Q}[|(\Phi x)_j|^2] \approx_{\epsilon, (13 + 6 \epsilon) \eta \cdot \|x\|^2_1 \cdot \|\Phi\|^2_{\infty}} \mathbb{E}_{j \in [N]}[|(\Phi x)_j|^2],
    \end{align}
    with probability at least $1 - 2^{- \eta^{-2}\log(K) \log(1/\eta)}$
    \begin{proof}
        We first apply Lemma~\ref{lemma: rip-naive-proof-one-scale-net} where we have that 
        \begin{align}
        \mathbb{E}_{j \in Q}[|(\Phi x)_j|^2] \approx_{0, 6\eta \cdot \|\Phi \|^2_\infty \cdot \|x\|^2_1 } \mathbb{E}_{j \in Q}[h_j],
    \end{align}
    where we set $\eta = \tau = \gamma$ and use the scaling rule by Lemma~\ref{lemma: approximation-compositions-multiplication}. Then we can apply Lemma~\ref{lemma: approx-Phi-x-h-naive}, which leads to $\mathbb{E}_{j \in Q}[h_j] \approx_{\epsilon, \eta \cdot \|\Phi \|^2_\infty \cdot \|x\|^2_1} \mathbb{E}_{j \in [N]} [h_j]$. This leads to by the composition rule Lemma~\ref{lemma: approximation-compositions-multiplication}, 
    \begin{align}
        \mathbb{E}_{j \in Q}[|(\Phi x)_j|^2] \approx_{\epsilon, 7\eta \cdot \|\Phi \|^2_\infty \cdot \|x\|^2_1} \mathbb{E}_{j \in [N]} [h_j].
    \end{align}
    Finally, we apply Lemma~\ref{lemma: rip-naive-proof-one-scale-net} again with the composition rule to arrive at 
    \begin{align}
        \mathbb{E}_{j \in Q}[|(\Phi x)_j|^2] \approx_{\epsilon, (13 + 6 \epsilon) \eta \cdot \|\Phi \|^2_\infty \cdot \|x\|^2_1 } \mathbb{E}_{j \in [N]}[|(\Phi x)_j|^2]. 
    \end{align}
    \end{proof}
\end{proposition}

Finally, combining with Lemma~\ref{lemma: rip-restatement}, we arrive a naive bounds on the restricted isometry property of row-subsampled BOS matrices, 

\begin{theorem}[RIP for row-subsampled BOS]\label{thm: rip-bos-naive-proof}
     Let $\Phi \in \mathbb{C}^{N \times N}$ be a unitary (BOS) matrix with $\|\Phi\|_\infty = O(\frac{1}{\sqrt{N}})$. For a sufficient small $\epsilon > 0$ and some $q = O(\epsilon^{-4} K^3 \log(K) \log(K/\epsilon)) $, the following holds. Let $A \in \mathbb{C}^{q \times N}$ be a matrix whose $q$ rows are chosen uniformly and independently from the rows, rescaled by $\sqrt{N/q}$. Then with probability at least $1 - 2^{\Omega(\epsilon^{-2} K^2 \log K \log(K/\epsilon) )}$, we have that $A$ satisfies the restricted isometry property of order $K$ with RIP constant $\epsilon$ on the supports of $\mathcal{K}$ with $K = |\mathcal{K}|$. 
    \begin{proof}
        We can directly apply Proposition~\ref{prop: rip-bos-naive-proof} with $\epsilon/2$ and set $\eta = \Omega(\epsilon/K)$. Then we obtain that 
\begin{align}
        \mathbb{E}_{j \in Q}[|(\Phi x)_j|^2] \approx_{\delta_K/2,  O(\delta_K \cdot \|\Phi \|^2_\infty \cdot \|x\|^2_1/K) } \mathbb{E}_{j \in [N]}[|(\Phi x)_j|^2]. 
    \end{align}
    Note that we have that $\|x\|_1 \leq \|x\|_2 \|1_x\|_2 \leq \|x\|_2 \sqrt{K}$ for $K$-sparse $x$ supported on $\mathcal{K}$. Apply this to Lemma~\ref{lemma: rip-restatement} gives the desired result. 
    \end{proof}
\end{theorem}

\begin{remark}
    One might ask if we could further reduce the $K$ dependence and answer is yes. However, it must be given as strictly more developed proof approach.  Indeed, one might wonder if it is truly necessary to set $\tau  = O(\eta) = \gamma$. In the proof it is necessary to the case as we need to ensure that the kill off normalization incurred in translating from $\|x\|_2$ to $\|x\|_1$. 
\end{remark}

For mathematical consistency, we convert it to $1 -\delta$ style argument. In this case, we slightly modify Proposition~\ref{prop: rip-bos-naive-proof} with $q \geq 2 \epsilon^{-1} \eta^{-1} \max(\eta^{-2} \log(K) \log(1/\eta), \log(1/\delta))$. By rescaling $\epsilon$ and setting $\eta = \epsilon/ K$ (since we are proving for Hadamard matrix, there is no need for $\Omega$ notation), we obtain the following bound, converting $\epsilon$ to the $K+1$th-order restricted isometric constant $\delta_K$

\begin{corollary}
     Let $\Phi \in \mathbb{C}^{N \times N}$ be the Walsh-Hadamard matrix. For a sufficient small $\delta_K > 0$ and some 
     \begin{align}
        q \geq 4 \delta_K^{-2} K   \max(K^2 \delta_K^{-2}\log(K) \log(1/\eta), \log(1/\delta)).
    \end{align}
     The following holds. Let $A \in \mathbb{C}^{q \times N}$ be a matrix whose $q$ rows are chosen uniformly and independently from the rows, rescaled by $\sqrt{N/q}$. Then with probability at least $1 - \delta$, we have that $A$ satisfies the restricted isometry property with $K+1$th-order restricted isometric constant $\delta_K$ on the supports of $\mathcal{K}_{\mathscr{M}} \cup \{ I \}$ with $K = |\mathcal{K}_{\mathscr{M}}|$.
\end{corollary}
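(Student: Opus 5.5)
This corollary follows by instantiating Proposition~\ref{prop: rip-bos-naive-proof} with the Walsh--Hadamard normalization and then making the failure probability explicit. The plan has three steps.

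\textbf{Step 1: reduce RIP to an empirical-mean statement.} Take $\Phi$ to be the normalized Walsh--Hadamard BOS with $\|\Phi\|_\infty = 1/\sqrt{N}$, indexed by characters $\chi_a$, $a\in\mathcal{A}$. Restrict attention to vectors $x$ supported on $\mathcal{K}_{\mathscr{M}}\cup\{I\}$; this support has size $K+1$, which I absorb into $K$ in the bounds. By Lemma~\ref{lemma: rip-restatement} it suffices to show, simultaneously for all such $x$,
\[
N\,\mathbb{E}_{j\in Q}[|(\Phi x)_j|^2] \approx_{\delta_K/2,\,\delta_K/2} N\,\mathbb{E}_{j\in[N]}[|(\Phi x)_j|^2] = \|x\|_2^2 .
\]

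\textbf{Step 2: apply Proposition~\ref{prop: rip-bos-naive-proof} and convert norms.} Use the $1-\delta$ variant of Proposition~\ref{prop: rip-bos-naive-proof}, with
\[
q \ge 2\epsilon^{-1}\eta^{-1}\max\bigl(\eta^{-2}\log K\log(1/\eta),\ \log(1/\delta)\bigr).
\]
To obtain this, rerun Lemma~\ref{lemma: approx-Phi-x-h-naive} with the union bound over the net $\mathcal{H}$, where $|\mathcal{H}|\le K^S$ because the samples only range over the support. The Hoeffding exponent must dominate both $S\log K$ and $\log(1/\delta)$. Set $\epsilon=\delta_K/2$. The additive error is $(13+6\epsilon)\eta\|x\|_1^2\|\Phi\|_\infty^2$. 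After multiplying by $N$, this becomes $O(\eta\|x\|_1^2)\le O(\eta K\|x\|_2^2)$ by Cauchy--Schwarz on the support. Choosing $\eta = c\,\delta_K/K$ with a small constant $c$ makes this at most $(\delta_K/2)\|x\|_2^2$.

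\textbf{Step 3: substitute and control the failure probability.} Substituting $\eta=c\delta_K/K$ gives $\epsilon^{-1}\eta^{-1} = O(K\delta_K^{-2})$ and $\eta^{-2} = O(K^2\delta_K^{-2})$. This yields the stated form
\[
q \ge 4\delta_K^{-2}K\max\bigl(K^2\delta_K^{-2}\log K\log(1/\eta),\ \log(1/\delta)\bigr),
\]
with constants absorbed. Then check that the failure probability $2^{-\eta^{-2}\log K\log(1/\eta)}$, together with the extra $\log(1/\delta)$ term, is at most $\delta$.

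\textbf{Main obstacle.} The delicate step is the conversion from $\|x\|_1$ to $\|x\|_2$. The additive error must be made relative, which forces $\eta\sim\delta_K/K$ and hence the poor $K^3$ scaling. One must also check that the Maurey net argument (Lemma~\ref{lemma: Maurey-empirical-method}) is uniform over all $x$ on the fixed support. I would handle this by noting that the net depends only on $\mathcal{K}_{\mathscr{M}}\cup\{I\}$, so a single union bound suffices. The remaining bookkeeping, the composition of approximations via Lemma~\ref{lemma: approximation-compositions-multiplication}, is routine.
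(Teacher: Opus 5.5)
Your proposal is correct and is essentially the paper's argument. The paper likewise takes the $1-\delta$ variant of Proposition~\ref{prop: rip-bos-naive-proof} with $q \geq 2\epsilon^{-1}\eta^{-1}\max(\eta^{-2}\log(K)\log(1/\eta),\log(1/\delta))$, rescales $\epsilon$ to $\delta_K$, sets $\eta=\epsilon/K$, and turns the additive $\|x\|_1^2\|\Phi\|_\infty^2$ error into a relative one via $\|x\|_1\le\sqrt{K}\|x\|_2$ and Lemma~\ref{lemma: rip-restatement}; your small constant $c$ just makes that ``rescaling'' explicit. One point needs tightening, in both your write-up and the paper's: the per-$h$ concentration step must use a multiplicative Chernoff/Bernstein-type bound for $[0,2]$-valued variables, which gives failure probability $e^{-\Omega(q\epsilon\eta)}$, because plain additive Hoeffding only gives $e^{-\Omega(q\eta^2)}$ and would cost an extra factor of $K$ in $q$.
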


\subsubsection*{Refined analysis}

The preceding section we provide a naive proof on the row-subsampled RIP properties of the Fourier/BOS matrix. The bound on the number of samples $q$ is far from optimal. We now follow from~\cite{haviv2015restrictedisometrypropertysubsampled} where we give a refined analysis. The main idea is to define a level sets of $g^{(i)}$ that progressively gives a finer estimation of $\Phi x$. We will adapt the proof strategy presented in~\cite{haviv2015restrictedisometrypropertysubsampled} where we fix a preselected support $\mathcal{K}_{\mathscr{M}}$ and $K = |\mathcal{K}_{\mathscr{M}}|$ and state in a conventional $1- \delta$ fashion. As such, we will only comment and prove the parts where these adaptations apply.

\begin{definition}[Vector level set $\mathcal{G}_i$]
     For every $1 \leq i \leq t+r$, let $\mathcal{G}_i$ denote the set of all vectors $g^{(i)} \in \mathbb{C}^N$ that can be represented as

\begin{align}
    g^{(i)}=\frac{\sqrt{2}}{|F|} \cdot \sum_{(\ell, s) \in F}(-1)^{s / 2} \cdot M^{(\ell)}, 
\end{align}
for a multiset $F$ of $O\left(2^i \cdot \log (1 / \gamma)\right)$ pairs in $[N] \times\{0,1,2,3\}$. Furthermore, for a given subset $\mathcal{K}_{\mathscr{M}} \subseteq [N]$, we denote the vector level sets $\mathcal{G}^{\mathcal{K}_{\mathscr{M}}}_i$ whose pairs (columns) are drawn from $\mathcal{K}_{\mathscr{M}}$. 
\end{definition}

\begin{remark}
    Note that for every $1 \leq i \leq t+r,\left|\mathcal{G}_i\right| \leq N^{O\left(2^i \cdot \log (1 / \gamma)\right)}$ and $\left|\mathcal{G}^{\mathcal{K}_{\mathscr{M}}}_i\right| \leq |\mathcal{K}_{\mathscr{M}}|^{O\left(2^i \cdot \log (1 / \gamma)\right)}$.
\end{remark}

\begin{definition}
    For $x \in \mathbb{C}^N$ such that $\|x \|_1 =1$ and whose supported contained in $\mathcal{K}_{\mathscr{M}}$. We say that in index $j \in [N]$ is \emph{good} if, for every $1 \leq i \leq t$, there exists a vector $g^{(i+r)} \in \mathcal{G}^{\mathcal{K}_{\mathscr{M}}}_{i+r}$ such that
    \begin{align}
        |(\Phi x)_j| \approx_{0, 2^{-(i+r)/2}}|g^{(i+r)}_j|.
    \end{align}
    Otherwise, we say such an index $j$ \emph{bad}. 
\end{definition}
Similar to Lemma~\ref{lemma: Maurey-empirical-method}, we give a slightly modified version of Maurey's empirical method. 

\begin{corollary}[Maurey's empirical method]
      Suppose that $M = O(2^{i} \log1/\gamma)$ for $ \gamma > 0$. Denote any set $Q \subseteq [N]$ and every $\mathcal{G}_i$ for $1 \leq i \leq t+r$. For $j \in Q$ we denote the set $Q'$ consists of all indices such that $|g_j| \approx_{0, 2^{-i/2}}  |\mathbb{E}_{(\ell, s) \in \mathcal{D}}[\sqrt{2} \cdot (-1)^{s/2} \cdot \Phi^{(\ell)}_j]|$. Then we have that with probability at least $3/4$
    \begin{align}
        \frac{|Q'|}{|Q|} = \Omega(1 -\gamma).
    \end{align}
\end{corollary}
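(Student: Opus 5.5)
The statement to prove is the level-$i$ version of Maurey's empirical method. I will follow Lemma~\ref{lemma: Maurey-empirical-method}, replacing the fixed accuracy $\tau$ by $2^{-i/2}$.

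First, fix $x$ with $\|x\|_1=1$ and $\|\Phi\|_\infty\le 1$. As in the naive analysis, write $x_\ell=\sqrt{2}\sum_{s=0}^{3}p_{\ell,s}(-1)^{s/2}$ with $p_{\ell,s}\ge 0$ and $\sum_{\ell,s}p_{\ell,s}=\|x\|_1=1$. This gives a probability distribution $\mathcal{D}$ on $[N]\times\{0,1,2,3\}$, supported on $\mathcal{K}_{\mathscr{M}}\times\{0,1,2,3\}$ whenever $x$ is. With this choice,
\begin{align}
(\Phi x)_j=\mathbb{E}_{(\ell,s)\sim\mathcal{D}}\big[\sqrt{2}\,(-1)^{s/2}\Phi^{(\ell)}_j\big].
\end{align}
Next, draw a multiset $F$ of $M=C\,2^{i}\log(1/\gamma)$ i.i.d.\ samples from $\mathcal{D}$ and set $g=\frac{\sqrt2}{M}\sum_{(\ell,s)\in F}(-1)^{s/2}\Phi^{(\ell)}$. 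By construction $g\in\mathcal{G}_i$, and in fact $g\in\mathcal{G}^{\mathcal{K}_{\mathscr{M}}}_i$.

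For each fixed $j$, the summands are i.i.d.\ complex random variables of modulus at most $\sqrt2$ with mean $(\Phi x)_j$. Proposition~\ref{prop: hoeffding-complex-|a|} with $\varepsilon=2^{-i/2}$ and $a=\sqrt2$ then gives
\begin{align}
\Pr\big[|g_j-(\Phi x)_j|>2^{-i/2}\big]\le 4e^{-M2^{-i}/2}.
\end{align}
Choosing $C$ large enough, e.g.\ $C=2$ plus the additive $\log 16$ absorbed into the $O(\cdot)$, makes the right-hand side at most $\gamma/4$. The reverse triangle inequality $\big||g_j|-|(\Phi x)_j|\big|\le|g_j-(\Phi x)_j|$ yields the modulus statement $|g_j|\approx_{0,2^{-i/2}}|(\Phi x)_j|$ for each such good index.

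Finally, let $X=|Q\setminus Q'|$. Linearity of expectation gives $\mathbb{E}[X]\le\gamma|Q|/4$ (counting $Q$ with multiplicity), and Markov's inequality gives $\Pr[X>\gamma|Q|]\le 1/4$. Hence with probability at least $3/4$ we have $|Q'|/|Q|\ge 1-\gamma$.

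Since the statement is claimed for every level $i$ separately, no union bound over $i$ is needed here; the refined argument handles that later, through the $r$ offset and the good-index definition. The only real obstacle is bookkeeping. The constant hidden in $M=O(2^i\log(1/\gamma))$ must absorb both the factor $a^2=2$ and the additive $\log 16$. One must also check that the decomposition of $x$ keeps $\mathcal{D}$ supported on $\mathcal{K}_{\mathscr{M}}$, so that the restricted level-set cardinality $|\mathcal{K}_{\mathscr{M}}|^{O(2^i\log(1/\gamma))}$ applies downstream. If the normalizations $\|\Phi\|_\infty$ and $\|x\|_1$ are not unity, I would rescale exactly as in Proposition~\ref{prop: rip-bos-naive-proof}.
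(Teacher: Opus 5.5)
Your proposal is correct and is essentially the paper's argument: the paper gives no separate proof of this corollary and defers to Lemma~\ref{lemma: Maurey-empirical-method}, whose proof is exactly your per-index complex Hoeffding bound via Proposition~\ref{prop: hoeffding-complex-|a|} followed by Markov's inequality on the number of bad indices, with $\tau$ replaced by $2^{-i/2}$ and $S$ by $M$. The only caveats concern constants. Your explicit choice $C=2$ inherits the exponent $S\varepsilon^2/a^2$ of Proposition~\ref{prop: hoeffding-complex-|a|}, which is optimistic, since splitting into real and imaginary parts honestly gives $S\varepsilon^2/(4a^2)$. Absorbing the additive $\log 16$ also needs $\gamma$ bounded away from $1$. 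Both are harmlessly absorbed into $M=O(2^i\log(1/\gamma))$.
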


The refinement takes place in establishing a refiner range of approximation, where at each level, we approximate to a given precision. We first give the following definitions. 

\begin{definition}
     For a $(t+r)$-tuple of vectors $(g^{(1)}, \cdots, g^{(t+r)}) \in \mathcal{G}^{\mathcal{K}_{\mathscr{M}}}_1 \times \cdots \times \mathcal{G}^{\mathcal{K}_{\mathscr{M}}}_{t+r}$ and for $1\leq i \leq t$, let $C_i $ be the set of all $j \in [N]$ for which $i$ is the smallest index such that $|g^{(i)}_j| \geq 2 \cdot 2^{-i/2}$. For $m=i, \cdots, i +r$ define the vector $h^{(i, m)}$ by 
    \begin{align}
        h^{(i, m)}_j = |g^{(m)}_j|^2 \cdot \mathrm{1}_{j \in C_i}. 
    \end{align}
    
\end{definition}

\begin{definition}[The vector sets $\mathcal{D}_{(i, m)}$]
     For every $i \leq m \leq i+r$, let $\Delta^{(i, m)}$ be the vector defined by

\begin{align}
    \Delta_j^{(i, m)}= \begin{cases}h_j^{(i, m)}-h_j^{(i, m-1)}, & \text { if }\left|h_j^{(i, m)}-h_j^{(i, m-1)}\right| \leq 30 \cdot 2^{-(i+m) / 2} \\ 0, & \text { otherwise. }\end{cases}
\end{align}
Note that for $j \notin C_i$, $\Delta^{(i, m)} =0$. Finally we denote $\mathcal{D}_{i, m}$ be the set of all vectors $\Delta^{(i, m)}$ that can be obtained in this way. 
\end{definition}

Apply the usual Hoeffding and union bound, we can solve the following. 

\begin{lemma}
    For every $\tilde{\epsilon}, \tilde{\eta} > 0$ and some $q = O(\tilde{\epsilon}^{-1} \tilde{\eta}^{-1} \max( \log N \cdot \log(1/\gamma)), \log(1/\delta))$, let $Q$ be a multiset of $q$ uniform and independent random elements of $[N]$. Then, it holds that for every $1\leq i \leq t, m$ and a vector $\Delta^{(i, m)} \in \mathcal{D}_{i, m}$ associated with a set $C_i$, 
    \begin{align}
        \mathbb{E}_{j \in Q}[\Delta^{(i, m)}_j ] \approx_{0, b} \mathbb{E}_{j \in [N]}[\Delta^{(i, m)}_j ].
    \end{align}
    with probability $1-\delta$
    for $b = O(\tilde{\epsilon} \cdot 2^{-i} \cdot \frac{|C_i|}{N} + \tilde{\eta})$.
\end{lemma}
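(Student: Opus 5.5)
The plan is to fix a single vector $\Delta^{(i,m)}\in\mathcal{D}_{i,m}$, prove a tail bound for the empirical mean $\mathbb{E}_{j\in Q}[\Delta^{(i,m)}_j]$ using Bernstein's inequality, and then take a union bound over all of $\mathcal{D}_{i,m}$ and over all admissible pairs $(i,m)$. This follows the corresponding step in~\cite{haviv2015restrictedisometrypropertysubsampled}. The only adaptation is that the vectors $g^{(k)}$ are drawn from the restricted level sets $\mathcal{G}^{\mathcal{K}_{\mathscr{M}}}_k$, so the cardinality counts involve $K$ rather than $N$. Since $K\le N$, this only helps, and it gives the stated bound with $\log N$.

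\emph{Step 1: single-vector concentration.} Let $j$ be uniform on $[N]$ and set $Z=\Delta^{(i,m)}_j$. By the truncation in the definition of $\Delta^{(i,m)}$, we have $|Z|\le B:=30\cdot 2^{-(i+m)/2}$. Also $Z=0$ unless $j\in C_i$, so the variance satisfies
\begin{align}
\sigma^2\le \mathbb{E}[Z^2]\le B^2\,\frac{|C_i|}{N}=900\cdot 2^{-(i+m)}\frac{|C_i|}{N}.
\end{align}
Take $t=\tilde{\epsilon}\,2^{-i}|C_i|/N+\tilde{\eta}$. Bernstein's inequality for the mean of $q$ i.i.d.\ copies gives failure probability at most $2\exp\!\big(-q t^2/(2\sigma^2+2Bt/3)\big)$. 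By AM--GM, $t^2\ge 2\tilde{\epsilon}\tilde{\eta}\,2^{-i}|C_i|/N$, hence $\sigma^2/t^2=O(2^{-m}/(\tilde{\epsilon}\tilde{\eta}))$. Trivially $B/t\le B/\tilde{\eta}=O(2^{-m/2}/\tilde{\eta})$, which is dominated by the previous quantity since $\tilde{\epsilon}\le1$. So the exponent is $\Omega\big(q\,2^{m}\tilde{\epsilon}\tilde{\eta}\big)$.

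\emph{Step 2: counting $\mathcal{D}_{i,m}$.} This is the step I expect to be the main obstacle, because $\Delta^{(i,m)}$ depends on $C_i$, and $C_i$ depends on the whole prefix $g^{(1)},\dots,g^{(i)}$. Concretely, $\Delta^{(i,m)}$ is determined by the tuple $(g^{(1)},\dots,g^{(i)},g^{(m-1)},g^{(m)})$. By the remark after the definition of the vector level sets,
\begin{align}
\log|\mathcal{D}_{i,m}|\le \sum_{k\le m}O\big(2^k\log(1/\gamma)\big)\log K=O\big(2^m\log(1/\gamma)\log K\big).
\end{align}
The geometric sum is dominated by its top term, and the extra $g^{(m-1)},g^{(m)}$ only add the same order.

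\emph{Step 3: union bound.} We need $q\,2^{m}\tilde{\epsilon}\tilde{\eta}\gtrsim \log|\mathcal{D}_{i,m}|+\log(t(r+1)/\delta)$. The factor $2^m$ cancels against the $2^m$ in the cardinality. For the $\log(1/\delta)$ term, $2^m\ge1$ only helps. Hence
\begin{align}
q=O\big(\tilde{\epsilon}^{-1}\tilde{\eta}^{-1}\max(\log K\log(1/\gamma),\log(1/\delta))\big)
\end{align}
suffices. The overhead $\log(t(r+1))$ from the union over pairs $(i,m)$ is absorbed, since $t,r=O(\log(1/\gamma))$ in the parameter regime of~\cite{haviv2015restrictedisometrypropertysubsampled}. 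Replacing $\log K$ by $\log N$ gives the stated bound, with deviation $b=O(\tilde{\epsilon}2^{-i}|C_i|/N+\tilde{\eta})$ holding simultaneously for all $i,m$ and all $\Delta^{(i,m)}$ with probability at least $1-\delta$.
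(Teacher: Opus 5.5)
Your overall strategy matches the paper's one-line ``Hoeffding and union bound'' argument, which follows Haviv--Regev: a tail bound for one fixed $\Delta^{(i,m)}$, then a union bound over $\mathcal{D}_{i,m}$. Your counting $\log|\mathcal{D}_{i,m}|=O(2^m\log(1/\gamma)\log K)$ is correct, since $\Delta^{(i,m)}$ is determined by $g^{(1)},\dots,g^{(i)},g^{(m-1)},g^{(m)}$. Your variance estimate $\sigma^2/t^2=O(2^{-m}/(\tilde\epsilon\tilde\eta))$ is also correct.

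The step that fails is the claim that the range term in Bernstein's inequality is dominated ``since $\tilde\epsilon\le 1$''. Keep the factor $2^{-i/2}$ you dropped. Then $B/t\le B/\tilde\eta=30\cdot 2^{-(i+m)/2}/\tilde\eta=30\,\tilde\epsilon\,2^{(m-i)/2}\cdot 2^{-m}/(\tilde\epsilon\tilde\eta)$. This is $O(2^{-m}/(\tilde\epsilon\tilde\eta))$ only if $\tilde\epsilon\,2^{(m-i)/2}=O(1)$. Since $m-i$ runs up to $r$, the assumption $\tilde\epsilon\le 1$ is far from enough.

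This is not a loose constant. When $|C_i|/N\ll\tilde\eta$, the variance term is negligible and $t\approx\tilde\eta$. The exponent is then genuinely of order $q\tilde\eta\,2^{(i+m)/2}$; a sharper Chernoff tail only adds a logarithmic factor. With $q\asymp\tilde\epsilon^{-1}\tilde\eta^{-1}\log N\log(1/\gamma)$ the exponent is $\tilde\epsilon^{-1}2^{(i+m)/2}\log N\log(1/\gamma)$. This does not beat $\log|\mathcal{D}_{i,m}|\asymp 2^m\log N\log(1/\gamma)$ once $2^{(m-i)/2}\gg\tilde\epsilon^{-1}$, so your union bound breaks down for the largest $m$.

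To repair it, make the hidden hypothesis explicit: assume $\tilde\epsilon\le c\,2^{-r/2}$ for a small absolute constant $c$. In a Haviv--Regev-style multiplicative/additive Chernoff bound, this is exactly the requirement that the multiplicative parameter $\tilde\epsilon\,2^{(m-i)/2}/30$ be at most $1/2$. Under it, both Bernstein terms are $O(2^{-m}/(\tilde\epsilon\tilde\eta))$. The exponent is then $\Omega(q\,2^m\tilde\epsilon\tilde\eta)$ uniformly over $i\le m\le i+r$, and your Steps~2--3 go through unchanged.

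The hypothesis holds where the lemma is used. With $r=\log_2(1/\epsilon^2)$ one has $2^{-r/2}=\epsilon$. Keeping $\sum_{i,m}b$ at $O(\epsilon)\cdot\mathbb{E}_{j\in[N]}|(\Phi x)_j|^2+O(\eta)$ requires $\tilde\epsilon=O(\epsilon/r)$, which satisfies it.

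Without such a restriction, the union-bound argument does not prove the statement as written ``for every $\tilde\epsilon$''. The alternative is to enlarge $q$ to $O\big(\tilde\eta^{-1}\max(\tilde\epsilon^{-1},2^{r/2})\max(\log N\log(1/\gamma),\log(1/\delta))\big)$.
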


\begin{lemma}
    For every $j \in [N]$ that is good and let $t = \log_2(1/\eta)$ and $r = \log_2(1/\epsilon^2)$, 
    \begin{align}
        |(\Phi x)_j|^2 \approx_{O(\epsilon), O(\eta)} \sum^t_{i=1} h^{(i, i+r)}_j; \quad   |(\Phi x)_j|^2 \approx_{O(\epsilon), O(\eta)} \sum^t_{i=1} \sum^{i+r}_{m=i} \Delta^{(i, m)}_j.
    \end{align}
       
\end{lemma}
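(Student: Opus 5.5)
For a good index $j$, I plan to show that only one term survives in the sum $\sum_{i=1}^t h^{(i,i+r)}_j$, and that this term is $|g^{(i^*+r)}_j|^2$ for the scale $i^*$ singled out by the sets $C_i$. I then compare it with $|(\Phi x)_j|^2$ using the goodness approximation, which is accurate to $2^{-(i^*+r)/2}=\epsilon\,2^{-i^*/2}$ once $r=\log_2(1/\epsilon^2)$. The second identity should then follow by telescoping, once I check that no increment $\Delta^{(i,m)}_j$ is truncated.

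\textbf{Step 1 (setup).} I take the tuple $(g^{(1)},\dots,g^{(t+r)})$ to consist of the witness vectors of goodness. Concretely, I choose $g^{(m)}\in\mathcal{G}^{\mathcal{K}_{\mathscr{M}}}_m$ with $|(\Phi x)_j|\approx_{0,2^{-m/2}}|g^{(m)}_j|$ at every level $m$. The goodness definition only covers the levels $m=r+1,\dots,t+r$, so for the lower levels I invoke the Maurey corollary at each level, since a.e. index is approximated at all scales. The $C_i$ are disjoint by definition, so $j$ lies in at most one of them.

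\textbf{Step 2 (the two cases).} First suppose $j\notin\bigcup_{i\le t}C_i$. Then $|g^{(t)}_j|<2\cdot 2^{-t/2}$, hence $|(\Phi x)_j|\le 3\cdot 2^{-t/2}$ and $|(\Phi x)_j|^2\le 9\eta$. The sum is $0$ in this case, so the claim holds with additive error $O(\eta)$.

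Otherwise $j\in C_{i^*}$, and $\sum_i h^{(i,i+r)}_j=|g^{(i^*+r)}_j|^2$. Since $|g^{(i^*)}_j|\ge 2\cdot 2^{-i^*/2}$, we get $|(\Phi x)_j|\ge 2^{-i^*/2}$. The level-$(i^*+r)$ approximation then gives
\[
\bigl||(\Phi x)_j|-|g^{(i^*+r)}_j|\bigr|\le \epsilon\, 2^{-i^*/2}\le \epsilon\,|(\Phi x)_j|,
\]
so the additive error at this scale is in fact a relative error $\epsilon$. Using $||a|^2-|b|^2|=||a|-|b||\,(|a|+|b|)$, I conclude $|(\Phi x)_j|^2\approx_{(2+\epsilon)\epsilon,0}|g^{(i^*+r)}_j|^2$. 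This proves the first identity.

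\textbf{Step 3 (telescoping).} Set $h^{(i,i-1)}:=0$. If no truncation occurs, then $\sum_{m=i}^{i+r}\Delta^{(i,m)}_j=h^{(i,i+r)}_j$, and the second identity follows from the first.

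To rule out truncation for $j\in C_{i}$ and $m>i$, minimality of $i$ gives $|g^{(i-1)}_j|<2\cdot2^{-(i-1)/2}$. Hence $|(\Phi x)_j|\le 3\sqrt2\,2^{-i/2}$, and this holds also when $i=1$ because $|(\Phi x)_j|\le 1$. Both $|g^{(m)}_j|$ and $|g^{(m-1)}_j|$ lie within $\sqrt2\,2^{-m/2}$ of $|(\Phi x)_j|$. Therefore their difference is at most $(1+\sqrt2)2^{-m/2}$, and their sum is at most about $(6\sqrt2+2\sqrt2)2^{-i/2}$. The product is at most roughly $27\cdot 2^{-(i+m)/2}\le 30\cdot2^{-(i+m)/2}$.

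For $m=i$ I bound $h^{(i,i)}_j$ directly. By the same estimates, $|g^{(i)}_j|\le 3\sqrt2\,2^{-i/2}+2^{-i/2}$, so its square is $O(2^{-i})$. This is within the threshold only if the constant in the definition of $\mathcal{D}_{i,m}$ accommodates it; otherwise I adjust the base case, for example by comparing against $h^{(i,i-1)}$ built from $g^{(i-1)}$, which the definition permits.

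\textbf{Main obstacle.} I expect the main difficulty to be this bookkeeping of constants in Step 3, namely making the explicit threshold $30$ cover both the $m=i$ boundary term and the small-$i$ edge case. A secondary concern is justifying approximation at the levels $m\le r$, which the goodness definition as stated does not include. I would handle that by strengthening ``good'' to cover all levels $1\le m\le t+r$; this costs only a union bound in the Maurey step.
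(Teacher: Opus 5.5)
Your argument is correct and is essentially the Haviv--Regev proof that the paper adapts; the paper itself states this lemma without proof. That includes your repair that goodness must hold at every level $1\le m\le t+r$ for a single tuple $(g^{(1)},\dots,g^{(t+r)})$, fixed once via Maurey's method rather than chosen per index.

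The $m=i$ case you flag as the main obstacle closes at once with the convention $h^{(i,i-1)}:=0$. Indeed $|g^{(i)}_j|^2\le(3\sqrt{2}+1)^2\,2^{-i}\approx 27.5\cdot 2^{-i}<30\cdot 2^{-i}$, which is exactly what the constant $30$ is for. You should discard your fallback of comparing against $|g^{(i-1)}_j|^2$: it would make the telescoped sum $h^{(i,i+r)}_j-|g^{(i-1)}_j|^2$, and that subtracted term can be of the same order as $|(\Phi x)_j|^2$.
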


Next, we relate the approximation of $\Phi x$ to telescope sum of $h^{(i, m)}_j$. First, we show that we can estimate $\|\Phi x\|^2_2$ well within nets $C_i$ for $1\leq i \leq t$, combining with the fact that most indices are good.  

\begin{lemma}
    Suppose that $x \in \mathbb{C}^N$ with $\|x\|_1=1$. For every multiset $Q \subseteq [N]$, there exist vector collections $(\Delta^{(i, m)} \in \mathcal{D}_{i, m})_{m=i, \cdots, i+r}$ associated with the sets $C_i$ for $1\leq i \leq t$. Then we have that 
\begin{enumerate}
    \item $\mathbb{E}_{j \in [N]}[|(\Phi x)_j|^2] \geq \sum^t_{i=1} 2^{-i} \cdot \frac{C_i}{N} - \eta $.

    \item $\mathbb{E}_{j \in [N]}[|(\Phi x)_j|^2]  \approx_{O(\epsilon), O(\eta)} \mathbb{E}_{j \in Q}[\sum^t_{i=1} \sum^{i+r}_{m=i} \Delta^{(i, m)}_j]$.

\end{enumerate}

\end{lemma}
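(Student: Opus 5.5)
The plan is to prove the two items separately, following the Haviv--Regev multiscale strategy with the support restricted to $\mathcal{K}_{\mathscr{M}}$. Item~1 is a pointwise lower bound. Item~2 has two halves: a population half, which is deterministic, and a sample half. The sample half combines the preceding concentration lemma for the vectors $\Delta^{(i,m)}$ with the fact that most indices of $Q$ are good.

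\textbf{Item 1.} Fix $x$ with $\|x\|_1=1$ supported on $\mathcal{K}_{\mathscr{M}}$.
\begin{itemize}
\item By Maurey's empirical method (the corollary above), applied at each level $i+r$ with $\gamma$ small, we choose for each level a vector $g^{(i)}\in\mathcal{G}^{\mathcal{K}_{\mathscr{M}}}_i$ that approximates $\Phi x$ to additive precision $2^{-i/2}$ on all but a $\gamma$-fraction of indices. This fixes the sets $C_i$.
\item If $j$ is good and $j\in C_i$, then $|g^{(i)}_j|\ge 2\cdot 2^{-i/2}$ together with $|(\Phi x)_j|\approx_{0,2^{-i/2}}|g^{(i)}_j|$ gives $|(\Phi x)_j|\ge 2^{-i/2}$. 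Hence $|(\Phi x)_j|^2\ge 2^{-i}$.
\item Since the $C_i$ are disjoint by minimality of $i$, summing over good indices gives $\mathbb{E}_{j\in[N]}|(\Phi x)_j|^2 \ge \sum_i 2^{-i}|C_i|/N$, minus the contribution of bad indices.
\item Each bad index contributes at most $\|\Phi\|_\infty^2\|x\|_1^2$ per entry, so the bad indices cost at most $\gamma$ (after normalization). Choosing $\gamma\le\eta$ absorbs this into the $-\eta$.
\end{itemize}

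\textbf{Item 2, population half.} We show $\mathbb{E}_{j\in[N]}|(\Phi x)_j|^2\approx_{O(\epsilon),O(\eta)}\mathbb{E}_{j\in[N]}\sum_{i,m}\Delta^{(i,m)}_j$. This follows from the preceding lemma, $|(\Phi x)_j|^2\approx \sum_i\sum_m\Delta^{(i,m)}_j$ for good $j$, averaged over $[N]$. The sum telescopes to $\sum_i h^{(i,i+r)}_j$.

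There are two sources of slack.
\begin{itemize}
\item \emph{Truncation in $\Delta$.} For good $j\in C_i$, both $|g^{(m)}_j|$ and $|g^{(m-1)}_j|$ are within $2^{-m/2}$-ish of $|(\Phi x)_j|$. Here $|(\Phi x)_j|$ is $O(2^{-i/2})$ by the minimality of $i$ (it lies below the threshold at level $i-1$). Therefore $|h^{(i,m)}_j-h^{(i,m-1)}_j|\le 30\cdot 2^{-(i+m)/2}$, so the truncation is inactive on good indices.
\item \emph{Indices outside $\bigcup_i C_i$.} These have $|(\Phi x)_j|^2=O(2^{-t})=O(\eta)$ and are absorbed into the additive error.
\end{itemize}

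\textbf{Item 2, sample half.} We need the same approximation with $\mathbb{E}_{j\in Q}$ in place of $\mathbb{E}_{j\in[N]}$. The statement is for a fixed multiset $Q$ and a suitable choice of vectors. Accordingly, I would choose the $g^{(i)}$ so that simultaneously most indices of $[N]$ and most of $Q$ are good. Maurey gives this with probability $3/4$ for each; a union bound leaves positive probability, so such a choice exists.

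On good indices of $Q$, the same pointwise identity holds. On bad indices, the key point is that every $\Delta^{(i,m)}_j$ is bounded by $30\cdot2^{-(i+m)/2}$. Summing over $i\le t$ and $m\in[i,i+r]$ gives a geometric bound $O(1)$ per index. Weighted by the $\gamma$-fraction of bad indices, this contributes $O(\gamma)=O(\eta)$. The RHS thus matches the population quantity up to $O(\epsilon)$ relative and $O(\eta)$ additive error.

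The concentration lemma for $\Delta^{(i,m)}$ is what later links the two halves when a random $Q$ is used. The telescoping with additive error $b=O(\tilde\epsilon 2^{-i}|C_i|/N+\tilde\eta)$, summed over $i$ and $m$, is controlled by item~1: it yields $O(\tilde\epsilon)\,\mathbb{E}|\Phi x|^2+O(r t\tilde\eta)$.

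\textbf{Main obstacle.} I expect the hardest step to be the bad-index accounting in $Q$ with respect to the multiscale truncation. We must verify that truncated $\Delta$'s sum to $O(1)$ pointwise rather than $O(rt)$; otherwise a $\log$ factor leaks into $\eta$. I would first check the geometric sum $\sum_{m\ge i}2^{-(i+m)/2}\lesssim 2^{-i}$. If needed, I would shrink $\gamma$ by a factor $rt$, which costs only a $\log(1/\gamma)$ factor in $|\mathcal{G}_i|$.
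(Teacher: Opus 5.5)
Your item 1, your population half, and your bad-index accounting are sound. They follow the Haviv--Regev multiscale argument that the paper defers to; the paper gives no proof of this lemma itself. You also use the right notion of a good index: $|(\Phi x)_j-g^{(i)}_j|\le 2^{-i/2}$ at every level $1\le i\le t+r$ for one fixed tuple. And the truncation is indeed inactive on good indices, since $|(\Phi x)_j|=O(2^{-i/2})$ for good $j\in C_i$ by minimality of $i$.

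The gap is the last step of your sample half. The pointwise identity on good indices of $Q$, plus the $O(\gamma)$ bound on bad ones, yields $\mathbb{E}_{j\in Q}[\sum_{i,m}\Delta^{(i,m)}_j]\approx_{O(\epsilon),O(\eta)}\mathbb{E}_{j\in Q}[|(\Phi x)_j|^2]$. So the right-hand side matches the \emph{sample} average, not the population average $\mathbb{E}_{j\in[N]}[|(\Phi x)_j|^2]$, and no argument valid for every multiset $Q$ can close that difference. Here is a concrete counterexample:
\begin{itemize}
\item Normalize the entries of $\Phi$ to $\pm1$ and take $x=(e_a+e_b)/2$ with distinct $a,b\in\mathcal{K}_{\mathscr{M}}$. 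Then $|(\Phi x)_j|^2\in\{0,1\}$ with average $1/2$.
\item Let $Q$ be $q$ copies of one index $j_0$ with $(\Phi x)_{j_0}=0$.
\item Your requirement that at most a $\gamma$-fraction of $Q$ be bad forces $j_0$ to be good. Hence $j_0\notin\bigcup_i C_i$ and $\mathbb{E}_{j\in Q}[\sum_{i,m}\Delta^{(i,m)}_j]=0$, while the left-hand side of item 2 equals $1/2$.
\end{itemize}
Your closing remark that the concentration lemma ``later links the two halves when a random $Q$ is used'' already concedes that this link is not available here.

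The repair is to state item 2 as Haviv and Regev do: as two same-domain approximations with the same $\Delta$'s. The first is $\mathbb{E}_{j\in[N]}[|(\Phi x)_j|^2]\approx_{O(\epsilon),O(\eta)}\mathbb{E}_{j\in[N]}[\sum_{i,m}\Delta^{(i,m)}_j]$, which is your population half. The second is $\mathbb{E}_{j\in Q}[|(\Phi x)_j|^2]\approx_{O(\epsilon),O(\eta)}\mathbb{E}_{j\in Q}[\sum_{i,m}\Delta^{(i,m)}_j]$, which is what your sample half actually proves. The paper's mixed item 2 conflates these two.

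The cross-comparison belongs to Proposition~\ref{prop: rip-bos-rowsampled-refined}. There, for random $Q$, the concentration lemma, union-bounded over the finite sets $\mathcal{D}_{i,m}$, compares $\mathbb{E}_{j\in Q}[\Delta^{(i,m)}_j]$ with $\mathbb{E}_{j\in[N]}[\Delta^{(i,m)}_j]$. Item 1 then turns $\sum_{i,m}\tilde\epsilon\,2^{-i}|C_i|/N$ into a relative error. The result is a high-probability statement, not one for every $Q$. Note that this sum carries a factor $r+1$ from the range of $m$, so the relative error is $O(r\tilde\epsilon)$, not $O(\tilde\epsilon)$ as you wrote.

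Two smaller points:
\begin{itemize}
\item In item 1, the loss from bad indices is the term $\sum_i 2^{-i}|C_i\cap\mathrm{Bad}|/N\le\gamma/2$ on the right-hand side, not a bound on $|(\Phi x)_j|^2$. On the left you simply drop bad indices, since $|(\Phi x)_j|^2\ge 0$.
\item Your population half must also discard the bad indices of $[N]$, exactly as you do for $Q$.
\end{itemize}
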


Combining the above lemmas, we can conclude the following in a similar spirit of the naive analysis. Note that we set the BOS constant back. 

\begin{proposition}\label{prop: rip-bos-rowsampled-refined}
    Let $\Phi \in \mathbb{C}^{N \times N}$ a unitary matrix for sufficiently large $N$ and sufficiently small $\epsilon, \eta > 0$. For some $q =O(  \epsilon^{-1} \eta^{-1} \max(\log(K) \log^2(1/\epsilon)\log(1/\eta), \log(1/\delta))$ and let $Q$ be a multiset of $q$ uniform and independent random elements of $[N]$. Then it holds for every $x \in \mathbb{C}^N$ supported on $\mathcal{K}_{\mathscr{M}}$
    \begin{align}
        \mathbb{E}_{j \in Q}[|(\Phi x)_j|^2] \approx_{\epsilon,  \eta \cdot \|x\|^2_1 \cdot \|\Phi\|^2_{\infty}} \mathbb{E}_{j \in [N]}[|(\Phi x)_j|^2],
    \end{align}
    with probability at least $1-\delta$. 
  
\end{proposition}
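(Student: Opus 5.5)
This follows the refined chaining argument of \cite{haviv2015restrictedisometrypropertysubsampled}, with the support restricted to $\mathcal{K}_{\mathscr{M}}$. Normalize $\|x\|_1=1$ and $\|\Phi\|_\infty=1$; the general statement is recovered at the end by homogeneity in $x$ and rescaling by $\|\Phi\|_\infty$.

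The first step is to set $t=\log_2(1/\eta)$ and $r=\log_2(1/\epsilon^2)$. We then use the level sets $\mathcal{G}^{\mathcal{K}_{\mathscr{M}}}_i$, the sets $C_i$, the vectors $h^{(i,m)}$ and the truncated increments $\Delta^{(i,m)}\in\mathcal{D}_{i,m}$ defined above. Since all Maurey samples are drawn from columns indexed by $\mathcal{K}_{\mathscr{M}}$, we have $|\mathcal{G}^{\mathcal{K}_{\mathscr{M}}}_i|\le K^{O(2^i\log(1/\gamma))}$. It follows that $|\mathcal{D}_{i,m}|\le K^{O(2^m\log(1/\gamma))}$, with $N$ replaced by $K$. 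This replacement is what turns the $\log N$ of the original statement into the $\log K$ of the proposition.

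The second step is the concentration estimate for a fixed increment $\Delta^{(i,m)}$. Each entry is supported on $C_i$ and bounded by $30\cdot 2^{-(i+m)/2}$, so its variance is at most $2^{-(i+m)}|C_i|/N$. The multiplicative Chernoff bound (Proposition~\ref{prop: chernoff-bound}), applied in the variance-sensitive form, then gives
\begin{align}
\mathbb{E}_{j\in Q}[\Delta^{(i,m)}_j]\approx_{0,b}\mathbb{E}_{j\in[N]}[\Delta^{(i,m)}_j], \qquad b=O\big(\tilde\epsilon\, 2^{-i}|C_i|/N+\tilde\eta\big).
\end{align}
The failure probability is roughly $\exp(-\Omega(q\tilde\epsilon\tilde\eta 2^{m-i}))$. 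We choose $\tilde\epsilon=\epsilon/r$ and $\tilde\eta=\eta/(tr)$ and take a union bound over $|\mathcal{D}_{i,m}|$ and over the $t(r+1)$ pairs $(i,m)$. With $q=O(\epsilon^{-1}\eta^{-1}\max(\log K\log^2(1/\epsilon)\log(1/\eta),\log(1/\delta)))$, the $2^m$ in the exponent of the net size is absorbed, and the total failure probability is below $\delta$.

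The third step combines the approximation lemmas stated above. On good indices, $|(\Phi x)_j|^2\approx_{O(\epsilon),O(\eta)}\sum_i\sum_m\Delta^{(i,m)}_j$ holds over both $Q$ and $[N]$. Bad indices form at most a $\gamma$-fraction by Maurey's method, and we take $\gamma=\eta$; their contribution is bounded by $O(\eta)$ since every entry is bounded. We then sum the concentration errors. The accumulated multiplicative terms $\sum_i\tilde\epsilon 2^{-i}|C_i|/N$ are controlled by the lemma $\mathbb{E}_{j\in[N]}|(\Phi x)_j|^2\ge\sum_i 2^{-i}|C_i|/N-\eta$, which converts them into $O(\epsilon)\,\mathbb{E}_{j\in[N]}|(\Phi x)_j|^2+O(\eta)$. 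Chaining with Lemma~\ref{lemma: approximation-compositions-multiplication} yields $\approx_{\epsilon,\eta}$ after rescaling the constants, and restoring $\|x\|_1^2\|\Phi\|_\infty^2$ gives the proposition.

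I expect the main obstacle to be the second step. A naive Hoeffding bound loses the factor $2^{-i}|C_i|/N$ that makes the telescoping multiplicative error sum converge. Balancing the net size $K^{2^m}$ against the exponent $2^{m-i}$ must also be done carefully, so that the final $q$ carries only $\log K$ and $\log^2(1/\epsilon)$ rather than extra factors of $1/\eta$. I would handle this first by following Haviv--Regev's Lemma on the $\Delta$ vectors verbatim, substituting $K$ for $N$ in the count.
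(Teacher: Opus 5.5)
Your route is the paper's route. Its proof only sketches the Haviv--Regev chaining argument with the Maurey nets restricted to columns in $\mathcal{K}_{\mathscr{M}}$, and your Steps 1 and 3 match that sketch; you are more careful than the paper's $\Delta$-concentration lemma, which still carries $\log N$.

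The problem is in Step 2, the step you flag yourself. A tail of $\exp(-\Omega(q\tilde{\epsilon}\tilde{\eta}\,2^{m-i}))$ cannot beat a net of size $K^{O(2^m\log(1/\gamma))}$ unless $q$ also absorbs a factor $2^i$, which reaches $2^t=1/\eta$. So ``the $2^m$ is absorbed'' does not follow from what you wrote. The exponent is in fact $2^m$, and your own variance bound gives it. Use Bernstein here: Proposition~\ref{prop: chernoff-bound} is stated for $\{0,1\}$ variables, while $\Delta^{(i,m)}$ is signed. Set $\sigma^2\le 900\cdot 2^{-(i+m)}|C_i|/N$, $B=30\cdot 2^{-(i+m)/2}$ and $b=\tilde{\epsilon}\,2^{-i}|C_i|/N+\tilde{\eta}$.
\begin{itemize}
\item AM--GM gives $b^2\ge 4\tilde{\epsilon}\tilde{\eta}\,2^{-i}|C_i|/N$, hence $b^2/\sigma^2=\Omega(\tilde{\epsilon}\tilde{\eta}\,2^m)$.
\item $b/B\ge\tilde{\eta}\,2^{(i+m)/2}/30\ge\tilde{\epsilon}\tilde{\eta}\,2^m/30$, because $m-i\le r$ gives $2^{(i-m)/2}\ge 2^{-r/2}=\epsilon\ge\tilde{\epsilon}$. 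This is exactly where $r=\log_2(1/\epsilon^2)$ is used.
\end{itemize}
So the tail is $\exp(-\Omega(q\tilde{\epsilon}\tilde{\eta}\,2^m))$, which cancels the $2^m$ in $\log|\mathcal{D}_{i,m}|$.

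Even with this fix, your parameters do not give the $q$ you state. The union bound needs $q\tilde{\epsilon}\tilde{\eta}\gtrsim\max(\log K\log(1/\gamma),\log(1/\delta))$. With $\tilde{\epsilon}=\epsilon/r$, $\tilde{\eta}=\eta/(tr)$ and $\gamma=\eta$ this means
\[ q=O\bigl(\epsilon^{-1}\eta^{-1}\log^2(1/\epsilon)\log(1/\eta)\max(\log K\log(1/\eta),\,\log(1/\delta))\bigr). \]
That is $\log^2(1/\eta)$ rather than $\log(1/\eta)$, and the $\log(1/\delta)$ branch carries the same polylogarithmic prefactor. One factor of $\log(1/\eta)$ comes from splitting $\tilde{\eta}$ over the $t(r+1)$ pairs. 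The other comes from $\gamma\lesssim\eta$ in the net size, which is needed because bad indices cost $O(\gamma)$. Within this argument neither can be avoided: splitting $\tilde{\epsilon},\tilde{\eta}$ unevenly across levels does not help, by Cauchy--Schwarz. So you cannot close the proposition as literally stated without an extra idea.

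The paper's sketch has the same discrepancy. Its own Theorem~\ref{thm: rip-bos-rowsampled-refined} carries $\log^2(K/\epsilon)$, which is $\log^2(1/\eta)$ at $\eta=\epsilon/K$. That is consistent with the count above, not with the proposition's single $\log(1/\eta)$. The clean fix is to state and prove the proposition with the $q$ displayed above; its main term is exactly what that theorem uses.
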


 Utilizing lemma~\ref{lemma: rip-restatement} where we can set $\eta = \Omega(\epsilon/K)$,

 \begin{theorem}[RIP for row-subsampled BOS]\label{thm: rip-bos-rowsampled-refined}
     Let $\Phi \in \mathbb{C}^{N \times N}$ be a unitary (BOS) matrix with BOS constant $\|\Phi\|_\infty = O(1/\sqrt{N})$. For a sufficiently small $\epsilon > 0$ and some $q$,
    \begin{align}
        q = O(\epsilon^{-2} K \max( \log^2(K/\epsilon) \log^2(1/\epsilon) \log(K)), \log(1/\delta)).
    \end{align}
    The following holds. Let $A \in \mathbb{C}^{q \times N}$ be a matrix whose $q$ rows are chosen uniformly and independently from the rows, rescaled by $\sqrt{N/q}$. Then with probability at least $1 - 2^{\Omega( \log K \log(K/\epsilon) )}$, we have that $A$ satisfies the $K+1$th-order restricted isometry property with RIP constant $\epsilon$ on the supports of $\mathcal{K}_{\mathscr{M}} \cup \{ I \} $ with $K = |\mathcal{K}_{\mathscr{M}}|$. 
   
\end{theorem}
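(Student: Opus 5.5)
The plan is to deduce Theorem~\ref{thm: rip-bos-rowsampled-refined} from Proposition~\ref{prop: rip-bos-rowsampled-refined} combined with Lemma~\ref{lemma: rip-restatement}, in the same way Theorem~\ref{thm: rip-bos-naive-proof} was deduced from Proposition~\ref{prop: rip-bos-naive-proof}. Fix the support $\mathcal{K}_{\mathscr{M}} \cup \{I\}$, which has $K+1$ elements. For every $x$ supported there, Proposition~\ref{prop: rip-bos-rowsampled-refined} gives
\begin{align}
    \mathbb{E}_{j \in Q}[|(\Phi x)_j|^2] \approx_{\epsilon/2,\, \eta \|x\|_1^2 \|\Phi\|_\infty^2} \mathbb{E}_{j \in [N]}[|(\Phi x)_j|^2].
\end{align}
We then convert the additive term into a relative one. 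Cauchy--Schwarz on the support gives $\|x\|_1^2 \le (K+1)\|x\|_2^2$. The BOS constant gives $\|\Phi\|_\infty^2 = O(1/N)$. Unitarity gives $\mathbb{E}_{j\in[N]}|(\Phi x)_j|^2 = \|x\|_2^2/N$. Hence the additive error is at most $O(\eta K)\cdot \mathbb{E}_{j\in[N]}|(\Phi x)_j|^2$. Choosing $\eta = c\,\epsilon/K$ with a small constant $c$ makes this at most $(\epsilon/2)\,\mathbb{E}_{j\in[N]}|(\Phi x)_j|^2$. This puts us exactly in hypothesis (1) of Lemma~\ref{lemma: rip-restatement}, which yields $\|Ax\|_2^2 \approx_{\epsilon,0}\|x\|_2^2$ for the rescaled matrix $A=\sqrt{N/q}\,R_Q\Phi$.

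Next we substitute $\eta=\Theta(\epsilon/K)$ into the sample bound of Proposition~\ref{prop: rip-bos-rowsampled-refined}. The prefactor $\epsilon^{-1}\eta^{-1}$ becomes $\epsilon^{-2}K$. The factor $\log(1/\eta)$ becomes $\log(K/\epsilon)$. Together with the $\log K\,\log^2(1/\epsilon)$ already present, this gives $q = O(\epsilon^{-2}K\max(\log^2(K/\epsilon)\log^2(1/\epsilon)\log K,\log(1/\delta)))$, up to absorbing one extra $\log(K/\epsilon)$ as slack. The failure probability is inherited from the proposition. To get the stated form $2^{-\Omega(\log K\log(K/\epsilon))}$, take $\delta$ equal to that quantity; then $\log(1/\delta)$ is dominated by the first term of the max.

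The real content, and the main obstacle, is Proposition~\ref{prop: rip-bos-rowsampled-refined} itself. That proposition is only sketched, so I would verify its adaptation from \cite{haviv2015restrictedisometrypropertysubsampled} in three steps.

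\begin{enumerate}
    \item \emph{Nets drawn from the fixed support.} The point of using $\mathcal{G}^{\mathcal{K}_{\mathscr{M}}}_i$ is that their cardinalities are at most $K^{O(2^i\log(1/\gamma))}$ instead of $N^{O(2^i\log(1/\gamma))}$. In the union bound over the $\Delta^{(i,m)}$ sets this replaces $\log N$ by $\log K$, and it is where the $\log K$ (rather than $\log N$) dependence comes from.
    \item \emph{Maurey's empirical method.} Most indices in $Q$ must be good when the samples $(\ell,s)$ are drawn only from $\mathcal{K}_{\mathscr{M}}\times\{0,1,2,3\}$. This is legitimate because the distribution $\mathcal{D}$ is supported on $\mathrm{supp}(x)\subseteq\mathcal{K}_{\mathscr{M}}$.
    \item \emph{Telescoping.} Summing the per-level Hoeffding errors $O(\tilde\epsilon 2^{-i}|C_i|/N+\tilde\eta)$ over $i\le t=\log_2(1/\eta)$ and $m\le i+r$ with $r=\log_2(1/\epsilon^2)$ must stay within $(\epsilon,\eta)$. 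Using item 1 of the last lemma, $\sum_i 2^{-i}|C_i|/N \le \mathbb{E}|\Phi x|^2+\eta$, this works with $\tilde\epsilon=\epsilon/(tr)$ and $\tilde\eta=\eta/(tr)$. Those choices produce the $\log^2(1/\epsilon)\log(1/\eta)$ factors in $q$.
\end{enumerate}

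I expect this bookkeeping of the $tr$ union and error sum to be the delicate step. I would follow the original paper line by line and only replace $N$ by $K$ in the net counts.
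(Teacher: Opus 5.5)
Your reduction is the paper's proof. The paper obtains Theorem~\ref{thm: rip-bos-rowsampled-refined} by combining Proposition~\ref{prop: rip-bos-rowsampled-refined} with Lemma~\ref{lemma: rip-restatement} and setting $\eta=\Theta(\epsilon/K)$. Your conversion of the additive term via $\|x\|_1^2\le (K+1)\|x\|_2^2$, $\|\Phi\|_\infty^2=O(1/N)$ and $\mathbb{E}_{j\in[N]}|(\Phi x)_j|^2=\|x\|_2^2/N$, the substitution into the sample bound, and the choice of $\delta$ are all correct given the proposition as printed.

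What does not hold up is the accounting in your step~3, where you plan to verify the proposition.
\begin{itemize}
\item With $\tilde\epsilon=\epsilon/(tr)$ and $\tilde\eta=\eta/(tr)$, the prefactor is $\tilde\epsilon^{-1}\tilde\eta^{-1}=\epsilon^{-1}\eta^{-1}t^2r^2$, not $\epsilon^{-1}\eta^{-1}tr^2$.
\item You also drop the $\log(1/\gamma)$ factor that the per-level Hoeffding lemma carries through the net cardinalities. Here $\gamma\lesssim\eta$ is forced, so that the bad indices cost only $O(\eta\|x\|_1^2\|\Phi\|_\infty^2)$.
\item Done your way, after substituting $\eta=\Theta(\epsilon/K)$ the count is $\epsilon^{-2}K\log^2(1/\epsilon)\log^3(K/\epsilon)\log K$. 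This misses the stated $q$.
\end{itemize}
The fix is to take $\tilde\epsilon=\Theta(\epsilon/r)$. Item~1 of the last lemma already bounds $\sum_i 2^{-i}|C_i|/N$ over all levels at once, so only the $r+1$ values of $m$ per level share the relative budget. You still need $\tilde\eta=\Theta(\eta/(tr))$. This gives $\epsilon^{-1}\eta^{-1}\log^2(1/\epsilon)\log(1/\eta)\log(1/\gamma)\log K=\epsilon^{-1}\eta^{-1}\log^2(1/\epsilon)\log^2(1/\eta)\log K$. So the proposition as printed is one $\log(1/\eta)$ too optimistic, and the theorem's $\log^2(K/\epsilon)$ is exactly what the argument needs rather than slack.

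You can also bypass the chaining altogether, because the support $\mathcal{K}_{\mathscr{M}}\cup\{I\}$ is fixed and the Haviv--Regev machinery exists to get uniformity over all supports. Let $\phi_j\in\mathbb{C}^{K+1}$ be the conjugate transpose of the $j$th row of $\Phi$ restricted to this support. The restricted Gram matrix of $A$ is $\frac{1}{q}\sum_{j\in Q}N\phi_j\phi_j^\dagger$. Its summands have mean $I_{K+1}$ and norm at most $N(K+1)\|\Phi\|_\infty^2=O(K)$. Matrix Chernoff then gives the RIP with $q=O(\epsilon^{-2}K\log(K/\delta))$, which implies the stated bound.
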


\section{Learning logical information from syndrome data}\label{sec: learning-logical-information-from-syndrome}

Here, we give a self-consistent treatment of learning logical information from the syndrome distributions. The learnability result presented in Section~\ref{sec: learnability-general} is based on the classification of the symmetry of the system, and we now explain how these can be applied to various practical settings. For simplicity of discussion, we will use $\mathcal{A}$ denote for either the Pauli group or the spacetime Pauli group. We will subsume the conventional notations of Pauli channels $\Lambda$ with Pauli eigenvalues $\lambda_a$, $a \in \mathcal{A}$ throughout the rest of appendix.

\subsection{Learning with data-syndrome and subsystem codes}\label{sec: ds-sub-system code}

In this section, we review the learning scheme discussed in \cite{Wagner_2022, Wagner_2023}, focusing on some standard generalizations of the quantum stabilizer codes. It serves as a route towards considering learning logical information from a spacetime quantum error-correcting circuit. In many applications, the logically equivalent errors are defined with respect to some \emph{gauge} transform, which can be naturally described in the subsystem code. 

\begin{definition}[Subsystem code]
     A quantum stabilizer subsystem code consists of  $\mathcal{G} \subset \mathrm{P}^{n}$, referred to as the \emph{gauge group} and its center $\mathcal{Z}(\mathcal{G}) = \mathcal{S} = \mathcal{G}^\perp \cap \mathcal{G}$ is the stabilizer. The \emph{dressed} logical operators are defined by $\mathcal{S}^\perp / \mathcal{\mathcal{G}}$ and the \emph{bare} logical operators $\mathcal{G}^\perp / \mathcal{S}$. The distance of the subsystem code is defined by the minimal weight of groups in $\mathcal{S}^\perp / \mathcal{G}$. 
\end{definition}

The subsystem code is a $[[n, k, g, d]]$ code with additional parameter $g$ denotes the dimension of the gauge space. The gauge space is given by $\mathcal{G} / \mathcal{S}$. We now provide parameter counting. Let's denote $\mathcal{S} =  \langle Z_1, \cdots, Z_r \rangle $, and 
\begin{align}
    \mathcal{S}^\perp = \langle i, Z_1, \cdots, Z_n, X_{r+1}, \cdots X_{n} \rangle. 
\end{align}
The gauge group $\mathcal{G}$ can be chosen from $\langle i, Z_{1}, \cdots, Z_{r}, \cdots, Z_{r+g}, X_{r+1}, \cdots, X_{r+g} \rangle $. Then we have that the dress logical operators $\mathcal{G}^\perp / S \cong  \langle i, Z_{r+g+1}, \cdots, Z_{n}, X_{r+g+1}, \cdots X_{n} \rangle \cong \mathcal{S}^\perp / \mathcal{G} $. Hence the logical dimension of the dressed and bare logical operators is the same. Denote the \emph{bare} distance $d_{\operatorname{bare}}$ the minimal weight of $\mathcal{G}^\perp / \mathcal{S}$ and the \emph{dressed} distance $d_{\operatorname{dress}}$ by the minimal weight of $\mathcal{S}^\perp / \mathcal{G}$. It is easy to see that $d_{\operatorname{bare}} \geq d_{dress} = d$. In this case, we define $\mathcal{A} = \mathbb{F}^{2n}_2$ and associate the bilinear form $\beta$ given by the standard symplectic inner product. Let the quantum subsystem stabilizer code with gauge group $\mathcal{G}$ and $\mathcal{S} = \mathcal{G}^\perp \cap \mathcal{G}$ with $|\mathcal{S}| = r$. Let $\sigma: \mathcal{A} \cong \mathbb{F}^{2n}_2 \rightarrow \mathbb{F}^r_2$ denote the syndrome map. A simple adaptation of Theorem~\ref{thm: learn-all-distribution-from-syndrome} indicates that, 

\begin{corollary}
     The Pauli channel $\Lambda \in \mathbb{R}[\mathbb{F}^n_2]$ such that all its Pauli eigenvalues are greater than $0$ is learnable from the measurement of the syndrome if and only if for any $a \in \mathcal{K}$ and $a' \in \mathcal{K}$, $\sigma(a) \neq \sigma(a')$. 
\end{corollary}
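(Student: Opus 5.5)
We reduce the statement to Theorem~\ref{thm: learn-all-distribution-from-syndrome}, with the subsystem stabilizer group $\mathcal{S}$ playing the role of $\mathscr{M}$. Take $\mathcal{A} = \mathbb{F}^{2n}_2$ with the symplectic form $\beta$, which is symmetric and non-degenerate. By Lemma~\ref{lemma: fourier-character-basis-sign}, the hypothesis that all Pauli eigenvalues are strictly positive lets us write $\Lambda$ in the character basis Eq.~\eqref{eq: pauli-character-fourier}. All coefficients satisfy $q_a<1/2$, so $x_a=-\log(1-2q_a)$ is well defined. Taking logarithms of the measured eigenvalues $\lambda_s$, $s\in\mathcal{S}$, gives the linear system $-\log\lambda_s=\sum_{a\in\mathcal{K}}\beta(s,a)\,x_a$. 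The matrix of this system is $D_{\mathcal{S}}$. Learnability of every $\lambda_b$ is then equivalent to unique solvability for $x$. Once $x$ is known, $\lambda_b=\prod_{a\in\mathcal{K}}(1-2q_a)^{\beta(b,a)}$ is determined for every $b\in\mathcal{A}$.

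\emph{Sufficiency.} Assume $\sigma(a)\neq\sigma(a')$ for all distinct $a,a'\in\mathcal{K}$. We also need $\sigma(a)\neq 0$ for every $a\in\mathcal{K}$. This is implicit in the corollary, since $\mathcal{K}_{\mathscr{M}}$ is defined to consist of nonzero distinct syndromes, and we will state it explicitly. Then Lemma~\ref{lemma: learn-all-distribution-from-syndrome} applies. It identifies $\mathcal{S}\cong\mathbb{F}^r_2$ via $\phi$ and $\mathcal{A}/\mathcal{S}^\perp\cong\mathbb{F}^r_2$ via the syndrome map $\psi$. Under these identifications $\beta$ becomes a non-degenerate form $\gamma$ on $\mathbb{F}_2^r$. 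The columns of $D_{\mathcal{S}}$ are therefore distinct nonzero columns of the full-rank matrix of $\gamma$, which has the identity removed. Hence $D_{\mathcal{S}}$ has full column rank and $x$ is unique.

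\emph{Necessity.} Suppose $\sigma(a)=\sigma(a')$ for some distinct $a,a'\in\mathcal{K}$. Then $D_{\mathcal{S}}$ has two identical columns. Perturb $x_a\mapsto x_a+t$ and $x_{a'}\mapsto x_{a'}-t$ with $t$ small, so that both $q$'s stay below $1/2$. This leaves every $\lambda_s$, $s\in\mathcal{S}$, unchanged. It also yields a valid distribution, since each factor is close to the original and the positive probabilities $p_c$ of the original channel stay positive; if some $p_c=0$ we instead argue only non-identifiability of the $\lambda$'s. Now $a\neq a'$ and $\beta$ is non-degenerate, so some $b$ has $\beta(b,a)\neq\beta(b,a')$, and this $\lambda_b$ changes by the factor $e^{\mp t}$. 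Two channels with identical syndrome data thus differ at $b$, so $\lambda_b$ is not learnable.

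The main obstacle is the validity of the perturbed channel in the necessity direction. The convolution of factors with possibly negative $q$ need not stay a probability distribution. We would handle this by perturbing only within an open neighborhood, using continuity of $p$ in $x$ through the inverse Fourier transform. Alternatively, we would read learnability as identifiability within the parametrized family, as Theorem~\ref{thm: learn-all-distribution-from-syndrome} does, which makes the rank argument alone sufficient.
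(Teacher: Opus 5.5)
Your proposal is correct and follows the paper's route. The corollary is Theorem~\ref{thm: learn-all-distribution-from-syndrome} specialized to $\mathcal{A}=\mathbb{F}_2^{2n}$ with the symplectic form and $\mathscr{M}=\mathcal{S}$; the column-rank characterization of Lemma~\ref{lemma: learn-all-distribution-from-syndrome} does the work, and learnability is read, as in the paper, as unique solvability for $x$ within the parametrization. Your explicit nonzero-syndrome requirement (which the paper encodes as $\mathcal{K}_{\mathscr{M}}=\mathcal{K}$ but the corollary's wording drops) and your perturbation argument exhibiting a $b$ with $\beta(b,a)\neq\beta(b,a')$ are sound refinements of steps the paper leaves implicit; the zero-syndrome case of necessity, which you did not argue, follows the same way by perturbing a single zero column.
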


The effective distribution gives the logically equivalent error and in this case can be written as 

\begin{align}\label{eq: eff-distribution-subsystem-code}
    p^{\operatorname{eff}}_{a} = \frac{1}{|\mathcal{\mathcal{G}}|} \sum_{g \in \mathcal{G}} p_{ag}; \quad \lambda^{\operatorname{eff}}_{a} = \begin{cases}
      \lambda_{a}  & \text{if  } a \in \mathcal{\mathcal{G}^\perp} \\
      0 & \text{otherwise}.
    \end{cases}
\end{align}
The gauge transformation defines the following equivalence relation $a \sim_{\mathcal{G}} a'$ if there exists some gauge operator $g \in \mathcal{G}$ such that $a = a'g$. 

\begin{corollary}[Logical learnability for subsystem code]
     Given the Pauli channel $\Lambda \in \mathbb{R}[\mathbb{F}^{2n}_2]$ such that all its Pauli eigenvalues are greater than $0$, then the effective distribution Eq.~\ref{eq: eff-distribution-subsystem-code} is learnable if and only if for any $a \in \mathcal{K}$ and $a' \in \mathcal{K}$, $\sigma(a) = \sigma(a') \iff a \sim_{\mathcal{G}} a' $.
    \begin{proof}
        We can define $\mathcal{S} = \mathscr{M}$, $\mathcal{G} = \mathscr{G}$, and Boolean group $\mathcal{A} = \mathbb{F}^{2n}_2$ given from the bilinear form as the symplectic inner product, which is symmetric and nondegenerate. Hence, the result follows from Theorem~\ref{thm: learn-eff-distribution-from-syndrome}. 
    \end{proof}
\end{corollary}

The above discussion assumes the perfect measurement. However, in real settings, the measurement can be faulty. Denote the number of data qubits by $n$ and the number of ancillas by $m$. We denote $N = 2n + m$ and the group $\mathcal{A} = \mathbb{F}^N_2$ and we write elements $e \in \mathbb{F}^N_2$ by a tuple $(e_d, e_a)$ and the following bilinear form $\beta(e_1, e_2) = \langle (e_1)_d, \overline{(e_2)_d} \rangle + \langle (e_1)_a, (e_2)_a \rangle_{\mathbb{F}_2} $ over the $\mathbb{F}_2$ arithmetic, where $e_d$ denotes the symplectic representation of Pauli operators on data qubits with $\overline{e_d}$ denote its conjugate (swap $X$s and $Z$s). A class of quantum error correcting codes that takes into account both information is called the the \emph{quantum data-syndrome} code~\cite{ashikhmin2020quantum}. 

\begin{definition}[Data-syndrome codes]
     Let's associate a $[m, r, d_m]$ classical linear code $\mathcal{C}$ with a canonical form 
   \begin{align}
      G=  \begin{pmatrix}
           I_{r}, \: J
       \end{pmatrix},
   \end{align}
   and a quantum stabilizer code $\mathcal{Q} = (\mathcal{S},  \mathcal{L} = \mathcal{S}^\perp)$ with $[[n, k, d]]$. Then the data-syndrome code is defined via 
   \begin{align}
      \widetilde{H} =  \begin{pmatrix}
           H & I_r & 0 \\
           J^TH & 0 & I_{m-r}
       \end{pmatrix},
   \end{align}
   where $H \in \mathbb{F}^{r \times 2n}_2$ the parity check matrix of the quantum stabilizer code. 
\end{definition}

Data-syndrome code generalizes the notion of repeating measuring syndromes and decodes via a majority vote and a natural notion of degeneracy.   For any error $e= (e_d, e_a) \in \mathbb{F}^N_2$, we denote the generalized syndrome map $\widetilde{\sigma}(e) = \widetilde{H}(e)$. 

\begin{proposition}\label{prop: ds-code-degeneracy}
    For any error $e = (e_d, e_a) \in \mathbb{F}^N_2$ there always exists $c \in \mathbb{F}^m_2$ such that $\widetilde{\sigma}(c + e_a) = \widetilde{\sigma}(e)$ and $c \in \mathcal{C}$. Furthermore, if $\widetilde{\sigma}(e_1) = \widetilde{\sigma}(e_2)$, then $e_1$ and $e_2$ are related by, 
    \begin{enumerate}
        \item \label{data-syndrome-degeneracy-condition-1} $(e_1)_a = (e_2)_a$ and $(e_1)_d = (e_2)_d + l $ for $l \in \mathcal{L}$. 
        \item \label{data-syndrome-degeneracy-condition-2} $(e_1)_a = (e_2)_a +c $ for some $c \in \mathcal{C}$ and $\widetilde{\sigma}((e_1)_d ) = \widetilde{\sigma}((e_2)_d ) + \widetilde{\sigma}(c ) $ .
    \end{enumerate}
    \begin{proof}
        To see this first part, let $\sigma(e_d) = H e_d$. It suffices to show that the classical codeword associated with $c = \sum_{i: \sigma(e_d)_i =1} g_i $ and we have that $\widetilde{\sigma}(e_d) = \widetilde{\sigma}(c)$, which implies the result. It is easy to observe that if $e_1$ and $e_2$ are related by the condition~\ref{data-syndrome-degeneracy-condition-1} or condition~\ref{data-syndrome-degeneracy-condition-2}, their syndrome must match. We now show that this is only the case. Suppose that $e, e' \in \mathbb{F}^N_2$ such that $\widetilde{\sigma}(e) = \widetilde{\sigma}(e')$. Then it must follow that $e = e' + l$ for some $l \in \ker \widetilde{H}$. Note that we have that 
        \begin{align}
            \ker \widetilde{H} = \ker \widetilde{H}' = \ker \begin{pmatrix}
           H & I_r & 0 \\
           0 & J^T & I_{m-r}
       \end{pmatrix} = \mathcal{L} \times \{0\} \cup \{ (e_d, c): c \in \mathcal{C}, c \neq 0,  \: \widetilde{\sigma}(c) = \widetilde{\sigma}(e_d) \}.
        \end{align}
        Note that the two conditions precisely correspond to these subspaces, which concludes the proof. 
    \end{proof}
\end{proposition}

The condition~\ref{data-syndrome-degeneracy-condition-2} for the above proposition~\ref{prop: ds-code-degeneracy} provides additional degeneracy than in the typical stabilizer codes, and this condition might reflect upon the error channel, $\Lambda \in \mathbb{R}[\mathbb{F}^N_2]$, again in the character basis Eq.~\eqref{eq: pauli-character-fourier}
\begin{align}
    \Lambda = \prod_{e \in \mathcal{K}}((1-q_e) + q_e \chi_e),
\end{align}
where we define the character with respect to the bilinear form $\chi_e = (-1)^{\beta(e, \cdot)}$. To apply the framework of the physical learnability, we need to ensure that syndromes uniquely correspond to errors in $\mathcal{K}$.

\begin{theorem}
    The Pauli channel $\Lambda \in \mathbb{R}[\mathbb{F}^N_2]$ such that all its Pauli eigenvalues are greater than $0$ is learnable from the measurement of the syndrome if and only if for any $e = (e_d, e_a) \in \mathcal{K}$ and $e'=(e'_d, e'_a) \in \mathcal{K}$, $\widetilde{\sigma}(e) \neq \widetilde{\sigma}(e') $. 
    \begin{proof}
        The proof follows from minimal adaptation from Theorem~\ref{thm: learn-all-distribution-from-syndrome} by setting $\mathcal{A} = \mathbb{F}^N_2$ with appropriate bilinear-form $\beta$, which is non-degenerate and symmetric. 
    \end{proof}
\end{theorem}

For quantum error-correcting applications, we are primarily concerned with learning the distribution of logical equivalence, which is still given by the effective distribution modulo the quantum stabilizers, 
\begin{align}\label{eq: eff-distribution-ds-code}
    p^{\operatorname{eff}}((e_d, e_a)) = \frac{1}{|\mathcal{S}|} \sum_{s \in \mathcal{S}} p((e_d + s, e_a)); \quad \lambda^{\operatorname{eff}}_{(e_d, e_a)} = \begin{cases}
      \lambda_{(e_d, e_a)}  & \text{if  } e_d \in \mathcal{L} \\
      0 & \text{otherwise}.
    \end{cases}
\end{align}
This gives the error rate of stabilizer-equivalent data error conditioned on the faulty-measurement outcomes given by $e_a$. Let $\widetilde{S} = \mathcal{S} \oplus \bvec{0}$ the trivial extension of the stabilizer group on the data and ancillary qubits so that $\widetilde{\mathcal{S}}^\perp = \mathcal{L} \oplus \mathbb{F}^m_2$, and the logical learnability suggests a minimal extension to the case of the stabilizer code, 

\begin{theorem}[Logical learnability for DS code]
     Given the Pauli channel $\Lambda \in \mathbb{R}[\mathbb{F}^N_2]$ such that all its Pauli eigenvalues are greater than $0$, then the effective distribution Eq.~\eqref{eq: eff-distribution-ds-code} is learnable if and only if for any $e = (e_d, e_a) \in \mathcal{K}$ and $e'=(e'_d, e'_a) \in \mathcal{K}$, $\widetilde{\sigma}(e) = \widetilde{\sigma}(e') \iff e_a = e'_a $ and $e_d = e'_d + s$ for $s \in \mathcal{S}$. 
\end{theorem}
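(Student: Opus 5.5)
The plan is to reduce this to Theorem~\ref{thm: learn-eff-distribution-from-syndrome}, applied with the Boolean group $\mathcal{A}=\mathbb{F}^N_2$, the data-syndrome bilinear form $\beta$ (symmetric and non-degenerate), and a suitable choice of gauge group $\mathscr{G}$ whose center is the measurement group. The natural choice is $\mathscr{G} = \widetilde{\mathcal{S}}\oplus$ (nothing on the ancilla), that is $\mathscr{G}=\mathcal{S}\oplus\bvec{0}$. Its orthogonal complement is $\widetilde{\mathcal{S}}^\perp=\mathcal{L}\oplus\mathbb{F}^m_2$, which is exactly the support set of $\lambda^{\operatorname{eff}}$ in Eq.~\eqref{eq: eff-distribution-ds-code}. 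The averaging in Eq.~\eqref{eq: eff-distribution-ds-code} is $p*1_{\mathscr{G}}/|\mathscr{G}|$, consistent with Eq.~\eqref{eq: eff-distribution}.

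The step I must handle with care is that the observable data are not $\lambda$ on $\mathscr{G}\cap\mathscr{G}^\perp=\mathcal{S}\oplus 0$. They are instead the eigenvalues on the row space $\mathscr{M}_{DS}$ of $\widetilde{H}$, and $\widetilde{\sigma}$ is defined through $\widetilde{H}$. I will therefore rerun the rank argument of the lemma preceding Theorem~\ref{thm: learn-eff-distribution-from-syndrome} with the pair $(\mathscr{M}_{DS}, \mathscr{G}^\perp)$, where $\mathscr{G}^\perp=\mathcal{L}\oplus\mathbb{F}^m_2$. The ingredients are the following.

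\begin{itemize}
\item[(i)] $\mathscr{M}_{DS}\subseteq\mathscr{G}^\perp$. Each row $(h,u)$ of $\widetilde{H}$ has $h$ in the row span of $H$, and $\beta$ pairs the data part symplectically, so it commutes with $\mathcal{S}\oplus0$.
\item[(ii)] $\mathscr{M}_{DS}^\perp=\ker\widetilde{H}$. This kernel was computed in Proposition~\ref{prop: ds-code-degeneracy}. Two faults therefore share a generalized syndrome iff they differ by an element of $\ker\widetilde{H}$.
\item[(iii)] The key identity is $(\mathscr{G}^\perp)^\perp\cap\mathscr{M}_{DS}^\perp=\mathscr{G}$, mirroring the non-degeneracy step in that lemma.
\end{itemize}

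With these in hand, the proof runs as follows. Parametrize $\Lambda$ in the character basis Eq.~\eqref{eq: pauli-character-fourier} and take logarithms, so that $-\log\lambda_b=\sum_{e\in\mathcal{K}}\beta(b,e)x_e$. The columns indexed by $\mathcal{K}$ with distinct nonzero $\widetilde{\sigma}$ form a full-column-rank block $A_{\mathscr{M}}$ by Lemma~\ref{lemma: learn-all-distribution-from-syndrome}, using the isomorphism $\mathcal{A}/\mathscr{M}_{DS}^\perp\cong\mathbb{F}_2^{\operatorname{rk}\widetilde{H}}$. Column-reducing the duplicates gives columns $e+e'\in\ker\widetilde{H}$.

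For sufficiency, if $e'=e+(s,0)$ with $s\in\mathcal{S}$, then the reduced column vanishes on all of $\mathscr{G}^\perp$. Hence every $\log\lambda_b$ with $b\in\mathscr{G}^\perp$ lies in the row span of $D_{\mathscr{M}}$, and exponentiating recovers $\lambda^{\operatorname{eff}}$.

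For necessity, suppose $\widetilde{\sigma}(e)=\widetilde{\sigma}(e')$ but $c=e+e'\notin\mathcal{S}\oplus0$. By (iii) there is $b\in\mathscr{G}^\perp$ with $\beta(b,c)=1$. Varying $q_e,q_{e'}$ while keeping their product fixed then leaves every $\lambda_m$ with $m\in\mathscr{M}_{DS}$ unchanged but changes $\lambda_b$, so $\lambda^{\operatorname{eff}}$ is not identifiable. This is the same perturbation argument as in Corollary~\ref{cor: main-prior-distribution-learn-logical-error}.

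The main obstacle is (iii), and specifically the elements of $\ker\widetilde{H}$ of the second type in Proposition~\ref{prop: ds-code-degeneracy}, namely $(e_d,c)$ with $c\in\mathcal{C}$ nonzero. I must show that such an element, as well as $(l,0)$ with $l\in\mathcal{L}\setminus\mathcal{S}$, is detected by some $b\in\mathcal{L}\oplus\mathbb{F}^m_2$. For $(e_d,c)$ with $c\neq0$, I would take $b=(0,u)$ with $\langle u,c\rangle=1$. For $(l,0)$, I would take a logical $b=(l',0)$ anticommuting with $l$. These two choices show that $\ker\widetilde{H}\cap(\mathscr{G}^\perp)^\perp=\mathcal{S}\oplus0$, which is exactly the stated condition $e_a=e'_a$ and $e_d=e'_d+s$.
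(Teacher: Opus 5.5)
Your argument is correct and uses the same row-space/rank mechanism as the paper. The paper gives no separate proof here; it presents the result as a ``minimal extension'' of Theorem~\ref{thm: learn-eff-distribution-from-syndrome} with $\mathcal{A}=\mathbb{F}_2^N$, the data-syndrome form $\beta$, and gauge group $\widetilde{\mathcal{S}}=\mathcal{S}\oplus\bvec{0}$. You go further by noticing that this invocation does not apply literally. In that theorem the observed group is the center $\widetilde{\mathcal{S}}\cap\widetilde{\mathcal{S}}^\perp=\mathcal{S}\oplus\bvec{0}$, whose syndrome is only the data syndrome of $e_d$. The measured group $\mathscr{M}_{DS}$ (the row space of $\widetilde{H}$) is not even contained in $\widetilde{\mathcal{S}}$, since every row has a nonzero ancilla part. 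Moreover, $\mathscr{G}^\perp=\mathcal{L}\oplus\mathbb{F}_2^m$ forces $\mathscr{G}=\mathcal{S}\oplus\bvec{0}$, so no admissible gauge group has $\mathscr{M}_{DS}$ as its center. By rerunning the rank lemma for an arbitrary observed subgroup $\mathscr{M}_{DS}\subseteq\mathscr{G}^\perp$, you obtain exactly the $\widetilde{\sigma}$-condition. This is the right way to read the paper's claim, and it shows the framework only needs the observed group to lie in $\mathscr{G}^\perp$ together with non-degeneracy of $\beta$, not $\mathscr{M}=\mathscr{G}\cap\mathscr{G}^\perp$.

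The step you call the main obstacle, (iii), is immediate. Condition (i) is equivalent to $\mathscr{G}\subseteq\mathscr{M}_{DS}^\perp$, and non-degeneracy gives $(\mathscr{G}^\perp)^\perp=\mathscr{G}$. So (iii) holds without the case analysis of $\ker\widetilde{H}$ from Proposition~\ref{prop: ds-code-degeneracy}. Your detectors $(0,u)$ and $(l',0)$ are correct; they just re-verify non-degeneracy in this instance.

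Two small points. First, in the necessity step the quantity to hold fixed is $(1-2q_e)(1-2q_{e'})$, i.e.\ $x_e+x_{e'}$, not $q_eq_{e'}$. Second, your sufficiency step, like the paper's general lemma, only treats columns with nonzero syndrome. An $e\in\mathcal{K}$ with $\widetilde{\sigma}(e)=0$ has a zero column in $D_{\mathscr{M}}$ but a nonzero column on $\mathscr{G}^\perp$ unless $e\in\mathcal{S}\oplus\bvec{0}$; an example is $e=(l,0)$ with $l\in\mathcal{L}\setminus\mathcal{S}$. The pair condition as stated, with $I\notin\mathcal{K}$, does not rule this out. You should state explicitly that the identity takes part in the comparison, i.e.\ $\widetilde{\sigma}(e)=0\Rightarrow e\in\mathcal{S}\oplus\bvec{0}$. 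This is a defect of the statement that the paper shares, not a flaw in your method.
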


Note that the data-syndrome code distance is given by $d = \min(d_n, d_m)$. Hence a natural sufficient condition for the parameterized Pauli channel, the Pauli error $e = (e_d, e_a)$ with $|e_a| \leq \lfloor d_n/2 \rfloor $ and $|e_d| \leq \lfloor d_m/2 \rfloor$. 


\subsection{Code-to-circuit isomorphism}\label{subsec: code-circuit-iso}
Prior to establishing a general learnability framework that incorporates the circuit-level faults, we construct a code-to-circuit isomorphism, which could be of independent interest. The circuit model we are interested in consists of Clifford unitaries and Pauli measurements that implements a base quantum (subsystem) code. 

\begin{definition}\label{def: circuit-model-from-subsystem-code}
    Let a quantum subsystem code be given with gauge group $\mathcal{G}$ and stabilizer subgroup $\mathcal{S}$ defined over $n$ qubits. Let the circuit that implements the quantum subsystem code, denoted $C_T(\mathcal{G})$, consists of
    \begin{enumerate}
        \item For each integer time $t=0, 1, \cdots, T$, we place $n$ data qubits, and $T$ is assumed to be an even integer. 
        \item For integers $t=0, \cdots, T$, we denote the \emph{measurement sequence} in implementing some (subsystem) code by subgroups $\mathcal{M}_{0.5}, \cdots, \mathcal{M}_{T+0.5}$, where each subgroup is generated by some checks we measure and $\mathcal{M}_{T+0.5} = \mathcal{S}$. Furthermore, all the measured checks altogether generate the gauge group
        \begin{align}
             \langle \mathcal{M}_{0.5}, \cdots, \mathcal{M}_{T+0.5} \rangle = \mathcal{G}.
        \end{align}
        \item Logical Clifford unitaries are compiled with $u = u_{(T, T-1)} \cdots u_{(1, 0)}$ such that they preserve the $\mathcal{G}$ as well as $\mathcal{S}$. Hence, for $t=0, \cdots, T-1$, 
        \begin{align}
            u_{(T, t)}(m_t) \in \mathcal{G}, \quad \forall m_t \in \mathcal{M}_{t+0.5}. 
        \end{align}
        
        \item The Pauli measurement gadgets and logical Clifford unitaries are implemented in half-integer steps such that they share disjoint supports, i.e.
        \begin{align}
            u_{(t+1, t)}(m_t) = m_t, \forall m_t \in \mathcal{M}_{t+0.5}, 
        \end{align}
        for $t=0, \cdots, T-1$. 
    \end{enumerate}
\end{definition}

In what follows and in the main text, we focus on a simpler model, the \emph{syndrome extraction circuits}, which implements a stabilizer group $\mathcal{S}$ (with $\mathcal{G} = \mathcal{S}$). 

\begin{definition}[Syndrome extraction Clifford circuits]\label{def: syndrome-extraction-circuit}
     Let $\mathcal{S}$ be a stabilizer group over $n$ qubits. The syndrome extraction circuit $\mathcal{C}_{T}(\mathcal{S})$ consists of 
    \begin{enumerate}
        \item For each integer time $t=0, 1, \cdots, T$, we place $n$ data qubits. 
        
        \item Logical Clifford unitaries compiled with $u = u_{(T, T-1)}u_{(T-1, T-2)} \cdots  u_{(1, 0)}$ that preserve $\mathcal{S}$. Hence, for $t=0, \cdots, T-1$, 
        \begin{align}
            u_{(T, t)}(m_t) \in \mathcal{S}, \quad \forall m_t \in \mathcal{M}_{t+0.5}. 
        \end{align}
        
        \item For $t=0, 1, \cdots, T-1$, the stabilizer measurements $\mathcal{M}_{t+0.5} \subseteq \mathcal{S}$ generated by independent stabilizer generators with disjoint supports. While the terminal stabilizer measurement $\mathcal{M}_{T+0.5}$ implements all stabilizer measurements $\mathcal{M}_{T+0.5} = \mathcal{S}$. 
    \end{enumerate}
\end{definition}

Note that in all cases we have $\mathcal{M}_{T+0.5} = \mathcal{S}$, which violates the disjoint property. In what follows, we show that it is due to the assumption that the terminal syndrome measurements are perfect. Furthermore, in many practical cases, there is no need to include $\mathcal{M}_{T+0.5}$ in the measurement sequence (that is, $\mathcal{M}_{T+0.5} = I$), if the previous measurements already generate the stabilizer group $\mathcal{S}$. This can be seen in various examples below. In what follows, we detail how to map $C_T(\mathcal{G})$ ($C_T(\mathcal{S})$) into subsystem code, establishing the circuit-to-code isomorphism. 

\subsubsection*{Pauli frame invariance}

Let's describe the basic setting and a fundamental invariance.

\begin{definition}[Spacetime Pauli group]
     A spacetime Pauli group for a Clifford circuit is given by $\mathcal{A} : = \mathcal{A} \cong  \otimes^T_{i=0} \mathrm{P}^{n}$, where $a \in \mathcal{A}$ we can represent it by $a = \eta_T(a) \eta_{T-1}(a) \cdots \eta_0(a)$, modulo the phase. 
\end{definition}
For contextual consistency, we simply denote the spacetime Pauli group to be $\mathcal{A}$, where we unify the learnability of circuits and codes on the same footing, discussed in Section~\ref{sec: learnability-general}. A key notion we will introduce is a (Pauli) frame transformation. Let $a, b \in \mathcal{A}$, we denote the inner product (symmetric, nondegenerate bilinear form) $\beta(\cdot, \cdot) \equiv \langle \cdot, \cdot \rangle_{\mathcal{A}}$, inherited from the inner product of the base Pauli group $\mathrm{P}^n$
\begin{align}
    \langle a, b \rangle_{\mathcal{A}} := \sum^T_{i=0} \langle \eta_i(a), \eta_i(b) \rangle. 
\end{align}
In what follows, we make an abuse of notation by dropping the subscript $\mathcal{A}$ when referring to the spacetime inner product $\langle \cdot, \cdot \rangle = \langle \cdot \cdot \rangle_{\mathcal{A}}$. 

\begin{definition}[Forward and backward propagation]
     For any element $a \in \mathcal{A}$ and a Clifford circuit. Denote ${u}_{(t, i)} = {u}_{(t, t-1)} \circ \cdots \circ u_{(i+1, i)}$ for $t \geq i +1$, given as the Clifford gate operations, otherwise, it is an identity. Then the forward propagation is defined by 
    \begin{align}
        \eta_{t}(\overrightarrow{a}) = \prod^{t}_{i=0} u_{(t, i)}(\eta_{i}(a)).
    \end{align}
    The backward propagation is defined by 
    \begin{align}
       \eta_{t}( \overleftarrow{a}) = \prod^{T}_{i=t} u^{-1}_{(i, T)} (\eta_{i}(a)).
    \end{align}
\end{definition}
The forward and backward propagation serve as a circuit automorphism. Whenever the context is clear, we always write the time $\eta_t(\overleftarrow{a}) = \overleftarrow{a}(t)$ for simplicity.

\begin{proposition}[Corollary 1 and Proposition 3 in Ref.~\cite{delfosse2023spacetimecodescliffordcircuits}]\label{prop: spacetime-cumulant-automorphism}
    The propagation operators have the following properties, 
    \begin{itemize}[label=--]
        \item The propagation operations are automorphism within $\mathcal{A}$ which satisfies
        \begin{align}\label{eq: cumulant-automorphism-spacetime}
            \overleftarrow{ab} = \overleftarrow{a}\overleftarrow{b}; \quad  \overrightarrow{ab} = \overrightarrow{a}\overrightarrow{b}, 
        \end{align}
        for  $a, b \in \mathcal{A}$. 

        \item The forward and backward propagation operators are related by, 
        \begin{align}
            \langle \overleftarrow{a}, b \rangle  = \langle a, \overrightarrow{b}\rangle. 
        \end{align}
    \end{itemize}
    \begin{proof}
    We restate the proof presented in Ref.~\cite{delfosse2023spacetimecodescliffordcircuits} for completeness. We first show that the propagation operators are automorphisms.
       
        \begin{align}
            \begin{aligned}
                \overleftarrow{a}(t)\overleftarrow{b}(t) &= \prod^t_{i=0} u_{(t, i)}(a(i))\prod^t_{j=0}u_{(t, j)}(b(j)) \\
                &= \pm \prod^t_{i=0} u_{(t, i)} (a(i)b(i)),
            \end{aligned}
        \end{align}
        where we always quotient the phase when we define the Pauli group $\mathcal{A}$. Then we can treat the forward- and backward- propagation operators as linear maps of $\mathcal{A}$ over the symplectic representation. In this case, the forward- and backward- propagation operators are represented in lower (upper) triangular matrix respectively with identity in diagonal, which is invertible over $\mathbb{F}_2$ arithmetic. For the second part, 
        \begin{align}
            \begin{aligned}
                \langle \overleftarrow{a}, b \rangle &= \langle \prod^{T}_{t=0}\prod^t_{i=0} u_{(t, i)}(a(i), b \rangle = \sum^T_{t=0} \sum^t_{i=0} \langle u_{(t, i)}(a(i)), b(t) \rangle \\
                &= \sum^T_{t=0} \sum^t_{i=0} \langle u^{-1}_{(t, i)}\circ u_{(t, i)}(a(i)), u^{-1}_{(t, i)}(b(t)) \rangle \\
                &=  \sum^T_{t=0} \sum^T_{t=i}\langle u^{-1}_{(t, i)}\circ u_{(t, i)}(a(i)), u^{-1}_{(t, i)}(b(t)) \rangle \\
                &= \sum^T_{i=0} \langle u^{-1}_{(t, i)}\circ u_{(t, i)}(a(i)), \prod^T_{t=i}u^{-1}_{(t, i)}(b(t)) \rangle \\
                &= \sum^T_{i=0} \langle u^{-1}_{(t, i)}\circ u_{(t, i)}(a(i)), \overrightarrow{b}(i) \rangle \\
                &= \langle a,  \prod^{T}_{i=0}\overrightarrow{b}(i) \rangle = \langle a, \overrightarrow{b} \rangle,
            \end{aligned}
        \end{align}
        as desired. 
    \end{proof}
\end{proposition}

One can view the forward and backward propagation as follows. In the Pauli laboratory frame, a circuit fault $b$ can be detected by $s^{(t)} \in \mathcal{M}_{t+0.5}$ if $\langle \eta_t(s^{(t)}), \overrightarrow{b} \rangle = 1 $. This is equivalent to boost to the fault rest frame for which we have that  $\langle \eta_t(s^{(t)}), \overrightarrow{b} \rangle = 1 = \langle \overleftarrow{\eta_t(s^{(t)})}, b \rangle $. This invariance is analogous to the $4$-momentum conservation and the frame transformation to the Lorentz transformation for quantum field theories. 

\subsubsection*{Code-to-circuit isomorphism}

 Let $(\mathcal{G}, \mathcal{S})$ be the base subsystem code. We denote the base bare logical subspace by $\mathcal{L}_{\operatorname{bare}}$. For any time $t+1, t$, for $T -1 \geq t \geq 0$ we denote the Pauli-transport of $a \in \mathrm{P}^{n}$
\begin{align}\label{eq: time-transport-Pauli}
    g(a, t) = \eta_{t+1}(u_{(t+1, t)}(a)) \eta_t(a).
\end{align}
 Denote by $C(\mathcal{M}_{t+0.5})$ the set of all elements in $\mathrm{P}^{n}$ that commute with elements in $\mathcal{M}_{t+0.5}$. A useful relation can be given within the $n$-qubit Pauli group $\mathrm{P}^n$. 

\begin{lemma}[Lemma 12 in Ref.~\cite{bacon2017sparse}]\label{lemma: double-commutant-phase}
    Let $A, B \subseteq \mathrm{P}^{n}$ be some subgroup. Then we have that 
    \begin{enumerate}
        \item $C(A B ) = C(A) \cap C(B)$. 
        \item $C(C(A)) \supset A$ and $C(C(A)). = \langle A, \pm I, \pm i  \rangle$. That is the double commutant of $A$ equals to the subgroup generated by $A$ and phases. 
    \end{enumerate}
\end{lemma}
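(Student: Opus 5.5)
The plan is to pass to the symplectic representation of $\mathrm{P}^n$, where commutation becomes orthogonality with respect to a nondegenerate form. Part 2 then reduces to the standard double-orthogonal-complement identity for subspaces of $\mathbb{F}_2^{2n}$.

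\textbf{Setup.} Let $\pi:\mathrm{P}^n\to\mathbb{F}_2^{2n}$ send a Pauli operator to its symplectic vector. This is a surjective group homomorphism whose kernel is exactly the phase subgroup $\{\pm I,\pm iI\}$. Two Pauli operators $a,b$ commute if and only if $\langle\pi(a),\pi(b)\rangle=0$, where $\langle\cdot,\cdot\rangle$ is the symplectic form. For a subgroup $A\le \mathrm{P}^n$ write $\bar A=\pi(A)$, which is an $\mathbb{F}_2$-subspace since every element of $\mathbb{F}_2^{2n}$ has order two. Directly from the definitions,
\begin{align}
C(A)=\pi^{-1}\big(\bar A^{\perp}\big).
\end{align}
In particular $C(A)$ always contains all phases, because phases commute with everything.

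\textbf{Part 1.} I will prove $C(AB)=C(A)\cap C(B)$ by double inclusion.
\begin{itemize}[label=--]
\item Since $I\in A$ and $I\in B$, we have $A\subseteq AB$ and $B\subseteq AB$. An element commuting with all of $AB$ therefore commutes with all of $A$ and all of $B$.
\item Conversely, suppose $c$ commutes with every $a\in A$ and every $b\in B$. Then $c\,ab=a\,c\,b=ab\,c$. Equivalently, $\langle\pi(c),\pi(a)+\pi(b)\rangle=0$ by bilinearity.
\end{itemize}
No phase issues arise in this part.

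\textbf{Part 2.} Applying the formula for the commutant twice gives
\begin{align}
C(C(A))=\pi^{-1}\Big(\big(\pi(\pi^{-1}(\bar A^{\perp}))\big)^{\perp}\Big)=\pi^{-1}\big((\bar A^{\perp})^{\perp}\big),
\end{align}
where the second equality uses that $\pi$ is surjective. The symplectic form on $\mathbb{F}_2^{2n}$ is nondegenerate, so $\dim V^\perp=2n-\dim V$ for every subspace $V$. Since $\bar A\subseteq(\bar A^\perp)^\perp$ trivially and both sides have dimension $\dim\bar A$, we get $(\bar A^\perp)^\perp=\bar A$. Finally, because $\ker\pi$ is the phase group, $\pi^{-1}(\bar A)=A\cdot\{\pm I,\pm iI\}=\langle A,\pm I,\pm iI\rangle$. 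This contains $A$, which gives both claims of item 2.

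\textbf{Main obstacle.} The only delicate points are the phase bookkeeping and the nondegeneracy step. Commutants are full preimages under $\pi$, which is why phases appear in the double commutant even when $A$ itself omits them. The dimension formula $\dim V^\perp=2n-\dim V$ must be invoked explicitly, in the same way the paper's remark stresses non-degeneracy of $\beta$ for the analogous identity $(\mathscr{G}^\perp)^\perp=\mathscr{G}$.
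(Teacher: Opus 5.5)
Your proof is correct. It passes to $\mathbb{F}_2^{2n}$ via $\pi$, writes $C(A)=\pi^{-1}(\bar A^{\perp})$, and uses non-degeneracy to get $(\bar A^{\perp})^{\perp}=\bar A$; this also shows that the lemma's ``$\supset$'' must be read as ``$\supseteq$'', with equality when $A$ already contains the phases. The paper gives no argument of its own and only cites Lemma 12 of Ref.~\cite{bacon2017sparse}, so your symplectic double-complement argument is the standard proof behind that citation, resting on the same non-degeneracy fact the paper invokes elsewhere for $(\mathscr{G}^{\perp})^{\perp}=\mathscr{G}$.
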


 We define the gauge group as follows, 

 \begin{definition}[Spacetime gauge group]\label{def: spacetime-gauge-group}
     Let the circuit $C_T(\mathcal{G})$ that implements the subsystem code $(\mathcal{G}, \mathcal{S})$ given in Definition~\ref{def: circuit-model-from-subsystem-code}. Define for $t=0, \cdots, T-1$,
     \begin{align}
         \mathcal{G}_{(t+1, t)} = \{ g(a, t), \forall a \in C(\mathcal{M}_{t+0.5})   \}.
     \end{align}
     Then the \emph{spacetime gauge group} is defined by, 
     \begin{align}\label{eq: spacetime-gauge-group}
         \mathscr{G} := \langle \prod^{T-1}_{t=0} \mathcal{G}_{(t+1, t)} \cdot \prod^T_{t=0} \eta_t(\mathcal{M}_{t+0.5}) \rangle.
     \end{align}
 \end{definition}

 Suppose that $p \in C(\mathcal{G})$, then its backward propagation $\overleftarrow{\eta_T(p)} \in \mathscr{G}^\perp$, since the backward propagation by definition commutes with all Pauli transports and the measurements are strictly from $\mathcal{G}$ at each time slice.
 We first show the following technical lemma, 

 \begin{lemma}\label{lemma: stabilizer-deform-backpropagation}
     Let $x \in \mathscr{G}^\perp$ and write $x = \eta_T(a_T) \cdots \eta_0(a_0)$. Then $x \sim_{\mathscr{M}} \overleftarrow{\eta_T(a_T)}$. 
     \begin{proof}
         Since $\mathscr{G}$ commutes with $x$, then for each $a_t$ for $t=0, \cdots, T-1$, we have that $a_t \in C(\mathcal{M}_{t+0.5})$. Furthermore, we have that $\langle x, g(b, t) \rangle = \langle a_t, b\rangle + \langle a_{t+1}, u_{(t+1, t)}(b) \rangle = \langle a_{t+1}  u^\dagger_{(t+1, t)}(a_t), b \rangle = 0$ for $b \in C(\mathcal{M}_{t+0.5})$. This implies that $a_{t}  u^\dagger_{(t+1, t)}(a_{t+1}) \in C(C(\mathcal{M}_{t+0.5}))$. Since we ignore the phase, by Lemma~\ref{lemma: double-commutant-phase}, we have that 
        \begin{align}
           s_t \equiv  a_{t}  u^\dagger_{(t+1, t)}(a_{t+1}) \in \mathcal{M}_{t+0.5}.
        \end{align}
        In this case, consider two adjacent time slices, $\eta_{t+1}(a_{t+1}) \eta_t(a_t) = \eta_t(s_t) g(u^\dagger_{(t+1, t)}(a_{t+1}), t)$. Note that we have, 
        \begin{align}
            \langle \eta_{t+1}(a_{t+1}) \eta_t(a_t), \eta_t(\mathcal{M}_{t+0.5}) \rangle =0.
        \end{align}
        Here we make a slight abuse of notation above by identifying $\eta_t(\mathcal{M}_{t+0.5})$ with all $\eta_t(m)$ for $m \in \mathcal{M}_{t+0.5}$, and $s_t \in \mathcal{M}_{t+0.5}$ which by definition commutes with $\mathcal{M}_{t+0.5}$ due to the disjointness of supports. In this case, it necessarily follows that $u^\dagger_{(t+1, t)}(a_{t+1}) \in C(\mathcal{M}_{t+0.5})$ and $g(u^\dagger_{(t+1, t)}(a_{t+1}), t) \in \mathcal{G}_{(t+1, t)} \subset \mathscr{G}$ as a result. Note that for any $t=0, \cdots, T-1$, we have that $\eta_{t+1}(a_{t+1}) \eta_t(a_t) = \eta_t(s_t)g(u^{\dagger}_{(t+1, t)}(a_{t+1}), t)$, which implies that $\eta_{t+1}(a_{t+1})\eta_t(a_{t}) \sim_{\mathscr{M}} g(u^{\dagger}_{(t+1, t)}(a_{t+1}), t)$. Hence, for any three consecutive terms, we have that $\eta_{t}(a_{t})\eta_{t-1}(a_{t-1}) \sim_{\mathscr{M}} g(u^{\dagger}_{(t, t-1)}(a_{t}), t)$, which holds if and only if $a_t = u^\dagger_{(t+1, t)}(a_{t+1})$. This implies that $x \sim_{\mathscr{M}} \overleftarrow{\eta_T(a_T)}$ and $\overleftarrow{\eta_T(a_T)} \in \mathscr{G}^\perp$.
     \end{proof}
 \end{lemma}

\begin{lemma}\label{lemma: squeeze-lemma-subsystem-spacetime}
    Let $x \in \mathscr{G}^\perp$, where we denote it as $x = \prod^T_{t=0} \eta_t(a_t)$. Then the following holds, 
    \begin{enumerate}
        \item Suppose that $a_T = I$, then we have that $x \in \mathscr{M} = \mathscr{G} \cap \mathscr{G}^\perp$. 
        \item If $a_T \neq I$, then $a_T \in C(\mathcal{G})$.
    \end{enumerate}
    \begin{proof}
        Since $T$ is even, and $a_T =1$. The preceding Lemma~\ref{lemma: stabilizer-deform-backpropagation} shows that $x \sim_{\mathscr{M}} I$ and, hence, $x \in \mathscr{M}$. For the second statement, Note  
        \begin{align}
            \langle \overleftarrow{\eta_T(a_T)}, \eta_t(\mathcal{M}_{t+0.5}) \rangle = 0 = \langle a_T, u_{(T, t)(m_{t})} \rangle, \quad \forall m_t \in \mathcal{M}_{t+0.5},
        \end{align}
        for all $t=0, \cdots, T$. The second equality follows from Proposition~\ref{prop: spacetime-cumulant-automorphism}. Since by assumption, the measurement sequence generates the gauge group and the Clifford unitary preserves the gauge group, which implies that $a_T \in C(\mathcal{G})$ as desired. 
    \end{proof}
\end{lemma}

Equipped with the above technical lemma, we give a characterization of the bare logical subspace $\mathscr{G}^\perp/ \mathscr{M}$.

\begin{lemma}
    The orthogonal complement $\mathscr{G}^\perp \in \mathcal{A}$ with respect to the gauge group $\mathscr{G}$ defined with respect to a general measurement sequence is given by
    \begin{align}
        \mathscr{G}^\perp = \mathscr{M} \cdot \langle \overleftarrow{\eta(l)}: l \in \mathcal{L}_{\operatorname{bare}}\rangle, 
    \end{align}
    where $\mathscr{M} = \mathscr{G}^\perp \cap \mathscr{G}$. 
    \begin{proof}
       Let's consider the following map,
       \begin{align}\label{eq: Phi-code-circuit-iso-subsystem}
           \Phi: \mathcal{L}_{\operatorname{bare}} \rightarrow \mathscr{L}_{\operatorname{bare}},
       \end{align}
       given by the mapping $\Phi([p]_{\mathcal{S}}) = [\overleftarrow{\eta_T(p)}]_{\mathscr{M}}$, where the coset is defined with respect to base stabilizer group $\mathcal{S}$ and spacetime stabilizer group $\mathscr{M}$. First, we show this is a well-defined homomorphism. $\Phi$ is clearly well-defined: let $s \in \mathcal{S}$ so that $\Phi(ps) = \overleftarrow{\eta_T(ps)} = \overleftarrow{\eta_T(p)} \overleftarrow{\eta_T(s)}$ by Proposition~\ref{prop: spacetime-cumulant-automorphism}, then $\overleftarrow{\eta}_T(s) \in \mathscr{M}$ by construction. It also satisfies the homomorphism property by straightforward inspection. We now show that $\Phi$ is an isomorphism. Take any $x \in \mathscr{G}^\perp = \eta_T(a_T) \cdots \eta_0(a_0)$ and note that $a_T \in C(\mathcal{S})$ by construction. We argue this case-by-case, according to Lemma~\ref{lemma: squeeze-lemma-subsystem-spacetime}. 
       \begin{enumerate}
           \item Suppose that $a_T \in \mathcal{S}$. Then in this case, $x \sim_{\mathscr{M}} \eta_{T-1}(a_{T-1})\cdots \eta_0(a_0)$ and by Lemma~\ref{lemma: squeeze-lemma-subsystem-spacetime}, $x \in \mathscr{G} \cap \mathscr{G}^\perp = \mathscr{M}$ so that it is given by $\Phi(1)$. 
           \item The only other case is given by $a_T \in C(\mathcal{G}) / \mathcal{S} = \mathcal{L}_{\operatorname{bare}}$. Let $y = x \overleftarrow{\eta_T(a_T)}$, which implies that $y \in \mathscr{G}^\perp$ and $\eta_T(y) = I$ by construction. This implies again by Lemma~\ref{lemma: squeeze-lemma-subsystem-spacetime} $y \in \mathscr{M}$. Hence, we have that $\Phi(a_T) = [\overleftarrow{\eta_T(a_T)}]_{\mathscr{M}} = [y\overleftarrow{\eta_T(a_T)}]_{\mathscr{M}} = [x]_{\mathscr{M}}$.
       \end{enumerate}
       These two altogether imply that $\Phi$ is surjective. To show that $\Phi$ is an injection, it suffices to show that it has trivial kernel. Suppose that $x \sim_{\mathscr{M}} \overleftarrow{\eta_T(a_T)} \sim_{\mathscr{M}} 1$, it follows that $\overleftarrow{\eta_T(a_T)}  \in \mathscr{M} \subset \mathscr{G}$. This implies that $a_T \in \mathcal{G}$, since the measurement sequence generates the base gauge group. By the preceding Lemma~\ref{lemma: squeeze-lemma-subsystem-spacetime},  $a_T \in C(\mathcal{G})$, which implies that $a_T \in C(\mathcal{G}) \cap \mathcal{G} = \mathcal{S}$, which concludes the proof. 
    \end{proof}
\end{lemma}

The establishment of isomorphism $\Phi$ in Eq.~\eqref{eq: Phi-code-circuit-iso-subsystem} implies that we could establish a general code-to-circuit isomorphism, 

\begin{theorem}\label{thm: subsystem-circuit-to-code-isomorphism}
    Let $\{ \mathcal{M}_{0.5}, \cdots, \mathcal{M}_{T+0.5} \}$ be the measurement sequence that implements a base subsystem code $(\mathcal{G}, \mathcal{S})$ and $T$ is an even integer. Let the circuit $C_T(\mathcal{G})$ be given by the measurement sequence and  logical Clifford unitaries that preserve the gauge group and stabilizer subgroup scheduled according to Definition~\ref{def: circuit-model-from-subsystem-code}. On the spacetime Pauli group $\mathcal{A}$, define the gauge group $\mathscr{G}$ in Eq.~\ref{eq: spacetime-gauge-group} and its stabilizer (measurement) subgroup $\mathscr{M} = \mathscr{G}^\perp \cap \mathscr{G}$. Then this defines a code-to-circuit isomorphism $C_T(\mathcal{G}) \rightarrow (\mathscr{G}, \mathscr{M})$ such that, 
    \begin{align}
         \Phi: \mathcal{G}^\perp / \mathcal{S} \equiv \mathcal{L}_{\operatorname{bare}} \rightarrow \mathscr{G}^\perp / \mathscr{M} \equiv \mathscr{L}_{\operatorname{bare}}; \quad \Phi(b) = \overleftarrow{\eta_T(b)}
    \end{align}
    is a well-defined isomorphism. 
\end{theorem}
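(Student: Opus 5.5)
The theorem is essentially a repackaging of the preceding lemma, which shows $\mathscr{G}^\perp = \mathscr{M}\cdot\langle \overleftarrow{\eta_T(l)} : l\in\mathcal{L}_{\operatorname{bare}}\rangle$. The plan is to verify, in order, that $\Phi$ lands in $\mathscr{G}^\perp/\mathscr{M}$, is well defined on cosets, is a homomorphism, and is bijective.

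\textbf{Image and well-definedness.} For $b\in C(\mathcal{G})$ we must show $\overleftarrow{\eta_T(b)}\in\mathscr{G}^\perp$. I will test this against the two families of generators of $\mathscr{G}$ in Eq.~\eqref{eq: spacetime-gauge-group}.

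For a measurement generator $\eta_t(m)$ with $m\in\mathcal{M}_{t+0.5}$, Proposition~\ref{prop: spacetime-cumulant-automorphism} gives $\langle \overleftarrow{\eta_T(b)},\eta_t(m)\rangle=\langle b,u_{(T,t)}(m)\rangle$. Item 3 of Definition~\ref{def: circuit-model-from-subsystem-code} gives $u_{(T,t)}(m)\in\mathcal{G}$, so this vanishes because $b\in C(\mathcal{G})$.

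For a Pauli transport $g(a,t)$, the two time-slice contributions are $\langle u^{-1}(b),a\rangle$ and $\langle u^{-1}(b),u_{(t+1,t)}(a)\rangle$, with the appropriate backward-propagated components on slices $t$ and $t+1$. Since the backward propagation is consistent across adjacent slices and $u_{(t+1,t)}$ preserves the symplectic form, these two terms are equal and cancel mod 2.

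Well-definedness then follows because $\overleftarrow{\eta_T(s)}\in\mathscr{M}$ for $s\in\mathcal{S}$, as noted in the preceding lemma. The homomorphism property follows from the automorphism property in Eq.~\eqref{eq: cumulant-automorphism-spacetime}.

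\textbf{Surjectivity.} Take $x=\prod_t\eta_t(a_t)\in\mathscr{G}^\perp$. By Lemma~\ref{lemma: squeeze-lemma-subsystem-spacetime}, $a_T\in C(\mathcal{G})$. The product $y=x\,\overleftarrow{\eta_T(a_T)}$ lies in $\mathscr{G}^\perp$ and has trivial final component, so the same lemma gives $y\in\mathscr{M}$. Hence $[x]_{\mathscr{M}}=\Phi([a_T])$. This is exactly the case split already carried out in the preceding lemma, which I will simply invoke.

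\textbf{Injectivity.} If $\overleftarrow{\eta_T(b)}\in\mathscr{M}\subseteq\mathscr{G}$, I need to conclude $b\in\mathcal{G}$; combined with $b\in C(\mathcal{G})$ this gives $b\in\mathcal{G}\cap C(\mathcal{G})=\mathcal{S}$. I expect this to be the main obstacle.

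My first attempt is to apply the forward-propagation/terminal map to the generators of $\mathscr{G}$. Each Pauli transport $g(a,t)$ propagates forward to something trivial at time $T$, and $\eta_t(m)$ forwards to $u_{(T,t)}(m)\in\mathcal{G}$. So the terminal image of all of $\mathscr{G}$ lies in $\mathcal{G}$.

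Meanwhile, the terminal image of $\overleftarrow{\eta_T(b)}$ should be $b$, up to the lower-triangular (invertible) structure of the propagators. This yields $b\in\mathcal{G}$.

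The delicate part is making the ``terminal image'' map precise, namely $\eta_T\circ\overrightarrow{\cdot}$, and checking that it is a homomorphism that kills Pauli transports. This is where I expect the parity assumption on $T$ and the disjoint-support condition (item 4) to enter, as they do in Lemma~\ref{lemma: stabilizer-deform-backpropagation}.

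Finally, a dimension count serves as a sanity check: both sides encode the same number of logical qubits, as in the repetition-code example.
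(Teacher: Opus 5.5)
Your outline is the paper's own: the paper proves this inside the preceding lemma on $\mathscr{G}^\perp$, with the same well-definedness check, the same surjectivity case split via Lemma~\ref{lemma: squeeze-lemma-subsystem-spacetime}, and injectivity via $\mathcal{G}\cap C(\mathcal{G})=\mathcal{S}$. Your image check is correct. Your injectivity argument is also correct, and it fills in a step the paper only asserts. The map $\tau:=\eta_T\circ\overrightarrow{\cdot}$ is a homomorphism with $\tau(g(a,t))=u_{(T,t)}(a)^2=I$ and $\tau(\eta_t(m))=u_{(T,t)}(m)\in\mathcal{G}$, while $\tau(\overleftarrow{\eta_T(b)})=b^{T+1}=b$. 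That last identity is exactly where the parity of $T$ enters; item 4 plays no role.

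The genuine gap, which the paper shares, is surjectivity. You import item 2 of Lemma~\ref{lemma: squeeze-lemma-subsystem-spacetime}, namely $a_T\in C(\mathcal{G})$ for every $x\in\mathscr{G}^\perp$. This is false once checks at different times anticommute. In the paper's $2\times 2$ Bacon--Shor example, $x=\eta_2(X_1X_2)\in\mathscr{M}$, but $X_1X_2$ anticommutes with $Z_1Z_3\in\mathcal{G}$, and $x\,\overleftarrow{\eta_2(X_1X_2)}$ anticommutes with $\eta_0(Z_1Z_3)$.

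Surjectivity can in fact fail outright. Take the following instance of Definition~\ref{def: circuit-model-from-subsystem-code}: $n=2$, $\mathcal{G}=\mathrm{P}^2$ (so $\mathcal{S}=\{I\}$ and $\mathcal{L}_{\operatorname{bare}}$ is trivial), all $u=I$, $T=4$, with $\mathcal{M}_{0.5},\dots,\mathcal{M}_{3.5}$ generated by $X_1, Z_1, X_1X_2, Z_2$ and $\mathcal{M}_{4.5}=\{I\}$. Then $x=\eta_4(X_1)\eta_3(X_1)\eta_2(X_2)\eta_1(X_2)\eta_0(X_2)$ and $x'=\eta_4(Z_1)\eta_3(Z_1Z_2)\eta_2(Z_1Z_2)\eta_1(Z_2)\eta_0(Z_2)$ both lie in $\mathscr{G}^\perp$ and anticommute. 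Hence $[x]\neq 0$ in $\mathscr{G}^\perp/\mathscr{M}$, although $\mathcal{L}_{\operatorname{bare}}$ is trivial. Physically, the schedule teleports qubit $2$ onto qubit $1$; adding an idle spectator qubit gives the same phenomenon with $k=1$. Your closing dimension count is therefore not automatic.

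What does hold is the following. Write $x=\prod_t\eta_t(a_t)\in\mathscr{G}^\perp$ and $s_t=a_tu^{-1}_{(t+1,t)}(a_{t+1})\in\mathcal{M}_{t+0.5}$. Then each $\eta_{t+1}(a_{t+1})\eta_t(a_t)=g(s_ta_t,t)\,\eta_t(s_t)$ lies in $\mathscr{G}$. Pairing the slices $(0,1),\dots,(T-2,T-1)$ shows that $x$ lies in $\mathscr{M}$ iff $a_T\in W:=\{c:\eta_T(c)\in\mathscr{G}\}$, which is the final instantaneous stabilizer group. By duality, the attainable $a_T$ form exactly $W^\perp$. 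Hence $\Phi$ is onto iff $W^\perp\cap\mathcal{G}=W$, i.e., iff all gauge qubits are fixed at time $T$. Under that added hypothesis, replace the false claim by a decomposition $a_T=wb$ with $w\in W$ and $b\in C(\mathcal{G})$, and apply the pairing argument to $x\,\overleftarrow{\eta_T(b)}$.
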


\begin{figure}
    \centering
    \includegraphics[width=1.0\linewidth]{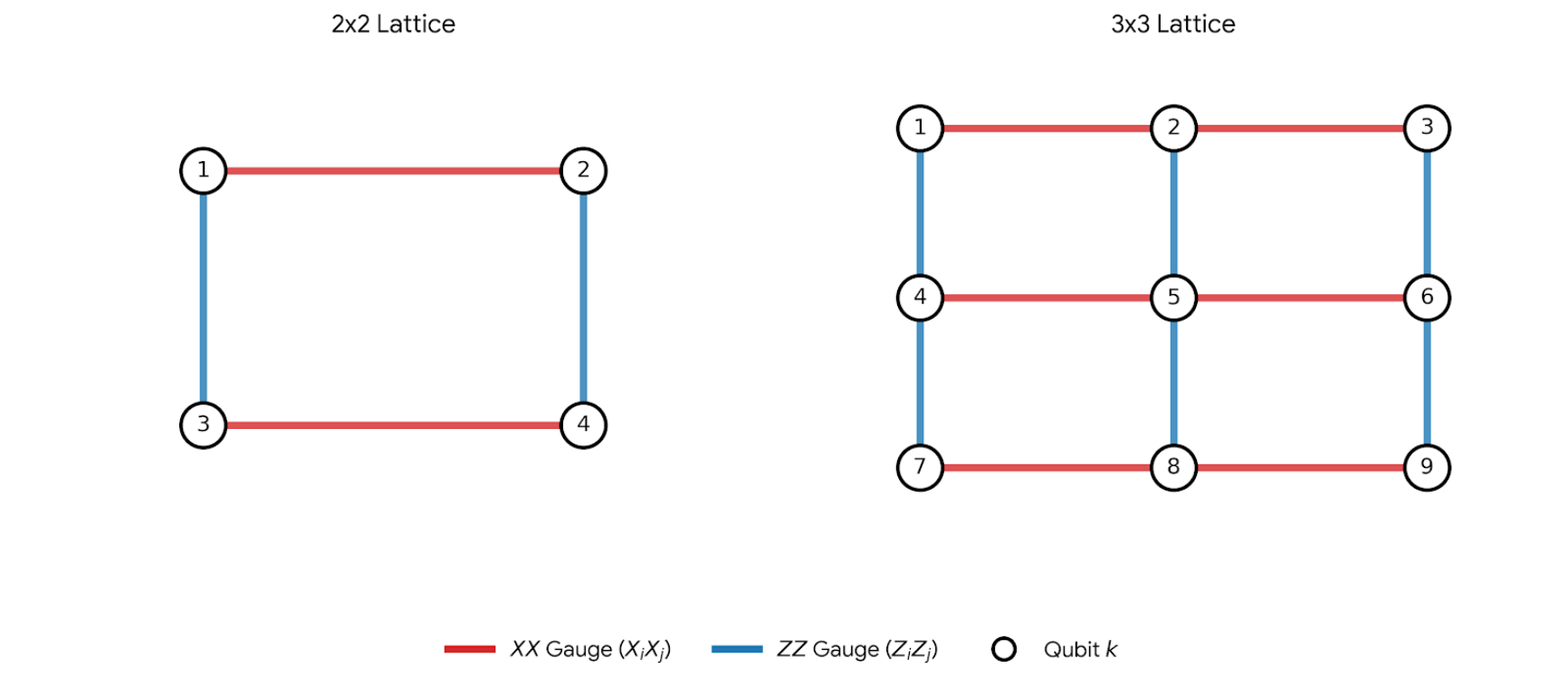}
    \caption{The Bacon-Shor code defined on the $2 \times 2$ and $3 \times 3$ lattices. The stabilizers are given as weight-$6$ operators.}
    \label{fig:BaconShorL3}
\end{figure}
\begin{example}[Bacon-Shor's code~\cite{Bacon_2006, alam2025baconshorboardgames}]\label{example: Bacon-shor}
    A typical measurement sequence in implementing the $2 \times 2$ Bacon-Shor's code is given as follows. 
    \begin{align}
        \mathcal{M}_{0.5} = \langle Z_1 Z_3, Z_2Z_4 \rangle ; \quad \mathcal{M}_{1.5} = \langle X_1 X_2, X_3X_4 \rangle . 
    \end{align}    
    The time is indexed with $t=0, 1, 2$. The gauge group $\mathscr{G}$ is constructed as, 
    \begin{align}
        \mathscr{G} = \langle \mathcal{G}_{(1, 0)},  \mathcal{G}_{(2, 1)} , \mathcal{M}_{1.5} , \mathcal{M}_{0.5} \rangle,
    \end{align}
    where $\mathcal{G}_{(1, 0)} := \{ g(a, 0): a \in \mathrm{P}^n, a \in C(\mathcal{M}_{0.5}) \}$ and $\mathcal{G}_{(2, 1)} := \{ g(a, 1): a \in \mathrm{P}^n, a \in C(\mathcal{M}_{1.5}) \}$, each is generated by $6$ independent operators. Hence, the gauge group is generated by $16$ operators, and concretely, 
    \begin{align}
        \mathcal{G}_{(1, 0)} &= \left \langle \begin{aligned}
            g(Z_1, 0), g(Z_2, 0), g(Z_3, 0), g(Z_4, 0), g(X_1X_3, 0), g(X_2X_4, 0)
        \end{aligned} \right \rangle, \\
        \mathcal{G}_{(2, 1)} &= \left \langle \begin{aligned}
            g(X_1, 1), g(X_2, 1), g(X_3, 1), g(X_4, 1), g(Z_1Z_2, 1), g(Z_3Z_4, 1)
        \end{aligned} \right \rangle.
    \end{align}
    This gives the group center $\mathscr{G}^\perp \cap \mathscr{G}$, 
    \begin{align}
        \mathscr{M} = \mathscr{G}^\perp \cap \mathscr{G} = \left\langle \begin{aligned}
            &\eta_0(Z_1Z_3), \eta_0(Z_2Z_4),  \eta_1(X_1X_2X_3X_4)\eta_0(X_1X_2X_3X_4) \\
            & \eta_2(Z_1Z_2Z_3Z_4)\eta_1(Z_1Z_2Z_3Z_4)\eta_0(Z_1Z_2Z_3Z_4), \eta_2(X_1X_2), \eta_2(X_3X_4)
        \end{aligned} \right\rangle. 
    \end{align}
    This confirms that the bare logical space is generated by precisely two elements, 
    \begin{align}
        \mathscr{L}_{\operatorname{bare}} = \langle \eta_2(X_1X_3)\eta_1(X_1X_3)\eta_0(X_1X_3), \eta_2(Z_1Z_2)\eta_1(Z_1Z_2)\eta_0(Z_1Z_2) \rangle. 
    \end{align}
    The center (measurement group) consists of three components: $i)$ the initial stabilizer group, $ii)$ the output stabilizer group, $iii)$ the dynamical stabilizers which commute with the previous history of the measurement sequence. We now examine the case with $3 \times 3$ Bacon-Shor code, with the following measurement sequence,  
    \begin{align}
        \begin{aligned}
            \mathcal{M}_{0.5} &= \langle Z_1Z_4, Z_2Z_5, Z_3Z_6 \rangle, \\
            \mathcal{M}_{1.5} &= \langle Z_4Z_7, Z_5Z_8, Z_6Z_9 \rangle, \\
            \mathcal{M}_{2.5} &= \langle X_1X_2, X_4X_5, X_7X_8 \rangle, \\
            \mathcal{M}_{3.5} &= \langle X_2X_3, X_5X_6, X_8X_9 \rangle. 
        \end{aligned}
    \end{align}
    The stabilizer generators in this case can be given by the weight-$6$ operators. 
    \begin{align}
        \begin{aligned}
& S_Z^{12}=\left(Z_1 Z_4\right)\left(Z_2 Z_5\right)\left(Z_3 Z_6\right),\\
& S_Z^{23}=\left(Z_4 Z_7\right)\left(Z_5 Z_8\right)\left(Z_6 Z_9\right), \\
& S_X^{12}=\left(X_1 X_2\right)\left(X_4 X_5\right)\left(X_7 X_8\right), \\
& S_X^{23}=\left(X_2 X_3\right)\left(X_5 X_6\right)\left(X_8 X_9\right).
\end{aligned}
    \end{align}
    The measurement (stabilizer) subgroup $\mathscr{M}$ is given, 
    \begin{align}
         \begin{aligned}\mathscr{M} =  \langle
            & \eta_0(Z_1Z_4), \eta_0(Z_2Z_5), \eta_0(Z_3Z_6), \\
            & \eta_1(Z_4Z_7) \eta_0(Z_4Z_7), \eta_1(Z_5Z_8)\eta_0(Z_5Z_8), \eta_1(Z_6Z_9)\eta_0(Z_6Z_9), \\
            & \eta_4(X_1X_2)\eta_3(X_1X_2), \eta_4(X_4X_5)\eta_3(X_4X_5), \eta_4(X_7X_8)\eta_3(X_7X_8), \\
            & \eta_4(X_2X_3), \eta_4(X_5X_6), \eta_4(X_8X_9), \\
            & \eta_4(S^{12}_Z)\eta_3(S^{12}_Z)\eta_2(S^{12}_Z)\eta_1(S^{12}_Z)\eta_0(S^{12}_Z), \eta_4(S^{23}_Z)\eta_3(S^{23}_Z)\eta_2(S^{23}_Z)\eta_1(S^{23}_Z)\eta_0(S^{23}_Z), \\
            &\eta_4(S^{12}_X)\eta_3(S^{12}_X)\eta_2(S^{12}_X)\eta_1(S^{12}_X)\eta_0(S^{12}_X), \eta_4(S^{23}_X)\eta_3(S^{23}_X)\eta_2(S^{23}_X)\eta_1(S^{23}_X)\eta_0(S^{23}_X) \rangle. 
        \end{aligned} 
    \end{align}
    There are in total $16$ independent generators. The gauge group in total has $72$ many independent generators. Then the (bare) logical subspace has $90-72 -16 = 2$ independent generators, which encode precisely one logical qubit.  
\end{example}

\subsubsection*{Application to syndrome extraction circuit}

Theorem~\ref{thm: subsystem-circuit-to-code-isomorphism} provides a code-to-circuit isomorphism in a general setting, without explicitly determining the measurement subgroup $\mathscr{M}$. In the learnability framework detailed in Section~\ref{sec: learnability-general}, it is of great interest to explicitly write down generators to (at least, subgroup) of $\mathscr{M}$. Note that, although there is no intrinsic barrier to determining $\mathscr{M}$ by analyzing the commutation relation with the generators from $\mathscr{G}$, we provide a clean characterization of generators in $\mathscr{M}$, assuming that the circuit $C_T(\mathcal{S})$ implements strictly a stabilizer code.

\begin{lemma}\label{lemma: measurement-subgroup-syndrome-extraction-circuit}
Let $C_T(\mathcal{S})$ denote the syndrome extraction circuit in Definition~\ref{def: syndrome-extraction-circuit}. The orthogonal complement $\mathscr{G}^\perp \subset \mathcal{A}$ with respect to the spacetime Pauli group is generated by  
\begin{align}
    \mathscr{G}^\perp = \langle \overleftarrow{m}: m \in \eta_t(\mathcal{M}_{t+0.5}), 0 \leq t < T \rangle \cdot \langle \overleftarrow{l}: l \in \eta_T(C(\mathcal{S})).
\end{align}
 
    \begin{proof}
    First it is easy to see that $\langle \overleftarrow{m}: m \in \eta_t(\mathcal{M}_{t+0.5}), 0 \leq t <T \rangle \cdot \langle \overleftarrow{l}: l \in \eta_T(C(\mathcal{S})) \rangle \subseteq \mathscr{G}^\perp$. This is true since we have assumed that the Clifford unitaries and Pauli measurements preserve the stabilizer space. We now show that this is in fact, the maximal subgroup. Let $x \in \mathscr{G}^\perp \subset \mathcal{A}$ such that $x = \prod^T_{t=0} \eta_t(a_t)$ and $a_t \in \mathrm{P}^{n}$. Then we have immediately that 
        $a_t \in C(\mathcal{M}_{t+0.5})$. Furthermore, we have that $\langle x, g(b, t) \rangle = \langle a_t, b\rangle + \langle a_{t+1}, u_{(t+1, t)}(b) \rangle = \langle a_{t+1}  u^\dagger_{(t+1, t)}(a_t), b \rangle = 0$ for $b \in C(\mathcal{M}_{t+0.5})$. This implies that $a_{t}  u^\dagger_{(t+1, t)}(a_{t+1}) \in C(C(\mathcal{M}_{t+0.5}))$. Since we ignore the phase, by Lemma~\ref{lemma: double-commutant-phase}, we have that 
        \begin{align}
           s_t \equiv  a_{t}  u^\dagger_{(t+1, t)}(a_{t+1}) \in \mathcal{M}_{t+0.5},
        \end{align}
        for $t=0, 1, \cdots, T-1$, and $s_T = a_T$. Note that  $\overleftarrow{\eta_t (s_t)} \in \mathscr{G}^\perp$ by definition of the gauge group~\eqref{eq: spacetime-gauge-group} and the fact that elements in $\mathcal{M}_{t+0.5}$ commute with each other due to the disjointedness of supports. To this end, let's consider 
        \begin{align}
            \begin{aligned}
                &x \prod^0_{i=T-1} \overleftarrow{\eta_i(s_i)} = (\prod^T_{j=0} \eta_{j}(a_j)) \overleftarrow{\eta_{T-1}(a_{T-1})} \overleftarrow{\eta_{T-1}(u^{\dagger}_{(T, T-1)}(a_{T}))} \overleftarrow{\eta_{T-2}(a_{T-2})} \overleftarrow{\eta_{T-2}(u^{\dagger}_{(T-1, T-2)}(a_{T-1}))} \cdots \overleftarrow{\eta_0(a_0)} \overleftarrow{\eta_0(u^{\dagger}_{(1, 0)}(a_1))},\\
                &= \overleftarrow{\eta_T(x)} = \overleftarrow{\eta_T(a_T)}.
            \end{aligned}
        \end{align}
        Note that the telescope cancellation is given by, $\eta_{t}(a_{t})\overleftarrow{\eta_{t-1}(u^{\dagger}_{(t, t-1)}(a_{t}))} = \overleftarrow{\eta_{t}(a_{t})}$ for $t=1, \cdots, T-1$, while $t=0$ holds trivially. The only terms thus left are given by the second line.  
    \end{proof}
\end{lemma}

This allows us to characterize the syndrome extraction circuits as a subsystem code.

\begin{theorem}[Circuit-(subsystem) code isomorphism]\label{thm: circuit-subsystem-code-1}
     Without loss of generality, we consider $T$ is even. Let our gauge group to be defined by Eq.~\eqref{eq: spacetime-gauge-group}.  Then this defines a subsystem code of code parameters $[[(T+1)n, k ]]$ such that
    \begin{enumerate}
        \item The measurement (stabilizer) subgroup $\mathscr{M}$ is generated by the backward propagation of each syndrome measurement from $\mathcal{M}_t, t=0, \cdots, T$
        
        \item The bare logical  $\mathscr{L}_{\operatorname{bare}}$ is generated by the backward propagation of the base logical space $C(\mathcal{S}) / \mathcal{S}$. 
    \end{enumerate}
    \begin{proof}
        By the preceding lemma and the fact that $T+1$ is odd, we have that 
        \begin{align}
            &\mathscr{M} = \langle \overleftarrow{m}: m \in \eta_t(\mathcal{M}_{t+0.5}), 0 \leq t < T \rangle \cdot \langle \overleftarrow{s}: s \in \eta_T(\mathcal{S}) \rangle, \\
            &\mathscr{L}_{\operatorname{bare}} = \langle \overleftarrow{\eta_T(l)}: l \in C(\mathcal{S}) / \mathcal{S}\rangle.
        \end{align}
        Then we have that $\mathscr{G}^\perp = \mathscr{M} \cdot \mathscr{L}_{\operatorname{bare}}$, which proves the result. 
    \end{proof}
\end{theorem}

Note that our circuit-to-code map serves as a slight variant of the formalism introduced in \cite{delfosse2023spacetimecodescliffordcircuits}. Though there is no explicit notion of outcome codes, the relation does appear implicitly, which we now examine. 

\begin{remark}[Appearance of outcome codes]
     We can examine the center of the gauge group $\mathscr{M}$ to be generated by two parts. 
    \begin{align}
        \mathscr{M} = \langle \overleftarrow{\eta_T(s)}: s \in \mathcal{S} \rangle \cdot \langle \overleftarrow{\eta_t(m_t)} : m_t \in \mathcal{M}_{t+0.5},  T-1 \geq t \geq 0\rangle.
    \end{align}
    The first part serves to be the final and ideal syndrome extraction round. Let $s \in \mathcal{M}_{t+0.5}$ and its backward propagation $\overleftarrow{\eta_t(s)}$ and suppose that there is some earlier time $t'$ such that $s \in \mathcal{M}_{t' +0.5}$, then we multiply them and replace with the generator $\overleftarrow{\eta_t(s)} \overleftarrow{\eta_{t'}(s)}$. Note that this always gives deterministic measurement outcomes, which defines a circuit detector. Furthermore, denote the number of generators for $\mathscr{M}$ is $M$. For each such generator, it is possible to record these relations, which provide a detector matrix as a parity check to $\mathcal{O}(\mathcal{C})$. This gives a formal connection to the spacetime circuit description found in \cite{delfosse2023spacetimecodescliffordcircuits} and the detector matrix in \cite{derks2024designing}, while we here focus purely on measuring the stabilizers. 
    
\end{remark}

We now give a concrete example. 

\begin{example}[3 rounds of syndrome extraction with repetition codes]\label{example: repetition-code}
     Let's consider we have $3$ rounds of syndrome measurements and our measurement group
    \begin{align}
        \mathcal{M}_{i+0.5, \text{even}}  = \langle Z_1Z_2 \rangle; \quad  \mathcal{M}_{i + 0.5, \text{odd}}  = \langle Z_2Z_3 \rangle,
    \end{align}
    for $i=0, 1, \cdots, 5$. In the case of $i =6$, we associated with the full stabilizer group $\mathcal{M}_{6.5} = \mathcal{S}$. We now construct our gauge group. First, we consider the following subgroup,  
    \begin{align}
        \begin{aligned}
            &\mathcal{G}_{(i+1, i), \text{even}}  = \langle g(X_3, i), g(X_1X_2X_3, i) , g(Z_1, i), g(Z_2, i), g(Z_3, i) \rangle, \\
        &\mathcal{G}_{(i+1, i), \text{odd}}  = \langle g(X_1, i), g(X_1X_2X_3, i) , g(Z_1, i), g(Z_2, i), g(Z_3, i) \rangle .
        \end{aligned}
    \end{align}
    Then the gauge group $\mathscr{G} \subseteq \mathrm{P}^{21}$ can be constructed 
        \begin{align}
           \mathscr{G} = \langle \mathcal{G}_{(i+1, i)}, \mathcal{M}_{i+0.5}: i=0, \cdots 5 \rangle. 
        \end{align}
      
        The center of the gauge group $\mathscr{M} = \mathscr{G}^\perp \cap \mathscr{G}$ can be given by 
    \begin{align}
        \begin{aligned}
            &\eta_6(Z_1Z_2)\eta_5(Z_1Z_2)\eta_4(Z_1Z_2)\eta_3(Z_1Z_2)\eta_2(Z_1Z_2)\eta_1(Z_1Z_2)\eta_0(Z_1Z_2), \\
            & \eta_6(Z_2Z_3)\eta_5(Z_2Z_3)\eta_4(Z_2Z_3)\eta_3(Z_2Z_3)\eta_2(Z_2Z_3)\eta_1(Z_2Z_3)\eta_0(Z_2Z_3) \\
            & \eta_5(Z_2Z_3)\eta_4(Z_2Z_3)\eta_3(Z_2Z_3)\eta_2(Z_2Z_3)\eta_1(Z_2Z_3)\eta_0(Z_2Z_3) \\
            &\eta_4(Z_1Z_2)\eta_3(Z_1Z_2)\eta_2(Z_1Z_2)\eta_1(Z_1Z_2)\eta_0(Z_1Z_2) \\
            &\eta_3(Z_2Z_3)\eta_2(Z_2Z_3)\eta_1(Z_2Z_3)\eta_0(Z_2Z_3) \\
            &\eta_2(Z_1Z_2)\eta_1(Z_1Z_2)\eta_0(Z_1Z_2) \\
            &\eta_1(Z_2Z_3)\eta_0(Z_2Z_3) \\
            & \eta_0(Z_1Z_2),
        \end{aligned}
    \end{align}
which gives $8$ desired generators. Note that by the dimension counting over symplectic space, the gauge group has $ 5 \times 6 + 2 = 32$ linearly independent operators so that $\dim \mathscr{G}^\perp = 42 - 32 = 10$. Then it follows that $\dim \mathscr{G}^\perp / \mathscr{M} = 10 - 8 = 2$, which encodes precisely one logical qubit. It is easy to see that the choice of such a logical representative is given by 
\begin{align}
    \eta_6(X_1X_2X_3) \cdots \eta_0(X_1X_2X_3),
\end{align}
as desired. 
\end{example}

In the syndrome extraction circuit, we might be interested in performing logical measurements. Treating the logical measurements as an abstract gadget, this can be cast in the same framework in Theorem~\ref{thm: subsystem-circuit-to-code-isomorphism}, treating the gauge group defined by the stabilizer and logical operators measured. Hence, a simple corollary can be stated, 
\begin{corollary}
    Let $C_T(\mathcal{G})$ be the circuit that implements the gauge group $\mathcal{G} = \langle \mathcal{S}, \mathcal{L}_M \rangle$, where $\mathcal{L}_{M}$ denotes the measured logical operators. Denote $\mathcal{L}^C_M$ denotes its orthogonal complementary span of logical operators in $C(\mathcal{S})/\mathcal{S}$. Then, the circuit-to-code isomorphism  $C_T(\mathcal{G}) \rightarrow  (\mathscr{G}, \mathscr{M})$, such that $\mathscr{G}^\perp  = \mathscr{M} \cdot \langle \overleftarrow{\eta_T(l)}: l \in \mathcal{L}^C_M \rangle $.
\end{corollary}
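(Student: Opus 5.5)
The plan is to reduce the corollary to Theorem~\ref{thm: subsystem-circuit-to-code-isomorphism}, applied with base gauge group $\mathcal{G} = \langle \mathcal{S}, \mathcal{L}_M\rangle$. Concretely, we treat each logical measurement as a measured check in the sequence $\mathcal{M}_{0.5},\dots,\mathcal{M}_{T+0.5}$. The checks then generate $\mathcal{G}$, and the logical Cliffords are assumed to preserve $\langle \mathcal{S},\mathcal{L}_M\rangle$. This places the circuit inside Definition~\ref{def: circuit-model-from-subsystem-code}.

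The first step is to identify the base subsystem data. The centre is $\mathcal{G}\cap C(\mathcal{G})$. Since $C(\mathcal{G}) = C(\mathcal{S})\cap C(\mathcal{L}_M)$ by Lemma~\ref{lemma: double-commutant-phase}, an element of $C(\mathcal{G})$ is a product $s\,l$ with $s\in\mathcal{S}$ and $l$ a logical commuting with every element of $\mathcal{L}_M$. We must also check that $\mathcal{S}$ is still the relevant stabilizer group. This holds when $\mathcal{L}_M$ is isotropic. If $\mathcal{L}_M$ is not isotropic, the centre is $\langle \mathcal{S}, \mathcal{L}_M\cap\mathcal{L}_M^{\perp}\rangle$, and we would state that caveat. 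In either case the base bare logical group is $C(\mathcal{G})/\mathcal{S}_{\mathrm{centre}}$, whose representatives lie in $C(\mathcal{S})/\mathcal{S}$ and are orthogonal to $\mathcal{L}_M$. This is precisely $\mathcal{L}^C_M$.

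The second step invokes the lemma preceding Theorem~\ref{thm: subsystem-circuit-to-code-isomorphism}, which gives $\mathscr{G}^\perp = \mathscr{M}\cdot\langle \overleftarrow{\eta_T(l)} : l\in\mathcal{L}_{\operatorname{bare}}\rangle$ for a general measurement sequence. Substituting $\mathcal{L}_{\operatorname{bare}} = \mathcal{L}^C_M$ from the first step yields the claimed decomposition. The isomorphism $\Phi$ then transfers the structure, and $\mathscr{M}=\mathscr{G}\cap\mathscr{G}^\perp$ holds by definition.

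I expect the main obstacle to be the hypotheses hidden in Lemma~\ref{lemma: squeeze-lemma-subsystem-spacetime}. Its argument needs $\langle \mathcal{M}_{0.5},\dots,\mathcal{M}_{T+0.5}\rangle=\mathcal{G}$, and it needs the backward-propagated checks $u_{(T,t)}(m_t)$ to lie in $\mathcal{G}$, which requires the Cliffords to map $\mathcal{L}_M$ into $\langle\mathcal{S},\mathcal{L}_M\rangle$. I would make these requirements explicit in the statement. I would also verify the kernel argument: if $\overleftarrow{\eta_T(l)}\in\mathscr{M}$, then $l\in\mathcal{G}\cap C(\mathcal{G})$, so $l$ is trivial in $\mathcal{L}^C_M$. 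This follows because an element of $\mathcal{L}^C_M$ lying in $\langle \mathcal{S},\mathcal{L}_M\rangle$ must be in the centre.
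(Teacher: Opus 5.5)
Your route is the paper's. There the corollary is obtained simply by treating the logical measurements as checks and invoking Theorem~\ref{thm: subsystem-circuit-to-code-isomorphism}, together with the lemma before it, with base gauge group $\mathcal{G}=\langle\mathcal{S},\mathcal{L}_M\rangle$; the paper gives no further argument.

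One claim in your first step is backwards, however. If $\mathcal{L}_M$ is isotropic, as when you measure a single logical Pauli or several commuting ones, then $\mathcal{G}$ is abelian. Its centre is therefore all of $\langle\mathcal{S},\mathcal{L}_M\rangle$, not $\mathcal{S}$. The centre equals $\mathcal{S}$ exactly when $\mathcal{L}_M$ is non-degenerate modulo $\mathcal{S}$, i.e.\ when the radical $\mathcal{L}_M\cap\mathcal{L}_M^\perp$ is trivial (for example, $\bar X_1$ and $\bar Z_1$ both measured). Your formula $\langle\mathcal{S},\mathcal{L}_M\cap\mathcal{L}_M^\perp\rangle$ is the correct one in every case, including the isotropic one.

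This has two consequences in the isotropic case, which is the physically typical one.
\begin{enumerate}
\item The base bare logical group is $C(\mathcal{G})/\langle\mathcal{S},\mathcal{L}_M\rangle\cong\mathcal{L}_M^\perp/\mathcal{L}_M$, with the complement taken inside $C(\mathcal{S})/\mathcal{S}$. This is a proper quotient of the group of logicals orthogonal to $\mathcal{L}_M$. So ``this is precisely $\mathcal{L}^C_M$'' and ``$l$ is trivial in $\mathcal{L}^C_M$'' hold only after you mod out the radical; otherwise $\Phi$ has kernel $\mathcal{L}_M$.
\item Definition~\ref{def: circuit-model-from-subsystem-code} requires $\mathcal{M}_{T+0.5}$ to be the enlarged centre. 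This hypothesis is missing from your list.
\end{enumerate}

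The stated decomposition of $\mathscr{G}^\perp$ still holds, but you should say why. Take $r$ in the radical.
\begin{itemize}
\item $\overleftarrow{\eta_T(r)}\in\mathscr{G}^\perp$, because $r\in C(\mathcal{G})$.
\item $\overleftarrow{\eta_T(r)}=\eta_T(r)\prod_{j=0}^{T/2-1}g\bigl(u^{-1}_{(T,2j)}(r),2j\bigr)\in\mathscr{G}$. Here $T$ is even, and $u^{-1}_{(T,t)}(r)\in C(\mathcal{M}_{t+0.5})$ because $u_{(T,t)}(m_t)\in\mathcal{G}$ and $r$ is central. Also $\eta_T(r)\in\eta_T(\mathcal{M}_{T+0.5})$.
\end{itemize}
Hence these extra generators are absorbed into $\mathscr{M}$. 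Either reading of $\mathcal{L}^C_M$, the full symplectic complement or a complement of the radical inside it, then yields the same group $\mathscr{M}\cdot\langle\overleftarrow{\eta_T(l)}:l\in\mathcal{L}^C_M\rangle$.
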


\subsubsection*{Syndrome extraction circuits with ancillary worldlines}

We now present a small generalization of the circuit-to (subsystem)-code mapping with ancillary worldlines by sacrificing some generality in favor of more explicit understandings, which would become beneficial in specific scenarios. For simplicity of discussion, we omit the mid-circuit logical Pauli measurements, as they require additional ancilla qubits. Here we present a protocol that generalizes that from Ref.~\cite{bacon2017sparse}, while we leave a more systematic construction of circuit-to-code mappings in future work. Denote $r_0$, $r_a$ to be respectively the projection of elements in $\mathcal{A}$ to the supports on the data qubits (respectively ancilla qubits). We now aim to define the gauge group of interests, akin to~\cite{bacon2017sparse, gottesman2022opportunitieschallengesfaulttolerantquantum}. For the simplicity of discussion, we say that $\mathcal{C}_T(\mathcal{S})$ consists of $L$ rounds of syndrome extraction. For $i$th round of syndrome extraction, we perform transversal $Z$ measurements onto the ancillae at time $t_i$ with $t_0 =0$ and $t_L =T$. We first consider the case where we only perform one round of syndrome extractions. Then we denote the resulting operator by, 
\begin{align}
    V_{t_1} := (I \otimes \langle 0 |^{\otimes r}) u_{(t_1, 0)} (I \otimes | 0 \rangle^{\otimes r}).
\end{align}
Then we have that, 
\begin{align}\label{eq: good-error-detecting-circuit}
    \begin{aligned}
         V^\dagger_{t_1} V_{t_1} &= (I \otimes \langle 0 |^{\otimes r}) u^\dagger_{(t_1, 0)} (I \otimes | 0 \rangle^{\otimes r})(I \otimes \langle 0 |^{\otimes r}) u_{(t_1, 0)} (I \otimes | 0 \rangle^{\otimes r}) \\
         &\propto \frac{1}{|\mathcal{S}|} \sum_{s \in \mathcal{S}} s. 
    \end{aligned}
\end{align}
Note that $V_{t_1}$ is referred to as the \emph{good error-detecting} circuit in Ref.~\cite{bacon2017sparse} and the addition of a logical Clifford unitary to the original construction presented in Ref.~\cite{bacon2017sparse} is harmless, as they also preserve the stabilizers. Let $u_{(t_1, 0)}$ denotes the Clifford unitary implementing this circuit. Note that by the above definition, $r_a(u^\dagger_{(t_1, 0)}(\mathcal{M}_{t_1 +0.5})) = \mathcal{M}_{t_1 +0.5} = r_a(u_{(t_1, 0)}(\mathcal{M}_{t_1 +0.5})) =  \mathcal{M}_{0.5} \equiv \mathcal{M}$. For $i=0, \cdots, L-1$, we assume the following periodic structure to simplify the discussion, 
\begin{align}\label{eq: syndrome-periodic-condition}
    \langle u_{(t_{i+1}, t_i)}(\mathcal{M}) \cup \mathcal{M} \rangle = \langle \mathcal{S}, \mathcal{M} \rangle.
\end{align}
Roughly speaking, this assumption ensures that between each interval $[t_{i}, t_{i+1}]$, we implement a full round of syndrome extraction, associated with the base stabilizer group $\mathcal{S}$. Let's denote $t_0, t_1, \cdots, t_L$ to be the (periodic) time slices for which we perform transversal $Z$ measurements $\mathcal{M}$ on the ancilla qubits. We consider the gauge group,

\begin{align}\label{eq: gauge-group-ancilla-worldlines}
    \mathscr{G} =  \langle \prod^L_{i=0} \eta_{t_i}(\mathcal{M}),  \prod^{T-1}_{i=0} \mathcal{G}_{(i+1, i)} ,  \eta_T(\mathcal{S}) \rangle,
\end{align}
where the subscripts are denoted in consistency with the above derivation in different time intervals during which a syndrome extraction round is performed. For $i=1, \cdots, L-1$,
\begin{align}
    \mathcal{G}_{(t_i+1, t_i)} = \{ \eta_{t_i+1}(u_{(t_i+1, t_i)}(a)) \eta_{t_i}(a): a \in \mathrm{P}^{n+r}; \: r_a(a) \in \mathcal{M} \}. 
\end{align}
Note that $r_a(a) \in \mathcal{M} \iff a \in C(\mathcal{M})$, so that this gauge group Eq.~\eqref{eq: gauge-group-ancilla-worldlines} is defined in the analogous fashion compared with the case without ancillary worldlines. For $t \neq t_i$ for $i=0, \cdots, L$, 
\begin{align}
    \mathcal{G}_{(t+1, t)} := \langle g(a, t), a \in \mathrm{P}^{n+r} \}.
\end{align}
Lastly, we assume again the perfect syndrome measurement at the end. similar to the above cases without ancillary worldlines, the addition of $\eta_T(\mathcal{S})$ to the generating set of gauge group is not necessary: it can be generated without explicitly assuming $\eta_T(\mathcal{S})$. The inclusion in our case, however, simplifies the assumptions (even and oddness of all the intermediate times when we perform measurements) and simplifies the proof. We state an analogous lemma to Lemma~\ref{lemma: squeeze-lemma-subsystem-spacetime}, incorporating our new assumption on the propagation of Clifford unitaries. 

\begin{lemma}
    Let $\mathscr{G} \subset \mathcal{A}$ be defined in Eq.~\eqref{eq: gauge-group-ancilla-worldlines}. Denote $\mathscr{G}^\perp$ to be its orthogonal complement and $\mathscr{M} = \mathscr{G}^\perp \cap \mathscr{G}$, then 
    \begin{align}
        \mathscr{G}^\perp = \mathscr{M} \cdot  \langle \overleftarrow{\eta_T(a)}, r_0(a) \in C(\mathcal{S})/ \mathcal{S}, r_a(a) \in \mathcal{M} \rangle. 
    \end{align}

\begin{proof}
       Let $x \in \mathscr{G}^\perp$ with the gauge group defined in Eq.~\eqref{eq: gauge-group-ancilla-worldlines} and let $x = \eta_T(a_T)\cdots \eta_0(a_0)$. By Lemma~\ref{lemma: stabilizer-deform-backpropagation}, if $x \notin \mathscr{M}$, then it must be $x \sim_{\mathscr{M}} \overleftarrow{\eta_T(a_T)} \in \mathscr{G}^\perp$. Then $\langle \overleftarrow{\eta_T(a_T)}, \eta_{t_i}(\mathcal{M}) \rangle = \langle \eta_T(a_T), \overrightarrow{\eta_{t_i}(\mathcal{M})}\rangle =  \langle \eta_T(a_T),u_{(T, t_i)}( \mathcal{M})\rangle =0$. By the assumption of the good error-detecting circuit, we have that $r_0(u_{(T, t_i)}(\mathcal{M})) = \mathcal{S}$ and $r_a(u_{(T, t_i)}(\mathcal{M})) = \mathcal{M}$. This forces that $r_0(a_T) \in C(S)$ and $r_a(a_T) \in r_a(C(\mathcal{M})) = \mathcal{M}$. By the construction of the gauge group Eq.~\eqref{eq: gauge-group-ancilla-worldlines}, we have that if $r_0(a_T) \in \mathcal{S}$, then $\overleftarrow{\eta_T(a_T)} \in \mathscr{M}$, which concludes the proof. 
\end{proof}
\end{lemma}

The above lemma already concludes for the code-to-circuit isomorphism with ancillary worldlines. We now also give a clear characterization of the measurement subgroup $\mathscr{M}$. 

\begin{corollary}
    Let the measurement subgroup be $\mathscr{M} = \mathscr{G}^\perp \cap \mathscr{G}$ with the gauge group defined in Eq.~\eqref{eq: gauge-group-ancilla-worldlines}. Then, 
    \begin{align}
        \mathscr{M} = \langle \overleftarrow{\eta_{t_i}(m)}, m \in \mathcal{M} \rangle \cdot \langle \overleftarrow{\eta_T(s)}: s \in \mathcal{S} \rangle.  
    \end{align}
    \begin{proof}
        It is clear that $\langle \overleftarrow{\eta_{t_i}(m)}, m \in \mathcal{M} \rangle \cdot \langle \overleftarrow{\eta_T(s)}: s \in \mathcal{S} \rangle \in \mathscr{M}$ and we show that it is the maximal subgroup. The proof follows by minimal modification from Lemma~\ref{lemma: measurement-subgroup-syndrome-extraction-circuit} by incorporating the ancillary worldlines. 
    \end{proof}
\end{corollary}

This allows us to conclude the following, 
\begin{theorem}[Circuit-(subsystem) code isomorphism with ancillary worldlines]\label{thm: circuit-subsystem-code-ancilla}
     Without loss of generality, we consider $T$ is even. Let our gauge group to be defined by Eq.~\eqref{eq: gauge-group-ancilla-worldlines}.  Then this defines a subsystem code of code parameters $[[(T+1)(n+r), k ]]$ such that
    \begin{enumerate}
        \item The measurement (stabilizer) subgroup $\mathscr{M}$ is generated by, 
       \begin{align*}
        \mathscr{M} = \prod^{L}_{i=1}\langle \overleftarrow{\eta_{t_i}(m)}, m \in \mathcal{M} \rangle \cdot \langle \overleftarrow{\eta_T(s)}: s \in \mathcal{S} \rangle.  
    \end{align*}
        
        \item The bare logical  $\mathscr{L}_{\operatorname{bare}}$ is generated by with one-to-one correspondence with the backward propagation of the base logical space
      \begin{align*}
          \mathscr{L}_{\operatorname{bare}} = \langle \overleftarrow{\eta_T(a)}, r_0(a) \in C(\mathcal{S})/ \mathcal{S}, r_a(a) \in \mathcal{M} \rangle. 
      \end{align*}
        
    \end{enumerate}
\end{theorem}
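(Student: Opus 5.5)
The plan follows the pattern already used for Theorem~\ref{thm: circuit-subsystem-code-1}, adapted to the ancillary worldlines. The result combines two earlier facts: the preceding lemma characterizing $\mathscr{G}^\perp$ for the gauge group in Eq.~\eqref{eq: gauge-group-ancilla-worldlines}, and the corollary giving explicit generators of $\mathscr{M}$.

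First, I take the description of $\mathscr{M}$ in item 1 directly from that corollary. The only extra check is that the product over $i=1,\dots,L$ of $\langle \overleftarrow{\eta_{t_i}(m)}: m\in\mathcal{M}\rangle$ together with $\langle\overleftarrow{\eta_T(s)}\rangle$ exhausts $\mathscr{M}$. This follows from the telescoping argument of Lemma~\ref{lemma: measurement-subgroup-syndrome-extraction-circuit}. Given $x\in\mathscr{M}\subset\mathscr{G}^\perp$, commutation with each transport generator $\mathcal{G}_{(t+1,t)}$ forces $x$ to be constant along propagation except at the times $t_i$. At those times, commutation with $\mathcal{G}_{(t_i+1,t_i)}$ and the double-commutant Lemma~\ref{lemma: double-commutant-phase} force a jump lying in $\mathcal{M}$. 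Multiplying $x$ by the corresponding $\overleftarrow{\eta_{t_i}(m_i)}$ strips these jumps, which leaves $\overleftarrow{\eta_T(a_T)}$.

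Second, I prove item 2 by defining $\Phi(a)=[\overleftarrow{\eta_T(a)}]_{\mathscr{M}}$ on the set $\{a: r_0(a)\in C(\mathcal{S}),\ r_a(a)\in\mathcal{M}\}$, and checking the following properties in order.
\begin{enumerate}
\item Well-definedness and homomorphism: this uses Proposition~\ref{prop: spacetime-cumulant-automorphism}, together with $\overleftarrow{\eta_T(s)}\in\mathscr{M}$.
\item Image lies in $\mathscr{G}^\perp$: this uses $\langle \overleftarrow{\eta_T(a)},\eta_{t_i}(m)\rangle=\langle a, u_{(T,t_i)}(m)\rangle$. By the good error-detecting property Eq.~\eqref{eq: good-error-detecting-circuit} and the periodicity Eq.~\eqref{eq: syndrome-periodic-condition}, $u_{(T,t_i)}(m)$ has data part in $\mathcal{S}$ and ancilla part in $\mathcal{M}$, so it commutes with $a$.
\item Surjectivity modulo $\mathscr{M}$: this is exactly the preceding lemma.
\item Injectivity: if $\overleftarrow{\eta_T(a)}\in\mathscr{M}$, then by the item 1 description and the invertibility of backward propagation (triangular over $\mathbb{F}_2$), $\eta_T(a)$ must come from the terminal layer. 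Hence $r_0(a)\in\mathcal{S}$, and the kernel is precisely the quotient by $\mathcal{S}$.
\end{enumerate}
Together these give $\mathscr{G}^\perp/\mathscr{M}\cong C(\mathcal{S})/\mathcal{S}$, so the logical count is $k$.

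The main obstacle is the injectivity and counting step, specifically excluding a nontrivial $r_a$-component in the kernel. I would handle it by noting that ancilla parts in $\mathcal{M}$ are themselves generated by $\eta_T$ of terminal ancilla $Z$-measurements. Those belong to $\mathscr{G}$, and their backward propagations are products of the $\overleftarrow{\eta_{t_i}(m)}$ generators. As a result, only the data part modulo $\mathcal{S}$ survives in the quotient. I would close with a symplectic dimension count, $\dim\mathscr{G}^\perp=\dim\mathcal{A}-\dim\mathscr{G}$, as a consistency check, as in Example~\ref{example: repetition-code}.
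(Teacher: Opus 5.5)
Your architecture matches the paper's: the theorem is obtained by combining the preceding lemma on $\mathscr{G}^\perp$ with the corollary listing generators of $\mathscr{M}$, and your map $\Phi$ just makes the resulting isomorphism $\mathscr{G}^\perp/\mathscr{M}\cong C(\mathcal{S})/\mathcal{S}$ explicit. As written, though, the one step with real content is never proved, because two of your steps rely on each other.
\begin{itemize}
\item Your telescoping for item~1 stops at the residual $\overleftarrow{\eta_T(a_T)}$, with $r_0(a_T)\in C(\mathcal{S})$ and $r_a(a_T)\in\mathcal{M}$. You do not show that this residual lies in the span of the listed generators.
\item Your injectivity step then derives $r_0(a)\in\mathcal{S}$ from $\overleftarrow{\eta_T(a)}\in\mathscr{M}$ by invoking item~1.
\end{itemize}
If you simply cite the corollary, your ``extra check'' is redundant and all the content sits in that corollary, which the paper only sketches.

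A symptom of the gap is that you never use that $T$ is even, although the statement fails without it. For odd $T$, the worldline $\overleftarrow{\eta_T(l)}$ of a base logical $l$ factors into transports $g(u^{-1}_{(T,t)}(l),t)$ at $t=T-1,T-3,\dots,0$. All of these are admissible, since the good error-detecting property gives $u^{-1}_{(T,t_i)}(l)\in C(\mathcal{M})$. So the worldline would be gauge, and no logical qubit would survive.

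The missing input is the parity identity
\[
\langle \overleftarrow{\eta_T(a)},\overleftarrow{\eta_T(b)}\rangle=\sum_{t=0}^{T}\langle u^{-1}_{(T,t)}(a),u^{-1}_{(T,t)}(b)\rangle=(T+1)\langle a,b\rangle ,
\]
which equals $\langle a,b\rangle$ mod $2$ for even $T$. This is the ``$T+1$ is odd'' the paper invokes at the corresponding point of Theorem~\ref{thm: circuit-subsystem-code-1}. With it, the argument closes as follows.
\begin{itemize}
\item Suppose $r_0(a)\in C(\mathcal{S})\setminus\mathcal{S}$. Choose a data-only $l'\in C(\mathcal{S})$ with $\langle r_0(a),l'\rangle=1$; this is possible since $C(C(\mathcal{S}))=\mathcal{S}$ modulo phases.
\item Your step~2 puts $\overleftarrow{\eta_T(l')}$ in $\mathscr{G}^\perp$, and by the identity it anticommutes with $\overleftarrow{\eta_T(a)}$. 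Hence $\overleftarrow{\eta_T(a)}\notin\mathscr{G}$, and a fortiori $\overleftarrow{\eta_T(a)}\notin\mathscr{M}$.
\item This proves injectivity without item~1.
\item It also completes item~1: the residual $\overleftarrow{\eta_T(a_T)}\in\mathscr{M}$ must have $r_0(a_T)\in\mathcal{S}$, hence equals $\overleftarrow{\eta_T(s)}\,\overleftarrow{\eta_{t_L}(m)}$.
\end{itemize}

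Accordingly, the $r_a$-component you single out as the main obstacle is the easy part, since $\overleftarrow{\eta_{t_L}(m)}=\overleftarrow{\eta_T(m)}$ is itself a listed generator. The real issue is showing that nontrivial logicals do not become gauge. In step~2 you should also record the routine commutation of $\overleftarrow{\eta_T(a)}$ with the transport generators and with $\eta_T(\mathcal{S})$, not only with the $\eta_{t_i}(\mathcal{M})$.
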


\begin{figure*}[pt]
\centering

\includegraphics[width=0.9\linewidth]{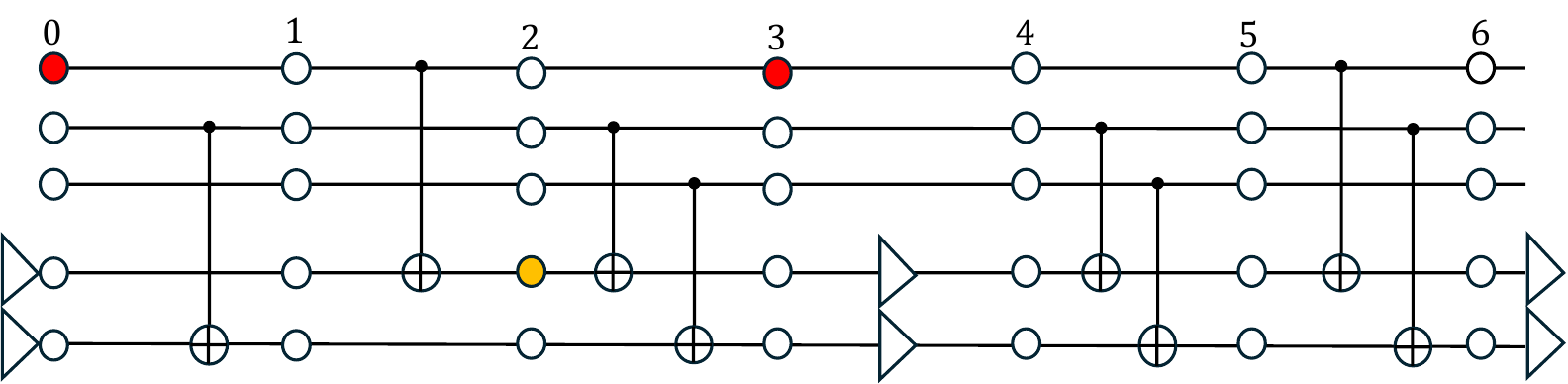}

\caption{The syndrome extraction circuit that implements the $3$-qubit repetition code. Note that the stabilizer measurement fault at $\eta_2(X_4)$ can be gauge-related to $\eta_3(X_1) \eta_0(X_1)$, which explains the modeling of measurement faults of the spacetime circuit without the ancilla. }

\label{fig: rep-circuit-ancilla}
\end{figure*}

\begin{example}\label{example: repetition-code-ancilla}
    We also give an example of the circuit-to-code mapping presented with ancilla worldlines using the repetition code. Let $n=3$ and $r=2$, we implement the following circuit with two rounds of syndrome extractions, See Figure~\ref{fig: rep-circuit-ancilla}. The gauge group $\mathscr{G}$ is given by, 
    \begin{align}
        \mathcal{G}_{(i+1, i)} = \{ g(a, i), a \in \mathrm{P}^{5} \}; \quad \mathcal{G}_{(3, 2)} = \{ \eta_{3}(u_{(3, 2)}(a)) \eta_{2}(a): a \in \mathrm{P}^{5}; \: a \in C(\mathcal{M}) \}, 
    \end{align}
    where $\mathcal{M} = \langle Z_4, Z_5 \rangle$ and for $i =0,1, 3, 4, 5$. Specifically, 
    \begin{align}
        \mathcal{G}_{(3, 2)} = \left\{ 
        \begin{aligned}
             &\eta_3(X_1)\eta_2(X_1), \eta_3(X_2)\eta_2(X_2X_2), \eta_3(X_3)\eta_2(X_3),  \eta_3(Z_j)\eta_2(Z_j), j=1, \cdots, 5
        \end{aligned}
         \right\}.
    \end{align}
    In addition to $\langle \eta_6(\mathcal{M}), \eta_2(\mathcal{M}),  \eta_0(\mathcal{M}) \rangle$. Note that in this case $\eta_T(\mathcal{S})$ can be generated given the above generated, so it is redundant. Then the stabilizers are given by, 
    \begin{align}
        \mathscr{M} = \langle \overleftarrow{\eta_2(Z_1)}, \overleftarrow{\eta_2(Z_2)}, \overleftarrow{\eta_6(Z_1Z_2Z_4)}, \overleftarrow{\eta_6(Z_2Z_3Z_5)}, \overleftarrow{\eta_6(Z_1Z_2)}, \overleftarrow{\eta_6(Z_2Z_3)}  \rangle.
    \end{align}
    Hence, in total there are $58 + 4 =62$ total number of degrees of freedom for the gauge group and $\mathscr{M}$ has $6$ independent generators. This gives in total $2^2$ for the $\mathscr{L}_{\operatorname{bare}}$ that encodes a logical qubit. Similarly to the spacetime circuit code without ancilla qubits, the mapping assumes the perfect round of final syndrome measurements. In this case, these are given by $\langle \overleftarrow{\eta_6(Z_1Z_2)}, \overleftarrow{\eta_6(Z_2Z_3)}  \rangle$, without ancilla qubits. 
\end{example}


\subsection{Learnability of circuit Pauli faults}\label{subsec: learnability-SE-circuits-pauli-faults}

Finally we apply the circuit-to-(subsystem) code isomorphism Theorem~\ref{thm: circuit-subsystem-code-1} in our learnability of circuit Pauli faults. Denote $\sigma: \mathcal{A} \rightarrow \mathbb{F}^m_2$ the syndrome map associated with $\mathscr{M}$. 

\begin{lemma}
     For two circuit Pauli faults $a, b \in \mathcal{A}$, their final accumulated errors at $T$ are given respectively by $\eta_T(\overrightarrow{a})$ and $\eta_T(\overrightarrow{b})$. Then, 
     \begin{enumerate}
         \item If $\sigma(a) = \sigma(b)$, then $\eta_T(\overrightarrow{a})$ and $\eta_T(\overrightarrow{b})$ must have the same syndrome associated with the output stabilizer group $\mathcal{M}_{T+0.5} = \mathcal{S}$.

         \item For any $a \sim_{\mathscr{G}} b$, that is, there exists some element in the gauge group $g \in \mathscr{G}$ such that $a = b  g$. Then there always exists some stabilizers $s \in \mathcal{S} = \mathcal{M}_{T+0.5}$ such that $\eta_T(\overrightarrow{a}) = \eta_T(\overrightarrow{b} )\eta_T(s)$. 
     \end{enumerate}
     \begin{proof}
         The first statement is trivial since $\mathscr{M}$ itself contains $\eta_T(\mathcal{S})$ and by the frame invariance with respect to $\beta(\cdot, \cdot)$. For the second statement. Recall that $\overrightarrow{bg} = \overrightarrow{b}\overrightarrow{g}$ and $\eta_T(\overrightarrow{g})$ vanishes unless $g \in \mathscr{M}$ or $g \in \mathcal{G}_{(T, T-1)}$ so that $\eta_T(\overrightarrow{g}) \in \mathcal{S}$ as desired. 
     \end{proof}
\end{lemma}

In either cases, we can describe a probability distribution $\Lambda \in \mathbb{R}[\mathcal{A}]$ where the spacetime Pauli (Boolean) group $\mathcal{A} $. Then we wish to determine, 
\begin{align}\label{eq: eff-distribution-spacetime}
    p^{\operatorname{eff}}_a = \frac{1}{|\mathscr{G}|} \sum_{g \in \mathscr{G}} p_{ag}.
\end{align}
Hence, we can learn the circuit Pauli faults in a way that is mathematically analogous to that of the subsystem code.  

\begin{theorem}[Learnability for circuit Pauli faults]\label{thm: circuit-logical-learnability}
     Given a fault distribution $\Lambda$ defined on the spacetime Pauli group such that all its Pauli eigenvalues are greater than $0$. Furthermore, denote $\mathcal{K}$ to be the set of spacetime Pauli noise parameters parameterized in the character basis Eq.~\eqref{eq: pauli-character-fourier}. 
    \begin{enumerate}
        \item The distribution is learnable if and only if each $a \in \mathcal{K} \subseteq \mathcal{A}$ corresponds to unique syndrome. 
        \item The effective distribution Eq.~\eqref{eq: eff-distribution-spacetime} is learnable if and only if for any $a \in \mathcal{K}$ and $a' \in \mathcal{K}$, $\sigma(a) = \sigma(a') \iff a \sim_{\mathscr{G}} a' $.
    \end{enumerate}
\end{theorem}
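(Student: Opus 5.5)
The plan is to reduce Theorem~\ref{thm: circuit-logical-learnability} to the abstract results of Section~\ref{sec: learnability-general}. The reduction goes through the circuit-to-code isomorphism of Theorem~\ref{thm: circuit-subsystem-code-1}, and almost nothing circuit-specific has to be proved.

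First, we fix the Boolean group as the spacetime Pauli group $\mathcal{A}\cong\mathbb{F}_2^{2n(T+1)}$. The bilinear form is $\beta=\langle\cdot,\cdot\rangle_{\mathcal{A}}=\sum_t\langle\eta_t(\cdot),\eta_t(\cdot)\rangle$. We check that $\beta$ is symmetric and non-degenerate. This holds because it is an orthogonal direct sum of the symplectic forms on each time slice, and each of these is non-degenerate. Consequently $(\mathscr{G}^\perp)^\perp=\mathscr{G}$, which is exactly what the rank lemma preceding Theorem~\ref{thm: learn-eff-distribution-from-syndrome} needs.

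Second, we take $\mathscr{G}$ to be the spacetime gauge group of Eq.~\eqref{eq: spacetime-gauge-group} and $\mathscr{M}=\mathscr{G}\cap\mathscr{G}^\perp$. By Theorem~\ref{thm: circuit-subsystem-code-1}, $\mathscr{M}$ is generated by the backward-propagated checks. We then use Proposition~\ref{prop: spacetime-cumulant-automorphism} to identify the measured quantities. For a check $m\in\mathcal{M}_{t+0.5}$, the recorded parity obeys $\langle\eta_t(m),\eta_t(\overrightarrow{a})\rangle=\langle\overleftarrow{\eta_t(m)},a\rangle$. So the empirical syndrome statistics and their parities are precisely the Fourier coefficients $\lambda_{m'}=\widehat{\Lambda}(m')$ for $m'\in\mathscr{M}$, and the syndrome map $\sigma$ is the one attached to $\mathscr{M}$ in the abstract setting.

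Third, we parameterize $\Lambda$ in the character basis. This is justified by Lemma~\ref{lemma: fourier-character-basis-sign}, since all eigenvalues are positive. Item~1 then follows directly from Theorem~\ref{thm: learn-all-distribution-from-syndrome}. For item~2, we note that the effective distribution of Eq.~\eqref{eq: eff-distribution-spacetime} equals $\frac{1}{|\mathscr{G}|}\Lambda*1_{\mathscr{G}}$. By Eq.~\eqref{eq: eff-distribution} and the subgroup Fourier identity, its Fourier transform is $\widehat{\Lambda}1_{\mathscr{G}^\perp}$. Learning $\Lambda^{\operatorname{eff}}$ is therefore equivalent to learning $\lambda_b$ for $b\in\mathscr{G}^\perp$, and Theorem~\ref{thm: learn-eff-distribution-from-syndrome} gives the stated iff condition with $\sim_{\mathscr{G}}$.

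The main point needing care is the identification in the second step. We must make sure that the data actually available from the circuit, meaning the mid-circuit check outcomes plus the idealized terminal round, span exactly $\mathscr{M}$ as computed in Theorem~\ref{thm: circuit-subsystem-code-1}, and not merely a subgroup of it. We must also make sure that the frame change preserves $\beta$, so that the abstract $\sigma$ coincides with the physical syndrome. I would verify both using the explicit generating set of Lemma~\ref{lemma: measurement-subgroup-syndrome-extraction-circuit} together with the adjoint relation $\langle\overleftarrow{a},b\rangle=\langle a,\overrightarrow{b}\rangle$. Everything else is a direct substitution into the earlier theorems.
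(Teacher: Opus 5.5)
Your proposal is correct and follows the paper's own route: the paper gives no separate argument for this theorem beyond instantiating Theorems~\ref{thm: learn-all-distribution-from-syndrome} and~\ref{thm: learn-eff-distribution-from-syndrome} on the spacetime Pauli group, with the non-degenerate form $\langle\cdot,\cdot\rangle_{\mathcal{A}}$ and the gauge group of Eq.~\eqref{eq: spacetime-gauge-group}, exactly as in its subsystem-code corollary; your extra checks (frame invariance via $\langle\overleftarrow{a},b\rangle=\langle a,\overrightarrow{b}\rangle$, and that the recorded parities generate all of $\mathscr{M}$ by Theorem~\ref{thm: circuit-subsystem-code-1}) only make explicit what the paper leaves implicit. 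One caveat you inherit from the paper is that the pairwise condition in item~2 must be read with the identity included, i.e.\ every zero-syndrome element of $\mathcal{K}$ must lie in $\mathscr{G}$; otherwise $\mathcal{K}=\{a\}$ with $a\in\mathscr{M}^\perp\setminus\mathscr{G}$ satisfies the condition trivially, yet $q_a$ is invisible in every $\lambda_m$, $m\in\mathscr{M}$, while it enters $\lambda_b$ for any $b\in\mathscr{G}^\perp$ with $\langle a,b\rangle=1$.
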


It is natural to ask the case of unlearnability. To see that we can explicitly characterize the learnable degrees of freedom similar to writing the matrix $D_{\mathscr{G}^\perp}$ in Theorem~\ref{thm: learn-eff-distribution-from-syndrome}. We can still parameterize the Pauli fault distribution, 
\begin{align}
    \Lambda = *_{a \in \mathcal{K}} ((1-q_a) + q_a \chi_a),
\end{align}
where there would exist $a, b \in \mathcal{K}$ with the same syndrome $\sigma(a) = \sigma(b)$. Taking the logarithm as standard and we can partition the matrix $D_{\mathscr{G}^\perp}$ as follows, 
\begin{align}
    D_{\mathscr{G}^\perp} = \left(\begin{array}{c|c|c} 
  A_{\mathscr{M}} & B_{\mathscr{M}} & C_{\mathscr{M}} \\ 
  \hline 
  A_{\mathscr{G}^\perp / \mathscr{M}} & B_{\mathscr{G}^\perp / \mathscr{M}}  & C_{\mathscr{G}^\perp / \mathscr{M}}
\end{array}\right).
\end{align}
Let $A$ denote these with unique syndromes and $B$ denote these with duplicate syndromes but related to gauge group $\mathscr{G}$. Let $C$ denote the duplicate syndromes that are not learnable (not related by the gauge group). In this case, one can always find a basis such that, 
\begin{align}
    D_{\mathscr{G}^\perp} = \left(\begin{array}{c|c|c} 
  A_{\mathscr{M}} & 0 & 0\\ 
  \hline 
  A_{\mathscr{G}^\perp / \mathscr{M}} & 0  & C'_{\mathscr{G}^\perp / \mathscr{M}}
\end{array}\right),
\end{align}
where each (nonzero) column within the partition $C$ undergoes a column reduction with the column of $A$ sharing the same syndrome. By definition of $C$, $C'_{\mathscr{G}^\perp / \mathscr{M}}$ cannot be all zero.. This basis representation defines the \emph{prior distribution}, 

\begin{definition}[Prior distribution]\label{def: prior-distribution} 
    Let $\Lambda$ be a (Pauli) fault distribution on $\mathcal{C}_T(\mathcal{S})$ be parameterized in the character basis Eq.~\eqref{eq: main-character-fourier-basis}. Its prior distribution $\Lambda^{\operatorname{prior}}$ is given by 
    \begin{align}
        \Lambda^{\operatorname{prior}} = *_{c \in \mathcal{K}_{\mathscr{M}}} \left( (1-q_{[c]}) + q_{[c]} \chi_c \right),
    \end{align}
    such that we have 
    \begin{align}
        1 - 2 q_{[c]} = \prod_{a \in \mathcal{K}: \sigma(a) = \sigma(c)} (1 - 2q_a). 
    \end{align}
\end{definition}

In the case where all the non-vanishing $q_a \in (0, 1/2)$, then the above formula can be seen as the XOR combination of fault probabilities corresponding to the same syndrome. Note that it is clear that prior distribution can always be estimated from the syndrome and it is of interest to ask to what extent it can represent the logical error probabilities. 

\begin{corollary}\label{cor: prior-distribution-learn-logical-error}
    The prior distribution in Definition~\ref{def: prior-distribution} is capable of representing the logical error probabilities if and only if $C'_{\mathscr{G}^{\perp}/\mathscr{M}}=0$, in other words, for any $a, a' \in \mathcal{K}$, $\sigma(a) = \sigma(a') \iff a \sim_{\mathscr{G}} a'$. 
    \begin{proof}
        The forward direction is straightforward. If $C'_{\mathscr{G}^{\perp}/\mathscr{M}}=0$, then we can represent $\lambda_b$ for any $b \in \mathscr{G}^\perp$ as a product of Fourier coefficients (Pauli fidelity). Hence, $\lambda_b = \prod_{c \in \mathcal{K}: \langle c, b \rangle = 1} (1-2q_c)$. To show that converse, note that if $C'_{\mathscr{G}^{\perp}/\mathscr{M}}\neq 0$ (recall notation from Eq.~\eqref{eq: main-detector-matrix-basis-change}), then there always exists some $b \in \mathscr{G}^\perp \setminus  \mathscr{M}$ such that 
        \begin{align}\label{eq: logical-prior-distribution}
    \lambda_b = ( \prod_{a \in \mathcal{K}_{\mathscr{M}}: \langle a, b \rangle = 1} (1-2q_{[a]})  ) (\prod_{c \in C} (1-2q_c)^{C'_{\mathscr{G}^\perp/ \mathscr{M}}[b, c]}),
\end{align}
where $C'_{\mathscr{G}^\perp/ \mathscr{M}}[b, c] \neq 0$ for some $c \in C$. The expression within the first parenthesis gives the product of coefficients of the prior distribution, which by definition (Eq.~\eqref{eq: main-prior-distribution}) equals $\prod_{a \in \mathcal{K}_{\mathscr{M}}: \langle a, b \rangle = 1} (1-2q_{[a]}) = \prod_{a \in \mathcal{K}_{\mathscr{M}}: \langle a, b \rangle = 1} \prod_{a' \in \mathcal{K}: \sigma(a') = \sigma(a)} (1-2q_{a'})$. The terms within the second parenthesis are independent from coefficients of the prior distribution; in other words, varying $q_c$ from $c \in C'_{\mathscr{G}^\perp / \mathscr{M}}[b, :]$ would provide different values of $\lambda_b$ for any given prior distribution, which concludes the result. 
    \end{proof}
\end{corollary}

\begin{remark}
We note that the prior distribution in our case is not defined with respect to some detector error model~\cite{derks2024designing, takou2025estimatingdecodinggraphshypergraphs, blumekohout2025estimatingdetectorerrormodels}, as the detectors are typically comprised of linear (affine) relation of generators from $\mathscr{M}$. This arises a subtle caveat: if we restrict the prior distribution onto detectors, such a restricted prior distribution might not be capable of representing logical error probabilities, even if learnability up to logical equivalence condition is met. 
\end{remark}

In what follows, we bound the error that arises in cases where some faults are not learnable up to logical equivalence. In this case where $ 0 \leq p_a \leq \varepsilon_c$ for some small parameter $\varepsilon_c$. 
\begin{align}
    \log \lambda_b &= \sum_{a \in \mathcal{K}_{\mathscr{M}}: \langle a, b \rangle =1} \log(1 - 2q_a) \sum_{a' \in B \sqcup C: \sigma(a')= \sigma(a)} \log(1-2q_{a'}) +  \sum_{c \in C} C'_{\mathscr{G}^\perp / \mathscr{M}}[b, c]\log(1-2q_c)\\
    &= \sum_{j} e_j \log \lambda_{s_j} + \sum_{c \in C} C'_{\mathscr{G}^\perp / \mathscr{M}}[b, c]\log(1-2q_c), \\
    &= \sum_{j} e_j \log \lambda_{s_j} + O(|C'_{\mathscr{G}^\perp / \mathscr{M}}[b, :]| \varepsilon_c), 
\end{align}
where $C_{\mathscr{G}^\perp / \mathscr{M}}[b, c]$ takes value in $\{-1, 0, 1 \}$ and $s_j \in \mathscr{M}$. The first equation holds from taking logarithm Eq.~\eqref{eq: logical-prior-distribution} (assuming there is no sign ambiguity). The second line holds from Theorem~\ref{thm: circuit-logical-learnability} (See main text Theorem~\ref{thm: main-logical-learnability} for a unified treatment), where we denote $\log \lambda_b$ as a vector in $\mathbb{R}^{|\mathcal{K}|}$. restricting  $\log \lambda_b$ onto $A$ and $B$, then there always exists some $e \in \mathbb{R}^{|\mathscr{M}|}$ (more precisely, we could utilize the compressed sensing technique and write $e \in \mathbb{Q}^{q}$, see Lemma~\ref{lemma: logical-error-propagation}) such that 
\begin{align}
    \log \lambda_b |_{A} =\sum_{a \in \mathcal{K}_{\mathscr{M}}: \langle a, b \rangle =1} \log(1 - 2q_a)= \sum_{j} e_j \log \lambda_{s_j} |_{A}, 
\end{align}
for indexed $s_j \in \mathscr{M}$ where we keep the detailed indexing vague for simplicity. Note that if $\sigma(a') = \sigma(a) $ for any $a' \in B \sqcup C$, $\langle s, a' \rangle = \langle s, a \rangle$ for $s \in \mathscr{M}$. Then we can extend the coefficients $e$ to columns with $B$ and $C$, which explains the second line above. Relating to the effective Pauli probability, we can perturbatively expand (say that if $\Lambda$ is given as the Pauli-Lindblad form) for the first-order approximation~\cite{Chen_2023_learnability}

\begin{align}
   \begin{aligned}
        p^{\operatorname{eff}}_a &= \frac{1}{|\mathcal{A}|}\sum_{b \in \mathscr{G}^\perp} \chi_b(a) \lambda_b \approx \frac{1}{|\mathcal{A}|} \sum_{b \in \mathscr{G}^\perp} \chi_b(a) \left( 1 + \log \lambda_b \right), \\
        &\approx \frac{1}{|\mathcal{A}|} \sum_{b \in \mathscr{G}^\perp} \chi_b(a)  + \frac{1}{|\mathcal{A}|} \sum_{b \in \mathscr{G}^\perp} \chi_b(a) \left( \sum_j e^{(b)}_j \log \lambda_{s_j} + O(|C'_{\mathscr{G}^\perp / \mathscr{M}}[b, :]| \varepsilon_c) \right), 
   \end{aligned}
\end{align}
which leads to an additive error $\varepsilon_c$. For $a \notin \mathscr{G}$, we further approximate 
\begin{align}
   \begin{aligned}
        p^{\operatorname{eff}}_a &\approx  \frac{1}{|\mathcal{A}|} \left( \sum_{b \in \mathscr{G}^\perp} \chi_b(a) \sum_j e^{(b)}_j \log \lambda_{s_j} + \sum_{b \in \mathscr{G}^\perp \setminus \mathscr{M}} \chi_b(a)O(|C'_{\mathscr{G}^\perp / \mathscr{M}}[b, :]| \varepsilon_c) \right),
   \end{aligned}
\end{align}
where $|C'_{\mathscr{G}^\perp / \mathscr{M}}[b, :]| =0$ for $b \in \mathscr{M}$. Note that, for sufficiently small $\varepsilon_c$, we might be able to bound $p^{\operatorname{eff}}_a$ to some prescribed relative precision.


\subsection{Applications to estimation of logical error probabilities}\label{subsec: spacetime-logical-noise-channel}

We now derive the logical channel in the spacetime circuit noise model, under the above formalism. We will work within the syndrome extraction circuit in Definition~\ref{def: main-syndrome-extraction-circuit} without ancillary worldlines, for simplicity. Note that the following discussion cannot be straightforwardly applied if the circuit consists of measurement sequences that are not mutually commuting; that is, the circuit is implemented from a base subsystem code, which will be systematically investigated in the next update. 

\begin{definition}[Spacetime logical noise channel] 
Let $\Lambda \in \mathbb{R}[\mathcal{A}]$ be a spacetime Pauli distribution, and the spacetime subsystem code is given by $(\mathscr{G}, \mathscr{M})$. Then the logical distribution is given by $\Lambda^{L} \in \mathbb{R}[\bar{\mathcal{A}}]$ where $\bar{\mathcal{A}} \cong \mathbb{F}^{2k}_2$ is given by the probability, 
\begin{align}\label{eq: circuit-logical-noise}
    p^{L}_{\bar{l}} &= |\mathscr{G}| \sum_{a_{z}: z \in \mathbb{F}^{M}_2} p^{\operatorname{eff}}_{a_z \overleftarrow{l} }=  \sum_{a_z: z \in \mathbb{F}^M_2} \sum_{g \in \mathscr{G}}p_{a_z \overleftarrow{l}g},
\end{align}
where $\bar{l}$ denotes the coset representation of the logical operator with respect to $\mathscr{G}$ and $\overleftarrow{l} := \overleftarrow{\eta_T(l)} \in \mathscr{G}^{\perp} / \mathscr{M}$ for $l \in C(\mathcal{S}) / \mathcal{S}$. 
\end{definition}

We can take the Fourier (Walsh-Hadamard) transform on the effective distribution

\begin{align}\label{eq: eff-logical-error-probability-rest}
            \begin{aligned}
                p^{\operatorname{eff}}_{a_{z} \overleftarrow{l}} &= \frac{1}{|\mathscr{G}|} \sum_{g \in \mathscr{G}} p_{a_z \overleftarrow{l} g} = \frac{1}{|\mathscr{G}|} \sum_{g \in \mathscr{G}} \frac{1}{|\mathcal{A}|} \sum_{b \in \mathcal{A}} \lambda_{b} \chi_b(a_z \overleftarrow{l}g) \\
                &= \frac{1}{|\mathcal{A}|} \sum_{b \in \mathcal{A}} \lambda_{b} \chi_{b}(a_z \overleftarrow{l}) \frac{1}{|\mathscr{G}|} \sum_{g \in \mathscr{G}} \chi_b(g) \\
                &= \frac{1}{|\mathcal{A}|} \sum_{b \in \mathscr{G}^\perp} \lambda_{b} \chi_{b}(a_z \overleftarrow{l}) = \frac{1}{|\mathcal{A}|} \sum_{\overleftarrow{l'} \in \mathscr{G}^{\perp} / \mathscr{M}} \sum_{\overleftarrow{m} \in \mathscr{M}} \lambda_{\overleftarrow{l'} \overleftarrow{m}} \chi_{\overleftarrow{l'} \overleftarrow{m}}(a_z \overleftarrow{l}), \\
                &= \frac{1}{|\mathcal{A}|} \sum_{\overleftarrow{l'} \in \mathscr{G}^{\perp} / \mathscr{M}} \chi_{\overleftarrow{l}'}(\overleftarrow{l}) \chi_{\overleftarrow{l'}} (a_z)\sum_{\overleftarrow{m} \in \mathscr{M}} \lambda_{\overleftarrow{l'} \overleftarrow{m}} \chi_{\overleftarrow{m}}(a_z).
            \end{aligned}
        \end{align}

The fourth equality is given by the signed average over a group, and the fifth equality, where $b = \overleftarrow{l'} \overleftarrow{m}$ for $b \in \mathscr{G}^\perp = \mathscr{M} \cdot \mathscr{L}_{\operatorname{bare}}$ is given by the circuit-to-code isomorphism of the syndrome extraction circuit Theorem~\ref{thm: circuit-subsystem-code-1}. The final equality is given by the linearity (homomorphism property) of the character $\chi_b(\cdot)$. Hence, 

\begin{align}\label{eq: logical-error-probability-rest}
     \begin{aligned}
         p^{L}_{\bar{l}} &= |\mathscr{G}| \sum_{a_{z}: z \in \mathbb{F}^{M}_2} p^{\operatorname{eff}}_{a_z \overleftarrow{l} } \\
     &= \frac{1}{|\mathscr{G}^\perp / \mathscr{M}|}  \sum_{\overleftarrow{l'} \in \mathscr{G}^{\perp} / \mathscr{M}} \chi_{\overleftarrow{l}'}(\overleftarrow{l}) \underbrace{\frac{1}{|\mathscr{M}|} \sum_{a_{z}: z \in \mathbb{F}^{M}_2}\sum_{\overleftarrow{m} \in \mathscr{M}} \lambda_{\overleftarrow{l'} \overleftarrow{m}} \chi_{a_z}({\overleftarrow{l'}\overleftarrow{m}})}_{=: \lambda^L_{\bar{l}}},
     \end{aligned}
\end{align}
where the second line we utilize the fact $\chi_a(b) = \chi_b(a)$. In other words, as long as $\lambda_b$ for $b \in \mathscr{M}$ is learnable from the syndrome data, we can estimate the logical error probability to arbitrary precision in principle. This form looks intimidating and remains intractable even for a small system. We can often evaluate this in another way, using the prior distribution, assuming the learnability up to logical equivalence condition is met

\begin{align}\label{eq: logical-error-probability}
    \begin{aligned}
       p^{L}_{\bar{l}}  &= |\mathscr{G}| \sum_{a_z: z \in \mathbb{F}^M_2} p^{\operatorname{eff}}_{a_z \overleftarrow{l}} = \frac{1}{|\mathscr{G}^\perp|} \sum_{a_z: z \in \mathbb{F}^M_2} \sum_{b \in \mathscr{G}^\perp} \lambda_b \chi_{b}(a_z \overleftarrow{l}),
        \\
        &= \frac{1}{|\mathscr{G}^\perp|}  \sum_{a_z: z \in \mathbb{F}^M_2} \sum_{b \in \mathscr{G}^\perp} \prod_{c \in \mathcal{K}_\mathscr{M}} ((1-q_{[c]}) + q_{[c]} \chi_c({b}))\chi_{b}(a_z \overleftarrow{l}), \\
        &= \frac{1}{|\mathscr{G}^\perp |}  \sum_{a_z: z \in \mathbb{F}^M_2} \sum_{b \in \mathscr{G}^\perp } \sum_{J \subset \mathcal{K}_{\mathscr{M}}} \prod_{c \in J} q_{[c]} \chi_{b}(c) \prod_{c' \notin J}(1-q_{[c']}) \chi_{b}(a_z \overleftarrow{l}), \\
        &= \frac{1}{|\mathscr{G}^\perp |}  \sum_{a_z: z \in \mathbb{F}^M_2} \sum_{J \subset \mathcal{K}_{\mathscr{M}}} \prod_{c \in J}  q_{[c]} \left( \sum_{b \in \mathscr{G}^\perp }\chi_{b}(\prod_{c \in J}c a_z \overleftarrow{l}) \right)\prod_{c' \notin J}(1-q_{[c']}), \\
        &=  \sum_{a_z: z \in \mathbb{F}^M_2} \sum_{J \subset \mathcal{K}_{\mathscr{M}}} \prod_{c \in J}  q_{[c]} \prod_{c' \notin J}(1-q_{[c']}) \mathbf{1}\{\prod_{c \in J} ca_z \overleftarrow{l} \in \mathscr{G}\} \\
        & = \sum_{J \subset \mathcal{K}_{\mathscr{M}}} \prod_{c \in J}  q_{[c]} \prod_{c' \notin J}(1-q_{[c']}) \mathbf{1}\{\prod_{c \in J} ca_{\sigma(\prod_{c \in J} c)} \overleftarrow{l} \in \mathscr{G}\}.
    \end{aligned}
\end{align}
The second line we plug in the parameterization of the prior distribution Eq.~\eqref{eq: main-prior-distribution} (Definition~\ref{def: prior-distribution}) and Corollary~\ref{cor: main-prior-distribution-learn-logical-error} (Eq.~\eqref{eq: logical-prior-distribution}). The third line follows from standard convolution terms and the fifth line follows from signed average over $\mathscr{G}^\perp = \mathscr{M} \cdot \mathscr{L}_{\operatorname{bare}}$, which is also a group (subspace) of $\mathcal{A}$. In this case, the logical error probability (given a decoder) can be learned from syndrome data via learning the coefficients of the prior distribution. Note that, we can generally write the logical error probability in this form regardless of learnability up to logical equivalence is achieved, by substituting with the general parameterization in Eq.~\eqref{eq: pauli-character-fourier} and Lemma~\ref{lemma: fourier-character-basis-sign}. We give an intuitive understanding of Eq.~\eqref{eq: logical-error-probability} as follows. In a typical quantum experiments, the logical error probability is computed by updating the logical frame with (XOR) the recovery vector, described by the detector error model~\cite{derks2024designing}. The indicator implies that $\prod_{c \in J} ca_{\sigma(\prod_{c \in J} c)} $ lies in the logical coset of $\overleftarrow{l}$, which would be able to be detected by measuring its conjugate partner. Denote the conjugate partner of $\overleftarrow{l}$ by $\overleftarrow{l'}$ such that $\langle \overleftarrow{l'}, \overleftarrow{l} \rangle = \langle \eta_T(l'), \eta_T(l)\rangle =1$ due to $T$ is even. Then a logical flip only occurs when $\langle \prod_{c \in J} ca_{\sigma(\prod_{c \in J} c)}, \overleftarrow{l'} \rangle =1 = \langle \overrightarrow{\prod_{c \in J} c} \overrightarrow{a_{\sigma(\prod_{c \in J} c)}}, \eta_T(l' )\rangle =1$. Since $\prod_{c \in J} ca_{\sigma(\prod_{c \in J} c)} \in \mathscr{M}^\perp = \mathscr{G} \cdot \mathscr{L}_{\operatorname{bare}}$, recall from construction of the gauge group in Eq.~\eqref{eq: spacetime-gauge-group} in the syndrome extraction circuit, it follows that 
\begin{align}
    \eta_T(\overrightarrow{\prod_{c \in J} ca_{\sigma(\prod_{c \in J} c)}}) = \eta_T(\overrightarrow{\prod_{c \in J} c}) \eta_T(\overrightarrow{a_{\sigma(\prod_{c \in J} c)}}) \in C(\mathcal{S}) / \mathcal{S}.  
\end{align}
This implies the the forward propagation of residual error at the terminal time is given by a logical codeword. This confirms our intuition that circuit logical fault leads to logical failure at the terminal (output) stabilizer code. Because of our spacetime formalism, we have exclusively been working in the rest frame. However, as the above short example alludes, the mechanism of circuit logical failure should not be frame-dependent. In what follows, we jump to the boosted frame where we show that the logical error probability as we defined in Eq.~\eqref{eq: eff-logical-error-probability-rest} and Eq.~\eqref{eq: logical-error-probability-rest} is frame independent.

Recall that we have the measurements taken at $\eta_t(\mathcal{M}_{t+0.5})$ for $t =0, 1, \cdots, T-1$, where we have in total $M$ generators. We define the projection operator for $z \in \mathbb{F}^{M}_2$ by 
\begin{align}
    P_{z} = \prod^{M}_{j=1} (\frac{1 + (-1)^{z_j} \eta_{t_j}(m_{i_j})}{2}),
\end{align}
where $t_j$ and $i_j$ indicate that the globally labeled measurement $j$ is the $i_j$th measurement operator performed at time $t_j + 0.5$. Translating to the backward propagation,  
\begin{align}
    \overleftarrow{P}_z = \prod^{M}_{j=1} (\frac{1 + (-1)^{z_j} \overleftarrow{\eta_{t_j}(m_{i_j})}}{2}).
\end{align}
we can always write generators of $\mathscr{M}$ as backward propagation in the syndrome extraction circuit from Theorem~\ref{thm: circuit-subsystem-code-1}. Under the rest frame, we denote the Pauli operator correction associated with this detected syndrome by $a_{z} \overleftarrow{P}_z = \overleftarrow{P}_0 a_{z}$ with $a_{z} \in \mathcal{A}$, the spacetime Pauli group. Notice that by Proposition~\ref{prop: spacetime-cumulant-automorphism} we have that 

\begin{align}
    a_{z} \overleftarrow{P}_z = \overleftarrow{P}_0 a_{z}  \iff \overrightarrow{a_{z}} P_{z} = P_0 \overrightarrow{a_{z}}.
\end{align}

Then we can define decoders in two different frames as, 
\begin{align}
    \overrightarrow{\mathcal{D}}(\cdot) &= \sum_{z \in \mathbb{F}^{m}_2} \overrightarrow{a_{z}} P_{z} \cdot P_{z} \overrightarrow{a_{z}}, \\
    \mathcal{D}(\cdot) &= \sum_{z \in \mathbb{F}^{m}_2} a_{z} \overleftarrow{P}_z \cdot \overleftarrow{P}_z a_{z} .
\end{align}

The above maps are obviously trace-preserving and define CPTP channels by directly observing their Kraus operators $\{ \overrightarrow{a_{z}} P_{z}\}, \{a_{z} \overleftarrow{P}_\alpha \}$ respectively. Next we consider the frame transformation for distributions defined on $\Lambda \in \mathbb{R}[\mathcal{A}]$. Representing the distribution in the character basis using Eq.~\eqref{eq: pauli-character-fourier}, then under the frame transformations, 
\begin{align}
    \Lambda = \prod_{a \in \mathcal{K}} ((1-q_{a}) + q_{a} \chi_a) \rightarrow \overrightarrow{\Lambda} = \prod_{\overrightarrow{a}: a \in \mathcal{K}}((1-q_{a}) + q_{a} \chi_{\overrightarrow{a}}).
\end{align}
Then it is easy to verify that 
\begin{align}
    \lambda_{\overleftarrow{b}} = \overrightarrow{\lambda}_{b}.
\end{align}

Let us consider the boosted frame at the final time $T$ where we denote ${a_{z, T}} = \eta_T(\overrightarrow{a_{z}})$ as a candidate for the final correction term. This motivates an alternative definition of logical noise distribution, induced by $\overrightarrow{\Lambda}$ followed by $\overrightarrow{D}$. Let's define the effective probability from $\overrightarrow{\Lambda}$ as $\overrightarrow{p}^{\operatorname{eff}}_a$ for $a \in \mathcal{A}$, that is, 

\begin{align}
    \begin{aligned}
         p^{\operatorname{eff}}_{a_{z} \overleftarrow{l}} &= \frac{1}{|\mathcal{A}|} \sum_{\overleftarrow{l'} \in \mathscr{G}^{\perp} / \mathscr{M}} \chi_{\overleftarrow{l}'}(\overleftarrow{l}) \chi_{\overleftarrow{l'}} (a_z)\sum_{\overleftarrow{m} \in \mathscr{M}} \lambda_{\overleftarrow{l'} \overleftarrow{m}} \chi_{\overleftarrow{m}}(a_z), 
     \\&= \frac{1}{|\mathcal{A}|} \sum_{\eta_T(l'):\overleftarrow{l'} \in \mathscr{G}^{\perp} / \mathscr{M}} \chi_{\eta_T(l')}(\overrightarrow{\overleftarrow{l}}) \chi_{\eta_T(l')} (\overrightarrow{a_z})\sum_{m:\overleftarrow{m} \in \mathscr{M}} \overrightarrow{\lambda}_{\eta_T(l') m} \chi_{m}(\overrightarrow{a_z}),\\
      &= \frac{1}{|\mathcal{A}|} \sum_{l':\overleftarrow{l'} \in \mathscr{G}^{\perp} / \mathscr{M}} \chi_{\eta_T(l')}(\eta_T(l)) \chi_{\eta_T(l')} (\overrightarrow{a_z})\sum_{m: \overleftarrow{m} \in \mathscr{M}} \overrightarrow{\lambda}_{\eta_T(l') m} \chi_{m}(\overrightarrow{a_z}),\\
      &= \overrightarrow{p}^{\operatorname{eff}}_{\overleftarrow{a_z} \eta_T(l)}.
    \end{aligned}
\end{align}
The third line is computed according to Theorem~\ref{thm: subsystem-circuit-to-code-isomorphism} or Theorem~\ref{thm: main-circuit-to-code-iso} as 
        \begin{align}
            \overleftarrow{l} = \eta_T(l) \eta_{T-1}(u^{-1}_{(T, T-1)}(l)) \cdots \eta_{0}(u^{-1}_{(T, 0)}(l)),
        \end{align}
        so that we can compute that $\eta_T(\overrightarrow{\overleftarrow{l}}) = l \cdot l \cdots l $ where it multiplies precisely $T+1$ many times. Since by assumption that $T$ is even. Then $\eta_T(\overrightarrow{\overleftarrow{l}}) = l \cdot l \cdots l = l$.
This implies that we can compute the logical error probability 
\begin{align}
      p^{L}_{\bar{l}} &= |\mathscr{G}| \sum_{a_{z}: z \in \mathbb{F}^{M}_2} p^{\operatorname{eff}}_{a_z \overleftarrow{l} } = |\mathscr{G}| \sum_{a_z: z \in \mathbb{F}^M_2} \overrightarrow{p}^{\operatorname{eff}}_{\overrightarrow{a_z} \eta_T(l)}, 
\end{align}
which implies that the logical error probability, as it should be, is frame-independent.

\section{Sample complexity and error analysis}\label{sec: sample-complexity-general}


\subsection{Error propagation via compressed sensing}\label{subsec: error-propagation-compressed-sensing}

In practice, the syndrome expectations are only sampled up to a precision. We refer to it as the \emph{imprecision} due to the finite sampling effect from the syndrome expectations. Let $Q \subset  \mathscr{M}$ from which we sample according to Theorem~\ref{thm: rip-bos-rowsampled-refined}, and we denote the vector $y_{m}:= - \log \lambda_{m}$, Let the noise vector $\omega$ be such that $\bar{y} = y + \omega$, we have $\omega \in \mathbb{R}^q$ with $\|\omega \|_1 \leq q  \varepsilon$ and $\|\omega\|_2 \leq \sqrt{q} \varepsilon$. Our goal is to examine how the sample imprecision could propagate into learning the (effective) distributions. 

\subsubsection*{Bounds on errors in learning the entire (prior) distribution}

Recall Lemma~\ref{lemma: learn-all-distribution-from-syndrome}, we can learn $\Lambda$ from syndrome data if and only if $\mathcal{K} = \mathcal{K}_{\mathscr{M}}$. Otherwise, we could consider learning the prior distribution from syndrome data
\begin{align*}
    \Lambda^{\operatorname{prior}} = *_{a \in \mathcal{K}_{\mathscr{M}}}((1-q_{[a]}) + q_{[a]}\chi_{[a]}),
\end{align*}
where the prior distribution Eq.~\eqref{eq: main-prior-distribution} is rewritten above for convenience. Apply Theorem~\ref{thm: rip-partial-hadamard-refined} and Corollary~\ref{cor: compressed-sensing-sample-theory} with $\mathcal{A}$ in the setting stated in Theorem~\ref{thm: learn-all-distribution-from-syndrome} in Section~\ref{sec: learnability-general}, let $A_{\mathscr{M}, \operatorname{res}} \in \mathbb{R}^{q \times (K-1)}$ be the row-subsampled and column restricted matrix, where $A_{\mathscr{M}, \operatorname{res}}[a, b] = \beta(a, b)$ for $a \in Q \subset \mathscr{M}$ and $b \in \mathcal{K}_{\mathscr{M}} \setminus \{ I \}$. We also denote $H'_{\mathscr{M}, \operatorname{res}} \in \mathbb{R}^{q \times (K+1)}$ by the row-subsampled Hadamard matrix restricted on the columns in $\mathcal{K}_{\mathscr{M}} \cup \{ I \} $, where $H'_{\mathscr{M}, \operatorname{res}}[a, b] = (-1)^{\beta(a, b)}$ for $a \in Q$ and $b \in \mathcal{K}_{\mathscr{M}}$. Furthermore, we denote that $H_{\mathscr{M}, \operatorname{res}}$ to exclude the identity element $\mathcal{K}_{\mathscr{M}}$. Finally, we denote $K+1$th-order restricted isometric constant by $\delta_K$, which is assumed to be small and a constant. 

\begin{lemma}\label{lemma: noisy-recovery-all-distribution}
    Consider the following recovery problem, 
    \begin{align}\label{eq: noisy-estimation-problem}
        y = A_{\mathscr{M}, \operatorname{res}} x + \omega, 
    \end{align}
    subject to the noise $\omega$, where we have that $x_a = - \log(1-2q_{[a]}), \: a \in \mathcal{K}_{\mathscr{M}}$. Then there exists a unique choice of noisy recovery $\bar{x}$ such that 
    \begin{align}
        \|\bar{x} - x \|_2 \leq  2  \frac{\sqrt{1 + \delta_{K}}}{1 - \delta_{K}} \sqrt{q} \varepsilon.
    \end{align}
    
    \begin{proof}
    Note that we can always write 
    \begin{align}
        A_{\mathscr{M}, \operatorname{res}} = \frac{1}{2}(1_{q \times (K-1)} - H_{\mathscr{M}, \operatorname{res}}),
    \end{align}
     This implies that we can always write Eq.~\eqref{eq: noisy-estimation-problem} from, 
        \begin{align}
          \begin{aligned}
                2 y & = (\sum_i x_i ) 1_{q} - H_{\mathscr{M}, \operatorname{res}}x +2\omega  \\
            &= \left( H_{\mathscr{M}, \operatorname{res}} | 1_{q} \right) \begin{pmatrix}
                -x \\
                s
            \end{pmatrix} + 2\omega \\
            &= H'_{\mathscr{M}, \operatorname{res}} \begin{pmatrix}
                -x \\
                s
            \end{pmatrix} + 2\omega.
          \end{aligned}
        \end{align}
        Where $s = \sum^{K-1}_{i=1} x_i $. Then we denote that 
        \begin{align}
            \begin{pmatrix}
                -\bar{x} \\
                \bar{s}
            \end{pmatrix} = 2\left( H^{'T}_{\mathscr{M}, \operatorname{res}} H'_{\mathscr{M}, \operatorname{res}} \right)^{-1} H^{'T}_{\mathscr{M}, \operatorname{res}} y. 
        \end{align}
        This follows that 
        \begin{align}
         \begin{aligned}
              \|\bar{x} - x \|_2 &\leq   \| \begin{pmatrix}
                -\bar{x} \\
                \bar{s}
            \end{pmatrix} - \begin{pmatrix}
                -x \\
                s
            \end{pmatrix} \|_2  \leq \| 2\left( H^{'T}_{\mathscr{M}, \operatorname{res}} H'_{\mathscr{M}, \operatorname{res}} \right)^{-1} H^{'T}_{\mathscr{M}, \operatorname{res}}  \omega \|_2 \\
            &\leq 2 \| \left( H^{'T}_{\mathscr{M}, \operatorname{res}} H'_{\mathscr{M}, \operatorname{res}} \right)^{-1} \|_2 \| H^{'T}_{\mathscr{M}, \operatorname{res}}\|_2 \|\omega\|_2 \\
            & \leq 2  \frac{\sqrt{1 + \delta_{K}}}{(1 - \delta_{K})} \frac{\varepsilon \sqrt{q}}{\sqrt{q}} = 2  \frac{\sqrt{1 + \delta_{K}}}{(1 - \delta_{K})}  \varepsilon,
         \end{aligned}
        \end{align}
        where at the final inequality, we have used the property $\|\omega\|_2 \leq \sqrt{q} \varepsilon$ and we take the normalization into consideration with the RIP constant $\delta_{K}$, when bounding the eigenvalues.  
    \end{proof}
\end{lemma}

Depending on the norm of choice, we can also cast the above theorem in $\ell_1$ norm, which results in the bound 
\begin{align}\label{eq: noisy-l1-inverse-bound}
        \|\bar{x} - x \|_1 \leq  2  \frac{\sqrt{1 + \delta_{K}}}{(1 - \delta_{K})} \sqrt{K} \varepsilon. 
\end{align}
Note that we have in general $q = \widetilde{O}(K)$ from Theorem~\ref{thm: rip-partial-hadamard-refined}, we see that this $\ell_1$ norm bound is comparable with $\|\omega \|_1$. Note that we have $\lambda_a = e^{-x_a}, a \in \mathcal{K}_{\mathscr{M}}$. We now examine how the above error translates to the product of Fourier coefficients. 

\begin{lemma}\label{lemma: noisy-recovery-physical-fourier-coeff}
    Let
    \begin{align}
        \lambda_{S} = \prod_{a \in S \subseteq \mathcal{K}_\mathscr{M}} \lambda_{a},
    \end{align}
    where $\bar{\lambda}_{S}$ be the estimator constructed from $\bar{\lambda}_a = \exp(-\bar{x}_a)$ for $a \in S \subseteq \mathcal{K}_\mathscr{M}$. Under the setting of Lemma~\ref{lemma: noisy-recovery-all-distribution}, we have that $\bar{\lambda}_{S} \approx_{ O(\sqrt{K}\varepsilon), 0} \lambda_{S}$. 
    \begin{proof}
        First we note that $\bar{\lambda}_{\mathcal{K}_{\mathscr{M}}} \approx_{ \delta, 0} \lambda_{\mathcal{K}_\mathscr{M}}$ is equivalent to say that the ratio $\bar{\lambda}_{S} /\lambda_{S} \in [1-\delta, 1 +\delta] $. Then we can compute, 
        \begin{align}
            \begin{aligned}
                \bar{\lambda}_{S} / \lambda_{S} &= \prod_{a \in \mathcal{K}_\mathscr{M}; a \neq 0} \bar{\lambda}_{a} / \lambda_a = \prod_{a \in S; a \neq 0} \exp(-(\bar{x}_a - x_a)), \\
                &= \exp(-\sum_{a \in S; a \neq 0} (\bar{x}_a - x_a)).
            \end{aligned}
        \end{align}
        We can apply our $\ell_1$-norm statement with the monotonicity of the real-valued exponential to the following bound. 
        \begin{align}
            \exp(- \| \bar{x} - x \|_1) \leq \bar{\lambda}_{S} / \lambda_{S} \leq \exp(\|\bar{x} -x \|_1).
        \end{align}
        Plug in the bound Eq.~\eqref{eq: noisy-l1-inverse-bound} and to the first order $\varepsilon$, we have reached our statement. 
    \end{proof}
\end{lemma}

Collecting all pieces,  

\begin{theorem}\label{thm: error-propagation-physical}
     Let the Fourier coefficients (Pauli eigenvalues) associated with $\mathscr{M}$ can be estimated up to  a multiplicative precision  $\varepsilon$ with some small $\varepsilon > 0$ and the parameterized distribution in Eq.~\eqref{eq: pauli-character-fourier} satisfies the learnability condition in Theorem~\ref{thm: learn-all-distribution-from-syndrome}. Let $Q \subset \mathscr{M}$, $|Q| =q$ be a uniformly subsampled set so that the resulting column- and row-restricted matrix $A_{\mathscr{M}, \operatorname{res}}$ in Corollary~\ref{cor: compressed-sensing-sample-theory} has full-column rank and denote the $K+1$th-order restricted isometry constant by $1 > \delta_K > 0$.  For any $a \in \mathcal{A}$, any Fourier coefficient $\lambda_a$ can be learned from  $Q \subset \mathscr{M}$ with a multiplicative precision $2 \frac{\sqrt{1 + \delta_K}}{1-\delta_K} \sqrt{K} \varepsilon = O(\sqrt{K} \varepsilon)$. 
\end{theorem}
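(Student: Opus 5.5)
The proof chains three steps: invert the subsampled linear system, bound the resulting $\ell_1$ error on the log-parameters, and then re-express an arbitrary $\log\lambda_a$ as a linear combination of the recovered parameters.

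\emph{Step 1 (relative to absolute error).} By hypothesis, the estimates $\bar\lambda_m$, $m\in Q$, satisfy $\bar\lambda_m\approx_{\varepsilon,0}\lambda_m$. Taking logarithms gives $|\bar y_m-y_m|=|\log(\bar\lambda_m/\lambda_m)|\le \varepsilon+O(\varepsilon^2)$. The noise vector $\omega=\bar y-y$ therefore obeys $\|\omega\|_2\le\sqrt q\,\varepsilon$ to first order, which is the setting of Lemma~\ref{lemma: noisy-recovery-all-distribution}.

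\emph{Step 2 (recovering the parameters).} The learnability condition of Theorem~\ref{thm: learn-all-distribution-from-syndrome} gives $\mathcal K=\mathcal K_{\mathscr M}$, so $q_{[a]}=q_a$ and the prior coincides with $\Lambda$. By Corollary~\ref{cor: compressed-sensing-sample-theory}, $A_{\mathscr M,\operatorname{res}}$ has full column rank. I will use the least-squares estimator of Lemma~\ref{lemma: noisy-recovery-all-distribution}, obtained from the rewriting $2A_{\mathscr M,\operatorname{res}}=1-H_{\mathscr M,\operatorname{res}}$ and the augmented matrix $H'_{\mathscr M,\operatorname{res}}$. The RIP eigenvalue bounds then give $\|\bar x-x\|_2\le 2\frac{\sqrt{1+\delta_K}}{1-\delta_K}\varepsilon$. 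Cauchy--Schwarz over the $K$ coordinates turns this into the $\ell_1$ bound \eqref{eq: noisy-l1-inverse-bound}: $\|\bar x-x\|_1\le 2\frac{\sqrt{1+\delta_K}}{1-\delta_K}\sqrt K\varepsilon$.

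\emph{Step 3 (propagating to an arbitrary $\lambda_a$).} For any $a\in\mathcal A$, the character-basis parameterization Eq.~\eqref{eq: pauli-character-fourier} gives
\[
\lambda_a=\prod_{c\in\mathcal K:\ \beta(a,c)=1}(1-2q_c)=\prod_{c\in S_a} e^{-x_c}=\lambda_{S_a},\qquad S_a=\{c\in\mathcal K_{\mathscr M}:\beta(a,c)=1\}.
\]
I define the estimator $\bar\lambda_a=\bar\lambda_{S_a}=\exp(-\sum_{c\in S_a}\bar x_c)$. Lemma~\ref{lemma: noisy-recovery-physical-fourier-coeff} then gives
\[
e^{-\|\bar x-x\|_1}\le\bar\lambda_a/\lambda_a\le e^{\|\bar x-x\|_1}.
\]
Since $e^{\pm u}=1\pm u+O(u^2)$, to first order in $\varepsilon$ this is $\bar\lambda_a\approx_{\tau,0}\lambda_a$ with $\tau=2\frac{\sqrt{1+\delta_K}}{1-\delta_K}\sqrt K\varepsilon$. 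This holds uniformly in $a$, because the bound on $\bar\lambda_a/\lambda_a$ depends only on $\|\bar x-x\|_1$ and not on $S_a$.

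\emph{Main obstacle.} Step~2 carries the weight, specifically the normalization bookkeeping. The RIP in Theorem~\ref{thm: rip-partial-hadamard-refined} is stated for the rescaled matrix $\sqrt{N/q}R_Q\Phi$, while the estimator uses the unnormalized $\pm1$ matrix $H'$. I need to check that the factors of $q$ cancel: the eigenvalues of $H'^TH'$ lie in $[q(1-\delta_K),q(1+\delta_K)]$, so $\|H'^T\|_2\le\sqrt{q(1+\delta_K)}$. Combined with $\|\omega\|_2\le\sqrt q\varepsilon$, the overall error scales as $\varepsilon$ with no residual factor of $q$. I also need the extra column corresponding to $s=\sum_i x_i$ to be covered by the support $\mathcal K_{\mathscr M}\cup\{I\}$, which is why the theorem's support includes $I$. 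If the cancellation instead leaves a leftover factor of $\sqrt q$, I would absorb it as $\widetilde O(\sqrt K)$ using $q=\widetilde O(K)$; that would weaken only the constant and logarithmic factors, not the claimed $O(\sqrt K\varepsilon)$ scaling.
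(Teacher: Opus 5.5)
Your plan is correct and follows the paper's own chain: the least-squares inversion through the augmented $\pm1$ matrix $H'$ (Lemma~\ref{lemma: noisy-recovery-all-distribution}), Cauchy--Schwarz to the $\ell_1$ bound \eqref{eq: noisy-l1-inverse-bound}, then the $e^{\pm\|\bar x-x\|_1}$ sandwich of Lemma~\ref{lemma: noisy-recovery-physical-fourier-coeff}; your use of $\mathcal K=\mathcal K_{\mathscr M}$ to write an arbitrary $\lambda_a$ as $\lambda_{S_a}$ makes explicit a step the paper leaves implicit. The factors of $q$ do cancel as you computed, and this matters because your fallback would not work: a leftover $\sqrt q=\widetilde O(\sqrt K)$ would multiply the $\sqrt K$ from the $\ell_1$ conversion and give $\widetilde O(K\varepsilon)$, not $O(\sqrt K\varepsilon)$.
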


\begin{remark}
    The above bound in some aspect is overly pessimistic. For generic $a \in \mathcal{A}$, $a$ might only anticommute with every sparse pattern of $a' \in \mathcal{K}$. In which case, we may further restrict columns to $a$'s support. For a sufficiently sparse support, we would obtain with the relative precision $O(\varepsilon)$ as desired.  
\end{remark}

At this stage, we could also discuss how to directly infer the (probability) coefficients of the prior distribution in Definition~\ref{def: prior-distribution} and see Eq.~\eqref{eq: main-prior-distribution}. 

\begin{lemma}\label{lemma: noisy-recovery-prior}
    Under the same setting as above, there exists a unique noisy recovery such that for every $a \in \mathcal{K}_{\mathscr{M}}$, 
    \begin{align}
        \bar{q}_{[a]} \approx_{0,O(\varepsilon) } q_a,
    \end{align} 
    where the term in the big-$O$ notation is given by $\frac{\sqrt{1 + \delta_K}}{1-\delta_K} \varepsilon$. 
    \begin{proof}
        From Lemma~\ref{lemma: noisy-recovery-all-distribution}, note that for $a \in \mathcal{K}$, we have that 
        \begin{align}
            q_a = \frac{1}{2}(e^{-x_a} - 1).
        \end{align}
        So that we have that 
        \begin{align}
            |\bar{q}_{[a]} - q_{[a]}| = \frac{1}{2} | e^{-x_a} - e^{-\bar{x}_a} | \leq \frac{e^{-\min(x_a, \bar{x}_a)}}{2} |x_a - \bar{x}_a| \leq \frac{1}{2} \|\bar{x} - x \|_2 \leq \frac{\sqrt{1 + \delta_K}}{1-\delta_K} \varepsilon,
        \end{align}
        as desired. 
    \end{proof}
\end{lemma}

The bound in Lemma~\ref{lemma: noisy-recovery-prior} is free of the dimensionality factor $K$, hence optimal. To be noted, this bound cannot be applied in Lemma~\ref{lemma: noisy-recovery-physical-fourier-coeff} due to the fact that $\lambda_a$ is a product of a combination of coefficients $p_b$ ($q_b$), where we can no longer switch trivially with $1$-norm and $2$-norm, from which the dimensionality factor occurs.

\subsubsection*{Bounds on errors in learning the effective distribution}

In many cases, a distribution parameterized in the character basis Eq.~\eqref{eq: pauli-character-fourier} admits a parameterization set $\mathcal{K}$ that contains many logical equivalent elements, related by the gauge group $\mathscr{G}$. In this case, it is not possible to learn the entire distribution from purely stabilizer measurements. Instead, we wish to learn the \emph{effective} distribution given by the quotients with the gauge group $\mathscr{G}$, and we consider how to learn the Fourier coefficient $\lambda_{b}, b \in \mathscr{G}^\perp$ $\lambda_{a}, b \in \mathscr{M}$. Suppose that such a distribution admits a parameterization set $\mathcal{K}$ such that $a, b \in \mathcal{K}$, $\sigma(a) = \sigma(b) \iff a \sim_{\mathscr{G}} b$. Then by Theorem~\ref{thm: learn-eff-distribution-from-syndrome} and Theorem~\ref{thm: rip-partial-hadamard-refined}, Corollary~\ref{cor: compressed-sensing-sample-theory}, with a high probability, $A_{\mathscr{M}, \operatorname{res}}$ is full-column rank. Hence, we can always express that $- \log(\lambda_b) \in \operatorname{rs} D_{\mathscr{M}, \operatorname{res}}$, where we similarly restrict the subsampled (with replacement) rows from $Q \subset \mathscr{M}$. Denote the vector $e \in \mathbb{R}^q$ such that

\begin{align}
    e D_{\mathscr{M}, \operatorname{res}} = - \log(\lambda_b) = \sum_i e_i y_i.
\end{align}
Our goal is thus to bound with $\ell_1$ norm using the estimated $\sum_i e_i \bar{y}_i$. Note that by application of Cauchy–Schwarz inequality, we obtain 
\begin{align}\label{eq: log-error-after-take-log-schwartz}
    |\sum_{i}e _i \bar{y}_i - \sum_i e_i y_i | = |\sum_i e_i \eta_i| \leq \|e\|_2 \|\omega\|_2 \leq \|e\|_2 \sqrt{q} \varepsilon. 
\end{align}
Hence, it remains to determine the two-norm of the vector $e \in \mathbb{R}^q$. Note that since $e$ admits real-valued entries, one cannot naively bound $\|e\|_2 \leq \sqrt{q}$, so we need to further examine this case. Let $l \in \mathbb{R}^{K}$ such that $l_a = \beta(a, b)$ for $a \in \mathcal{K}_{\mathscr{M}}$, which records the commutation relation for the logical operator $b \in \mathscr{G}^\perp$.

\begin{lemma}\label{lemma: logical-error-propagation}
    Suppose that $H'_{\mathscr{M}, \operatorname{res}}$ satisfies RIP condition with some small $K+1$th-order restricted isometric constant $0 < \delta_K < 1$. For $b \in \mathscr{G}^\perp$, there exists a unique choice of $e \in \mathbb{Q}^q \subset \mathbb{R}^q$ such that 
    \begin{align}
         \|e \|_2  \leq \frac{2}{\sqrt{q(1-\delta_K)}} \|l \|_2. 
    \end{align}
    \begin{proof}
    First, note that we have
    \begin{align}
       (e A_{\mathscr{M}, \operatorname{res}})_{a} = l_a; \quad A_{\mathscr{M}, \operatorname{res}} = \frac{1}{2}(1_{q \times (K-1)} - H_{\mathscr{M}, \operatorname{res}} ).
    \end{align}
    For $a \in \mathcal{K}_\mathscr{M}$ and $a \neq 0$. Apply the similar technique now with $H'_{\mathscr{M}, \operatorname{res}}$ where we append the all one column to the left-most of $H_{\mathscr{M}, \operatorname{res}}$. 
       \begin{align}
          \begin{aligned}
               & e(1_{q \times K} - H'_{\mathscr{M}, \operatorname{res}}) =(\sum^q_i e_i) 1_{K} - e H'_{\mathscr{M}, \operatorname{res}}, \\
           & \left(  -e, s \right) \begin{pmatrix}
               H'_{\mathscr{M}, \operatorname{res}} \\
               1_K 
           \end{pmatrix} = 2l,
          \end{aligned}
       \end{align}
       where we denote $s = \sum^m_i e_i $ and we further denote $\tilde{e} = (-e, s ) $  $M = \begin{pmatrix}
               H'_{\mathscr{M}, \operatorname{res}} \\
               1_K 
           \end{pmatrix}$. 
        Let's consider 
        \begin{align}
            \begin{aligned}
                M^\dagger M = M^TM =  H^{'T}_{\mathscr{M}, \operatorname{res}} H'_{\mathscr{M}, \operatorname{res}} + 1_{K \times K}.
            \end{aligned}
        \end{align}
Hence, by the RIP of the restricted Hadamard matrix, we have that 
\begin{align}
    \sigma_{\max}(M^TM) \leq  q(1 + \delta_{K}) + K; \quad \sigma_{\min}(M^TM) \geq q(1 - \delta_{K}) .
\end{align}
Then we could set 
\begin{align}
    \tilde{e} := 2M(M^TM)^{-1} l, 
\end{align}
which is the unique choice to the following $\ell_2$ minimization problem $\min_{M^T\tilde{e} = l } \|\tilde{e}\|_2$. Note that since $l_0 =0$ for the identity element in $\mathcal{A}$. This implies that we have that $\sum^{q+1}_{i=1} \tilde{e}_i = 0$ and by correct ordering of columns of $B$, we arrive at $\tilde{e}_{q+1} = - \sum^{q}_{i=1} \tilde{e}_i$, consistent with our constraints.  Note that by the submultiplicity, $\|e\|_2 \leq \|B(B^TB)^{-1} \|_2 \|l\|_2$.  Note that one can show $\|M(M^TM)^{-1} \|_2  =  (\sigma_{\operatorname{min}}(M^TM))^{-1/2}$, which implies
\begin{align}
    \|e \|_2 \leq \|\tilde{e} \|_2 \leq \frac{2}{\sqrt{q(1-\delta_K)}} \|l \|_2, 
\end{align}
as required. Finally, we note that $\tilde{e}$ and $e$ in this way admit strictly rational values. This follows from the fact that both $l$ and $M$ are integer-valued, and, hence, $(M^TM)^{-1}$ lie in $\mathbb{Q} \subset \mathbb{R}$.  Physically, it means that the Fourier coefficients associated with $l$ can also be given by the product of Fourier coefficients of stabilizers in $\mathscr{M}$  to some powers specified by $e$. 
    \end{proof}
\end{lemma}

Hence, suppose that the effective distribution is learnable, we can bound the Fourier coefficient $\lambda_b$, $b \in \mathscr{G}^\perp$ to a multiplicative precision.

\begin{theorem}\label{thm: error-propagation-logical}
   Let the Fourier coefficients (Pauli eigenvalues) associated with $\mathscr{M}$ can be estimated up to  a multiplicative precision  $\varepsilon$ with some small $\varepsilon > 0$ and the parameterized distribution in the character basis satisfies the learnability condition up to logical equivalence in Theorem~\ref{thm: learn-eff-distribution-from-syndrome}. Let  $Q \subset \mathscr{M}$, $|Q| =q$ be a uniformly subsampled set so that the resulting column- and row-restricted matrix $A_{\mathscr{M}, \operatorname{res}}$ in Corollary~\ref{cor: compressed-sensing-sample-theory} has full-column rank and denote the $K+1$th-order restricted isometry constant by $1 > \delta_K > 0$.  For any $b \in \mathscr{G}^\perp$, let $l \in \mathbb{R}^K$ such that $l_a = \beta(a, b)$ for $a \in \mathcal{K}_{\mathscr{M}}$.  Then any Fourier coefficient $\lambda_b$ can be learned from  $Q \subset \mathscr{M}$ with a multiplicative precision $O(\|l\|_2 \varepsilon)$. 
   \begin{proof}
       For any $b \in \mathscr{G}^\perp$, denote $\bar{\lambda}_b$ to be the estimated Fourier coefficient. Then we have that, 
       \begin{align}
           \frac{\bar{\lambda}_b}{\lambda_b} = \exp(- \sum_i e_i (\bar{y}_i - y_i)).
       \end{align}
       By the preceding lemma and  Eq.~\eqref{eq: log-error-after-take-log-schwartz}, 
       \begin{align}
           |\sum^{q}_{i=1} e_i (\bar{y}_i - y_i)| = |\sum^{q}_{i=1} e_i \eta_i| \leq \|e\|_2 \|\omega\|_2 \leq \frac{2 \|l\|_2 \varepsilon}{\sqrt{1-\delta_K}}.
       \end{align}
       Taking the exponential, it follows that, 
\begin{align}
  \exp(- \frac{2 \|l\|_2 \varepsilon}{\sqrt{1-\delta_K}} )  \leq   \frac{\bar{\lambda}_b}{\lambda_b}  \leq \exp(\frac{2 \|l\|_2 \varepsilon}{\sqrt{1-\delta_K}}),
\end{align}
where we again use the monotonicity of the exponential and the above lemma. Then we keep to the first order of $\varepsilon$, which arrives at $\bar{\lambda}_b \approx_{O(\| l \|_2 \varepsilon), 0} \lambda_b$ as desired, where $\|l \|_2 \leq \sqrt{K}$. 
       
   \end{proof}
\end{theorem}

\begin{remark}
    Note that we can always take this to the additive precision. Recall that $|\lambda_b| \leq 1$, then we can simply relax to $\bar{\lambda}_b \approx_{0, O(\|l\|_2 \epsilon)} \lambda_b$. 
\end{remark}

\subsection{Background-free syndrome extractions}\label{subsec: background-free-sampling}

We now characterize on the sample efficiency in providing an accurate estimation to the syndrome expectation, to a multiplicative precision. In what follows, we make slight abuse of notation, treating $m \in \mathscr{M}$ as well as a Bernoulli random variable $m \sim \operatorname{Bern}(\varepsilon)$, but a clear context will always be provided. To our goal, let $m^{(1)}, \cdots, m^{(S)}$ be i.i.d random variables drawn from the Bernoulli distribution $\operatorname{Bern}(\varepsilon)$. We have that $\mathbb{E}_{m^{(i)} \sim \operatorname{Bern}(\varepsilon)}[m^{(i)}] = \varepsilon$ for all $i$. Denote $\bar{\varepsilon} = \frac{1}{S} \sum^S_{i=1} m^{(i)}$ and suppose that we wish to obtain to a multiplicative precision $\tau$ to $\varepsilon$, we have that, 

\begin{lemma}\label{lemma: epsilon-multi-translation}
    Let $\bar{\varepsilon} \approx_{\tau, 0} \varepsilon$ for some $1 > \tau > 0$. Denote $\mu = 1 - 2\varepsilon$ and $\bar{\mu} = 1- 2\bar{\varepsilon}$.  Then, 
    \begin{enumerate}
    \item $\bar{\mu} \approx_{0, 2\tau \varepsilon} \mu$.
    \item $\bar{\mu} \approx_{\frac{2\tau \varepsilon}{1-2\varepsilon}, 0} \mu$, for $1/2  > \varepsilon > 0$. 
    \end{enumerate}
    \begin{proof}
        Note that we have that $\bar{\varepsilon} \approx_{\tau, 0} \varepsilon$ if and only if $|\bar{\varepsilon} - \varepsilon| \leq \tau |\varepsilon|$. Then we have that $|\bar{\mu} - \mu| = |2 \varepsilon - 2 \bar{\varepsilon}| = 2 |\varepsilon - \bar{\varepsilon}| \leq 2 \tau \varepsilon =   \frac{2\tau \varepsilon}{ 1- 2 \varepsilon} |\mu|$.  The last equality follows precisely when  $1/2  > \varepsilon > 0$. 
    \end{proof}
\end{lemma}

We convert the multiplicative precision to the additive precision for $\bar{\mu}$. We could also consider the multiplicative precision, which holds when $ 2 \tau \varepsilon/ (1 - 2\varepsilon) = 2 \tau \varepsilon+ 4 \tau \varepsilon^2 + O(\varepsilon^3)$. Hence, when $\varepsilon$ is small enough,  we can keep to the first order where $\omega \approx 2 \tau $. Using the above notations,

\begin{theorem}\label{thm: 1/epsilon-scaling-upper-bound}
    Let $m^{(1)}, \cdots, m^{(S)}$ be i.i.d random variables drawn from the Bernoulli distribution $\operatorname{Bern}(\varepsilon)$, where $\mathbb{E}_{m^{(i)} \sim \operatorname{Bern}(\varepsilon)}[m^{(i)}] = \varepsilon$ for all $i$. Denote $\bar{\varepsilon} = \frac{1}{S} \sum^S_{i=1} m^{(i)}$, $\bar{\mu} = 1 - 2 \bar{\varepsilon}$, and $\mu = 1- 2 \varepsilon$. For a sufficiently small $\varepsilon > 0$ and any $1 > \tau > 0$, let $S > 12\varepsilon^{-1} \tau^{-2} \log(1/\delta)$. Then the following holds, with probability at least $1 - \delta$, for any $\delta > 0$.
    \begin{enumerate}
        \item (Additive precision) we have $\bar{\mu} \approx_{ 0, \tau \varepsilon} \mu$.
        \item (Multiplicative precision) We have $\bar{\mu} \approx_{ \varepsilon\tau/ (1-2\varepsilon), 0} \mu$.
    \end{enumerate}

     \begin{proof}
         Applying the Chernoff bound proposition~\ref{prop: chernoff-bound}, whenever $S >  12 \varepsilon^{-1} \tau^{-2} \log(1/\delta)$,  then with probability at least $1 - \delta$, $\bar{\varepsilon} \approx_{\tau/2, 0} \varepsilon$. By Lemma~\ref{lemma: epsilon-multi-translation}
and assumption that $\varepsilon$ is sufficiently small. The additive and multiplicative precision statements regarding $\bar{\mu}$ follows.     \end{proof}
\end{theorem}

The appearance of $O(1/\varepsilon)$ scaling is specialized to $\operatorname{Bern}(\varepsilon)$, when $\varepsilon$ is small, the variance of the distribution is given by $\varepsilon(1-\varepsilon) = O(\varepsilon)$. Moreover, the above statement can be thought as a variance-aware approach, where it achieves an optimal bound one can ever hope for. Using standard arguments such as from Ref.~\cite{lee2025efficientbenchmarkinglogicalmagic} or Le Cam two-point method~\cite{gong2024samplecomplexitypurityinner, lecam1972limits, lecam1986amsdt} to show that $\Theta(\varepsilon^{-1} \tau^{-2})$ is needed to reach the targeted precision from the above theorem.


\subsection{Putting them together: sample complexity in estimating logical error probability}\label{subsec: sampling-LEP}

Finally, we put everything together in obtaining the logical error probability, where we utilize the prior distribution $\Lambda^{\operatorname{prior}}$. In this section, we assume that the fault distribution is learnable at least up to  a logical equivalence. 

\begin{theorem}\label{thm: sample-complexity-prior-distribution}
     Let the prior distribution $\Lambda^{\operatorname{prior}}$ be given, 
    \begin{align*}
        \Lambda^{\operatorname{prior}} = *_{a \in \mathcal{K}_{\mathscr{M}}}((1-q_{[a]}) + q_{[a]} \chi_a).
    \end{align*}
    Let $Q \subset \mathscr{M}$ be a subsampled stabilizer measurement set such that $A_{\mathscr{M}, \operatorname{res}}$ attains full column-rank with the $K+1$th-order restricted isometric constant $\delta_K < 1$. Then $m \in Q$, we denote that $\lambda_m = \mathbb{E}_{m \sim \operatorname{Bern}(\varepsilon)}[(-1)^m] = \mu$ and estimated $\bar{\mu}$ from 
    \begin{align}
        S \geq 12 \eta^{-2} \varepsilon^{-1}\log(1/\delta); \quad \eta = \tau (1-2\varepsilon) \frac{1- \delta_K}{\sqrt{1 + \delta_K}},
    \end{align}
    many samples. Then for any $a \in \mathcal{K}_{\mathscr{M}}$, with probability at least $1 -\delta$, the coefficient can be estimated $\bar{q}_a \approx_{0, \tau \varepsilon + O(\varepsilon^2)} q_a$, up to the first order $O(\varepsilon)$.   
    \begin{proof}
        First we utilize Theorem~\ref{thm: 1/epsilon-scaling-upper-bound} and set $y = - \log \mu $ and $\bar{y} = - \log \bar{\mu}$. Then we have that 
        \begin{align}
            (1 - \frac{\varepsilon \eta}{1-2\varepsilon}) \leq |\bar{\mu} / \mu| \leq  (1 + \frac{\varepsilon \eta}{1+2\varepsilon}).  
        \end{align}
        Since $\bar{\mu} / \mu > 0$, then the above inequality can be transformed into
        \begin{align}
            |\bar{y} - y| \leq \frac{\varepsilon \eta}{1+2\varepsilon} + O(\varepsilon^2) = \varepsilon \tau \frac{1 -\delta_K}{\sqrt{1 + \delta_K}} + O(\varepsilon^2).
        \end{align}
         Then apply the Lemma~\ref{lemma: noisy-recovery-prior}, we obtain that and set $\eta$ as prescribed, we arrive at the desired additive approximation, to the first order $O(\varepsilon)$.
    \end{proof}
\end{theorem}

Utilizing this above, we now comment on the sample complexity in computing the logical error probability. Recall the logical error probability from Eq.~\eqref{eq: logical-error-probability}, 

\begin{align}
    \begin{aligned}
       p^{L}_{\bar{l}}  &= |\mathscr{G}| \sum_{z \in \mathbb{F}^M_2} p^{\operatorname{eff}}_{a_z \overleftarrow{l}}, \\
        & = \sum_{J \subset \mathcal{K}_{\mathscr{M}}} \prod_{c \in J}  q_{[c]} \prod_{c' \notin J}(1-q_{[c']}) \mathbf{1}\{\prod_{c \in J} ca_{\sigma(\prod_{c \in J} c)} \overleftarrow{l} \in \mathscr{G}\}.
    \end{aligned}
\end{align}

We now sketch a heuristic exponential speedup in estimating the logical error probability to a relative precision. Let the coefficient 
\begin{align}\label{eq: prior-coeff-uniform}
    q_{[c]} \equiv q \in [\alpha \varepsilon, \beta \varepsilon],
\end{align}
and fix a logical coset $\bar{l}$ and $\alpha, \beta$ are some (constant) parameters where $\beta/\alpha$ is close to $1$. This assumption reflects the convenience where additive precision $ \tau \varepsilon$ can be translated easily to relative precision $O(\tau)$. Let $d_{*}$ be the minimum integer $j$ such that there exists $J \subseteq \mathcal{K}_{\mathscr{M}}$ with $|J| = j$ and
\begin{align}
    \mathbf{1}\{\prod_{c \in J} a_{\sigma(\prod_{c \in J}c)} \overleftarrow{l} \in \mathscr{G} \} =1. 
\end{align}
Define the following set 
\begin{align}
    N_j(\bar{l}) := |\{J \subseteq \mathcal{K}_{\mathscr{M}}: |J| = j, \prod_{c \in J}c a_{\sigma(\prod_{c\in J}c)} \overleftarrow{l}  \} |,
\end{align}
The number of length $j$ set that would produce a logical flip. In this case, we can write under the uniformity assumption Eq.~\eqref{eq: prior-coeff-uniform}
\begin{align}
    \begin{aligned}
        p^{L}_{\bar{l}} &= \sum^{K=|\mathcal{K}_{\mathscr{M}}|}_{j=d_*} N_j(\bar{l})q^j (1-q)^{K -j},\\
        &= q^{d_*}F(q), 
    \end{aligned}
\end{align}
where $F(q) := \sum^{K=|\mathcal{K}_{\mathscr{M}}|}_{j=d_*} N_j(\bar{l})q^{j-d_*} (1-q)^{K -j}$. This way, we assume that the logical error probability is controlled by a single parameter $q$ such that
\begin{align}
    \frac{p^L_{\bar{l}}(\bar{q})}{p^L_{\bar{l}}(q)} = (\frac{\bar{q}}{q})^{d_*} \frac{F(\bar{q})}{F(q)}.
\end{align}
In this case, 
\begin{align}
    \begin{aligned}
        \frac{|p^L_{\bar{l}}(q) - p^L_{\bar{l}}(\bar{q})|}{p^L_{\bar{l}}(q)} = \frac{|q^{d_*}F(q) - \bar{q}^{d_*} F(\bar{q})|}{q^{d_*}F(q)}. 
    \end{aligned}
\end{align}
Typically for small $q$, the fluctuation of $F(q)$ might be large, taking the derivative. However, for a sufficiently large $d_*$, let us assume that $\bar{q}^{d_*} F(\bar{q})/ F(q) = \gamma \bar{q}^{d_*}$, for some $\gamma$. Then we can re-express the above by
\begin{align}
    \begin{aligned}
         \frac{|p^L_{\bar{l}}(q) - p^L_{\bar{l}}(\bar{q})|}{p^L_{\bar{l}}(q)} = \frac{|q^{d_*}F(q) - \bar{q}^{d_*} F(\bar{q})|}{q^{d_*}F(q)} = \frac{|q^{d_*} - \gamma \bar{q}^{d_*}|}{q^{d_*}} \leq \frac{|q^{d_*} - (1+\tilde{\tau})^{d_*} \gamma q^{d_*}|}{q^{d_*}} \leq ( (1+\tilde{\tau})^{d_*}\gamma -1), 
    \end{aligned}
\end{align}
where we make of the uniformity assumption from Eq.~\eqref{eq: prior-coeff-uniform} and Theorem~\ref{thm: sample-complexity-prior-distribution} where we, by the risk of impreciseness, say that $\bar{q} \approx_{\tilde{\tau}, 0} q$. Hence, take $\tilde{\tau} = \tau /(2d_*) $, we have that (assuming $\gamma \in (0, 1]$ so that it does not deviate too much)
\begin{align}
     ( (1+\tilde{\tau})^{d_*}\gamma -1) = (\exp(d_* \log(1 + \tilde{\tau})) \gamma -1) \leq \gamma e^{\tau/2} -1, 
\end{align}
where for sufficiently small $\tau < 2 \log 2/\gamma$, this expression is less than $1$, which establishes a desired relative precision. By Theorem~\ref{thm: sample-complexity-prior-distribution}, this would use 
\begin{align}
    S = O\left(\frac{d^2_*}{\tau^2} (1-2\varepsilon)^{-2} \frac{1 }{1-\delta_K} \varepsilon^{-1} \log(1/\delta)\right) 
\end{align}
samples to achieve a relative precision with probability $1 -\delta$. As implied in the heuristic approach above, the apparent difficulty in reaching a rigorous guarantee in reaching desired relative precision of the logical error probability lies on the fact that we cannot possibly enumerate (in classical computing resources) every term in Eq.~\eqref{eq: logical-error-probability}. The leading order approximation provided above might not be valid even below the threshold and might require other assumptions such as the uniformity assumption in the case above. As a result, we leave a rigorous characterization to possible future studies.

\section{Details on numerical simulations}
\label{sec: numerical-simulations}

This section documents the numerical pipeline used to produce Fig.~\ref{Fig: simulation}. The key goal is to estimate logical error probabilities (LEPs) using only detector/syndrome data gathered from noisy syndrome-extraction circuits, without requiring direct observation of rare logical failures.

\paragraph{Notation for code identifiers.}
\makeatletter
\newcommand{\code@text}[1]{%
    \begingroup\small
    \ifdefined\nolinkurl
        \nolinkurl{#1}%
    \else
        \texttt{\detokenize{#1}}%
    \fi
    \endgroup
}
\DeclareRobustCommand{\code}[1]{\ifmmode\text{\code@text{#1}}\else\code@text{#1}\fi}
\makeatother
To avoid overfull lines caused by long monospace tokens (e.g., modulo and function names), we typeset code identifiers using a breakable typewriter font.

\subsection{Codes and circuit construction}

We study distance-$d$ surface-code instances at $d\in\{3,5,7\}$. For each code, we build a single-round syndrome-extraction circuit ($\texttt{num\_cycles}=1$) using the \texttt{Stim} simulator. The circuit is compiled to a detector error model (DEM), and we treat the DEM as defining a classical ``space--time'' parity-check problem for decoding.

In the implementation used for Fig.~\ref{Fig: simulation}, the circuit construction follows the syndrome-extraction routine in \code{sim_ldpc/dem_sim.py} (class \code{DEMSyndromeExtraction}). At a high level, each shot produces detector bits (parities of detector events) and logical observables from \texttt{Stim}'s detector sampler.

\paragraph{Noise parametrization.}
All circuit-level noise rates are specified via a base dictionary
\begin{align}
\texttt{CIRCUIT\_ERROR\_PARAMS} = \{\texttt{p\_i}=1.0,\ \texttt{p\_state\_p}=0.8,\ \texttt{p\_m}=0.9,\\
\qquad\qquad\qquad\quad \texttt{p\_CX}=0.0,\ \texttt{p\_idling\_gate}=0.0\},
\end{align}
and are scaled by a global physical error-rate parameter $p$ via
\begin{equation}
\text{(location error probability)}\ \equiv\ p\times \code{CIRCUIT_ERROR_PARAMS[location]}.
\end{equation}
Throughout Fig.~\ref{Fig: simulation}, faults are injected using \code{fault_type='DEPOLARIZE1'}.

\subsection{From Stim to a decoding instance}

From the DEM we construct a binary check matrix $H$ and an observables matrix $L$ using
\code{beliefmatching.detector_error_model_to_check_matrices(dem, allow_undecomposed_hyperedges=True)}.
The corresponding vector of fault-event priors $\{q_a\}$ is taken from \code{dem_matrix.priors}.
Allowing undecomposed hyperedges ensures that higher-weight DEM error mechanisms are included rather than being forcibly decomposed.

In all cases we flatten \texttt{Stim} loop constructs when extracting the DEM (\code{circ.detector_error_model(flatten_loops=True)}), ensuring that the resulting matrices correspond directly to the explicit detector graph used for decoding.

For each circuit shot, \texttt{Stim} returns detector outcomes $\vec d\in\{0,1\}^{m}$ and observable outcomes $\vec o\in\{0,1\}^{k}$. A decoder produces a correction $\vec c\in\{0,1\}^{n}$ (interpreted as a correction on the DEM fault variables). We record a logical failure whenever any observable is flipped after applying the correction,
\begin{equation}
\text{fail} = \mathbf{1}\left\{\left(\vec o + \vec c L^{\mathsf T}\right)\bmod 2 \neq \vec 0\right\}.
\end{equation}

\paragraph{Decoder.}
All LEP quantities in Fig.~\ref{Fig: simulation} use the same BP+LSD (OSD-type) decoder with parameters
\begin{align}
\texttt{max\_iter}=5,\ \texttt{bp\_method='min\_sum'},\ \texttt{ms\_scaling\_factor}=0.5,\\
\texttt{schedule='parallel'},\ \texttt{lsd\_method='lsd\_e'},\ \texttt{lsd\_order}=3.
\end{align}

\subsection{Quantities reported and how they are computed}

We compare three LEP estimators computed from the same circuit/decoder combination:
\begin{enumerate}
    \item \textbf{Sampled LEP} (direct Monte Carlo): sample $N$ shots, decode each shot, and estimate
    \begin{equation}
        \widehat p_{\mathrm{L}}^{\mathrm{sample}} = \frac{1}{N}\sum_{i=1}^{N} \text{fail}_i.
    \end{equation}

    \item \textbf{True LEP} (reference value given the DEM): compute $p_{\mathrm{L}}^{\mathrm{true}}$ using the exact DEM priors $\{q_a\}$ and the same decoder, via fault-enumeration truncated at \texttt{max\_order}=4.

    \item \textbf{Predicted LEP} (our learned-prior protocol): (i) learn an estimate $\{\widehat q_a\}$ from detector samples; (ii) compute
    \begin{equation}
        \widehat p_{\mathrm{L}}^{\mathrm{pred}}\equiv p_{\mathrm{L}}(\{\widehat q_a\}),
    \end{equation}
    using the same truncated fault-enumeration procedure (\texttt{max\_order}=4).
\end{enumerate}

\paragraph{Learning the priors from syndrome data.}
The prior-learning step is performed by the class \texttt{PredictPriors}. Given sampled detector bits, it constructs a (subsampled) linear system whose rows correspond to randomly selected products of detector events, and whose right-hand side is given by the corresponding empirical expectations computed from the shot data. The system is solved using the RIP-based compressed-sensing routine (\texttt{mode='rip'}), producing an estimate $\{\widehat q_a\}$ of the DEM fault-event priors.

For the scaling plots in Fig.~\ref{Fig: main-simulation}(a,b), the prior-learning accuracy is quantified by
\begin{equation}
\tau \equiv \max_a \frac{|\widehat q_a - q_a|}{q_a}.
\end{equation}

\subsection{Extraction of sample-complexity curves in Fig.~\ref{Fig: simulation}}

The combined scaling figure is generated by our data-generation and plotting scripts, which aggregate repeated-trial results and write intermediate summaries to disk for subsequent plotting.

\paragraph{Panel (a): shots vs target prior-learning accuracy.}
For a fixed $p$ (in the data shown, $p=2\times 10^{-4}$) and a fixed distance-$5$ surface-code instance, we run multiple independent trials at each shot count $N$ and record the distribution of $\tau$. In the notebook generating the stored data, the shot grid is
\begin{equation}
N\in\{10^4,\ 3\times 10^4,\ 10^5,\ 3\times 10^5,\ 10^6\},
\end{equation}
with $30$ independent trials per $N$. Fig.~\ref{Fig: main-simulation}(a) displays the resulting distributions (violin plots) and their medians as a function of $N$.

\paragraph{Panel (b): shots vs $p$ at fixed accuracy.}
Fixing a target accuracy threshold $\tau_{\mathrm{tar}}$ (in the data shown, $\tau_{\mathrm{tar}}=0.8$), we estimate the number of shots required to reach $\tau\le \tau_{\mathrm{tar}}$ by log--log interpolation of the median $\tau$ curve obtained in panel (a). Repeating this procedure at multiple values of $p$ yields a curve $N(\tau_{\mathrm{tar}},p)$, which we fit to a power law $N\propto p^{-\nu}$.

Concretely, the plotted values of $p$ are taken from the following grid:
\begin{equation}
p\in\{2\times 10^{-5},\ 5\times 10^{-5},\ 10^{-4},\ 2\times 10^{-4},\ 5\times 10^{-4}\}.
\end{equation}

\paragraph{Panel (c): shots needed for LEP estimation at fixed relative accuracy.}
At fixed $p=5\times 10^{-4}$ and distances $d\in\{3,5,7\}$, we estimate the number of circuit shots required for either sampled LEP or predicted LEP to reach a target \emph{relative} accuracy of 10\%. Concretely, for each $(d,N)$ we run multiple independent trials and compute the empirical standard deviation $\operatorname{std}(\widehat p_{\mathrm{L}})$ across trials. We define the relative error proxy
\begin{equation}
\mathrm{rel\_err}(N) \equiv \frac{\operatorname{std}(\widehat p_{\mathrm{L}})}{p_{\mathrm{L}}^{\mathrm{true}}},
\end{equation}
fit $\log_{10}(\mathrm{rel\_err})$ versus $\log_{10}(N)$ by least-squares, and solve for the $N$ such that $\mathrm{rel\_err}(N)=0.10$. This is performed separately for \emph{direct sampling at the logical level} (sampled LEP) and for \emph{learning from syndrome data} (predicted LEP).

In the combined figure generation, we optionally restrict to a conservative shot subset
\begin{equation}
N\in\{5\times 10^4,\ 10^5,\ 5\times 10^5,\ 10^6,\ 5\times 10^6\}
\end{equation}
to avoid small-$N$ finite-sample artifacts in the log--log regression.

\subsection{Bookkeeping and reproducibility}

For each configuration $(p,N,d)$ we run multiple independent trials (typically $20$ for the LEP studies, and $30$ for the prior-learning study) using independent random seeds. Each trial records (when enabled) the trio of LEP quantities \mbox{$\widehat p_{\mathrm{L}}^{\mathrm{sample}}$}, \mbox{$\widehat p_{\mathrm{L}}^{\mathrm{pred}}$}, and \mbox{$p_{\mathrm{L}}^{\mathrm{true}}$}, along with optional wall-clock runtimes for each computation. These values are aggregated across trials when extracting standard deviations, medians, and fitted shot requirements.

Overall, Fig.~\ref{Fig: main-simulation} shows that (i) the prior-learning error decreases as the number of detector samples increases, and (ii) using learned priors to predict LEPs can reduce the required circuit-shot count by orders of magnitude relative to direct logical-level sampling in the low-LEP regime.

\end{document}